\documentclass[thm-restate,a4paper,USenglish,cleveref, autoref, thm-restate]{lipics-v2021}

\bibliographystyle{plainurl}%

\newcommand{\papertitle}{Deciding Termination of Simple Randomized Loops}
\title{\papertitle}

\author{Éléanore Meyer}{RWTH Aachen University, Germany \and \url{https://verify.rwth-aachen.de/emeyer/}}{eleanore.meyer@cs.rwth-aachen.de}{https://orcid.org/0000-0003-1038-4944}{}%

\author{Jürgen Giesl}{RWTH Aachen University, Germany \and \url{https://verify.rwth-aachen.de/giesl/}}{giesl@informatik.rwth-aachen.de}{https://orcid.org/0000-0003-0283-8520}{}

\authorrunning{E.\ Meyer and J.\ Giesl} %

\Copyright{Éléanore Meyer and Jürgen Giesl} %

\ccsdesc[500]{Theory of computation~Probabilistic computation}
\ccsdesc[500]{Theory of computation~Program analysis}

\keywords{decision procedures, randomized programs, linear loops, positive almost sure termination} %

\category{} %

\relatedversion{\emph{Conference version}: \cite{mfcs25paper}}

\supplement{}
\supplementdetails[subcategory={Source Code}, cite={}, swhid={}]{Software}{https://github.com/aprove-developers/SiRop}

\funding{funded by the Deutsche Forschungsgemeinschaft (DFG, German Research Foundation) - DFG Research Training Group 2236 UnRAVeL}

\hideLIPIcs{}
\nolinenumbers %

\newcommand{\paper}[1]{}
\newcommand{\report}[1]{#1}

\usepackage{mathtools}
\usepackage{nicefrac}
\usepackage{esvect}

\usepackage[dvipsnames]{xcolor}
\usepackage{tikz}
\usetikzlibrary{calc,decorations.pathreplacing,patterns}

\usepackage{xifthen}

\hypersetup{
  pdftitle={\papertitle}, citecolor=olive, %
}

\makeatletter
\pgfdeclarepatternformonly[\LineSpace]{my north east lines}{\pgfqpoint{-1pt}{-1pt}}{\pgfqpoint{\LineSpace}{\LineSpace}}{\pgfqpoint{\LineSpace}{\LineSpace}}%
{
  \pgfsetcolor{\tikz@pattern@color}
  \pgfsetlinewidth{0.4pt}
  \pgfpathmoveto{\pgfqpoint{0pt}{0pt}}
  \pgfpathlineto{\pgfqpoint{\LineSpace + 0.1pt}{\LineSpace + 0.1pt}}
  \pgfusepath{stroke}
}
\makeatother

\newdimen\LineSpace{}
\tikzset{
  line space/.code={\LineSpace=#1}, line space=3pt
}

\renewcommand{\AA}{\mathbb{A}}
\newcommand{\CC}{\mathbb{C}}
\newcommand{\NN}{\mathbb{N}}
\newcommand{\QQ}{\mathbb{Q}}
\newcommand{\QQbar}{\mathbb{\overline{Q}}}
\newcommand{\RR}{\mathbb{R}}
\newcommand{\ZZ}{\mathbb{Z}}
\newcommand{\im}{\mathrm{i}}
\newcommand{\programExpr}[5]{\normalfont{\texttt{while }}#1#5 > \veczero\colon #5 \leftarrow #2#5 \,\oplus_{#3}\, #4#5}

\newcommand{\program}{\mathcal{P}}
\newcommand*{\rom}[1]{\expandafter\@slowromancap\romannumeral #1@}

\newcommand{\run}{\mathfrak{r}}
\newcommand{\Runs}{\normalfont{\textsf{Runs}}}
\newcommand{\Path}{\normalfont{\textsf{Path}}}

\newcommand{\eps}{\varepsilon}

\DeclareMathOperator{\T}{T}
\DeclareMathOperator{\diag}{diag}
\DeclareMathOperator{\ind}{\mathbf{1}}

\newcommand{\scaleinside}{\mathfrak{i}}
\newcommand{\scaleoutside}{\mathfrak{o}}
\newcommand{\constrainttermgroup}{\mathfrak{D}}
\newcommand{\nmax}{\mathfrak{n}}
\newcommand{\pmax}{\mathfrak{p}}

\newcommand{\LRVar}{\mathcal{L}}
\newcommand{\Pre}{\mathrm{Pre}}
\newcommand{\F}{\mathcal{F}}
\newcommand{\URVar}{\mathcal{U}}

\renewcommand{\P}{\operatorname{\mathbb P}}
\DeclareMathOperator{\E}{\mathbb E}
\newcommand{\Val}{\mathcal{V}al}

\newcommand{\mat}[1]{{\mathbf{#1}}}
\newcommand{\matA}{{\mat{A}}}
\newcommand{\matAD}{{\mat{A_{D}}}}
\newcommand{\matB}{{\mat{B}}}
\newcommand{\matBD}{{\mat{B_{D}}}}
\newcommand{\matC}{{\mat{C}}}
\newcommand{\matM}{{\mat{M}}}
\newcommand{\matS}{{\mat{S}}}
\renewcommand{\vec}[1]{\vv{#1}}
\newcommand{\vecx}{\vec{x}}
\newcommand{\vecy}{\vec{y}}
\newcommand{\vecz}{\vec{z}}
\newcommand{\veczero}{\vec{0}}

\newcommand{\NT}{\mathrm{NT}}
\newcommand{\ENT}{\mathrm{ENT}}
\newcommand{\semiring}{\mathcal{S}}

\newcommand{\symMat}[1]{#1}

\newcommand{\toolname}{SiRop} %

\newcommand{\conj}[1]{\overline{#1}}

\newcommand{\abs}[1]{\left|{#1}\right|}

\renewcommand{\emptyset}{\varnothing}
\renewcommand{\Re}{\operatorname{Re}}

\newcommand{\CEqZero}{\mathcal{C}^{=0}}
\newcommand{\CGTZero}{\mathcal{C}^{>0}}

\DeclareMathOperator{\Vandermonde}{V}

\DeclareMathOperator{\sign}{sign}

\newcommand{\oldcom}[1]{}

\crefname{enumi}{Item}{Items}
\Crefname{enumi}{Item}{Items}
\crefname{equation}{}{}
\Crefname{equation}{}{}
\crefname{figure}{Fig.}{Fig.}
\Crefname{figure}{Fig.}{Fig.}
\crefname{section}{Sect.}{Sect.}
\Crefname{section}{Sect.}{Sect.}
\crefname{appendix}{App.}{App.}
\Crefname{appendix}{App.}{App.}
\crefname{theorem}{Thm.}{Thm.}
\Crefname{theorem}{Thm.}{Thm.}
\crefname{definition}{Def.}{Def.}
\Crefname{definition}{Def.}{Def.}
\crefname{example}{Ex.}{Ex.}
\Crefname{example}{Ex.}{Ex.}
\crefname{corollary}{Cor.}{Cor.}
\Crefname{corollary}{Cor.}{Cor.}

\begin{document}
\maketitle

\begin{abstract}
  We show that universal positive almost sure termination (\textsf{UPAST}) is decidable for a class of simple randomized programs, i.e., it is decidable whether the expected runtime of such a program is finite for all inputs.
  Our class contains all programs that consist of a single loop, with a linear loop guard and a loop body composed of two linear commuting and diagonalizable updates.
  In each iteration of the loop, the update to be carried out is picked at random, according to a fixed probability.
  We show the decidability of \textsf{UPAST}
  for this class of programs, where the program's variables and inputs may range over various sub-semirings of the real numbers.
  In this way, we extend a line of research initiated by Tiwari in 2004 into the realm of randomized programs.
\end{abstract}

\section{Introduction}
\label{sect:Introduction}
We consider the problem of universal positive almost sure termination (\textsf{UPAST}), i.e., deciding whether a given randomized program has finite expected runtime on all inputs \cite{bournez2005proving,DBLP:conf/mfcs/Saheb-Djahromi78}.
This is a stronger property than universal almost sure termination (\textsf{UAST}) which requires that the probability of termination is $1$.
Our programs are simple randomized loops of the form
\begin{equation}
  \label{loop-intro}
  \programExpr{\matC}{\matA}{p}{\matB}{\vecx}
\end{equation}
Here, $\vecx = (x_{1},\ldots,x_{n})$ denotes the vector of program variables that range over a semiring $\semiring \subseteq \RR$, and $\matC \in \RR^{m \times n}$ is a matrix representing the loop guard with $m$ linear constraints over the program variables.
In each execution of the loop body, a matrix is chosen among $\matA,\matB \in \semiring^{n \times n}$ according to the probability $p \in [0,1]$ and the value $\vecx$ is updated accordingly.

\medskip

\emph{Our Contribution}: We show that \textsf{UPAST}
is decidable for all $\semiring \in \{ \ZZ, \QQ, \AA \}$ when limited to loops with diagonalizable commuting matrices $\matA$ and $\matB$, where $\AA$ is the set of algebraic real numbers.\footnote{\label{fn:whyAlgebraic}Our approach only considers \emph{algebraic}
  real $p$, $\matA$, $\matB$, and $\matC$, as it is not possible to represent arbitrary real numbers on computers.
  However, in \cref{sect:Deciding PAST} we will see that such a loop terminates for all algebraic real inputs iff it terminates for all real inputs.
}
Thus, we extend previous results on the termination of linear and affine\footnote{In an affine (or non-homogeneous) loop, the guard may have the form $\matC \vecx > \vec{c}$ and the update may have the form $\vecx \leftarrow \matA \vecx + \vec{a}$ for arbitrary vectors $\vec{c}$ and $\vec{a}$.}
non-randomized loops to the randomized setting.
In addition to deciding universal termination, our approach can compute a non-termination witness $\vecx \in \semiring^{n}$, i.e., if the loop is non-terminating, then $\vecx$ is an input leading to an infinite expected runtime.

Our programs go beyond single path loops as we might have $\matA \neq \matB$.
Thus, for every $k \in \NN$,\linebreak
there is not just a single execution of length $k$ but one has a ``range'' of possible executions of length $k$ where each execution occurs according to a known probability.
To ensure tractability of the resulting problem we require commutativity of both updates, so that we can focus on how often each update has been selected in an execution, but we do not have to take the $2^{k}$ different orders into account in which the two updates might have been chosen.
Moreover, we require diagonalizability to obtain closed forms of a certain shape, which allows us to analyze the behavior of a ``range'' of different executions at once.
To demonstrate the practical applicability of our decision procedure and the computation of non-termination witnesses, we provide a prototype implementation for the case $\semiring = \AA$ with our tool \textsf{\toolname}.

\medskip

\emph{Related Work}: We continue a line of research started in 2004 by Tiwari~\cite{tiwari04} who showed decidability of universal termination for loops with an affine guard and an affine update as its body, where the guard, updates, and inputs range over the real numbers.
In his proof, Tiwari reduced the affine to the linear case.
In 2006, Braverman~\cite{braverman06} proved that the problem remains decidable for loops and inputs ranging over the rational numbers $\QQ$, and if the guard and update are linear, then he also showed decidability over the integers $\ZZ$.
Similar to Tiwari, Braverman also reduced the affine case for $\QQ$ to the linear case.
In 2015, Ouaknine et al.\ proved~\cite{ouaknine15} that the affine case is decidable over the integers $\ZZ$ whenever the update is of the form $\vecx \leftarrow \matA \vecx + \vec{a}$, provided that $\matA \in \ZZ^{n \times n}$ is diagonalizable.
This restriction was removed by Hosseini et al.\ in 2019~\cite{hosseini19}.
In a related line of work, we proved decidability of universal termination over the integers $\ZZ$ for triangular affine loops, i.e., where the matrix $\matA \in \ZZ^{n \times n}$ is triangular~\cite{floriantriangularinitial}.
Later, we extended these results to triangular weakly non-linear loops which extend triangular loops by allowing certain non-linear updates~\cite{hark20,hark23}.

The only decidability results for termination of randomized programs that we are aware of consider probabilistic vector addition systems~\cite{brazdilpVASS}
or constant probability programs~\cite{constantprobability}, i.e., loops whose guards consist of only one affine inequation and whose bodies consist of several probabilistic branches (with fixed probabilities) that may increase or decrease the program variables by fixed constants.
The programs in~\cite{brazdilpVASS,constantprobability} are orthogonal to the ones considered in our approach as they only allow to modify the program variables by adding constants, but do not allow for multiplication.
Another related area of research~\cite{prob-solv1,prob-solv3,prob-solv2} deals with prob-solvable loops and moment invariants.
Given such a loop, these techniques can compute closed forms for all moments of program variables for a given iteration of the loop\linebreak
and, by taking a limit, also upon the loop's termination.
Thus, if restricted to almost surely terminating programs, they can decide \textsf{UPAST}.
However, in contrast to our method, these techniques require that all variables in the loop guard may only take finitely many values.
Moreover, there are many automated approaches for tackling \textsf{UPAST} using so-called \underline{r}anking \underline{s}uper\underline{m}artingales (RSM), e.g.,~\cite{lexrsm,ecoimp,cegispro2,program-analysis-with-martingales,termination-analysis-positivstellensatz,probkoat,amber,absynth,hoffmanncostanalysistypes}.
To generate a suitable RSM, one often uses techniques based on affine or polynomial templates, which renders the approach incomplete.
In~\cite{hardness-probabilistic-termination}, the authors showed that deciding \textsf{UPAST} is harder than deciding universal termination for non-randomized programs in terms of the arithmetic hierarchy.

\medskip

\emph{Outline of our Approach}:\report{ While we assume familiarity with basics from probability theory and linear algebra, we recapitulate some main concepts in \cref{sect:Preliminaries from Probability Theory,sec:preliminaries_from_linear_algebra}.}\paper{ We assume familiarity with basics from probability theory and linear algebra (we recapitulate some main concepts in \cite{arxiv}).}
\Cref{sect:programsTermination} formally introduces simple randomized loops, gives their semantics in terms of a probability space, and presents most of the definitions used throughout the paper.
For a loop as in \Cref{loop-intro}, we consider (finite) executions $f$ corresponding to words over the alphabet $\{\symMat{A},\symMat{B}\}$, where the $i$-th symbol in $f$ indicates which update matrix was used in the $i$-th application of the assignment $\vecx \leftarrow \matA\vecx \,\oplus_{p}\, \matB\vecx$.
For such executions $f$, $|f|_{\symMat{A}}$ and $|f|_{\symMat{B}}$ denote the number of $\symMat{A}$- and $\symMat{B}$-symbols in $f$, respectively.
Moreover, we introduce the function $\Val_{\vecx}$ that maps finite executions $f$ to the values $\Val_{\vecx}(f) \in \RR^{m}$ of the constraints in the loop guard after executing $f$ on a concrete input $\vecx \in \RR^{n}$, i.e., $\Val_{\vecx}(f) = \matC \cdot \matA^{|f|_{\symMat{A}}} \cdot \matB^{|f|_{\symMat{B}}} \cdot \vecx$, since $\matA$ and $\matB$ commute.
Our decision procedure does not search for a non-terminating input directly, but for an \emph{eventually} non-terminating input $\vecx$.
An input $\vecx$ is eventually non-terminating if by repeated execution of the loop body on $\vecx$ (while ignoring the guard), a non-terminating input can eventually be reached.
In \cref{lem:nt_vs_ent}, we show that a loop has an eventually non-terminating input iff it also has a non-terminating input.
Later (in \cref{sect:Deciding PAST}) we will show how to lift such an eventually non-terminating input to an actual non-terminating input.

In \cref{sect:On Constraint Terms}, we introduce a mapping $\URVar$ that maps executions $f$ to the difference between the relative number $\tfrac{|f|_{\symMat{A}}}{|f|}$ of times that the update matrix $\matA$ has been chosen in $f$ and the probability $p$ of choosing $\matA$, i.e., $\URVar(f) =\tfrac{|f|_{\symMat{A}}}{|f|} - p$.
Moreover, we essentially partition the set of indices $\{1, \ldots, n\}$ of all program variables into suitable sets $\constrainttermgroup_{(\scaleinside,\scaleoutside)}$ with $(\scaleinside,\scaleoutside) \in \mathcal{I} $ for some finite set $\mathcal{I} \subsetneq \RR_{>0}^{2}$.
We will then show that for all $c \in \{1,\ldots,m\}$, and all executions $f$ with $|f|_{\symMat{A}},|f|_{\symMat{B}} \geq 1$, the value ${(\Val_{\vecx}(f))}_{c}$ of the $c$-th constraint after executing $f$ on $\vecx$ is
\begin{equation}
  \textstyle
  {(\Val_{\vecx}(f))}_{c} = \sum_{(\scaleinside,\scaleoutside) \in
    \mathcal{I}} \; {\left(\scaleinside \cdot
  \scaleoutside^{\URVar(f)}\right)}^{|f|} \; \sum_{i \in
    \constrainttermgroup_{(\scaleinside,\scaleoutside)}} \;
  \zeta_{i,\matA}^{|f|_{\symMat{A}}} \, \zeta_{i,\matB}^{|f|_{\symMat{B}}} \,
  \gamma_{c,i} (\vecx).
  \label{eq:introduction_constraint_term_groups}
\end{equation}
Here, all $\zeta_{i,\matA}, \zeta_{i,\matB}$ are complex numbers of modulus $1$, i.e., $|\zeta_{i,\matA}| = |\zeta_{i,\matB}| = 1$ for all $i \in \{1,\ldots,\linebreak
  n\}$, and the functions $\gamma_{c,i}$ are linear maps $\RR^{n} \to \CC$. The maps $\gamma_{c,i}$ and the values $\zeta_{i,\matA}, \zeta_{i,\matB} \in \CC$ only depend on the matrices $\matC$, $\matA$, and $\matB$, but not on the specific input $\vecx$. While weaker requirements would suffice to ensure that ${(\Val_{\vecx}(f))}_{c}$ has some closed form, diagonalizability of $\mat{A}$ and $\mat{B}$ guarantees that it has the form \eqref{eq:introduction_constraint_term_groups}, which is crucial for our procedure. \Cref{lem:comparison_of_constraint_term_groups} shows that by a lexicographic comparison of those $(\scaleinside,\scaleoutside) \in \mathcal{I}$ for which the inner sum in \cref{eq:introduction_constraint_term_groups}
is not $0$ for all executions $f$, one can compute which of the pairs $(\scaleinside,\scaleoutside)$ is the ``dominant'' one.
Here, a pair $(\scaleinside,\scaleoutside) \in \mathcal{I}$ is considered dominant whenever the value of the first factor ${\left(\scaleinside \cdot \scaleoutside^{\URVar(f)}\right)}^{|f|}$ of \cref{eq:introduction_constraint_term_groups} grows the fastest if the execution of $f$ is continued (i.e., if $|f| \rightarrow \infty$) and the corresponding inner sum is not $0$ for all executions $f$.
The dominant pair depends on the specific input $\vecx$ and on whether $\URVar(f)$ is $\mathfrak{p}$ositive or $\mathfrak{n}$egative, and correspondingly, one has to use different lexicographic comparisons to determine the dominant pair.
For $d\in\{\nmax,\pmax\}$, let $\constrainttermgroup_{d,c,\vecx}$ denote the set $\constrainttermgroup_{(\scaleinside,\scaleoutside)}$ where the pair $(\scaleinside,\scaleoutside)$ is dominant for input $\vecx$ and $c \in \{1,\ldots,m\}$ (and positive $\URVar(f)$ if $d = \pmax$ or negative $\URVar(f)$ for $d = \nmax$).
Then, the sign of the ``coefficient'' $v(f) = \sum_{i \in \constrainttermgroup_{d,c,\vecx}}
  \!\!
  \zeta_{i,\matA}^{|f|_{\symMat{A}}} \zeta_{i,\matB}^{|f|_{\symMat{B}}} \gamma_{c,i}
  (\vecx)$ of the dominant pair eventually determines the sign of ${(\Val_{\vecx}(f))}_{c}$, provided that $|v(f)|$ is large enough (\Cref{lem:domination_of_eventually_dominating_constraint_term_groups}).

In \cref{sect:Positive Eigenvalues}, we consider the rearrangement
\begin{equation}
  \textstyle
  v(f) \; = \; \underbrace{\textstyle \sum_{i \in \mathfrak{R}_{d,c,\vecx}} \gamma_{c,i}
    (\vecx)}_{= R} \; + \;
  \sum_{i \in \mathfrak{C}_{d,c,\vecx}} \zeta_{i,\matA}^{|f|_{\symMat{A}}} \,
  \zeta_{i,\matB}^{|f|_{\symMat{B}}} \, \gamma_{c,i}
  (\vecx) \label{eq:introduction_constraint_term_groups_inner_split}
\end{equation}
where $\mathfrak{R}_{d,c,\vecx} = \{ i \in \constrainttermgroup_{d,c,\vecx} \mid \zeta_{i,\matA} = \zeta_{i,\matB} = 1 \}$ and $\mathfrak{C}_{d,c,\vecx} = \{ i \in \constrainttermgroup_{d,c,\vecx} \mid \{\zeta_{i,\matA},\zeta_{i,\matB}\} \neq \{1\} \}$.
\cref{lem:dual_positive_eigenvalues_for_eventually_dominating_constraints} then gives a necessary condition for non-termination (and hence a sufficient condition for universal termination): If $\vecx$ is an eventually non-terminating input, then there must be a $d \in \{\nmax,\pmax\}$ such that we have $R > 0$ for all constraints $c$, with $R$ as in \cref{eq:introduction_constraint_term_groups_inner_split}.

\Cref{sec:towards_witnesses_for_non-termination} turns this necessary condition for non-termination into a sufficient condition.
To that end, we define the set $W$ of witnesses for eventual non-termination containing all inputs $\vecx \in \semiring^{n}$ for which there is some $d \in \{\nmax,\pmax\}$ such that $R > \sum_{i \in \mathfrak{C}_{d,c,\vecx}} |\gamma_{c,i} (\vecx)|$ holds for all constraints $c$.
\Cref{lem:soundness_of_witnesses} shows that all $\vecx \in W$ are eventually non-terminating.
While this condition is only sufficient for non-termination, we show in \Cref{lem:boosting} that if the program is non-terminating, then there is also an input in $W$.
So the considered program is non-terminating iff $W \neq \emptyset$, i.e., the program is universally terminating iff $W = \emptyset$ (\Cref{CharacterizingTermination}).

Finally, our decision procedure for \textsf{UPAST} is presented in \cref{sect:Deciding PAST}.
\Cref{lem:witness_set_is_semialgebraic} shows that $W$ is semialgebraic and thus the emptiness problem is decidable over the real algebraic numbers, i.e., if $\semiring = \AA$.
Moreover, \cref{lem:witness_set_is_union_of_convex_sets} shows that $W$ can be represented as a finite union of convex semialgebraic sets.
Hence, emptiness of $W$ can also be decided over the rationals and integers~\cite{khachiyan97}.
\Cref{cor:computing_witnesses_for_nontermination} shows how witnesses for non-termination can be obtained from eventually non-terminating inputs $\vecx \in W$.
We discuss our implementation in the tool \textsf{\toolname} in \Cref{Implementation and Conclusion}.\paper{ For all proofs, we refer to~\cite{arxiv}.
  Moreover, \cite{arxiv} also contains proof sketches for our main results to help understanding the essential proof ideas.}\report{
  We mainly present proof sketches in the main part of the paper.
  For the full proofs, we refer to \Cref{app:programsTermination}-\ref{app:Deciding PAST}.}

\section{Programs \& Termination}
\label{sect:programsTermination}
As usual, let $\QQbar \subseteq \CC$ denote the set of algebraic numbers, i.e., the set of all roots of (univariate) polynomials from $\QQ[x]$.
As mentioned, $\AA = \QQbar \cap \RR$ denotes the set of algebraic reals, and $[n]$ denotes the set of positive natural numbers below and including $n$ for every $n\in\NN$ with $\NN = \NN_{>0} \cup \{0\}$, i.e., $[n] = \{1, \ldots, n-1, n\}$ for $n \geq 1$ and $[0] = \emptyset$.

We now define our class of simple randomized loops.
The program variables range over a semiring $\semiring \subseteq \AA$ with a guard consisting of a conjunction of $m$ strict linear inequations over the program variables, represented by a matrix $\matC \in \AA^{m \times n}$, and two commuting linear updates $\matA,\matB \in \semiring^{n \times n}$ that are diagonalizable (over $\CC$).
In each loop iteration, the applied update is chosen among $\matA,\matB$ according to the outcome of a (possibly biased) coin toss.

It is well known (e.g., \cite[Thm.\ 1.3.12]{hornmatrixanalysis}) that two matrices $\matA, \matB \in \CC^{n \times n}$ for $n\in\NN$ are commuting and diago\-nalizable iff they are simultaneously diagonalizable, i.e., there is a regular matrix $\matS \in \CC^{n \times n}$ such that $\matA = \matS \matAD \matS^{-1}$ and $\matB = \matS \matBD \matS^{-1}$, where $\matAD = \diag(a_{1},\ldots,a_{n}) \in \CC^{n \times n}$ and $\matBD = \diag(b_{1},\ldots,b_{n}) \in \CC^{n \times n}$ are complex diagonal matri\-ces, and $a_1,\ldots,a_n$ and $b_1,\ldots,b_n$ are the eigenvalues of $\matA$ and $\matB$, respectively.
Moreover, if $\matA, \matB \in \QQbar^{n \times n}$ are algebraic, then $\matS, \matAD, \matBD \in \QQbar^{n \times n}$ can also be chosen to be algebraic.

\begin{definition}[Simple Randomized Loops]
  \label{def:programs}
  Let $m,n \in \NN_{>0}$, $\matC \in \AA^{m \times n}$, $p \in [0,1] \cap \AA$, and $\matA, \matB \in \semiring^{n \times n}$ such that $\matA$ and $\matB$ are simultaneously diagonalizable.\footnote{\label{trivialSemiring}In the following, we assume $\NN \subseteq \semiring$ and exclude the trivial semiring $\semiring = \{0\}$ as every simple randomized loop terminates for the input $\veczero$.}
  Then,
  \[
    \programExpr{\matC}{\matA}{p}{\matB}{\vecx}
  \]
  is called a \emph{simple randomized loop} (over $\semiring$) of dimension $n$ with $m$ constraints.
  In the remainder, we will omit ``simple randomized'' and just refer to these programs as ``loops''.
\end{definition}

The meaning of ``$\vecx \leftarrow \matA\vecx \, \oplus_p \, \matB\vecx$'' is that $\vecx$ is updated to $\matA\vecx$ with probability $p$ and to $\matB\vecx$ with probability $1-p$.
The comparison $\matC\vecx > \veczero$ is understood componentwise, i.e., $\matC\vecx > \veczero$ iff ${(\matC\vecx)}_{c} > 0$ for all $c\in[m]$, where ${(\matC\vecx)}_{c}$ is the $c$-th entry of the vector ${\matC\vecx}$.
To simplify the notation, from now on we will consider a fixed loop $\program$ of dimension $n$ with $m$ constraints.
Moreover, w.l.o.g.\ we assume $p\in(0,1)$, i.e., $p\not\in\{0,1\}$, as otherwise one can set $\matA$ to $\matB$ if $p=0$ and $\matB$ to $\matA$ if $p=1$, and then replace $p$ by an arbitrary number from $(0,1)$.

A \emph{run} $\run$ of the loop $\program$ is an infinite word $\run = \run_{1}\run_{2}\ldots \in {\{\symMat{A}, \symMat{B}\}}^{\omega}$, where $\Runs$ denotes the set of all runs.
For a run $\run = \run_{1}\ldots$, the value $\run_{i} \in \{\symMat{A},\symMat{B}\}$ indicates which of the updates $\matA$ or $\matB$ was used in the $i$-th iteration of the loop.
\pagebreak[2]
Here, $\symMat{A}$ and $\symMat{B}$ are distinct markings even when $\matA = \matB$.
To simplify the notation, we introduce random variables $\run_{i}\colon\Runs \to \{\symMat{A},\symMat{B}\}$ for $i\in\NN_{>0}$ that map a run to its $i$-th element.
Note that all such\linebreak
random variables $\run_{i}$ are independent and identically distributed.
A \emph{(finite) execution} $f \in \bigcup_{k\in\NN} {\{\symMat{A}, \symMat{B}\}}^{k}$ is a (possibly empty) prefix of a run.
Let $\Path$ denote the (countable) set of all such finite executions.
Given a finite execution $f = \run_{1} \ldots \run_{k}$ with $\run_i \in \{\symMat{A}, \symMat{B}\}$, let $|f| = k$ denote its length.
Furthermore, for $s \in \{\symMat{A},\symMat{B}\}$, $|f|_{s} = |\{i\in[k] \mid \run_{i} = s\}|$ denotes the number of performed updates with update matrix $\matA$ or $\matB$, respectively, during the execution of $f$.

Since the definition of runs is independent from the specific input of the loop, the semantics of the loop $\program$ depend only on the value of $p\in(0,1)$.
To obtain a probability measure $\P$ for $\program$, one first considers cylinder sets $\Pre_{f} = \{f\run_{1}\run_{2}\ldots \in\Runs \mid \run_i \in \{\symMat{A}, \symMat{B} \} \text{ for } i \geq 1 \}$ for all $f\in\Path$, i.e., $\Pre_{f}$ contains all runs with prefix $f$.
By requiring $\P (\Pre_{f}) = p^{|f|_{\symMat{A}}} \cdot {(1-p)}^{|f|_{\symMat{B}}}$ for all $f \in \Path$, one obtains a (unique) probability measure $\P\colon \mathcal{F} \rightarrow (0,1)$ on the $\sigma$-field $\F$ generated by all cylinder sets $\Pre_{f}$, see, e.g.,~\cite[Thm.\ 2.7.2]{ashprobabilityandmeasuretheory}.

\begin{definition}[Semantics of Loops]
  The \emph{semantics} of $\program$ is given as a probability space $(\Runs,\F,\P)$ where $\F = \sigma(\{ \Pre_{f} \mid f\in\Path \})$ and $\P (\Pre_{f}) = p^{|f|_{\symMat{A}}} \cdot {(1-p)}^{|f|_{\symMat{B}}}$.
\end{definition}

To capture the behavior of $\program$ on some specific input $\vecx\in\semiring^{n}$, we introduce a function $\Val_{\vecx} \colon \Path \to \AA^{m}$ that associates finite executions $f\in\Path$ with the values of $\program$'s guard $\matC\vecx$ \emph{after} the execution of $f$.
Recall that $\matA$ and $\matB$ commute.
Hence, we define
\begin{equation*}
  \textstyle
  \Val_{\vecx} (\run_{1}\ldots \run_{k})
  \; = \; \matC \left(\prod_{i=1}^{k}
  \begin{cases}
    \matA & \text{if } f_{i} = \symMat{A} \\
    \matB & \text{otherwise}
  \end{cases} \right) \vecx
  \; = \; \matC \cdot \matA^{|\run_{1}\ldots \run_{k}|_{\symMat{A}}} \cdot \matB^{|\run_{1}\ldots \run_{k}|_{\symMat{B}}}\cdot \vecx
\end{equation*}
where for every matrix $\matM \in \CC^{n \times n}$, $\matM^{0}$ is the $n$-dimensional identity matrix.

\begin{restatable}[Values of Constraints]{lemma}{valuesofconstraints}
  \label{lem:constraint_terms_in_diag_form}
  For any $f\in\Path$ and $\vecx = (x_{1},\ldots,x_{n}) \in \AA^{n}$ we have
  \[ \textstyle
    \Val_{\vecx} (f)\!=\!\matC \cdot \matS \cdot \diag\left(a_{1}^{|f|_{\symMat A}} \cdot b_{1}^{|f|_{\symMat B}},\ldots,a_{n}^{|f|_{\symMat A}} \cdot b_{n}^{|f|_{\symMat B}}\right) \cdot \matS^{-1} \cdot \vecx .
  \]
\end{restatable}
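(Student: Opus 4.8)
The plan is to start from the closed expression for $\Val_{\vecx}(f)$ recorded just before the lemma, namely $\Val_{\vecx}(f) = \matC \cdot \matA^{|f|_{\symMat{A}}} \cdot \matB^{|f|_{\symMat{B}}} \cdot \vecx$. This identity already uses that $\matA$ and $\matB$ commute, which is precisely what allows the ordered product over the symbols of $f$ to be rearranged so that all $\matA$-factors and all $\matB$-factors are grouped together. Given this, it suffices to prove
\[
  \textstyle \matA^{k} \cdot \matB^{\ell} \;=\; \matS \cdot \diag\!\left(a_{1}^{k} b_{1}^{\ell},\ldots,a_{n}^{k} b_{n}^{\ell}\right) \cdot \matS^{-1}
\]
for all $k,\ell \in \NN$, and then to instantiate $k = |f|_{\symMat{A}}$ and $\ell = |f|_{\symMat{B}}$, left-multiply by $\matC$, and right-multiply by $\vecx$.

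For the displayed matrix identity I would use the simultaneous diagonalization recalled before \Cref{def:programs}: there is a regular $\matS$ with $\matA = \matS \matAD \matS^{-1}$ and $\matB = \matS \matBD \matS^{-1}$, where $\matAD = \diag(a_{1},\ldots,a_{n})$ and $\matBD = \diag(b_{1},\ldots,b_{n})$. A routine induction on the exponent (telescoping the inner $\matS^{-1}\matS = I$ factors) gives $\matA^{k} = \matS \matAD^{k} \matS^{-1}$ and $\matB^{\ell} = \matS \matBD^{\ell} \matS^{-1}$; the base case $\matA^{0} = \matS \matAD^{0} \matS^{-1} = \matS \matS^{-1} = I$ also covers the empty execution. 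Multiplying these and again cancelling $\matS^{-1}\matS$ yields $\matA^{k} \matB^{\ell} = \matS \, \matAD^{k} \matBD^{\ell} \, \matS^{-1}$, and since products of diagonal matrices are taken entrywise, $\matAD^{k} \matBD^{\ell} = \diag(a_{1}^{k},\ldots,a_{n}^{k}) \diag(b_{1}^{\ell},\ldots,b_{n}^{\ell}) = \diag(a_{1}^{k} b_{1}^{\ell},\ldots,a_{n}^{k} b_{n}^{\ell})$, as desired.

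There is essentially no serious obstacle here; the proof is a direct computation. The only two points worth stressing are that commutativity of $\matA$ and $\matB$ is exactly what justifies replacing the ordered product over $f$ by $\matA^{|f|_{\symMat{A}}} \matB^{|f|_{\symMat{B}}}$, and that $\matS$, $\matAD$, $\matBD$ can be chosen with algebraic entries (as noted in the discussion preceding \Cref{def:programs}), so every quantity appearing on the right-hand side lies in $\QQbar$; combined with the fact that the expression also equals the real vector $\matC \matA^{|f|_{\symMat{A}}} \matB^{|f|_{\symMat{B}}} \vecx$, this confirms $\Val_{\vecx}(f) \in \AA^{m}$ and hence that $\Val_{\vecx}$ is a well-defined map $\Path \to \AA^{m}$.
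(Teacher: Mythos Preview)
Your proposal is correct and follows essentially the same approach as the paper: start from $\Val_{\vecx}(f) = \matC \matA^{|f|_{\symMat{A}}} \matB^{|f|_{\symMat{B}}} \vecx$, substitute the simultaneous diagonalizations $\matA = \matS \matAD \matS^{-1}$ and $\matB = \matS \matBD \matS^{-1}$, telescope, and multiply the diagonal matrices entrywise. You spell out a bit more detail (the induction on exponents, the base case, and the remark on algebraicity), but there is no substantive difference.
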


\begin{example}
  \label{ex:initial_program}
  Consider the loop ``$\programExpr{\matC}{\matA}{p}{\matB}{\vecx}$'' with
  \[ \textstyle
    \mbox{\small $\begin{array}{rcl@{\qquad}rcl@{\qquad}rcl}
          \matC & =          &
          \begin{pmatrix}
            1 & 1 & 1
          \end{pmatrix}
                & \matA      & = &
          \begin{pmatrix}
            11 & 5  & -8 \\
            9  & 15 & 8  \\
            7  & -1 & 6  \\
          \end{pmatrix}
                & \matB      & = &
          \begin{pmatrix}
            -7 & 5 & 16  \\
            17 & 5 & -16 \\
            -9 & 7 & -12 \\
          \end{pmatrix}
          \\
                &            &   & \matS & = &
          \begin{pmatrix}
            -\im & \im  & 3 \\
            \im  & -\im & 7 \\
            1    & 1    & 1
          \end{pmatrix}
                & \matS^{-1} & = &
          \begin{pmatrix}
            \tfrac{-1+7\im}{20} & \tfrac{-1-3\im}{20} & \tfrac{1}{2} \\
            \tfrac{-1-7\im}{20} & \tfrac{-1+3\im}{20} & \tfrac{1}{2} \\
            \tfrac{1}{10}       & \tfrac{1}{10}       & 0
          \end{pmatrix}
        \end{array}$}
  \]
  Then, for $\vecx \in \AA^{3}$ and all $f\in\Path$, $\Val_{\vecx}(f)$ equals
  \[ \textstyle
    \matC \cdot \matS \cdot \diag({(6-8\im)}^{|f|_{\symMat{A}}}
    \cdot {(-12+16\im)}^{|f|_{\symMat{B}}}, \;\; {(6+8\im)}^{|f|_{\symMat{A}}} \cdot
    {(-12-16\im)}^{|f|_{\symMat{B}}}, \;\;
    20^{|f|_{\symMat{A}}} \cdot 10^{|f|_{\symMat{B}}})\; \cdot \matS^{-1} \cdot
    \vecx .
  \]
\end{example}

The value of the $c$-th constraint after the execution of $f \in \Path$ on the initial value $\vecx \in \AA^{n}$ is given by the expression
\begin{align}
  {(\Val_{\vecx} (f))}_{c} & = \textstyle {(\matC \cdot \matA^{|f|_{\symMat{A}}} \cdot \matB^{|f|_{\symMat{B}}} \cdot \vecx)}_{c} \nonumber \\
                           & = \textstyle {( \matC \cdot \matS \cdot \matAD^{|f|_{\symMat{A}}} \cdot \matBD^{|f|_{\symMat{B}}} \cdot \matS^{-1}
  \cdot \vecx )}_{c} \tag{\Cref{lem:constraint_terms_in_diag_form}} \\
                           & = \textstyle \sum_{i\in[n]} {(\matC \cdot \matS)}_{c,i} \cdot a_{i}^{|f|_{\symMat{A}}} \cdot b_{i}^{|f|_{\symMat{B}}} \cdot {(\matS^{-1} \cdot \vecx)}_{i} \notag \\
                           & = \textstyle \sum_{i\in[n]} a_{i}^{|f|_{\symMat{A}}} \cdot b_{i}^{|f|_{\symMat{B}}} \cdot \gamma_{c,i} (\vecx) \label{eq:constraint_terms_in_closed_forms}
\end{align}
for linear maps $\gamma_{c,i} \colon \AA^{n} \to \CC$ with
\begin{equation}
  \label{def:gammaci}
  \gamma_{c,i}(\vecx) = {(\matC \cdot \matS)}_{c,i} \cdot {(\matS^{-1} \cdot \vecx)}_{i}.
\end{equation}
Here,
\pagebreak[2]
$(\matC \cdot \matS)_{c,i}$ denotes the entry of $\matC \cdot \matS$ at row $c$ and column $i$.
Moreover, since $\matC \in \AA^{m \times n}$, $\matA,\matB \in \AA^{n \times n}$, and $\vecx\in\AA^{n}$, we have $\gamma_{c,i} (\vecx) \in \QQbar$ for all $(c,i)\in[m]\times[n]$.\footnote{\label{algebraicMatrices}This observation will be needed in the final SMT encoding for our decision procedure (see \Cref{lem:witness_set_is_semialgebraic}), as we have to encode the coefficients $\gamma_{c,i}(\vecx)$ for a given $\vecx \in \AA^{n}$.}
In the following, we refer to the addends $a_{i}^{|f|_{\symMat{A}}} \cdot b_{i}^{|f|_{\symMat{B}}} \cdot \gamma_{c,i} (\vecx)$ of the sum in \cref{eq:constraint_terms_in_closed_forms} as \emph{constraint terms}.

\begin{example}
  \label{ex:gammas}
  Reconsider \Cref{ex:initial_program}.
  Then, $(\Val_{\vecx} (f))_1 = \sum_{i\in[3]} a_{i}^{|f|_{\symMat{A}}} \cdot b_{i}^{|f|_{\symMat{B}}} \cdot \gamma_{1,i} (\vecx)$ with eigenvalues $a_{1} = 6-8\im, \, a_{2} = 6+8\im,\, a_{3} = 20, b_{1} = -12+16\im, \, b_{2} = - 12 - 16\im, \, b_{3} = 10$, and
  \begin{align*}
    \textstyle
    \gamma_{1,1} (\vecx) & = \left(- \tfrac{1}{20} + \tfrac{7\im}{20}\right)x_{1} - \left(\tfrac{1}{20} + \tfrac{3\im}{20}\right)x_{2} + \tfrac{1}{2} x_{3} \\
    \gamma_{1,2} (\vecx) & = \left(- \tfrac{1}{20} - \tfrac{7\im}{20}\right)x_{1} - \left(\tfrac{1}{20} - \tfrac{3\im}{20}\right)x_{2} + \tfrac{1}{2} x_{3} \\
    \gamma_{1,3} (\vecx) & = \tfrac{11}{10}x_{1} + \tfrac{11}{10}x_{2} .
  \end{align*}
\end{example}

\begin{corollary}
  \label{GammaAB}
  For all $(c,i)\in[m]\times[n]$ and all $\vecx,\vecy \in \AA^n$, due to the definition of $\gamma_{c,i}$ we have $\gamma_{c,i}(\matA \cdot \vecx + \vecy) = a_i \cdot \gamma_{c,i}(\vecx) + \gamma_{c,i} (\vecy)$ and $\gamma_{c,i}(\matB \cdot \vecx + \vecy) = b_i \cdot \gamma_{c,i}(\vecx) + \gamma_{c,i} (\vecy)$.
\end{corollary}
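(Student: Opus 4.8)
The plan is to observe that $\gamma_{c,i}$ is, up to the fixed scalar $(\matC\matS)_{c,i}$, just the $i$-th coordinate of the linear map $\vecx \mapsto \matS^{-1}\vecx$, and is therefore itself a linear map $\AA^n \to \CC$. In particular $\gamma_{c,i}$ is additive, which immediately reduces both claimed equalities to the ``eigenvalue'' identities $\gamma_{c,i}(\matA\vecx) = a_i\,\gamma_{c,i}(\vecx)$ and $\gamma_{c,i}(\matB\vecx) = b_i\,\gamma_{c,i}(\vecx)$, since then $\gamma_{c,i}(\matA\vecx+\vecy) = \gamma_{c,i}(\matA\vecx) + \gamma_{c,i}(\vecy) = a_i\,\gamma_{c,i}(\vecx) + \gamma_{c,i}(\vecy)$, and likewise for $\matB$.

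For the first identity I would exploit simultaneous diagonalizability: from $\matA = \matS\matAD\matS^{-1}$ we obtain $\matS^{-1}\matA = \matAD\matS^{-1}$, so for every $\vecx \in \AA^n$ the $i$-th component satisfies ${(\matS^{-1}\matA\vecx)}_i = {(\matAD\,\matS^{-1}\vecx)}_i = a_i\cdot{(\matS^{-1}\vecx)}_i$, using that $\matAD = \diag(a_1,\ldots,a_n)$ so that multiplication by $\matAD$ scales the $i$-th coordinate by exactly $a_i$. Substituting this into the definition $\gamma_{c,i}(\vecx) = {(\matC\matS)}_{c,i}\cdot{(\matS^{-1}\vecx)}_i$ gives $\gamma_{c,i}(\matA\vecx) = {(\matC\matS)}_{c,i}\cdot a_i\cdot{(\matS^{-1}\vecx)}_i = a_i\,\gamma_{c,i}(\vecx)$. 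The computation for $\matB$ is word-for-word the same with $\matAD$ replaced by $\matBD$ and $a_i$ by $b_i$. Combining additivity with these two identities yields the corollary.

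There is essentially no obstacle here: the statement is a direct unfolding of the definition of $\gamma_{c,i}$ together with the defining relations $\matA = \matS\matAD\matS^{-1}$ and $\matB = \matS\matBD\matS^{-1}$; the only point to keep straight is that extracting the $i$-th coordinate after multiplying by a diagonal matrix picks up precisely the $i$-th diagonal entry as a factor.
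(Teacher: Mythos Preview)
Your proof is correct and is precisely the natural unfolding the paper has in mind: the paper does not give an explicit proof but simply records the corollary as an immediate consequence of the definition of $\gamma_{c,i}$, and your argument via $\matS^{-1}\matA = \matAD\matS^{-1}$ together with additivity of $\gamma_{c,i}$ is exactly that unfolding.
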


The following lemma shows that if one has two pairs of eigenvalues $(a,b)$ and $(\conj{a},\conj{b})$ where $\conj{a}$ and $\conj{b}$ are the complex conjugates of $a$ and $b$, then the sum of all linear maps $\gamma_{c,i} (\vecx)$ where $(a_i,b_i) = (a,b)$ is the complex conjugate of the sum of all linear maps $\gamma_{c,i} (\vecx)$ where $(a_i,b_i) = (\conj{a},\conj{b})$.
For instance in \Cref{ex:gammas}, $(a_1,b_1)$ are the complex conjugates of $(a_2,b_2)$ and indeed, we have $\gamma_{1,2}(\vecx) = \conj{\gamma_{1,1}(\vecx)}$.
This lemma will later be needed to show that when representing ${(\Val_{\vecx} (f))}_{c}$ and summing up the coefficients of its addends in a suitable way, all resulting coefficients are real numbers (see \Cref{rem:sums_of_constraint_term_groups_are_real_valued}).

\begin{restatable}[Sums of Conjugated Constraint Terms are Real-Valued]{lemma}{sumsofconjugatedaddendsarerealvalued}
  \label{lem:sums_of_conjugated_addends_are_real_valued}
  Let $c\in[m]$, let $a,b \in \CC$, and let $\gamma_{c,i}$ be the linear map from \cref{def:gammaci}.
  Then, for all inputs $\vecx \in \AA^{n}$ we have $\gamma_{2} = \conj{\gamma_{1}}$ where
  \[
    \textstyle
    \gamma_{1} = \sum_{i \,\in\, [n], \;
    (a_{i},b_{i}) \,=\, (a,b)} \;
    \gamma_{c,i} (\vecx) \quad\text{and}\quad \gamma_{2} = \sum_{i \,\in\, [n], \;
    (a_{i},b_{i}) \,=\, (\conj{a}, \conj{b})} \;
    \gamma_{c,i} (\vecx).
  \]
\end{restatable}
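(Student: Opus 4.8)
The plan is to rewrite $\gamma_1$ and $\gamma_2$ as values of the linear forms $\vecx \mapsto (\matC \cdot P \cdot \vecx)_c$ for suitable \emph{spectral projections} $P$, and then to reduce the claim $\gamma_2 = \conj{\gamma_1}$ to the matrix identity $\conj{P_{(a,b)}} = P_{(\conj a,\conj b)}$, which will hold because $\matA$ and $\matB$ are real.

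First I would set up the projections. For a pair $(a',b') \in \CC^2$, let $D_{(a',b')} \in \{0,1\}^{n \times n}$ be the diagonal matrix whose $i$-th diagonal entry is $1$ iff $(a_i,b_i) = (a',b')$, and put $P_{(a',b')} = \matS \cdot D_{(a',b')} \cdot \matS^{-1}$. Writing $\matC_c$ for the (real) $c$-th row of $\matC$ and pulling the scalar factors $(\matS^{-1}\vecx)_i$ into the matrix product, \cref{def:gammaci} gives, for every $\vecx \in \AA^n$,
\[
  \gamma_1 \;=\!\! \sum_{i \in [n],\, (a_i,b_i) = (a,b)}\!\! (\matC\matS)_{c,i}\,(\matS^{-1}\vecx)_i \;=\; \bigl(\matC \cdot \matS \cdot D_{(a,b)} \cdot \matS^{-1} \cdot \vecx\bigr)_c \;=\; \matC_c \cdot P_{(a,b)} \cdot \vecx ,
\]
and likewise $\gamma_2 = \matC_c \cdot P_{(\conj a,\conj b)} \cdot \vecx$. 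Since $\matC_c$ and $\vecx$ are real, $\conj{\gamma_1} = \matC_c \cdot \conj{P_{(a,b)}} \cdot \vecx$, so it suffices to prove $\conj{P_{(a,b)}} = P_{(\conj a,\conj b)}$.

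For this I would use that $P_{(a',b')}$ is the projection of $\CC^n$ onto the joint eigenspace $V_{(a',b')} = \ker(\matA - a'I) \cap \ker(\matB - b'I)$ along $\bigoplus_{(a'',b'') \neq (a',b')} V_{(a'',b'')}$ (the columns of $\matS$ form a basis adapted to this decomposition, and $\matS D_{(a',b')}\matS^{-1}$ retains exactly the $V_{(a',b')}$-component). Because $\matA$ and $\matB$ have real entries, $\matA\conj v = \conj{\matA v}$ and $\matB\conj v = \conj{\matB v}$, so entrywise conjugation maps $V_{(a',b')}$ bijectively onto $V_{(\conj{a'},\conj{b'})}$ and hence permutes the summands of the decomposition. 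Thus for $z \in \CC^n$ with decomposition $z = \sum_{(a'',b'')} z_{(a'',b'')}$ (so $P_{(a,b)}z = z_{(a,b)}$), the decomposition of $\conj z$ is $\sum_{(a'',b'')} \conj{z_{(a'',b'')}}$, whence $P_{(\conj a,\conj b)}\conj z = \conj{z_{(a,b)}} = \conj{P_{(a,b)} z}$; taking $z$ to be each (real) unit vector gives $\conj{P_{(a,b)}} = P_{(\conj a,\conj b)}$, and combining with the previous paragraph yields $\conj{\gamma_1} = \matC_c \cdot P_{(\conj a,\conj b)} \cdot \vecx = \gamma_2$.

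The only non-routine point is the last paragraph: arguing that conjugation interacts correctly with the eigenspace decomposition even though $\matS$ is a \emph{fixed, arbitrary} simultaneous diagonalizer whose columns conjugation need not permute. This is handled by the basis-free description of $P_{(a,b)}$ as a spectral projection. If one prefers an entirely explicit argument, set $Q := \matS^{-1}\conj{\matS}$; from $\matA\conj\matS = \conj{\matA\matS} = \conj\matS\,\conj\matAD$ one gets $\matAD Q = Q\conj\matAD$, and similarly $\matBD Q = Q\conj\matBD$, so $Q_{ij} = 0$ unless $(a_i,b_i) = (\conj{a_j},\conj{b_j})$; a direct index computation then shows $Q D_{(a,b)} Q^{-1} = D_{(\conj a,\conj b)}$, and since $D_{(a,b)}$ is real, $\conj{P_{(a,b)}} = \conj\matS\, D_{(a,b)}\, \conj\matS^{-1} = \matS\, Q D_{(a,b)} Q^{-1}\, \matS^{-1} = \matS\, D_{(\conj a,\conj b)}\, \matS^{-1} = P_{(\conj a,\conj b)}$. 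Everything else is bookkeeping with \cref{def:gammaci} and \Cref{lem:constraint_terms_in_diag_form}.
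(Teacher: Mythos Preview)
Your proof is correct and takes a genuinely different route from the paper. The paper fixes, as a standing assumption in its linear-algebra preliminaries, a simultaneous diagonalizer $\matS$ whose columns are either real or come in complex-conjugate pairs; it then invokes \Cref{lem:conjugate_coefficients_in_eigenbasis_representation}, which says that for each $i$ there is a unique $j$ with $\matS_{\_,i}=\conj{\matS_{\_,j}}$ and $(\matS^{-1}\vecx)_i=\conj{(\matS^{-1}\vecx)_j}$, and concludes by a direct term-by-term matching of the addends $\gamma_{c,i}(\vecx)$ with their conjugates. Your argument instead packages $\gamma_1$ and $\gamma_2$ as $\matC_c P_{(a,b)}\vecx$ and $\matC_c P_{(\conj a,\conj b)}\vecx$ and proves the basis-free identity $\conj{P_{(a,b)}}=P_{(\conj a,\conj b)}$ for the joint spectral projections; this works for \emph{any} simultaneous diagonalizer $\matS$, so you avoid the paper's standing assumption on the shape of $\matS$. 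The paper's approach is more elementary (no need to argue that $\matS D_{(a,b)}\matS^{-1}$ is the intrinsic spectral projection), while yours is cleaner and more robust, and your explicit $Q=\matS^{-1}\conj{\matS}$ computation is a nice self-contained alternative to the conceptual eigenspace argument.
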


In order to define the notion of termination for $\program$, we first introduce the concept of a run's length by counting the number of iterations until the guard is violated for the first time.
Throughout the paper, we use the convention $\min \emptyset = \infty$.

\begin{definition}[Length Of Runs]
  For any $\vecx \in \AA^n$, we define the random variable $\LRVar_{\vecx} \colon \Runs \to \NN \cup \{\infty\}$ as $\LRVar_{\vecx}(\run_{1}\run_2\ldots ) = \min \{k \in \NN \mid \Val_{\vecx} (\run_{1}\ldots\run_{k}) \not > \veczero\}$.
\end{definition}

We now define the \emph{expected runtime} of $\program$ for the input $\vecx$ as the expectation $\E(\LRVar_{\vecx})$.
If $\E (\LRVar_{\vecx}) = \infty$, we call the corresponding input $\vecx$ \emph{non-terminating}.
So we consider \emph{positive almost sure termination} \cite{bournez2005proving,DBLP:conf/mfcs/Saheb-Djahromi78}, where termination corresponds to a finite expected runtime.

\begin{definition}[Non-Terminating Inputs]
  The set of \emph{non-terminating} inputs is $\NT = \{\vecx \in \AA^{n} \mid \E (\LRVar_{\vecx}) = \infty \}$.
\end{definition}
Consequently, we call $\program$ \emph{terminating} whenever $\NT = \emptyset$ and non-terminating otherwise.

As in~\cite{braverman06,hark20,hark23,hosseini19,ouaknine15,tiwari04}, we focus on \emph{eventual non-termination} instead of actual non-termination as this allows us to ignore a finite number of initial updates of the loop.
In our setting, an input $\vecx$ is \emph{eventually non-terminating}
if a non-terminating input $\vecy$ can be reached by repeated application of the updates in the loop body to $\vecx$.

\begin{definition}[Eventual Non-Termination]
  \label{def:ent}
  We define the set of \emph{eventually non-terminating} inputs as $\ENT = \bigcup_{j,k \in \NN} \{ \vecx \in \AA^{n} \mid \matA^{j}\matB^{k}\vecx \in \NT \}$.
\end{definition}

The motivation behind considering $\ENT$ instead of $\NT$ is that it allows us to ``jump'' over the first iterations of the loop (where the loop guard might be violated).
In this way, we can focus only on the longterm behavior of the loop on a given input.

\begin{example}[Difference Between $\NT$ \& $\ENT$]
  Consider the loop ``$\programExpr{\matC}{\matA}{0.5}{\matB}{\vecx}$'' with $\matC = \mbox{\small $
        \begin{pmatrix}
          1 & 1
        \end{pmatrix}
      $}$ and $\matA = \matB = \mbox{\small $
        \begin{pmatrix}
          2 & 0 \\
          0 & 1
        \end{pmatrix}
      $}$.
  Then, $\vecx = \mbox{\small $
        \begin{pmatrix}
          1 \\
          -2
        \end{pmatrix}
      $}
    \not\in \NT$ as $\matC \vecx = -1\linebreak
    \not> 0$, i.e., $\vecx$ violates the loop guard. While $\matA \vecx$ also violates the loop guard (since $\matC \matA \vecx = 0\linebreak
    \not> 0$), we have $\matC \matA^j \vecx > 0$ for all $j \geq 2$. Thus, $\matA^2 \vecx \in \NT$ and therefore, $\vecx \in \ENT$.
\end{example}

Considering $\ENT$ instead of $\NT$ is justified by the fact that $\NT = \emptyset$ iff $\ENT=\emptyset$, as shown by \Cref{lem:nt_vs_ent}.

\begin{restatable}[Correspondence of $\ENT$ \& $\NT$]{lemma}{correspondencebetweenentandnt}
  \label{lem:nt_vs_ent}
  For any semiring $\semiring \subseteq \AA$, we have $\NT \cap \semiring^{n} = \emptyset$ iff $\ENT \cap \semiring^{n} = \emptyset$.
\end{restatable}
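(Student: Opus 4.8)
The plan is to prove both directions of the contrapositive, i.e., $\NT \cap \semiring^n \neq \emptyset$ iff $\ENT \cap \semiring^n \neq \emptyset$. One direction is immediate from the definitions: if $\vecx \in \NT \cap \semiring^n$, then $\matA^0 \matB^0 \vecx = \vecx \in \NT$, so $\vecx \in \ENT \cap \semiring^n$ as well (using $\NN \subseteq \semiring$ only trivially here, since $\vecx$ itself already lies in $\semiring^n$). So the substantive content is the converse: from an eventually non-terminating input in $\semiring^n$, produce an actually non-terminating input in $\semiring^n$.

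For the converse, suppose $\vecx \in \ENT \cap \semiring^n$, so there are $j,k \in \NN$ with $\vecy := \matA^j \matB^k \vecx \in \NT$. The natural candidate for a non-terminating input in $\semiring^n$ is $\vecy$ itself: since $\matA, \matB \in \semiring^{n \times n}$ and $\semiring$ is a semiring (closed under addition and multiplication, containing $0$ and $1$), the product $\matA^j \matB^k \vecx$ again lies in $\semiring^n$. Thus $\vecy \in \NT \cap \semiring^n$, which is exactly what we need. The only thing to check carefully is that $\E(\LRVar_{\vecy}) = \infty$ genuinely holds — but this is given, since $\vecy \in \NT$ by assumption. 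So at the level of this lemma, the argument is essentially bookkeeping: $\ENT$ is defined precisely as the union over $j,k$ of preimages of $\NT$ under $\matA^j \matB^k$, and the point is just that these preimages, intersected with $\semiring^n$, map into $\NT \cap \semiring^n$ because $\semiring^n$ is forward-closed under the updates.

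I expect the main subtlety — and the reason this is stated as a lemma rather than folded into a definition — to lie in making sure the semiring closure argument is airtight for all three intended instances $\semiring \in \{\ZZ, \QQ, \AA\}$, and in the direction that is \emph{not} needed here but is the hard content in the non-randomized literature, namely lifting an \emph{eventually} non-terminating witness found later by the decision procedure back to an actual non-terminating input. That lifting (going from $\ENT$-membership to an explicit $\NT$-witness, possibly requiring one to "run backwards" or to argue that the forward orbit stays inside the guard from some point on) is deferred in this paper to \Cref{sect:Deciding PAST} and \Cref{cor:computing_witnesses_for_nontermination}; for \Cref{lem:nt_vs_ent} as stated, only the easy equivalence of \emph{emptiness} is claimed, and the forward-closure of $\semiring^n$ under $\matA$ and $\matB$ does all the work in both directions.
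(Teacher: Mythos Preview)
Your proof is correct and matches the paper's own argument essentially line for line: one direction uses $\NT \subseteq \ENT$ (via $j=k=0$), and the other uses that $\vecy = \matA^j\matB^k\vecx$ stays in $\semiring^n$ because $\matA,\matB \in \semiring^{n\times n}$. Your additional commentary about the harder lifting to an explicit $\NT$-witness is accurate but, as you note, not needed for this lemma.
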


\section{On Constraint Terms}
\label{sect:On Constraint Terms}
In this section we consider the value $\Val_{\vecx}(f)$ for $|f| \rightarrow \infty$, motivated by our interest in eventual non-termination.
In the first part of this section, we represent ${(\Val_{\vecx}(f))}_{c}$, for $c \in [m]$, as a sum over so-called ``constraint term groups'', expressed using a quantity $\URVar(f)$ corresponding to ``how much $f$ has deviated from the expected execution''.
The section's second part then shows that for specific $f \in \Path$ it suffices to only consider certain addends of this sum in order to decide whether ${(\Val_{\vecx}(f))}_{c} > 0$ as $|f| \rightarrow \infty$.
This observation will lead to a necessary condition for eventual non-termination in \cref{sect:Positive Eigenvalues}, which will subsequently be turned into a sufficient criterion in \cref{sec:towards_witnesses_for_non-termination}.

When executing the loop $\program$, the relative number of times that update $\matA$ is selected over $\matB$ will intuitively tend towards $p$ with increasing number of iterations.
We now consider this relative quantity and additionally subtract $p$ to center its distribution around $0$.

\begin{definition}[Deviation From Equilibrium]
  \label{def:u}
  The mapping $\URVar\colon \Path \to [-p,1-p]$ is defined as $\URVar (f) = \tfrac{|f|_{\symMat{A}}}{|f|} - p$ for every non-empty $f\in\Path$ and $\URVar(f) = 0$ otherwise.
\end{definition}

We will investigate which addends determine the sign of ${(\Val_{\vecx} (f))}_{c}$ for $|f| \to \infty$.
To this end, we want to express the value of the constraint terms after some finite execution $f$ in terms of $\URVar(f)$ and $|f|$.
The advantage is that for sufficiently long paths $f$, we know how $|f|$ and $\URVar(f)$ ``behave'' (i.e., $|f|$ ``tends towards'' $\infty$ and $\URVar(f)$ is ``expected to tend towards'' 0).
\begin{restatable}[Normal Form of Constraint Terms]{lemma}{normalformofconstraintterms}
  \label{lem:normal_form_of_constraint_terms}
  Let $a^{|f|_{\symMat{A}}}b^{|f|_{\symMat{B}}}\gamma$ be a constraint term with $a,b \neq 0$.
  We write $a,b \in \CC$ in polar form as $|a|\zeta_{\matA}$ and $|b|\zeta_{\matB}$, respectively, where $\zeta_{\matA} = \frac{a}{|a|}, \zeta_{\matB} = \frac{b}{|b|}\in \CC$ are complex units, i.e., $|\zeta_{\matA}| = |\zeta_{\matB}| = 1$.
  Then, for any (non-empty) finite execution $f\in\Path$ with $\URVar(f) \in (-p,1-p)$ we have
  \[ \textstyle
    a^{|f|_{\symMat{A}}}b^{|f|_{\symMat{B}}}\gamma = \zeta_{\matA}^{|f|_{\symMat{A}}}\zeta_{\matB}^{|f|_{\symMat{B}}} {\left( \tfrac{|a|^{p}}{|b|^{p-1}} {\left(\tfrac{|a|}{|b|}\right)}^{\URVar(f)}\right)}^{|f|} \gamma .
  \]
\end{restatable}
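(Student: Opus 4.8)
The plan is to verify the claimed identity by a direct algebraic manipulation, rewriting $a^{|f|_{\symMat A}}b^{|f|_{\symMat B}}$ in terms of the polar decompositions and the quantity $\URVar(f)$. First I would substitute $a = |a|\zeta_{\matA}$ and $b = |b|\zeta_{\matB}$, so that the left-hand side becomes $\zeta_{\matA}^{|f|_{\symMat A}}\zeta_{\matB}^{|f|_{\symMat B}}\,|a|^{|f|_{\symMat A}}|b|^{|f|_{\symMat B}}\,\gamma$; the phase factor $\zeta_{\matA}^{|f|_{\symMat A}}\zeta_{\matB}^{|f|_{\symMat B}}$ already matches the target expression, and $\gamma$ is carried along unchanged (note $\gamma$ may be $0$, in which case the identity is trivial, so we only need the nonzero case). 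It therefore remains to show that the real positive quantity $|a|^{|f|_{\symMat A}}|b|^{|f|_{\symMat B}}$ equals $\bigl(\tfrac{|a|^{p}}{|b|^{p-1}}(\tfrac{|a|}{|b|})^{\URVar(f)}\bigr)^{|f|}$.

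The key step is to rewrite the exponents $|f|_{\symMat A}$ and $|f|_{\symMat B}$ using $\URVar(f)$ and $|f|$. Since $f$ is non-empty with $\URVar(f)\in(-p,1-p)$, we have $|f|\geq 1$ (so all the expressions below are well defined) and by \Cref{def:u}, $\tfrac{|f|_{\symMat A}}{|f|} = \URVar(f) + p$, hence $|f|_{\symMat A} = (p + \URVar(f))\,|f|$ and consequently $|f|_{\symMat B} = |f| - |f|_{\symMat A} = (1 - p - \URVar(f))\,|f|$. Substituting these into $|a|^{|f|_{\symMat A}}|b|^{|f|_{\symMat B}}$ gives $|a|^{(p+\URVar(f))|f|}\,|b|^{(1-p-\URVar(f))|f|}$, which, after pulling out the common exponent $|f|$, equals $\bigl(|a|^{p+\URVar(f)}\,|b|^{1-p-\URVar(f)}\bigr)^{|f|}$. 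Regrouping the base as $|a|^{p}\,|b|^{1-p}\,|a|^{\URVar(f)}\,|b|^{-\URVar(f)} = \tfrac{|a|^{p}}{|b|^{p-1}}\,(\tfrac{|a|}{|b|})^{\URVar(f)}$ yields exactly the right-hand side, completing the argument.

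I do not expect any serious obstacle here: the statement is essentially a bookkeeping identity about exponents. The only points that require a little care are (i) ensuring all the real powers with possibly non-integer exponents are meaningful, which is guaranteed by $a,b\neq 0$ (so $|a|,|b|>0$); (ii) justifying the use of $\tfrac{|f|_{\symMat A}}{|f|} = p + \URVar(f)$, which needs $|f|\neq 0$ — this follows because $\URVar(f)\in(-p,1-p)$ is a strictly open interval forcing $f$ to be non-empty (and in fact forcing $|f|_{\symMat A},|f|_{\symMat B}\geq 1$, though only non-emptiness is needed); and (iii) keeping the complex phase factor $\zeta_{\matA}^{|f|_{\symMat A}}\zeta_{\matB}^{|f|_{\symMat B}}$ untouched throughout, since only the modulus part is being re-expressed. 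None of these is difficult, so the proof reduces to the chain of substitutions sketched above.
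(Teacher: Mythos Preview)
Your proposal is correct and follows essentially the same approach as the paper: both substitute the polar forms, separate out the phase factor, and then use $\tfrac{|f|_{\symMat A}}{|f|} = p + \URVar(f)$ (and hence $\tfrac{|f|_{\symMat B}}{|f|} = 1 - p - \URVar(f)$) to regroup the modulus part into $\bigl(\tfrac{|a|^{p}}{|b|^{p-1}}(\tfrac{|a|}{|b|})^{\URVar(f)}\bigr)^{|f|}$. The paper's write-up is slightly terser but the chain of substitutions is the same.
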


Note that in \Cref{lem:normal_form_of_constraint_terms} we excluded all constraint terms with $a = 0$ or $b = 0$ in order to avoid a division by zero.
However, we additionally required\footnote{The set of runs $\run$ where $\URVar (f) \in \{-p,1-p\}$ for every prefix $f$ of $\run$ has probability 0, as $\URVar(f) \in \{p,1-p\}$ means that only $\matA$ or $\matB$ has been selected in $f$.
  However, we had required $0 < p < 1$.}
$\URVar (f) \not \in \{-p,1-p\}$ implying $0 < |f|_{\symMat{A}}$ and $0 < |f|_{\symMat{B}}$, since $|f| > 0$.
Hence, for all constraint terms $a^{|f|_{\symMat{A}}}b^{|f|_{\symMat{B}}}\gamma$ with $a=0$ or\linebreak
$b=0$ and all considered $f\in\Path$, we have $a^{|f|_{\symMat{A}}}b^{|f|_{\symMat{B}}} = 0$ and thus the value of such constraint terms can safely be ignored when computing ${\left(\Val_{\vecx} (f)\right)}_{c}$.
This leads to the
\pagebreak[2]
equation
\begin{align}
  {\left(\Val_{\vecx} (f)\right)}_{c}
   & = \textstyle \sum_{i\in[n]} a_{i}^{|f|_{\symMat{A}}}b_{i}^{|f|_{\symMat{B}}}\gamma_{c,i}(\vecx) \nonumber \\
   & = \textstyle \sum_{i\in[n], \;
    a_{i}, b_{i} \neq 0} \; \zeta_{i,\matA}^{|f|_{\symMat{A}}}\zeta_{i,\matB}^{|f|_{\symMat{B}}} {\left(\tfrac{|a_{i}|^{p}}{|b_{i}|^{p-1}} {\left(\tfrac{|a_{i}|}{|b_{i}|}\right)}^{\URVar(f)}\right)}^{|f|} \gamma_{c,i}(\vecx),
  \label{eq:complete_constraint_in_normal_form}
\end{align}
where $\zeta_{i,\matA} = \frac{a_i}{|a_i|}$ and $\zeta_{i,\matB} = \frac{b_i}{|b_i|}$.

\begin{example}
  \label{ex:normal_form}
  Transforming the sum from \Cref{ex:gammas} into the form \cref{eq:complete_constraint_in_normal_form} and setting $p = \frac{1}{2}$ yields
  \begin{align*}
    \hspace*{-.5cm} {\left(\Val_{\vecx} (f)\right)}_{1} ={} & \textstyle
    \zeta_{1,\matA}^{|f|_{\symMat{A}}} \zeta_{1,\matB}^{|f|_{\symMat{B}}} {\left(
    10\sqrt{2} \cdot {(\tfrac{1}{2})}^{\URVar(f)}\right)}^{|f|} \gamma_{1,1}
    (\vecx) \; + \;
    \textstyle \zeta_{2,\matA}^{|f|_{\symMat{A}}} \zeta_{2,\matB}^{|f|_{\symMat{B}}}
    {\left( 10 \sqrt{2} \cdot {(\tfrac{1}{2})}^{\URVar(f)}\right)}^{|f|} \gamma_{1,2} (\vecx) \\
                                                            & {}\textstyle + \; \zeta_{3,\matA}^{|f|_{\symMat{A}}} \zeta_{3,\matB}^{|f|_{\symMat{B}}} {\left( 10 \sqrt{2} \cdot {2}^{\URVar(f)}\right)}^{|f|} \gamma_{1,3} (\vecx)
  \end{align*}
  with $\zeta_{1,\matA} = e^{-\im \arctan (\nicefrac{4}{3})},\, \zeta_{2,\matA} = e^{\im \arctan{\nicefrac{4}{3}}},\, \zeta_{3,\matA} = 1$ and $\zeta_{1,\matB} = e^{\im (\pi - \arctan(\nicefrac{4}{3}))},\, \zeta_{2,\matB} = e^{\im (\arctan(\nicefrac{4}{3}) - \pi)},\, \zeta_{3,\matB} = 1$ for all $f\in\Path$.
\end{example}

By inspecting the right-hand side of \cref{eq:complete_constraint_in_normal_form} it becomes clear that the subexpressions $\frac{|a_{i}|^{p}}{|b_{i}|^{p-1}}$ and $\frac{|a_{i}|}{|b_{i}|}$ govern the overall asymptotic growth of ${(\Val_{\vecx}(f))}_{c}$ as $|f|$ increases.
In the following, we group all constraint terms into so-called \emph{constraint term groups}
which are sets of indices $i$ corresponding to constraint terms where $\frac{|a_{i}|^{p}}{|b_{i}|^{p-1}}$ and $\frac{|a_{i}|}{|b_{i}|}$ have common values $\scaleinside$ and $\scaleoutside$.
Here, $\scaleinside = \frac{|a_{i}|^{p}}{|b_{i}|^{p-1}}$ is the expression that is important in \cref{eq:complete_constraint_in_normal_form} if $\URVar (f)$ is \underline{\textbf{i}}n a region close to 0.\footnote{Note that $a_{i}^{|f|_{\symMat{A}}}b_{i}^{|f|_{\symMat{B}}}$ is the $|f|$-th power of the weighted geometric mean $a_i^{p}b_i^{1-p}$ whenever $\URVar (f) = 0$.}
If one is \underline{\textbf{o}}utside such a region, then $\scaleoutside = \frac{|a_{i}|}{|b_{i}|}$ is important as well.

\begin{definition}[Constraint Term Groups]
  \label{def:constraint_term_groups}
  For any $(\scaleinside,\scaleoutside)\in\RR^{2}_{>0}$, let
  \[ \textstyle
    \constrainttermgroup_{(\scaleinside,\scaleoutside)} = \{ i \in [n] \mid 0 \not \in \{a_{i},b_{i}\},\, \tfrac{|a_{i}|^{p}}{|b_{i}|^{p-1}} = \scaleinside, \tfrac{|a_{i}|}{|b_{i}|} = \scaleoutside\} .
  \]
  Moreover, we define the \emph{finite} set $\mathcal{I}$ of all pairs $(\scaleinside,\scaleoutside) \in \RR^{2}_{>0}$ with $\constrainttermgroup_{(\scaleinside,\scaleoutside)} \neq \emptyset$.
  For $c\in[m]$ and $\vec{x} \in \AA^n$, let $\constrainttermgroup_{(\scaleinside,\scaleoutside),c,\vecx} = \emptyset$ whenever $\sum_{{i\in\constrainttermgroup_{(\scaleinside,\scaleoutside)}}}
    \zeta_{i,\matA}^{|f|_{\symMat{A}}}\zeta_{i,\matB}^{|f|_{\symMat{B}}}\gamma_{c,i}(\vecx) = 0$ holds for all $f \in \Path$.\footnote{\label{footnote:comment_on_constrainttermgroups}%
    We will show how to check this in \Cref{lem:braverman_generalisation}.
    Note that this is not implied by $\sum_{{i\in\constrainttermgroup_{(\scaleinside,\scaleoutside)}}}
      \gamma_{c,i}(\vecx) = 0$. As a counterexample, consider $\gamma_{1} = -1$, $\gamma_{2} = 1$ (and thus, $\gamma_1 + \gamma_2 = 0$), and $\zeta_{1,\matA} = -1$, $\zeta_{1,\matB} = \zeta_{2,\matA} = \zeta_{2,\matB} = 1$ (and hence, $\zeta_{1,\matA} \zeta_{1,\matB} \gamma_1 + \zeta_{2,\matA} \zeta_{2,\matB}
      \gamma_2 = 2$).}
  Otherwise let $\constrainttermgroup_{(\scaleinside,\scaleoutside),c,\vecx} = \constrainttermgroup_{(\scaleinside,\scaleoutside)}$.
  We refer to the sets $\constrainttermgroup_{(\scaleinside,\scaleoutside),c,\vecx}$ as \emph{constraint term groups}.
\end{definition}

For $c\in[m]$ and non-empty $f\in\Path$ with $\URVar(f) \in (-p,1-p)$, \Cref{eq:complete_constraint_in_normal_form} can be rearranged to
\begin{equation} \textstyle
  {\left( \Val_{\vecx}(f) \right)}_{c} = \sum_{(\scaleinside,\scaleoutside) \in
    \mathcal{I}} \; {\left(\scaleinside \cdot
  \scaleoutside^{\URVar(f)}\right)}^{|f|} \; \sum_{i \in
    \constrainttermgroup_{(\scaleinside,\scaleoutside),c,\vecx}} \;
  \zeta_{i,\matA}^{|f|_{\symMat{A}}} \, \zeta_{i,\matB}^{|f|_{\symMat{B}}} \, \gamma_{c,i}
  (\vecx) \label{eq:complete_constraint_in_grouped_normal_form}
\end{equation}
with \Cref{def:constraint_term_groups}.

\Cref{lem:shared_modulus_of_eigenvalues_in_constraint_term_groups}
shows that for all $a_{1},b_{1},a_{2},b_{2} \neq 0$ we have $\tfrac{|a_{1}|}{|b_{1}|} = \tfrac{|a_{2}|}{|b_{2}|}$ and $\tfrac{|a_{1}|^{p}}{|b_{1}|^{p-1}} = \tfrac{|a_{2}|^{p}}{|b_{2}|^{p-1}}$ iff $|a_{1}| = |a_{2}|$ and $|b_{1}| = |b_{2}|$.
Thus, if $i, i' \in \constrainttermgroup_{(\scaleinside,\scaleoutside)}$, then $|a_i| = |a_{i'}|$ and $|b_i| = |b_{i'}|$.
We will first return to this result in \Cref{sect:Positive Eigenvalues}, where we will use that for all $i,i' \in \constrainttermgroup_{(\scaleinside,\scaleoutside)}$ with $a_{i},b_{i},a_{i'},b_{i'} \in \RR_{>0}$ we have $(a_{i},b_{i}) = (a'_{i},b'_{i})$.
Later on, we will revisit it in \Cref{sec:towards_witnesses_for_non-termination}.

\begin{restatable}[Equality of Eigenvalues]{lemma}{equalityofeigenvalues}
  \label{lem:shared_modulus_of_eigenvalues_in_constraint_term_groups}
  Let $a_{1},b_{1},a_{2},b_{2} \in \RR_{>0}$ be positive reals such that $\frac{a_{1}}{b_{1}} = \frac{a_{2}}{b_{2}}$ and $\frac{a_{1}^{p}}{b_{1}^{p-1}} = \frac{a_{2}^{p}}{b_{2}^{p-1}}$.
  Then we have $(a_{1},b_{1}) = (a_{2},b_{2})$.
\end{restatable}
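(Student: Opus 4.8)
The plan is to take logarithms to turn the two multiplicative constraints into a linear system in $\ln a_1, \ln b_1, \ln a_2, \ln b_2$, and then show this system forces $\ln a_1 = \ln a_2$ and $\ln b_1 = \ln b_2$. Concretely, set $\alpha_j = \ln a_j$ and $\beta_j = \ln b_j$ for $j \in \{1,2\}$; this is legitimate since all four reals are strictly positive. The hypothesis $\tfrac{a_1}{b_1} = \tfrac{a_2}{b_2}$ becomes $\alpha_1 - \beta_1 = \alpha_2 - \beta_2$, and the hypothesis $\tfrac{a_1^p}{b_1^{p-1}} = \tfrac{a_2^p}{b_2^{p-1}}$ becomes $p\,\alpha_1 - (p-1)\beta_1 = p\,\alpha_2 - (p-1)\beta_2$, i.e.\ $p(\alpha_1 - \alpha_2) = (p-1)(\beta_1 - \beta_2)$.

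First I would introduce the differences $\delta_a = \alpha_1 - \alpha_2$ and $\delta_b = \beta_1 - \beta_2$. The first equation then reads $\delta_a = \delta_b$, and the second reads $p\,\delta_a = (p-1)\,\delta_b$. Substituting $\delta_a = \delta_b$ into the second yields $p\,\delta_b = (p-1)\,\delta_b$, hence $(p - (p-1))\,\delta_b = 0$, i.e.\ $\delta_b = 0$. Since $\delta_a = \delta_b$, also $\delta_a = 0$. Therefore $\alpha_1 = \alpha_2$ and $\beta_1 = \beta_2$, and applying the (injective) exponential function gives $a_1 = a_2$ and $b_1 = b_2$, which is the claim.

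The only subtlety — and the single place where the hypotheses of the ambient setup matter — is that the coefficient $p - (p-1) = 1$ is nonzero, so the last step is unconditional; there is no degenerate case to worry about. (In particular no assumption $p \in (0,1)$ is even needed for this algebraic step, though it holds in our setting.) I do not expect any real obstacle here: once the problem is phrased logarithmically it is a $2 \times 2$ linear system with an invertible coefficient matrix $\begin{pmatrix} 1 & -1 \\ p & 1-p \end{pmatrix}$, whose determinant $1 - p + p = 1 \neq 0$ immediately forces the unique solution $\delta_a = \delta_b = 0$. If one prefers to avoid logarithms altogether, the same conclusion follows directly: from $\tfrac{a_1}{b_1} = \tfrac{a_2}{b_2}$ write $a_1 = t\,b_1$ and $a_2 = t\,b_2$ with $t = \tfrac{a_1}{b_1} > 0$; substituting into the second equation gives $t^p\,b_1^p / b_1^{p-1} = t^p\,b_2^p / b_2^{p-1}$, i.e.\ $t^p b_1 = t^p b_2$, hence $b_1 = b_2$ and then $a_1 = a_2$.
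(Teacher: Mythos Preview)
Your proof is correct. The paper's argument is direct algebraic manipulation without logarithms: it solves the second hypothesis for $a_1$ (as $a_1 = a_2\,(b_2/b_1)^{(1-p)/p}$, using $p\neq 0$) and substitutes into the first hypothesis to obtain $(b_2/b_1)^{1/p}=1$, hence $b_1=b_2$ and then $a_1=a_2$. Your primary route---linearizing via logarithms and solving the resulting $2\times 2$ system---is a genuinely different packaging, though of course equivalent; it has the minor advantage that the nondegeneracy condition is visibly the determinant $1\neq 0$, so no hypothesis on $p$ is invoked at all. Your alternative at the end (set $t=a_1/b_1=a_2/b_2$ and substitute into the second equation to get $t^p b_1 = t^p b_2$) is essentially the paper's substitution strategy but carried out in the reverse order, and is in fact the cleanest of the three since it avoids both logarithms and fractional exponents.
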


Due to \cref{eq:complete_constraint_in_grouped_normal_form}, we have to consider sums $\sum_{i \in \constrainttermgroup_{(\scaleinside,\scaleoutside),c,\vecx}} \zeta_{i,\matA}^{|f|_{\symMat{A}}} \zeta_{i,\matB}^{|f|_{\symMat{B}}} \gamma_{c,i} (\vecx)$ for $\vecx\in\AA^{n}$ and $c\in[m]$.
While $\zeta_{i,\matA}$, $\zeta_{i,\matB}$, and $\gamma_{c,i}(\vecx)$ are complex numbers in general, such sums are always real-valued.
The corresponding \Cref{rem:sums_of_constraint_term_groups_are_real_valued} is an immediate consequence of \Cref{lem:sums_of_conjugated_addends_are_real_valued}.
Later, this remark will allow us to make statements about the signs of such sums, see \Cref{lem:domination_of_eventually_dominating_constraint_term_groups}.
\begin{restatable}[Coefficients of Constraint Term Groups are Real-Valued]{remark}{coefficientsofconstrainttermgroupsarerealvalued}
  \label{rem:sums_of_constraint_term_groups_are_real_valued}
  Let $c \in [m]$ and $(\scaleinside,\scaleoutside) \in \mathcal{I}$.
  Then, for all $f\in\Path$ we have $\sum_{i \in \constrainttermgroup_{(\scaleinside,\scaleoutside),c,\vecx}} \zeta_{i,\matA}^{|f|_{\symMat{A}}} \zeta_{i,\matB}^{|f|_{\symMat{B}}} \gamma_{c,i} (\vecx) \;\in\; \AA$.
\end{restatable}

\begin{example}
  \label{ex:constraint_term_groups}
  We continue \Cref{ex:normal_form}.
  There are two different non-empty sets $\constrainttermgroup_{(\scaleinside,\scaleoutside)}$, i.e., $\constrainttermgroup_{(10\sqrt{2},\nicefrac{1}{2})} = \{1,2\}$ and $\constrainttermgroup_{(10\sqrt{2},2)} = \{3\}$.
  As indicated by \Cref{lem:shared_modulus_of_eigenvalues_in_constraint_term_groups}, this implies $|a_1| = |a_2|$ and $|b_1| = |b_2|$.
  Hence, $\zeta_{2,\matA} = \frac{a_2}{|a_2|} = \frac{a_2}{|a_1|} = \conj{\frac{a_1}{|a_1|}} = \conj{\zeta_{1,\matA}}$ and similarly, $\zeta_{2,\matB}= \conj{\zeta_{1,\matB}}$.
  Moreover, recall that $\gamma_{1,2}(\vecx) = \conj{\gamma_{1,1}(\vecx)}$.
  Therefore, we have
  \[ \begin{array}{rl}
           & \constrainttermgroup_{(10\sqrt{2},\nicefrac{1}{2}),1,\vecx} \neq \emptyset                                                                                                                                                                                         \\
      \iff & \text{there is a $f \in \Path$ with
      $\zeta_{1,\matA}^{|f|_{\symMat{A}}} \cdot \zeta_{1,\matB}^{|f|_{\symMat{B}}}
        \cdot \gamma_{1,1}(\vecx) + \zeta_{2,\matA}^{|f|_{\symMat{A}}} \cdot
      \zeta_{2,\matB}^{|f|_{\symMat{B}}} \cdot \gamma_{1,2}(\vecx) \neq 0$}                                                                                                                                                                                                     \\
      \iff & \text{there is a $f \in \Path$ with
      $\zeta_{1,\matA}^{|f|_{\symMat{A}}} \cdot \zeta_{1,\matB}^{|f|_{\symMat{B}}}
        \cdot \gamma_{1,1}(\vecx) + \conj{\zeta_{1,\matA}}^{|f|_{\symMat{A}}} \cdot
        \conj{\zeta_{1,\matB}}^{|f|_{\symMat{B}}} \cdot \conj{\gamma_{1,1}(\vecx)} \neq
      0$}                                                                                                                                                                                                                                                                       \\
      \iff & \text{there is a $f \in \Path$ with $\zeta_{1,\matA}^{|f|_{\symMat{A}}} \cdot \zeta_{1,\matB}^{|f|_{\symMat{B}}} \cdot \gamma_{1,1}(\vecx) + \conj{\zeta_{1,\matA}^{|f|_{\symMat{A}}} \cdot \zeta_{1,\matB}^{|f|_{\symMat{B}}} \cdot \gamma_{1,1}(\vecx)} \neq 0$} \\
      \iff & \text{there is a $f \in \Path$ with $\Re(\zeta_{1,\matA}^{|f|_{\symMat{A}}} \cdot \zeta_{1,\matB}^{|f|_{\symMat{B}}} \cdot \gamma_{1,1}(\vecx)) \neq 0$}                                                                                                           \\
      \iff & \text{there is a $f \in \Path$ with } \zeta_{1,\matA}^{|f|_{\symMat{A}}} \cdot \zeta_{1,\matB}^{|f|_{\symMat{B}}} \cdot
      \gamma_{1,1}(\vecx) \neq 0 \qquad\qquad\qquad \text{(by \Cref{rem:sums_of_constraint_term_groups_are_real_valued})}
      \\
      \iff & \gamma_{1,1}(\vecx) \neq 0
    \end{array}\]
  For the last step, the direction ``$\Longrightarrow$'' is clear, since $\gamma_{1,1}(\vecx) = 0$ implies $\zeta_{1,\matA}^{|f|_{\symMat{A}}} \cdot \zeta_{1,\matB}^{|f|_{\symMat{B}}} \cdot \gamma_{1,1}(\vecx) = 0$ for all $f \in \Path$.
  The direction ``$\Longleftarrow$'' is also clear by choosing $f$ to be the empty path.

  Hence, $\constrainttermgroup_{(10\sqrt{2},\nicefrac{1}{2}),1,\vecx} \neq \emptyset \iff \gamma_{1,1}(\vecx) \neq 0 \iff 10 x_{3} \neq x_{1} + x_{2} \vee 7 x_1 \neq 3 x_2$.
  Similarly, $\constrainttermgroup_{(10\sqrt{2},2),1,\vecx} \neq \emptyset \iff \gamma_{1,3} (\vecx) \neq 0 \iff x_{1} + x_{2} \neq 0$ for $\vecx\in\AA^{3}$.
\end{example}

By inspecting~\cref{eq:complete_constraint_in_grouped_normal_form} again, one observes that if $\URVar(f)$ is sufficiently close to $0$, then the terms belonging to the non-empty constraint term group $\constrainttermgroup_{(\scaleinside,\scaleoutside),c,\vecx}$ with maximal $(\scaleinside,\scaleoutside)$ in the lexicographic ordering will at some point (when $|f| \to \infty$) outgrow all other terms whenever $\URVar(f) \cdot |f| > 0$ is \underline{\textbf{p}}ositive and sufficiently large.
If $\URVar(f) \cdot |f| < 0$ is \underline{\textbf{n}}egative and sufficiently small, then, however, we have to focus our attention on the non-empty group $\constrainttermgroup_{(\scaleinside,\scaleoutside),c,\vecx}$ for which $(\scaleinside,-\scaleoutside)$ is maximal in the lexicographic ordering.\footnote{Since $\scaleinside$ and $\scaleoutside$ are always positive reals, the expressions $\sum_{i \in \constrainttermgroup_{(\scaleinside,\scaleoutside),c,\vecx}}
  \zeta_{i,\matA}^{|f|_{\symMat{A}}} \zeta_{i,\matB}^{|f|_{\symMat{B}}} \gamma_{c,i}
  (\vecx)$ determine whether ${\left( \Val_{\vecx}(f) \right)}_{c}$ is positive or negative. However, for $|f| \to \infty$, this expression (and also $\URVar(f)$) could alternate between being positive, negative, or even 0. This will be regarded in \Cref{sect:Positive Eigenvalues,sec:towards_witnesses_for_non-termination}.}

\begin{definition}[Eventually Dominating Constraint Term Group]
  \label{def:dominating_constraint_term_groups}
  $\!$Let ${\leq_{\mathrm{lex},\pmax}}\!=\!\{((\scaleinside_{1},\scaleoutside_{1}),(\scaleinside_{2},\scaleoutside_{2}))\!\mid \scaleinside_{1}<\scaleinside_{2}$ or both $\scaleinside_{1}=\scaleinside_{2}
    \text{ and }
    \scaleoutside_{1} \leq \scaleoutside_{2}\}$ denote the usual lexicographic ordering on $\RR$ and let ${\leq_{\mathrm{lex},\nmax}} = \{((\scaleinside_{1},\scaleoutside_{1}),(\scaleinside_{2},\scaleoutside_{2})) \mid \scaleinside_{1}<\scaleinside_{2}
    \text{ or both }
    \scaleinside_{1}=\scaleinside_{2}
    \text{ and }
    \scaleoutside_{2} \leq \scaleoutside_{1}\}$ be the lexicographic ordering where the comparison on the second component is flipped.

  For all $d\in\{\nmax,\pmax\}$, $c\in[m]$, and $\vecx\in\AA^{n}$ we define $\constrainttermgroup_{d,c,\vecx}
    = \emptyset$ if $\constrainttermgroup_{(\scaleinside,\scaleoutside),c,\vecx} = \emptyset$ for all $(\scaleinside,\scaleoutside) \in \mathcal{I}$, and $\constrainttermgroup_{d,c,\vecx} = \constrainttermgroup_{(\scaleinside,\scaleoutside),c,\vecx}$ for $(\scaleinside,\scaleoutside) = \max\nolimits_{\mathrm{lex},d}
    \{(\scaleinside,\scaleoutside) \in \mathcal{I} \mid \constrainttermgroup_{(\scaleinside,\scaleoutside),c,\vecx} \neq \emptyset\}$ otherwise, where $\max\nolimits_{\mathrm{lex},d}$ denotes the maximum w.r.t.\ the ordering $\leq_{\mathrm{lex},d}$.
\end{definition}

It is not a priori clear, how, for a constraint index $c\in[m]$ and a given input $\vecx\in\AA^{n}$, the sets $\constrainttermgroup_{\nmax,c,\vecx},\constrainttermgroup_{\pmax,c,\vecx}$ can be computed automatically since one has to decide whether $\frac{a_{1}^{p}}{b_{1}^{p-1}} < \frac{a_{2}^{p}}{b_{2}^{p-1}}$ where $a_{1},b_{1},a_{2},b_{2} \in \AA_{>0}$ are assumed to be positive algebraic reals, but $a_i^p$ and $b_i^{p-1}$ are in general non-algebraic reals.
This is due to the well-known Gelfond-Schneider theorem (see, e.g., \cite[Thm.\ 3.0.1]{transcendentalnumbers}), which states that $a^{p} \not \in \QQbar$ whenever $a \in \QQbar \setminus \{0,1\}$ and $p \in \QQbar$ is irrational.
However, \Cref{lem:comparison_of_constraint_term_groups} ensures the decidability of such comparisons.

\begin{restatable}[Comparing Constraint Term Groups]{lemma}{onthecomparisonofconstrainttermgroups}
  \label{lem:comparison_of_constraint_term_groups}
  Let $T_{1} = \frac{a_{1}^{p}}{b_{1}^{p-1}}$ and $T_{2} = \frac{a_{2}^{p}}{b_{2}^{p-1}}$ for positive algebraic reals $a_{1},b_{1},a_{2},b_{2} \in \AA_{>0}$ and $p\in\AA\cap(0,1)$.
  Then the statement $T_{1} < T_{2}$ is decidable.
\end{restatable}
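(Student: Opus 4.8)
The plan is to reduce the comparison $T_1 < T_2$ to a comparison of logarithms and then to an effective problem in algebraic number theory about linear forms in logarithms of algebraic numbers. Writing $p = \tfrac{r}{s}$ is not possible since $p$ is only assumed algebraic, so instead I would keep $p$ symbolic and take logarithms directly. Since $a_1,b_1,a_2,b_2 \in \AA_{>0}$, all four quantities $T_1, T_2$ are positive reals, so $T_1 < T_2$ is equivalent to $\log T_1 < \log T_2$, i.e.\ to
\[
  p \log a_1 - (p-1)\log b_1 \;<\; p \log a_2 - (p-1)\log b_2,
\]
which rearranges to
\[
  p\bigl(\log a_1 - \log b_1 - \log a_2 + \log b_2\bigr) \;<\; \log b_2 - \log b_1 .
\]
Equivalently, setting $\Lambda = p\log\tfrac{a_1 b_2}{b_1 a_2} - \log\tfrac{b_2}{b_1}$, the question is whether $\Lambda < 0$, $\Lambda = 0$, or $\Lambda > 0$. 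This is a linear form in (at most three) logarithms of positive algebraic numbers with algebraic coefficients $1, p$.

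The key tool is \emph{Baker's theorem} on linear forms in logarithms, in its effective form: if such a linear form $\Lambda$ (with algebraic coefficients and logarithms of algebraic numbers) is nonzero, then there is an explicit, computable lower bound $|\Lambda| \geq B$ where $B > 0$ depends only on the heights and degrees of the algebraic numbers involved (here $a_1,b_1,a_2,b_2,p$), all of which are given. So the procedure is: first decide whether $\Lambda = 0$. This is itself decidable: $\Lambda = 0$ is a multiplicative-type relation, and by Baker's theorem $\Lambda = 0$ can only hold in degenerate situations that can be checked effectively — concretely, one can test $\Lambda=0$ by noting that if $\Lambda = 0$ then $\bigl(\tfrac{a_1 b_2}{b_1 a_2}\bigr)^{p} = \tfrac{b_2}{b_1}$; raising both algebraic sides appropriately and using that $p$ satisfies a known minimal polynomial, or directly invoking the effective lower bound, lets one certify $\Lambda \neq 0$ after finitely many steps, and one certifies $\Lambda = 0$ by exhibiting the relation. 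If $\Lambda = 0$, then $T_1 = T_2$ and we answer ``$T_1 < T_2$ is false''. If $\Lambda \neq 0$, compute the Baker bound $B$, then compute numerical approximations to $\log a_1, \log b_1, \log a_2, \log b_2$ (and to $p$) to enough precision that the approximation error to $\Lambda$ is below $B/2$; the sign of the approximation then equals the sign of $\Lambda$, which answers the question.

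The main obstacle is the $\Lambda = 0$ case: one must argue rigorously that it is decidable whether the linear form vanishes, and that when it does not vanish one has an \emph{a priori} computable positive lower bound so that the numerical comparison terminates. Both follow from the effective version of Baker's theorem (see, e.g., the standard references on transcendental number theory cited in the paper, in the same spirit as the use of Gelfond--Schneider above), but stating the precise quantitative bound and checking that all input data — the minimal polynomials and heights of $a_1,b_1,a_2,b_2$ and of $p$ — are available is the technical heart of the argument. A secondary, more routine point is handling the special cases where some $a_i$ or $b_i$ equals $1$ (so that a logarithm vanishes and the linear form has fewer terms) or where the arguments of the logarithms coincide; these only simplify the form and are dispatched directly. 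Everything else — taking logarithms, rearranging, and the approximate numerical comparison — is elementary once the effective lower bound is in hand.
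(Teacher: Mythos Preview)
Your approach is correct but heavier than the paper's. The paper observes that if $T_1 \neq T_2$, successive rational approximations will eventually separate them, so the only nontrivial task is deciding equality. For rational $p$, both $T_i$ are algebraic and this is routine. For irrational algebraic $p$, the paper rearranges $T_1 = T_2$ to $\bigl(\tfrac{a_1 b_2}{b_1 a_2}\bigr)^{p} = \tfrac{b_2}{b_1}$ and applies Gelfond--Schneider directly: the right-hand side is algebraic, so equality can hold only when the base $\tfrac{a_1 b_2}{b_1 a_2}$ lies in $\{0,1\}$; since all quantities are positive one just checks whether this single algebraic number equals $1$ (and, if so, whether $b_1 = b_2$). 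You arrive at the same rearranged equation but then reach for the full effective Baker machinery on the linear form $\Lambda$; this works and even yields an a priori precision bound for the numerical comparison, but Gelfond--Schneider is already the relevant two-term special case and makes the equality test completely elementary, with no quantitative lower bound needed. Your handling of the $\Lambda = 0$ case --- ``raising both sides and using the minimal polynomial of $p$'' --- is the one spot that is not clearly justified as written; the clean resolution is precisely the Gelfond--Schneider observation the paper uses.
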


\begin{example}
  \label{ex:eventually_dominating_constraint_term_groups}
  Recall the two non-empty constraint term groups $\constrainttermgroup_{(10\sqrt{2},\nicefrac{1}{2}),1,\vecx}$ and $\constrainttermgroup_{(10\sqrt{2},2),1,\vecx}$ from \Cref{ex:constraint_term_groups}.
  Then, for $\vecx \in \AA^{3}$, we have
  \begin{align*} \textstyle
    \constrainttermgroup_{\nmax,1,\vecx} & =
    \begin{cases}
      \constrainttermgroup_{(10\sqrt{2},\nicefrac{1}{2})}= \{1,2\} & \text{if $10 x_{3} \neq x_{1} + x_{2}$ or $7 x_1 \neq 3 x_2$} \\
      \constrainttermgroup_{(10\sqrt{2},2)} \;\; = \{3\}           & \text{if
      $10 x_{3} = x_{1} + x_{2}$, $7 x_1 = 3 x_2$, and $x_{1} + x_{2} \neq 0$}                                                     \\
      \emptyset                                                    & \text{otherwise}
    \end{cases}
    \\
    \constrainttermgroup_{\pmax,1,\vecx} & =
    \begin{cases}
      \constrainttermgroup_{(10\sqrt{2},2)} \;\; = \{3\}           & \text{if $x_{1} + x_{2} \neq 0$} \\
      \constrainttermgroup_{(10\sqrt{2},\nicefrac{1}{2})}= \{1,2\} & \text{if
        $x_{1} + x_{2} = 0$ and
      ($10x_{3} \neq x_{1} + x_{2}$ or $7 x_1 \neq 3 x_2$)}                                           \\
      \emptyset                                                    & \text{otherwise}
    \end{cases}
  \end{align*}
\end{example}

\begin{figure}
  \begin{center}
    {
\begin{tikzpicture}[scale=1.3,every node/.style={font=\tiny}]
  \draw [draw=none, fill=SeaGreen!5] (2,0.75) -- (5,0.75) -- (5,2) -- (4,2) -- (2,1) -- (2,0.75);
  \draw [draw=none, line space=5pt,pattern=my north east lines, pattern color=SeaGreen!30] (2,0.75) -- (5,0.75) -- (5,2) -- (4,2) -- (2,1) -- (2,0.75);

  \draw [draw=none, fill=SeaGreen!5] (2,-0.75) -- (5,-0.75) -- (5,-2) -- (4,-2) -- (2,-1) -- (2,-0.75);
  \draw [draw=none,line space=5pt,pattern=my north east lines, pattern color=SeaGreen!30] (2,-0.75) -- (5,-0.75) -- (5,-2) -- (4,-2) -- (2,-1) -- (2,-0.75);

  \draw [->,line width=0.6pt] (-1,0)--(5,0) node[below]{\footnotesize $|f|$};
  \draw [->,line width=0.6pt] (0,-2)--(0,2) node[left]{\footnotesize $\URVar(f) \cdot |f|$};

  \draw[domain=0:3, smooth, variable=\x, gray] plot ({\x}, {0.6875*\x*\x*\x - 3.313*\x*\x + 4.125*\x});
  \node[gray] at (0.8, 1.65) (A) {$\URVar(f)\cdot|f|$};
  \node[gray] at (3,1.125)[circle,fill,inner sep=0.75pt]{};

  \draw [red] (0,0) -- (4,2) node[below]{$\eps |f|$};
  \draw [red] (0,0) -- (4,-2) node[above]{$-\eps |f|$};

  \draw [blue] (-0.75, 0.75) node[below,xshift=0.25cm]{$ r$} -- (5, 0.75);
  \draw [blue] (-0.75,-0.75) node[above,xshift=0.25cm]{$-r$} -- (5,-0.75);

  \draw [Green] (2,-2) node[below]{$l$} -- (2,2);

\end{tikzpicture}
}
  \end{center}
  \caption{\label{fig:safe} Illustration of the General Idea Using Safe Regions}
\end{figure}
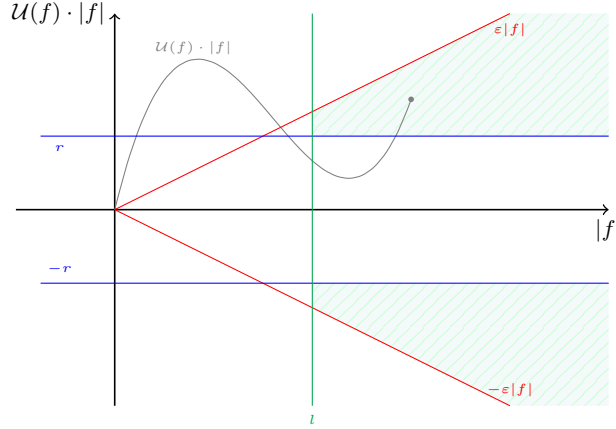

\Cref{lem:domination_of_eventually_dominating_constraint_term_groups} states the main property of the eventually dominating constraint term groups $\constrainttermgroup_{\nmax,c,\vecx}$ and $\constrainttermgroup_{\pmax,c,\vecx}$.
\Cref{fig:safe}
depicts this lemma (and also the following lemmas) graphically.
Here, the horizontal axis represents the length $|f|$ of the path and the vertical axis represents the value of the function $\URVar(f) \cdot |f| = |f|_{\symMat{A}} - p \cdot |f|$, which expresses the deviation of the number of $\symMat{A}$-symbols in the execution $f$ from the expected number of $\symMat{A}$-symbols.
We depicted $\URVar(f) \cdot |f|$ by a gray line.
The lemma essentially states that whenever $\URVar(f) \cdot |f|$ reaches one of the two ``safe'' regions marked in green, then the coefficient $v(f)$ of the dominant addend determines the sign of ${(\Val_{\vecx}(f))}_{c}$, provided that its absolute value $|v(f)|$ is large enough.
The upper safe region is the one for $d=\pmax$, i.e., here the path $f$ is long enough (i.e., $|f| \geq l$), $\URVar(f) \in [0,\eps]$ (i.e., $0 \leq \URVar(f) \cdot |f| \leq \eps \cdot |f|$), and $\URVar(f) \cdot |f| \geq r$.
Similarly, the lower safe region corresponds to the case $d=\nmax$.
This lemma also indicates why an extension of our approach to programs with three instead of two update matrices would be problematic.
Then instead of $\URVar(f)$ we would need a vector to express how much an execution deviates from the probabilities in the program.
This would break our concepts of eventually dominating constraint term groups and safe regions, since instead of $d\in\{\nmax,\pmax\}$, we would have to consider the ``direction'' of this deviation.

\begin{restatable}[Domination of Eventually Dominating Constraint Term Groups]{lemma}{dominationofeventuallydominatingconstrainttermgroups}
  \label{lem:domination_of_eventually_dominating_constraint_term_groups}
  Let $c\in[m]$, $d\in\{\nmax,\pmax\}$, and $\vecx \in \AA^{n}$.
  We define $v\colon \Path \to \AA$ (see \Cref{rem:sums_of_constraint_term_groups_are_real_valued}) as
  \[ \textstyle
    v (f) = \sum_{i \in \constrainttermgroup_{d,c,\vecx}} \zeta_{i,\matA}^{|f|_{\symMat{A}}} \zeta_{i,\matB}^{|f|_{\symMat{B}}} \gamma_{c,i} (\vecx)
  \]
  Then for every $\rho \in \AA_{>0}$,
  \pagebreak[2]
  there exist constants $\eps\in\AA_{>0}$, $r\in\NN$, and $l\in\NN_{>0}$ (for a bound on the \underline{l}ength of the path), such that for all $f\in\Path$ with $|f| \geq l$, $|v (f)| \geq \rho$, $|\URVar(f) \cdot |f|| \ge r$, $\URVar(f) \in [-\eps,0]$ if $d=\nmax$, and $\URVar(f) \in [0,\eps]$ if $d=\pmax$, we have $\sign ({(\Val_{\vecx}(f))}_{c}) = \sign(v (f))$.
\end{restatable}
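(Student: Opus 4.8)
The plan is to isolate the dominant summand of the grouped normal form \eqref{eq:complete_constraint_in_grouped_normal_form} and to show that all the other summands are negligible once $f$ lies in one of the two ``safe regions'' of \Cref{fig:safe}. I would fix $(\scaleinside^{*},\scaleoutside^{*}) = \max\nolimits_{\mathrm{lex},d}\{(\scaleinside,\scaleoutside)\in\mathcal{I}\mid\constrainttermgroup_{(\scaleinside,\scaleoutside),c,\vecx}\neq\emptyset\}$; if this set is empty, then $v(f)\equiv 0$, so $|v(f)|\ge\rho>0$ never holds and the claim is vacuous. Otherwise, by \Cref{def:dominating_constraint_term_groups} we have $\constrainttermgroup_{d,c,\vecx} = \constrainttermgroup_{(\scaleinside^{*},\scaleoutside^{*}),c,\vecx}$, so splitting the $(\scaleinside^{*},\scaleoutside^{*})$-term off \eqref{eq:complete_constraint_in_grouped_normal_form} gives, for every non-empty $f$ with $\URVar(f)\in(-p,1-p)$,
\[
{(\Val_{\vecx}(f))}_{c} \;=\; {\left(\scaleinside^{*}\cdot(\scaleoutside^{*})^{\URVar(f)}\right)}^{|f|}\cdot v(f) \;+\; E(f),
\]
where
\[
E(f) \;=\; \sum_{(\scaleinside,\scaleoutside)\in\mathcal{I}\setminus\{(\scaleinside^{*},\scaleoutside^{*})\}}{\left(\scaleinside\,\scaleoutside^{\URVar(f)}\right)}^{|f|}\sum_{i\in\constrainttermgroup_{(\scaleinside,\scaleoutside),c,\vecx}}\zeta_{i,\matA}^{|f|_{\symMat{A}}}\,\zeta_{i,\matB}^{|f|_{\symMat{B}}}\,\gamma_{c,i}(\vecx).
\]
Since $|\zeta_{i,\matA}| = |\zeta_{i,\matB}| = 1$, each inner sum is bounded in absolute value by $M_{(\scaleinside,\scaleoutside)} := \sum_{i\in\constrainttermgroup_{(\scaleinside,\scaleoutside)}}|\gamma_{c,i}(\vecx)|$. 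Dividing by the positive number ${(\scaleinside^{*}(\scaleoutside^{*})^{\URVar(f)})}^{|f|}$, it remains to control, for the finitely many competing pairs $(\scaleinside,\scaleoutside)\neq(\scaleinside^{*},\scaleoutside^{*})$ with $\constrainttermgroup_{(\scaleinside,\scaleoutside),c,\vecx}\neq\emptyset$, the $|f|$-th power of the ratio $\tfrac{\scaleinside}{\scaleinside^{*}}\big(\tfrac{\scaleoutside}{\scaleoutside^{*}}\big)^{\URVar(f)}$.

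I would split these competing pairs into two kinds, using that each is strictly $\leq_{\mathrm{lex},d}$-below $(\scaleinside^{*},\scaleoutside^{*})$. If $\scaleinside<\scaleinside^{*}$, then $\tfrac{\scaleinside}{\scaleinside^{*}}\big(\tfrac{\scaleoutside}{\scaleoutside^{*}}\big)^{t}\to\tfrac{\scaleinside}{\scaleinside^{*}}<1$ as $t\to 0$; hence for $\eps\in\AA_{>0}$ small enough (also taking $\eps<\min\{p,1-p\}$, so that \Cref{lem:normal_form_of_constraint_terms} and \eqref{eq:complete_constraint_in_grouped_normal_form} apply for all relevant $f$) these ratios stay below some $q_{1}<1$ whenever $\URVar(f)\in[-\eps,\eps]$, so their $|f|$-th powers are $\le q_{1}^{|f|}$. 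If $\scaleinside=\scaleinside^{*}$, then lexicographic maximality forces $\scaleoutside<\scaleoutside^{*}$ when $d=\pmax$ (resp.\ $\scaleoutside>\scaleoutside^{*}$ when $d=\nmax$), and since the hypotheses give $\URVar(f)\ge 0$ (resp.\ $\le 0$), the ratio is $\le 1$ and its $|f|$-th power equals $\big(\tfrac{\scaleoutside}{\scaleoutside^{*}}\big)^{\URVar(f)\cdot|f|}\le q_{2}^{\,r}$ for some $q_{2}<1$, using exactly the hypothesis $|\URVar(f)\cdot|f||\ge r$. Altogether $|E(f)| \le {(\scaleinside^{*}(\scaleoutside^{*})^{\URVar(f)})}^{|f|}\cdot(M\,q_{1}^{|f|}+M\,q_{2}^{\,r})$, where $M$ is the sum of all the $M_{(\scaleinside,\scaleoutside)}$.

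Finally I would fix the constants in this order: first $\eps$ as above; then $r\in\NN$ large enough that $M\,q_{2}^{\,r}<\rho/2$; then $l\in\NN_{>0}$ large enough that $M\,q_{1}^{\,l}<\rho/2$. For any $f$ satisfying the hypotheses (in particular $|f|\ge l$) we obtain $|E(f)| < {(\scaleinside^{*}(\scaleoutside^{*})^{\URVar(f)})}^{|f|}\cdot\rho \le {(\scaleinside^{*}(\scaleoutside^{*})^{\URVar(f)})}^{|f|}\cdot|v(f)|$. Since $v(f)$ and the inner sums in $E(f)$ are real by \Cref{rem:sums_of_constraint_term_groups_are_real_valued} and all the factors ${(\scaleinside\,\scaleoutside^{\URVar(f)})}^{|f|}$ are positive, the decomposition above then yields $\sign({(\Val_{\vecx}(f))}_{c}) = \sign(v(f))$ (note $v(f)\neq 0$, as $|v(f)|\ge\rho$). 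The main obstacle is the case $\scaleinside=\scaleinside^{*}$: there the competing terms are damped only by the \emph{constant} factor $q_{2}^{\,r}$ rather than by something that vanishes as $|f|\to\infty$, which is precisely why $r$ --- and hence the ``safe region'' bound $|\URVar(f)\cdot|f||\ge r$ --- must be allowed to depend on $\rho$. Getting the dependency order of $\eps$, $r$, and $l$ right, and observing that the borderline case $\URVar(f)=0$ is ruled out by $r\ge 1$, is the delicate point.
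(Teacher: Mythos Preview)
Your proposal is correct and follows essentially the same approach as the paper's proof: both isolate the dominant $(\scaleinside^{*},\scaleoutside^{*})$-summand, split the remaining competing pairs into those with $\scaleinside<\scaleinside^{*}$ (damped by $q_{1}^{|f|}$ via a small $\eps$ and large $l$) and those with $\scaleinside=\scaleinside^{*}$ (damped by $q_{2}^{\,r}$ via the hypothesis $|\URVar(f)\cdot|f||\ge r$), and then fix $\eps,r,l$ in that order. The paper's version names the extremal competing values explicitly ($\scaleinside_{\max 2},\scaleoutside_{\max'},\scaleoutside_{\max 2}$) and introduces an auxiliary $t\in(0,\rho)$, but the logical structure and the roles of the three constants are identical to yours.
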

\report{
  \begin{proof}[Proof Sketch]
    We refer to \cref{app:proofs_for_on_constraint_terms} for the full proof.
    W.l.o.g.\ assume $d = \pmax$ as the other case is symmetric.
    Moreover, we assume $\constrainttermgroup_{\pmax,c,\vecx} \neq \emptyset$ as otherwise $|v(f)| = 0 \not \geq \rho > 0$.
    So, let $(\scaleinside_{\max},\scaleoutside_{\max}) \in \mathcal{I}$ such that $\constrainttermgroup_{(\scaleinside_{\max},\scaleoutside_{\max})} = \constrainttermgroup_{(\scaleinside_{\max},\scaleoutside_{\max}), c, \vecx} = \constrainttermgroup_{\pmax,c,\vecx}$.
    The idea is to show that for all considered $f$, when $|f| \to \infty$, then the absolute value of the addend belonging to $(\scaleinside_{\max},\scaleoutside_{\max})$ in \cref{eq:complete_constraint_in_grouped_normal_form} will outgrow the absolute value of all other addends combined, thereby determining the sign of ${(\Val_{\vecx}(f))}_{c}$.
    As this addend's sign is the sign of $v(f)$, this implies the lemma.
    From \cref{eq:complete_constraint_in_grouped_normal_form} we obtain
    \begin{equation}
      \begin{split}
        {\left( \Val_{\vecx}(f) \right)}_{c}
        = \!\!\!\!\!
        \sum_{\substack{(\scaleinside,\scaleoutside) \in \mathcal{I}\\ \scaleinside < \scaleinside_{\max}}}\!\!\!\!\! {\left(\scaleinside \cdot \scaleoutside^{\URVar(f)}\right)}^{|f|} \!\!\!\!\!\!\!\! \sum_{i \in \constrainttermgroup_{(\scaleinside,\scaleoutside),c,\vecx}}\!\!\!\!\!\!\!\! \zeta_{i,\matA}^{|f|_{\symMat{A}}} \zeta_{i,\matB}^{|f|_{\symMat{B}}} \gamma_{c,i} (\vecx) \\
        + \sum_{\scaleoutside \in \RR_{>0}} {(\scaleinside_{\max} \cdot \scaleoutside^{\URVar(f)})}^{|f|} \!\!\!\!\!\!\!\! \sum_{i \in \constrainttermgroup_{(\scaleinside_{\max},\scaleoutside),c,\vecx}}\!\!\!\!\!\!\!\! \zeta_{i,\matA}^{|f|_{\symMat{A}}} \zeta_{i,\matB}^{|f|_{\symMat{B}}} \gamma_{c,i} (\vecx)
      \end{split}
      \label{eq:proof_eventual_domination_main_part_eq0}
    \end{equation}

    In the first step, we show that the absolute value of the addend belonging to $(\scaleinside_{\max},\scaleoutside_{\max})$ is larger than the absolute value of the rest of the second line of \cref{eq:proof_eventual_domination_main_part_eq0}.
    To this end, we choose $r \in \NN$ large enough such that $t \cdot \scaleoutside_{\max}^{r} > \scaleoutside_{\max 2}^{r} \cdot \sum_{i\in[n]} |\gamma_{c,i} (\vecx)|$ for some $t\in(0,\rho)$, where $\scaleoutside_{\max 2} < \scaleoutside_{\max}$ is the second largest number such that $\constrainttermgroup_{(\scaleinside_{\max},\scaleoutside_{\max 2}),c,\vecx} \neq \emptyset$ or $\scaleoutside_{\max 2} = 0$ if no such value exists.
    Thus, $t \cdot \scaleoutside_{\max}^{\URVar(f) \cdot |f|} > \scaleoutside_{\max 2}^{\URVar(f) \cdot |f|} \cdot \sum_{i\in[n]} |\gamma_{c,i} (\vecx)|$.
    Now, for all $f$ considered by \cref{lem:domination_of_eventually_dominating_constraint_term_groups}, the conditions on $r$ and $t$ imply
    \begin{align*}
      \rho \cdot \scaleinside_{\max}^{|f|} \cdot \scaleoutside_{\max}^{\URVar(f)\cdot|f|} & > t \cdot \scaleinside_{\max}^{|f|} \cdot \scaleoutside_{\max}^{\URVar(f)\cdot|f|} \\
                                                                                          & > \scaleinside_{\max}^{|f|} \cdot \scaleoutside_{\max 2}^{\URVar(f)\cdot|f|} \cdot \sum_{i\in[n]} |\gamma_{c,i} (\vecx)| \\
                                                                                          &
      \geq | \smashoperator{\sum_{\scaleoutside \in \RR_{>0},\, \scaleoutside < \scaleoutside_{\max}}} \,\,\scaleinside_{\max}^{|f|} \cdot \scaleoutside^{\URVar(f)\cdot|f|} \,\,\,\,\,\,\,\smashoperator{\sum_{i \in \constrainttermgroup_{(\scaleinside_{\max},\scaleoutside),c,\vecx}}}
      \zeta_{i,\matA}^{|f|_{\symMat{A}}} \zeta_{i,\matB}^{|f|_{\symMat{B}}}
      \gamma_{c,i} |,
    \end{align*}
    so that the absolute value of the addend with $\scaleoutside = \scaleoutside_{\max}$ in the second line of \cref{eq:proof_eventual_domination_main_part_eq0} is larger than the absolute value of all other addends of this second sum combined.
    Note that while the variable $\scaleoutside$ from the sum in the last line ranges over an uncountable set, only finitely many of its addends are non-zero as for such non-zero addends the pairs $(\scaleinside_{\max},\scaleoutside)$ must be contained in the finite set $\mathcal{I}$.
    Thus, for all such $f$, the sign of the second line equals that of $v(f)$.

    In the second step, we show that the absolute value of the second line is larger than the absolute value of the first line of \cref{eq:proof_eventual_domination_main_part_eq0}.
    As $t < \rho$, there is some constant $C > 0$ such that the absolute value of the second line is at least ${C \cdot (\scaleinside_{\max} \cdot \scaleoutside_{\max}^{\URVar(f)})}^{|f|}$, i.e., the absolute value of the second line increases at least linearly in ${(\scaleinside_{\max} \cdot \scaleoutside_{\max}^{\URVar(f)})}^{|f|}$.
    As $\scaleinside_{\max}$ is the ``maximal $\scaleinside$'', the absolute value of the second line will outgrow that of all other addends in the first line of \cref{eq:proof_eventual_domination_main_part_eq0} combined as $|f| \to \infty$, for small enough $\URVar(f) \leq \eps$.
    Therefore, the sign of the second line (and thus, the sign of $v(f)$) determines the sign of ${(\Val_{\vecx}(f))}_{c}$.

    To that end, one chooses $\eps > 0$ suitably small such that $\max_{u\in[0,\eps]}
      \frac{\scaleinside \cdot \scaleoutside^{u}}{\scaleinside_{\max} \cdot \scaleoutside_{\max}^{u}} < 1$ for all $(\scaleinside,\scaleoutside) \in \mathcal{I}$ with $\scaleinside \neq \scaleinside_{\max}$ and $\constrainttermgroup_{(\scaleinside,\scaleoutside),c,\vecx} \neq \emptyset$ (i.e., $\scaleinside < \scaleinside_{\max}$).
  \end{proof}
}

\section{Positive Eigenvalues}
\label{sect:Positive Eigenvalues}
Recall that we are interested in $\sign({\left( \Val_{\vecx}(f) \right)}_{c})$ as the execution progresses, i.e., for $|f| \to \infty$.
By \Cref{lem:domination_of_eventually_dominating_constraint_term_groups}, to this end we have to consider the sign of $v(f) = \sum_{i \in \constrainttermgroup_{d,c,\vecx}} \zeta_{i,\matA}^{|f|_{\symMat{A}}} \zeta_{i,\matB}^{|f|_{\symMat{B}}} \gamma_{c,i} (\vecx)$, where $\zeta_{i,\matA} = \frac{a_i}{|a_i|}$ and $\zeta_{i,\matB} = \frac{b_i}{|b_i|}$.
If both $a_i$ and $b_i$ are positive reals, then for $|f| \to \infty$, the sign of $\zeta_{i,\matA}^{|f|_{\symMat{A}}} \zeta_{i,\matB}^{|f|_{\symMat{B}}}
  \gamma_{c,i}(\vecx)$ does not change. Thus, we now investigate $\sum_{i \in \constrainttermgroup_{d,c,\vecx}}
  \zeta_{i,\matA}^{|f|_{\symMat{A}}} \zeta_{i,\matB}^{|f|_{\symMat{B}}} \gamma_{c,i}
  (\vecx)$ restricted to all $i \in \constrainttermgroup_{d,c,\vecx}$ where $a_i$ or $b_i$ is not from $\AA_{>0}$. In \Cref{lem:braverman_generalisation}, we will show that this sum is either always 0 (for all paths $f$) or it becomes negative for large enough $|f|$.

Assume that for some constraint $c\in[m]$, $d\in\{\nmax,\pmax\}$, and input $\vec{x}$, some eigenvalue of each constraint term in the eventually dominating constraint term group is not positive real, i.e., for all $i\in\constrainttermgroup_{d,c,\vecx}$ one has $\zeta_{i,\matA} \neq 1$ or $\zeta_{i,\matB} \ne 1$.
Then, the sign of the real part of this constraint term will change throughout the program's execution (i.e., for $|f| \to \infty$).
\Cref{lem:braverman_generalisation} shows that if the sum of these constraint terms is not always 0, then irrespective of the updates that were already performed in previous iterations, this sum becomes smaller than some negative constant $C$ after a number of further iterations.
This is expressed in \Cref{lem:braverman_generalisation}(\labelcref{it:braverman_generalisation_itemB}), where we have already performed $j_0$ updates with the matrix $\matA$ and $k_0$ updates with the matrix $\matB$.
Then by extending the run long enough, the real part of the sum becomes smaller than a constant $C$ that does not depend on $j_0$ and $k_0$.
Our \Cref{lem:braverman_generalisation} is a generalization of a similar result by Braverman~\cite[Lemma 4]{braverman06} to products of orbits of complex units, i.e., to products of $\zeta^j$ for $|\zeta| = 1$.\footnote{Note that \Cref{lem:braverman_generalisation} allows us to check whether $\zeta_{i,\matA}^{|f|_{\symMat{A}}}\zeta_{i,\matB}^{|f|_{\symMat{B}}}\gamma_{c,i}(\vecx) = 0$ holds for all $f \in \Path$, see \Cref{footnote:comment_on_constrainttermgroups}.
By \Cref{rem:sums_of_constraint_term_groups_are_real_valued}, the sum $\sum_{{i\in\constrainttermgroup_{(\scaleinside,\scaleoutside)}}}
  \zeta_{i,\matA}^{|f|_{\symMat{A}}}\zeta_{i,\matB}^{|f|_{\symMat{B}}}\gamma_{c,i}(\vecx)$ is a real number and thus, in the case of (\labelcref{it:braverman_generalisation_itemA}), $\sum_{{i\in\constrainttermgroup_{(\scaleinside,\scaleoutside)}}}
  \zeta_{i,\matA}^{|f|_{\symMat{A}}}\zeta_{i,\matB}^{|f|_{\symMat{B}}}\gamma_{c,i}(\vecx) = 0$ holds for all $f \in \Path$. So given an actual input $\vecx$, one just has to check the condition of \Cref{lem:braverman_generalisation}(\labelcref{it:braverman_generalisation_itemA}). If that condition does not hold, then by \Cref{lem:braverman_generalisation}(\labelcref{it:braverman_generalisation_itemB}), $\sum_{{i\in\constrainttermgroup_{(\scaleinside,\scaleoutside)}}}
  \zeta_{i,\matA}^{|f|_{\symMat{A}}}\zeta_{i,\matB}^{|f|_{\symMat{B}}}\gamma_{c,i}(\vecx) = 0$ does not hold for every $f \in \Path$.}

\begin{restatable}[Coefficients of Complex Eigenvalues Become Negative]{lemma}{bravermangeneralisation}
  \label{lem:braverman_generalisation}
  Let $\gamma_{1},\ldots,\gamma_{l} \in \CC$ be complex numbers and let $\zeta_{1,1},\ldots,\zeta_{1,l},\zeta_{2,1},\ldots,\zeta_{2,l} \in \{z \in \CC \mid |z| = 1\}$ be complex units such that $\zeta_{1,i}\neq1$ or $\zeta_{2,i}\neq1$ for all $i\in\{1,\ldots,l\}$.
  For all $j,k \in \NN$, let $z_{j,k} = \sum_{i=1}^{l} \zeta_{1,i}^{j} \zeta_{2,i}^{k} \gamma_{i}$.
  If all tuples $(\zeta_{1,i},\zeta_{2,i})$ for $i\in\{1,\ldots,l\}$ are pairwise different, then there exist constants $C \in \AA_{< 0}$ and $K\in\NN$ such that we either have (\labelcref{it:braverman_generalisation_itemA}) or (\labelcref{it:braverman_generalisation_itemB}):
  \begin{alphaenumerate}
    \item For all $i\in[l]$ with $\gamma_{i} \neq 0$ there is some $i'\in[l]$ with $\zeta_{1,i} = \conj{\zeta_{1,i'}}$, $\zeta_{2,i} = \conj{\zeta_{2,i'}}$, and $\gamma_{i} = -\conj{\gamma_{i'}}$, which implies $\Re(z_{j,k}) = \frac{z_{j,k} + \conj{z_{j,k}}}{2} = 0$ for all $j,k\in\NN$.\label{it:braverman_generalisation_itemA}
    \item For all $j_{0},k_{0}\in\NN$ there exist $j,k \in \{0,\ldots,K\}$ such that $\Re(z_{j_{0}+j,k_{0}+k}) \leq C$ and there are $j, j', k, k' \in \NN$ such that $z_{j,k} \neq z_{j',k'}$.\label{it:braverman_generalisation_itemB}
  \end{alphaenumerate}
\end{restatable}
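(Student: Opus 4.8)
\medskip\noindent\textbf{Proof proposal.}
The plan is to lift the two-parameter sequence $(z_{j,k})$ to a rotation on a compact abelian group and argue via its orbit closure, generalizing Braverman's one-dimensional argument. Writing $\mathbb{T}=\{z\in\CC\mid|z|=1\}$, each $i\in[l]$ gives a character $\chi_{i}\colon\ZZ^{2}\to\mathbb{T}$, $\chi_{i}(j,k)=\zeta_{1,i}^{j}\zeta_{2,i}^{k}$, and the two hypotheses say exactly that no $\chi_{i}$ is trivial (because $\zeta_{1,i}\neq1$ or $\zeta_{2,i}\neq1$) and that the $\chi_{i}$ are pairwise distinct (because the tuples $(\zeta_{1,i},\zeta_{2,i})$ are). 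I would then consider $\chi=(\chi_{1},\ldots,\chi_{l})\colon\ZZ^{2}\to\mathbb{T}^{l}$, the compact subgroup $H:=\overline{\chi(\ZZ^{2})}\subseteq\mathbb{T}^{l}$, and the continuous map $g\colon\mathbb{T}^{l}\to\RR$, $g(w_{1},\ldots,w_{l})=\Re(\sum_{i}w_{i}\gamma_{i})$, so that $\Re(z_{j,k})=g(\chi(j,k))$. The key preliminary step — and the only place the hypothesis $\zeta_{1,i}\neq1\vee\zeta_{2,i}\neq1$ is used — is that each coordinate projection $w\mapsto w_{i}$ restricts to a \emph{non-trivial} character of $H$, hence integrates to $0$ against the Haar measure $\mu$ on $H$; so $\int_{H}g\,d\mu=0$ (equivalently, $\frac{1}{N^{2}}\sum_{j,k<N}\Re(z_{j,k})\to0$ by a direct geometric-sum estimate).

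First I would split on whether $g$ vanishes on all of $H$. Since $\chi(\ZZ^{2})$ is dense in $H$ and $g$ is continuous, $g|_{H}\equiv0$ is equivalent to $\Re(z_{j,k})=0$ for all $j,k$. In that case, expanding $2\Re(z_{j,k})=\sum_{i}\chi_{i}(j,k)\gamma_{i}+\sum_{i}\conj{\chi_{i}}(j,k)\conj{\gamma_{i}}\equiv0$ and invoking linear independence over $\CC$ of the distinct characters among $\chi_{1},\ldots,\chi_{l},\conj{\chi_{1}},\ldots,\conj{\chi_{l}}$, the coefficient of each $\chi_{i}$ must vanish; since the tuples are distinct, there is at most one $i'$ with $(\zeta_{1,i'},\zeta_{2,i'})=(\conj{\zeta_{1,i}},\conj{\zeta_{2,i}})$, and therefore every $i$ with $\gamma_{i}\neq0$ must come with such an $i'$ and $\gamma_{i}=-\conj{\gamma_{i'}}$ — this is case~(\labelcref{it:braverman_generalisation_itemA}) (the converse implication inside it, that this structure forces $\Re(z_{j,k})=0$, is immediate by pairing conjugate terms). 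Any $C\in\AA_{<0}$ and $K\in\NN$ will do here.

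Otherwise $g|_{H}\not\equiv0$, and since $\int_{H}g\,d\mu=0$ with $\mu$ of full support, $g$ attains a strictly negative value (and also a strictly positive one) on $H$. I would fix a rational $\delta>0$ with $U:=\{g<-\delta\}\cap H$ non-empty and open, pick $h_{0}\in U$ and a symmetric open neighbourhood $V$ of the identity with $h_{0}V\subseteq U$, and set $C:=-\delta$. The essential extra input is that the \emph{forward} orbit $\chi(\NN^{2})$ is already dense in $H$: its closure is a closed subsemigroup of the compact group $H$, hence a subgroup, and it contains $\chi(\pm1,0)$ and $\chi(0,\pm1)$, hence all of $\overline{\chi(\ZZ^{2})}=H$. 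So finitely many translates $\chi(j_{1},k_{1})V,\ldots,\chi(j_{N},k_{N})V$ with $(j_{m},k_{m})\in\NN^{2}$ cover $H$; put $K:=\max_{m}\max(j_{m},k_{m})$. Then for arbitrary $j_{0},k_{0}\in\NN$ one chooses $m$ with $\chi(j_{0},k_{0})^{-1}h_{0}\in\chi(j_{m},k_{m})V$, which gives $\chi(j_{0}+j_{m},k_{0}+k_{m})\in h_{0}V^{-1}=h_{0}V\subseteq U$, i.e.\ $\Re(z_{j_{0}+j_{m},k_{0}+k_{m}})<-\delta=C$ with $j_{m},k_{m}\in\{0,\ldots,K\}$. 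Non-constancy of $z$ follows since $\chi(\NN^{2})$ is dense and $g$ is positive somewhere on $H$, yielding some $z_{j',k'}$ with $\Re(z_{j',k'})>0$; this is case~(\labelcref{it:braverman_generalisation_itemB}).

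The hard part is the \emph{uniformity} of $K$ in $(j_{0},k_{0})$ — the bounded-gap phenomenon. It rests on two structural facts that must be established with care: that the orbit closure $H$ is a \emph{group} (so that covering $H$ by translates of a neighbourhood of the ``good'' point $h_{0}$ converts into reaching $U$ from any start in one bounded orbit step), and that the \emph{forward} orbit $\chi(\NN^{2})$ — not merely $\chi(\ZZ^{2})$ — is dense (so that the shift vectors stay in $\{0,\ldots,K\}$). This is also why one deviation index $d\in\{\nmax,\pmax\}$ remains adequate for two update matrices, cf.\ the discussion after \Cref{def:dominating_constraint_term_groups}.
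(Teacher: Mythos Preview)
Your argument is correct and takes a genuinely different route from the paper. The paper proceeds \emph{quantitatively}: it establishes two explicit estimates --- an $O(N)$ upper bound on $\left|\sum_{j,k<N}z_{j_{0}+j,k_{0}+k}\right|$ via the geometric-series identity (crucially using $\zeta_{1,i}\neq1\vee\zeta_{2,i}\neq1$), and a uniform lower bound $\sum_{j,k<l}|z_{j_{0}+j,k_{0}+k}|\geq C_{2}$ via a Kronecker product of Vandermonde matrices (which is where the pairwise-distinctness hypothesis enters) --- and then runs a pigeonhole/averaging argument on an $Nl\times Nl$ block to extract a value $\leq -C_{1}/(Nl)$ with $K=Nl-1$. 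Your approach replaces both estimates by structural facts about the compact orbit closure $H$: the vanishing Haar integral plays the role of the first claim, and linear independence of characters (in case~(\labelcref{it:braverman_generalisation_itemA})) stands in for the Vandermonde argument; the uniform bound $K$ then comes from a finite-subcover argument rather than from explicit block sizes. What the paper's route buys is that $C$ and $K$ are given by explicit formulas in the data, which is convenient for the implementation alluded to in \cref{Implementation and Conclusion}; what your route buys is a cleaner conceptual picture that would extend with essentially no change to more than two parameters.

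Two minor remarks. First, your parenthetical ``equivalently, $\frac{1}{N^{2}}\sum_{j,k<N}\Re(z_{j,k})\to0$'' is in fact exactly the paper's first key claim (stated as a rate rather than a limit), so the two proofs share that ingredient even if they deploy it differently. Second, your closing sentence about $d\in\{\nmax,\pmax\}$ is a non sequitur here: the adequacy of two deviation directions in \cref{def:dominating_constraint_term_groups} comes from the single scalar $\URVar(f)$ parameterizing executions of fixed length, not from the bounded-gap property you just proved.
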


\begin{example}
  To illustrate \Cref{lem:braverman_generalisation}, we continue \Cref{ex:eventually_dominating_constraint_term_groups} and consider the constraint term group $\constrainttermgroup_{(10\sqrt{2},\nicefrac{1}{2})} = \{1,2\}$.
  Let $\vecx \in \AA^{3}$.
  According to \Cref{ex:normal_form}, the coefficient of this constraint term group (for $c = 1$) and $f \in \Path$ is
  \begin{align*} \textstyle
    \sum_{i\in\constrainttermgroup_{(10\sqrt{2},\nicefrac{1}{2})}} \zeta_{i,\matA}^{|f|_{\symMat{A}}} \zeta_{i,\matB}^{|f|_{\symMat{B}}} \gamma_{1,i}(\vecx)
     & \textstyle = {(e^{-\im \arctan (\nicefrac{4}{3})})}^{|f|_{\symMat{A}}} {(e^{\im (\pi - \arctan(\nicefrac{4}{3}))})}^{|f|_{B}} \gamma_{1,1} (\vecx) \\
     & \textstyle \phantom{{}={}} + {(e^{\im \arctan{\nicefrac{4}{3}}})}^{|f|_{\symMat{A}}} {(e^{\im (\arctan(\nicefrac{4}{3}) - \pi)})}^{|f|_{\symMat{B}}} \gamma_{1,2} (\vecx)
  \end{align*}

  As already
  \pagebreak[2]
  explored in \Cref{ex:constraint_term_groups}, this coefficient is $0$ for all $f \in \Path$ whenever $\gamma_{1,1} (\vecx) = 0 \iff \gamma_{1,1} (\vecx) = - \conj{\gamma_{1,2} (\vecx)}$ which corresponds to \Cref{lem:braverman_generalisation}(\labelcref{it:braverman_generalisation_itemA}).
  On the other hand, (\labelcref{it:braverman_generalisation_itemB}) states\linebreak
  that whenever this is not the case, i.e., $\gamma_{1,1} (\vecx) \neq 0$, then there are constants $C \in \AA_{<0}, \, K \in \NN$ such that for every $f\in\Path$ there are $f',f'' \in \Path$ with $|f'|_{\symMat{A}}, |f'|_{\symMat{B}} \leq K$ where
  \begin{equation*}
    \textstyle
    0 \;\; > \;\; C \;\; \geq \;\; \sum_{i\in\constrainttermgroup_{(10\sqrt{2},\nicefrac{1}{2})}} \zeta_{i,\matA}^{|ff'|_{\symMat{A}}} \zeta_{i,\matB}^{|ff'|_{\symMat{B}}} \gamma_{1,i}(\vecx)
    \;\; \neq \;\; \sum_{i\in\constrainttermgroup_{(10\sqrt{2},\nicefrac{1}{2})}} \zeta_{i,\matA}^{|ff''|_{\symMat{A}}} \zeta_{i,\matB}^{|ff''|_{\symMat{B}}} \gamma_{1,i}(\vecx) ,
  \end{equation*}
  which ends our example to illustrate \cref{lem:braverman_generalisation}.
\end{example}

Let $\mathfrak{R} = \{i\in [n] \mid \zeta_{i,\matA} = \zeta_{i,\matB} = 1\}$ be the set of indices $i$ such that both eigenvalues $a_{i}$ and $b_{i}$ are positive reals and let $\mathfrak{C} = [n] \setminus \mathfrak{R} = \{i\in[n] \mid \zeta_{i,\matA} \neq 1 \text{ or }
  \zeta_{i,\matB}\neq1\}$ be the set where at least one of the eigenvalues is not a positive real. To simplify the notation we also denote $\constrainttermgroup_{d,c,\vecx}\cap\mathfrak{R}$ and $\constrainttermgroup_{d,c,\vecx}\cap\mathfrak{C}$ by $\mathfrak{R}_{d,c,\vecx}$ and $\mathfrak{C}_{d,c,\vecx}$, respectively, for $(d,c)\in\{\nmax,\pmax\}\times[m]$. So for all $i \in \mathfrak{R}_{d,c,\vecx}$, we have $\zeta_{i,\matA}^{|f|_{\symMat{A}}} \zeta_{i,\matB}^{|f|_{\symMat{B}}}
  \gamma_{c,i}
  (\vecx) = \gamma_{c,i}
  (\vecx)$, i.e., the sign of the corresponding addend does not change for $|f| \to \infty$. For the other eigenvalues, by \Cref{lem:braverman_generalisation}, $\sum_{i \in \mathfrak{C}_{d,c,\vecx}}
  \zeta_{i,\matA}^{|f|_{\symMat{A}}} \zeta_{i,\matB}^{|f|_{\symMat{B}}}
  \gamma_{c,i}
  (\vecx)$ is either always 0 (for all paths $f$) or it becomes negative for suitable $|f|_{\symMat{A}}$ and $|f|_{\symMat{B}}$.

When executing the loop $\program$ on input $\vecx$, one expects that eventually (for $|f| \to \infty$) the constraint term group $\constrainttermgroup_{d,c,\vecx}$ for either $d=\nmax$ or $d=\pmax$ dominates the sign of constraint $c\in[m]$ (\Cref{lem:domination_of_eventually_dominating_constraint_term_groups}).
Whenever $\sum_{i \in \mathfrak{R}_{d,c,\vecx}} \gamma_{c,i} (\vecx) \leq 0$ and $\URVar(f) \cdot |f|$ has reached one of the two ``safe'' regions marked in green in \Cref{fig:safe}, by \Cref{lem:braverman_generalisation}
one can extend the current path $f$ by a path $g_f$ such that the coefficient $v(f \, g_f) = \sum_{i\in\constrainttermgroup_{d,c,\vecx}} \zeta_{i,\matA}^{|f\, g_f|_{\symMat{A}}} \zeta_{i,\matB}^{|f \, g_f|_{\symMat{B}} }\gamma_{c,i}
  (\vecx)$ of the dominating constraint term group is negative. Thus, the execution $f$ can be extended by a path $g_f$ such that it leads to termination. This observation is captured in \Cref{lem:braverman_generalisation_corollary}.

\begin{restatable}[Finite Execution leading to Termination]{lemma}{finiteexecutionleadingtotermination}
  \label{lem:braverman_generalisation_corollary}
  Let $c \in [m]$, $\vecx\in\AA^{n}$, and $d\in\{\nmax,\pmax\}$, such that $\sum_{i \in \mathfrak{R}_{d,c,\vecx}} \gamma_{c,i}(\vecx) \leq 0$.
  Then there are constants $\eps \in \AA_{>0}$, $r,u\in\NN$, and $l \in \NN_{>0}$, such that for all $f\in\Path$ with $|f| \geq l$, $|\URVar(f) \cdot |f|| \geq r$, $\URVar(f) \in [-\eps,0]$ if $d=\nmax$, and $\URVar(f) \in [0,\eps]$ if $d=\pmax$, there is a finite execution $g_f \in \Path$ of length $|g_f| \leq u$ with ${\left(\Val_{\vecx}(f\,g_f)\right)}_{c} \leq 0$.
\end{restatable}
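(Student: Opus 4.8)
The plan is to append to the given execution $f$ a short suffix $g_f$ along which the coefficient of the eventually dominating constraint term group becomes negative, and then to invoke \Cref{lem:domination_of_eventually_dominating_constraint_term_groups} to transfer this negative sign to ${(\Val_{\vecx}(f\,g_f))}_c$. If $\constrainttermgroup_{d,c,\vecx} = \emptyset$, then by \Cref{def:dominating_constraint_term_groups} all groups $\constrainttermgroup_{(\scaleinside,\scaleoutside),c,\vecx}$ are empty, so \cref{eq:complete_constraint_in_grouped_normal_form} gives ${(\Val_{\vecx}(f))}_c = 0$ for every non-empty $f$ with $\URVar(f)\in(-p,1-p)$, and it suffices to take $g_f$ empty, $u=0$, $r=0$, $l=1$, and any $\eps\in\AA_{>0}$ with $[-\eps,\eps]\subseteq(-p,1-p)$. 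So assume $\constrainttermgroup_{d,c,\vecx}\neq\emptyset$. I would write $R=\sum_{i\in\mathfrak{R}_{d,c,\vecx}}\gamma_{c,i}(\vecx)$ (which does not depend on $f$, since $\zeta_{i,\matA}=\zeta_{i,\matB}=1$ for $i\in\mathfrak R$) and $v(f)=R+C(f)$ with $C(f)=\sum_{i\in\mathfrak{C}_{d,c,\vecx}}\zeta_{i,\matA}^{|f|_{\symMat A}}\zeta_{i,\matB}^{|f|_{\symMat B}}\gamma_{c,i}(\vecx)$; by \Cref{rem:sums_of_constraint_term_groups_are_real_valued} and \Cref{lem:sums_of_conjugated_addends_are_real_valued}, all of $v(f),R,C(f)$ are real.

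To construct $g_f$ I would group the indices of $\mathfrak{C}_{d,c,\vecx}$ by the value of the pair $(\zeta_{i,\matA},\zeta_{i,\matB})$ and sum the corresponding $\gamma_{c,i}(\vecx)$ within each group; since $(\zeta_{i,\matA},\zeta_{i,\matB})\neq(1,1)$ for $i\in\mathfrak C$, \Cref{lem:braverman_generalisation} applies to these grouped sums and yields $C'\in\AA_{<0}$ and $K\in\NN$ for which one of its two alternatives holds, where $C(g)=z_{|g|_{\symMat A},|g|_{\symMat B}}$ for all $g\in\Path$. In case~(\labelcref{it:braverman_generalisation_itemA}), $\Re(z_{j,k})=0$ for all $j,k$, and since $C(g)$ is real this forces $C\equiv 0$, hence $v\equiv R$; as $\constrainttermgroup_{d,c,\vecx}\neq\emptyset$ means $v$ is not identically $0$ (\Cref{def:constraint_term_groups}), we get $R\neq0$, so the hypothesis $R\le 0$ gives $R<0$, and we take $g_f$ empty and $\rho:=-R$. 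In case~(\labelcref{it:braverman_generalisation_itemB}), applied with $j_0=|f|_{\symMat A}$ and $k_0=|f|_{\symMat B}$, it yields $j,k\in\{0,\ldots,K\}$ with $\Re(z_{j_0+j,k_0+k})\le C'$; we take $g_f=\symMat A^{j}\symMat B^{k}$, so that $C(f\,g_f)=\Re(C(f\,g_f))\le C'$ and $v(f\,g_f)=R+C(f\,g_f)\le C'<0$, and set $\rho:=-C'$. In both cases $|g_f|\le 2K=:u$, $v(f\,g_f)\le-\rho<0$, and $|v(f\,g_f)|\ge\rho$.

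It then remains to fix $\eps,r,l$ so that the hypotheses on $f$ force $f\,g_f$ to satisfy the hypotheses of \Cref{lem:domination_of_eventually_dominating_constraint_term_groups} for the constants $\eps_0,r_0,l_0$ that it provides for this $\rho$. The key observation is that appending $g_f$ changes $|f|$ by at most $2K$ and $\URVar(f)\cdot|f|=|f|_{\symMat A}-p|f|$ by $|g_f|_{\symMat A}(1-p)-|g_f|_{\symMat B}\,p$, which has absolute value at most $K$ since $p\in(0,1)$. Hence, for $d=\pmax$ (the case $d=\nmax$ being symmetric), choosing $r=r_0+K$ yields $\URVar(f\,g_f)\cdot|f\,g_f|\ge r-K\ge r_0\ge 0$, choosing $\eps=\eps_0/2$ together with $l$ large enough (e.g.\ $l\ge l_0$ and $l\ge\lceil 2K/\eps_0\rceil$) yields $\URVar(f\,g_f)\le\eps+K/|f|\le\eps_0$, and $|f\,g_f|\ge|f|\ge l\ge l_0$; so \Cref{lem:domination_of_eventually_dominating_constraint_term_groups} applies and gives $\sign({(\Val_{\vecx}(f\,g_f))}_c)=\sign(v(f\,g_f))=-1$, i.e., ${(\Val_{\vecx}(f\,g_f))}_c\le 0$. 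I expect this final bookkeeping step to be the main obstacle: one must make $r$ large enough to absorb the $\pm K$ perturbation of $\URVar(f)\cdot|f|$ while keeping it nonnegative, and $\eps$ small enough (with $l$ correspondingly large) so that the perturbed deviation still lies below $\eps_0\,|f\,g_f|$; the earlier reductions (the empty-group case and the extraction of $g_f$ from \Cref{lem:braverman_generalisation}) are comparatively routine.
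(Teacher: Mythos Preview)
Your proposal is correct and follows essentially the same approach as the paper: both handle the empty-group case trivially, split $v(f)$ into its $\mathfrak{R}$- and $\mathfrak{C}$-parts, invoke \Cref{lem:braverman_generalisation} (after grouping by $(\zeta_{i,\matA},\zeta_{i,\matB})$) to produce a short suffix $g_f=\symMat{A}^{j}\symMat{B}^{k}$ with $v(f\,g_f)\le C<0$, and then adjust $\eps,r,l$ so that the bounded perturbation caused by $g_f$ keeps $f\,g_f$ inside the region required by \Cref{lem:domination_of_eventually_dominating_constraint_term_groups}. The only cosmetic differences are that you bound the perturbation of $\URVar(f)\cdot|f|$ by $K$ (the paper uses the coarser $u=2K$) and you fix $\eps=\eps_0/2$ explicitly where the paper imposes the equivalent condition $\eps\,l+u\le\eps'\,l$.
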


Finally, \Cref{lem:dual_positive_eigenvalues_for_eventually_dominating_constraints}
builds upon \Cref{lem:braverman_generalisation_corollary}
and gives a sufficient criterion for termination of an input $\vecx\in\AA^{n}$.
The negation of this criterion is a necessary criterion for every input $\vecx\in\AA^{n}$ that is eventually non-terminating.
This necessary criterion states that if $\vecx$ is eventually\linebreak
non-terminating, then for all constraints $c$, the sum $\sum_{i\in\mathfrak{R}_{d,c,\vecx}} \gamma_{c,i}(\vecx)$ of the addends for the ``dual positive eigenvalues'' (where both $a_i$ and $b_i$ are positive reals) must be positive.

So whenever $\sum_{i \in \mathfrak{R}_{d,c,\vecx}} \gamma_{c,i} (\vecx) \leq 0$, \Cref{lem:dual_positive_eigenvalues_for_eventually_dominating_constraints} states that the expected number of steps until $\URVar(f) \cdot |f|$ reaches a ``safe'' (green) area in \Cref{fig:safe}
and executes $g_f$ afterwards is finite.
In other words, the expected number of steps $\E (\LRVar_{\vecx})$ until termination is finite.

\begin{example}
  \label{ex:dual_positive_eigenvalues_for_eventually_dominating_constraints}
  To motivate \Cref{lem:dual_positive_eigenvalues_for_eventually_dominating_constraints}
  further, we continue \Cref{ex:eventually_dominating_constraint_term_groups}.
  Let $\vecx \in \AA^{3}$.
  We have $\mathfrak{R} = \{ 3 \}$, as only $a_3$ and $b_3$ are positive real eigenvalues.

  First, suppose $\constrainttermgroup_{\nmax,1,\vecx} \neq \constrainttermgroup_{(10\sqrt{2},2)} = \{ 3 \}$ and $\constrainttermgroup_{\pmax,1,\vecx} \neq \constrainttermgroup_{(10\sqrt{2},2)} = \{ 3 \}$.
  Then, for all $d \in\linebreak
    \{\nmax,\pmax\}$ we have $\mathfrak{R}_{d,1,\vecx} = \emptyset$ and hence $\sum_{i \in \mathfrak{R}_{d,1,\vecx}} \gamma_{1,i} (\vecx) = 0$. Thus, $\vecx \not \in \ENT$ by \Cref{lem:dual_positive_eigenvalues_for_eventually_dominating_constraints}.

  On the other hand, if for some $d\in\{\nmax,\pmax\}$ we have $\constrainttermgroup_{d,1,\vecx} = \constrainttermgroup_{(10\sqrt{2},2)}= \{ 3 \}$, i.e., $\mathfrak{R}_{d,1,\vecx} = \{3\}$, then \Cref{lem:dual_positive_eigenvalues_for_eventually_dominating_constraints}
  states that $\gamma_{1,3} (\vecx) \leq 0$ (i.e., $ \frac{11}{10}x_{1} + \frac{11}{10}x_{2} \leq 0$) implies $\vecx \not \in \ENT$.
  However, if $\gamma_{1,3} (\vecx) > 0$, then \Cref{lem:dual_positive_eigenvalues_for_eventually_dominating_constraints} does not make any statement about whether $\vecx \in \ENT$ or $\vecx \not \in \ENT$.\footnote{\Cref{sec:towards_witnesses_for_non-termination} will show that in this case one indeed has $\vecx \in \ENT$, see \Cref{ex:witnesses}.}
\end{example}

\begin{restatable}[Dual Positive Eigenvalues for Eventually Dominating Constraints]{lemma}{dualpositiveeigenvaluesforeventuallydominatingconstraints}
  \label{lem:dual_positive_eigenvalues_for_eventually_dominating_constraints}
  Let $\vecx \in \AA^{n}$.
  If for every $d\in\{\nmax,\pmax\}$ there is a $c\in[m]$ with $\sum_{i\in\mathfrak{R}_{d,c,\vecx}} \gamma_{i}(\vecx) \leq 0$, then $\vecx \notin \ENT$.
  Thus, if $\vecx \in\linebreak
    \ENT$, then there is some $d\in\{\nmax,\pmax\}$ such that for all $c\in[m]$ we have $\sum_{i\in\mathfrak{R}_{d,c,\vecx}} \gamma_{c,i}(\vecx) > 0$.
\end{restatable}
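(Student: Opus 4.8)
The plan is to prove the contrapositive: assuming $\vecx \in \ENT$, we show that for every $d \in \{\nmax, \pmax\}$ and every $c \in [m]$ we have $\sum_{i \in \mathfrak{R}_{d,c,\vecx}} \gamma_{c,i}(\vecx) > 0$. The first reduction is to pass from $\vecx \in \ENT$ to an actual non-terminating input: by \Cref{def:ent} there are $j,k \in \NN$ with $\vecy := \matA^j \matB^k \vecx \in \NT$, i.e., $\E(\LRVar_{\vecy}) = \infty$. Using \Cref{GammaAB} one computes $\gamma_{c,i}(\vecy) = a_i^j b_i^k \gamma_{c,i}(\vecx)$; in particular, for $i \in \mathfrak{R}$ (where $a_i, b_i \in \AA_{>0}$) we have $\sign(\gamma_{c,i}(\vecy)) = \sign(\gamma_{c,i}(\vecx))$, and more importantly, whether the dominant constraint term groups of $\vecx$ and $\vecy$ coincide needs to be tracked — since $a_i^j b_i^k > 0$, multiplying by it does not change which $\gamma_{c,i}$ vanish, so $\constrainttermgroup_{(\scaleinside,\scaleoutside),c,\vecy} = \constrainttermgroup_{(\scaleinside,\scaleoutside),c,\vecx}$ for all $(\scaleinside,\scaleoutside)$, hence $\mathfrak{R}_{d,c,\vecy} = \mathfrak{R}_{d,c,\vecx}$ and $\sum_{i \in \mathfrak{R}_{d,c,\vecy}} \gamma_{c,i}(\vecy)$ has the same sign as $\sum_{i \in \mathfrak{R}_{d,c,\vecx}} \gamma_{c,i}(\vecx)$. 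So it suffices to prove the claim for the non-terminating input $\vecy$, and I will rename $\vecy$ back to $\vecx$.

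Now suppose, for contradiction, that there is some $d_0 \in \{\nmax, \pmax\}$ and some $c_0 \in [m]$ with $\sum_{i \in \mathfrak{R}_{d_0,c_0,\vecx}} \gamma_{c_0,i}(\vecx) \leq 0$. I would fix this $d_0$ (w.l.o.g.\ $d_0 = \pmax$, the case $\nmax$ being symmetric) and apply \Cref{lem:braverman_generalisation_corollary} to $c_0$, $\vecx$, $d_0$: this yields constants $\eps \in \AA_{>0}$, $r, u \in \NN$, and $l \in \NN_{>0}$ such that every finite execution $f$ with $|f| \geq l$, $\URVar(f) \in [0,\eps]$, and $\URVar(f)\cdot|f| \geq r$ can be extended by some $g_f$ of length at most $u$ with $(\Val_{\vecx}(f\,g_f))_{c_0} \leq 0$ — i.e., reaching the upper ``safe'' region of \Cref{fig:safe} lets us force a guard violation within $u$ further steps. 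The core of the argument is then a geometric/renewal-type bound: because $p \in (0,1)$, the random walk $n \mapsto \URVar(\run_1\ldots\run_n)\cdot n = |\run_1\ldots\run_n|_{\symMat A} - p n$ is (up to scaling) a centered random walk with bounded increments, and standard facts (e.g.\ the recurrence of such walks, or a direct estimate via Hoeffding that the walk lies in the cone $[0,\eps n]$ and exceeds level $r$ with probability bounded below) show that with probability $1$ the walk enters the safe region $\{|f| \geq l,\ \URVar(f) \in [0,\eps],\ \URVar(f)\cdot|f| \geq r\}$, and in fact the expected time to do so is finite. Combining this with the bounded extension $g_f$, one gets that from any point of the execution, within a number of steps with uniformly bounded expectation and bounded-below probability, the guard is violated. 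A geometric-series argument (a renewal bound: if at each ``attempt'' of bounded expected length we terminate with probability $\geq \delta > 0$, then the overall expected runtime is finite) then gives $\E(\LRVar_{\vecx}) < \infty$, contradicting $\vecx \in \NT$.

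The main obstacle, I expect, is making the probabilistic renewal estimate precise: one must argue that the walk $\URVar(f)\cdot|f|$ reaches the upper safe region with probability bounded below by a positive constant within a bounded expected number of steps, \emph{uniformly} regardless of the current position — and crucially, one needs the walk to enter the region corresponding to the \emph{fixed} sign $d_0 = \pmax$ (positive deviation), not merely one of the two regions. This uses that a centered bounded random walk makes excursions of both signs infinitely often and that an excursion reaching height $\geq r$ while staying within the cone $\URVar \leq \eps$ has probability bounded below (here $\eps$ being a fixed positive constant is essential, since the probability that $\URVar(f) \in [0,\eps]$ for $|f|$ large is close to $1$ by the law of large numbers, and conditionally the probability of the walk exceeding level $r$ is bounded below). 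Once this is set up, the rest is the routine renewal/geometric-series bookkeeping to conclude $\E(\LRVar_{\vecx}) < \infty$, together with the observation from \Cref{footnote:whyAlgebraic}-style reasoning that working over $\AA$ loses no generality; the second sentence of the lemma (the "Thus,\ldots" part) is then just the contrapositive restatement.
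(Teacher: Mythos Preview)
Your proposal contains a genuine gap rooted in a misreading of the statement. You claim the contrapositive is ``for every $d\in\{\nmax,\pmax\}$ and every $c\in[m]$ we have $\sum_{i\in\mathfrak{R}_{d,c,\vecx}}\gamma_{c,i}(\vecx)>0$'', but the lemma only asserts \emph{there exists} such a $d$. Your stronger version is in fact false: in the paper's running example, the input $\vecx=(1,1,0)^{\mathrm T}$ lies in $\ENT$ (see \Cref{ex:witnesses}), yet by \Cref{ex:eventually_dominating_constraint_term_groups} one has $\constrainttermgroup_{\nmax,1,\vecx}=\{1,2\}$, hence $\mathfrak{R}_{\nmax,1,\vecx}=\emptyset$ and $\sum_{i\in\mathfrak{R}_{\nmax,1,\vecx}}\gamma_{1,i}(\vecx)=0\le 0$.

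This error propagates: negating the wrong contrapositive, you assume only \emph{one} bad pair $(d_0,c_0)$ and then try to force termination via the \emph{single} safe region for $d_0$. That probabilistic step cannot be completed. The process $n\mapsto |\run_1\ldots\run_n|_{\symMat A}-pn$ is a centered bounded-increment martingale, and its expected first-passage time to level $r>0$ on a fixed side is infinite; equivalently, from a position far on the $\nmax$ side the probability of reaching the $\pmax$ safe region within any bounded window is not uniformly bounded below, so the renewal/geometric argument you sketch does not close. The paper's proof takes the correct negation---for \emph{each} $d$ there is a bad $c_d$---and applies \Cref{lem:braverman_generalisation_corollary} to \emph{both} $d=\nmax$ and $d=\pmax$. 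This makes \emph{both} safe regions terminating, and the walk reaches one of the two regions in finite expected time (the paper establishes this via explicit bounds on $\E(T_{\mathrm{in}})$ and $\E(T_{\mathrm{out},h})$ using a ranking-supermartingale argument), after which the geometric termination bound goes through. Using both sides is the essential idea you are missing.
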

\report{
  \begin{proof}[Proof Sketch]
    The full proof can be found in \cref{app:proofs_for_positive_eigenvalues}.
    \Cref{lem:dual_positive_eigenvalues_for_eventually_dominating_constraints} is proved by contradiction.
    To that end, let $\vecx\in\ENT$ and let $d\in\{\nmax,\pmax\}$ such that for all $c\in[m]$ we have $\sum_{i\in\mathfrak{R}_{d,c,\vecx}} \gamma_{c,i}(\vecx) > 0$.

    W.l.o.g., we may assume $\vecx \in \NT$ as otherwise one could consider $\vecy = A^{j}B^{k}\vecx \in \NT$ for some $j,k \in \NN$ according to the definition of $\ENT$.
    This is because $\sum_{i\in\mathfrak{R}_{d,c,\vecy}} \gamma_{c,i}(\vecy) > 0$ implies $\sum_{i\in\mathfrak{R}_{d,c,\vecx}} \gamma_{c,i}(\vecx) > 0$ (see \cref{app:proofs_for_positive_eigenvalues}).

    For a proof by contradiction, we assume the existence of $c_{\nmax},c_{\pmax} \in [m]$ such that
    \[\sum_{i\in\mathfrak{R}_{\nmax,c_{\nmax},\vecx}} \gamma_{c_{\nmax},i}(\vecx) \leq 0
      \quad \text{and} \quad
      \sum_{i\in\mathfrak{R}_{\pmax,c_{\pmax},\vecx}} \gamma_{c_{\pmax},i}(\vecx) \leq 0.\]

    We denote the constants from \Cref{lem:braverman_generalisation_corollary} as $\eps_{\nmax}, r_{\nmax}, l_{\nmax}, u_{\nmax}$ for $c = c_{\nmax}, d=\nmax$ and $\eps_{\pmax},r_{\pmax},l_{\pmax},u_{\pmax}$ for $c = c_{\pmax}, d=\pmax$.
    We define $\eps = \min\{\eps_{\nmax},\eps_{\pmax}\}$, $r = \max\{r_{\nmax},r_{\pmax}\}$, $u = \max\{u_{\nmax},u_{\pmax}\}$, and choose $l \geq \max\{l_{\nmax},l_{\pmax},u\}$ large enough such that $\eps \cdot l \geq r+1$.\footnote{This ensures that if $|f|$ is increased by 1, then the value of $\URVar(f) \cdot |f|$ changes by at most 1.
      Therefore, if $\URVar(f) \cdot |f| \geq r$ for the first time, then we also have $\URVar(f) \cdot |f| \leq \eps \cdot |f|$, i.e, $\URVar(f) \cdot |f|$ has reached a safe region.}
    Thus, for every path $f\in\Path$ leading to one of the two ``safe regions'', i.e., $|f| \geq l$, $|\URVar(f) \cdot |f|| \geq r$, and $\URVar(f) \in [-\eps,\eps]$, there is some $g_f$ with $|g_f| \leq u$ such that ${(\Val_{\vecx} (f \, g_f))}_{c_{\nmax}} \leq 0$ or ${(\Val_{\vecx} (f \, g_f))}_{c_{\pmax}} \leq 0$ and hence the path $f \,g_f$ is terminating.

    For all $i\in\NN$, we introduce the ($\F$-measurable) random variables $B_{i}\colon \Runs \to \NN \cup \{\infty\}$ such that $B_{i}$ maps runs to an index where one is in one of the ``safe regions'' specified by the constants $\eps$, $l$, and $r$ for at least the $i$-th time: \begin{align*}
  B_{0}(\run_1 \ldots) & = 0 \\
  B_{i}(\run_1 \ldots) & = \min \{j \in \NN \mid j > B_{i-1}(\run_1 \ldots) + l,\; |\URVar(\run_{1}\ldots\run_{j}) \cdot j| \geq r,\; \URVar (\run_{1}\ldots\run_{j}) \in [-\eps,\eps] \}
\end{align*}

    We ensure that $B_{i}(\run_1 \ldots) - B_{i-1}(\run_1 \ldots) \geq l \geq u$ to provide enough ``space'' for a terminating continuation $g$ of length $|g| \leq u$ to occur between the indices $B_{i}(\run_1 \ldots)$ and $B_{i-1}(\run_1 \ldots)$.
    Every time one is in a safe region, there is the possibility to execute a terminating path of length at most $l$.
    So for every path $f$ that leads to some index $B_{i}$, and thus to a safe region, there is a probability of at least ${(\min(p,1-p))}^{l} > 0$ that this path terminates within its first $B_{i} + l$ steps.
    If $f$ is not followed by $g_{f}$, then one might not terminate and even leave the safe region.
    However, in this case it only takes a constant number of steps (on expectation) to re-enter a safe region, i.e., to reach the index $B_{i+1}$.
    Then one has again a probability of ${(\min(p,1-p))}^{l} > 0$ to terminate in the following $l$ steps, by executing a terminating path $g_{f}$ next.
    So while the expectations of indices $B_{i}$ grow linearly in $i$, the probability of non-termination decreases exponentially in $i$, i.e., it is bounded by ${(1 - {(\min(p,1-p))}^{l})}^{i}$.
    The full proof in \cref{app:proofs_for_positive_eigenvalues} formalizes this argument to show $\E(\LRVar_{\vecx}) < \infty$.
    Hence, $\vecx \not \in \ENT$, contradicting our initial assumption.
  \end{proof}
}

\section{Towards Non-Termination Witnesses}
\label{sec:towards_witnesses_for_non-termination}

\Cref{lem:dual_positive_eigenvalues_for_eventually_dominating_constraints} provides a necessary condition that must hold for all $\vecx\in\ENT$.
It requires that the\linebreak
sum of the addends $\gamma_{c,i}(\vecx)$ for all positive real eigenvalues $a_i, b_i$ must be $> 0$.
This condition is however not sufficient for $\vecx\in\ENT$.
To turn this into a sufficient criterion, we now increase the lower bound $0$.
More precisely, we replace $0$ by the sum of the addends $|\gamma_{c,i} (\vecx)|$ for all those eigenvalues where $a_i$ or $b_i$ are not a positive real number.
In this way, we obtain a sufficient (but no longer necessary) criterion for $\ENT$.
To turn this into a sufficient and necessary\linebreak
criterion, we then introduce a ``boosting lemma'' (\Cref{lem:boosting}), which states that if there is an\linebreak
input in $\ENT$, then there is also a (possibly different) input in $\ENT$ that satisfies our sufficient criterion.
To prove this boosting lemma, we need the necessary condition of \Cref{lem:dual_positive_eigenvalues_for_eventually_dominating_constraints}.

For our sufficient (but not necessary) condition for $\ENT$, we define the set of \emph{witnesses} for eventual non-termination as those inputs meeting this criterion.

\begin{definition}[Witnesses for Eventual Non-Termination]
  \label{def:witnesses_for_ent}
  We define the set $W = W_{\nmax} \cup W_{\pmax}$ of \emph{witnesses for eventual non-termination}, where for $d\in\{\nmax,\pmax\}$, we have
  \[ \textstyle
    W_{d} = \bigcap_{c\in[m]} \left\{\vecx \in \AA^{n} \mid
    \sum_{i\in\mathfrak{R}_{d,c,\vecx}} \gamma_{c,i}(\vecx) >
    \sum_{i\in\mathfrak{C}_{d,c,\vecx}} \left|\gamma_{c,i} (\vecx)\right| \right\}
    .
  \]
\end{definition}
Note that the sum on the left-hand side in the definition of $W_{d}$ is real-valued due to \Cref{lem:sums_of_conjugated_addends_are_real_valued}.

So $W_d$ are all inputs $\vecx$ where for all constraints $c$, the sum of the dominating addends $\gamma_{c,i} (\vecx)$ for positive real eigenvalues $a_i, b_i$ is greater than the sum of the $\left|\gamma_{c,i} (\vecx)\right|$ for the other eigenvalues $a_i, b_i$.
\Cref{lem:soundness_of_witnesses} shows that the witness condition of \Cref{def:witnesses_for_ent} is indeed a sufficient criterion for $\ENT$.
Before presenting this lemma, we will apply it to our running example.

\begin{example}
  \label{ex:witnesses}
  We continue \Cref{ex:dual_positive_eigenvalues_for_eventually_dominating_constraints}.
  Let $\vecx \in \AA^{3}$ with $\gamma_{1,3}(\vecx) > 0$ such that for some $d \in \{\nmax,\pmax\}$ we have $\constrainttermgroup_{d,1,\vecx} = \constrainttermgroup_{(10\sqrt{2},2)}$.
  Then we have $\constrainttermgroup_{d,1,\vecx} = \mathfrak{R}_{d,1,\vecx} \,\uplus\, \mathfrak{C}_{d,1,\vecx} = \{3\} \,\uplus\, \emptyset$ and thus $\sum_{i \in \mathfrak{R}_{d,1,\vecx}} \gamma_{1,i}(\vecx) = \gamma_{1,3} (\vecx) > 0 = \sum_{i \in \mathfrak{C}_{d,1,\vecx}} |\gamma_{1,i} (\vecx)|$.
  Hence, $\vecx \in W_{d} \subseteq W$ and thus by the following \Cref{lem:soundness_of_witnesses}, we obtain $\vecx \in \ENT$, answering the question from \Cref{ex:dual_positive_eigenvalues_for_eventually_dominating_constraints}.

  Hence, $\vecx \in \ENT$ iff $\gamma_{1,3} (\vecx) > 0$ and there is $d \in \{\nmax,\pmax\}$ with $\constrainttermgroup_{d,1,\vecx} = \constrainttermgroup_{(10\sqrt{2},2)}$.
  (The ``only if'' direction is due to \Cref{lem:dual_positive_eigenvalues_for_eventually_dominating_constraints}, see \Cref{ex:dual_positive_eigenvalues_for_eventually_dominating_constraints}.)
\end{example}

\begin{restatable}[Witness Criterion is Sufficient for Eventual Non-Termination]{lemma}{soundnessofwitnesses}
  \label{lem:soundness_of_witnesses}
  Let $\vecx \in \AA^{n}$ be a witness for eventual non-termination, i.e., $\vecx \in W$.
  Then we have $\vecx\in\ENT$, i.e., $\vecx$ is indeed an eventually non-terminating input.
\end{restatable}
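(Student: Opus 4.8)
The plan is to show that for a witness $\vecx \in W$, say $\vecx \in W_d$ for some fixed $d \in \{\nmax,\pmax\}$, the expected runtime $\E(\LRVar_{\vecx})$ is finite, hence $\vecx \in \NT \subseteq \ENT$. (Strictly, we will produce a starting point after finitely many fixed updates, so the membership is in $\ENT$ rather than $\NT$, but the argument is the same.) The key structural fact we exploit is the rearrangement from \eqref{eq:introduction_constraint_term_groups_inner_split}: for each constraint $c$, on the dominating group we have $v(f) = R_c + \sum_{i \in \mathfrak{C}_{d,c,\vecx}} \zeta_{i,\matA}^{|f|_{\symMat{A}}} \zeta_{i,\matB}^{|f|_{\symMat{B}}} \gamma_{c,i}(\vecx)$ with $R_c = \sum_{i \in \mathfrak{R}_{d,c,\vecx}} \gamma_{c,i}(\vecx)$. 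The witness condition gives $R_c > \sum_{i \in \mathfrak{C}_{d,c,\vecx}} |\gamma_{c,i}(\vecx)|$, so by the triangle inequality $v(f) \ge R_c - \sum_{i \in \mathfrak{C}_{d,c,\vecx}} |\gamma_{c,i}(\vecx)| =: \rho_c > 0$ for \emph{all} executions $f$, not just for a subsequence. Setting $\rho = \min_{c} \rho_c > 0$, this is a uniform positive lower bound on the coefficient of the dominating constraint term group.

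Next I would invoke \Cref{lem:domination_of_eventually_dominating_constraint_term_groups} with this $\rho$: it yields constants $\eps, r, l$ so that whenever an execution $f$ satisfies $|f| \ge l$, $|\URVar(f)\cdot|f|| \ge r$, and $\URVar(f)$ lies in the appropriate one-sided interval $[-\eps,0]$ (if $d=\nmax$) or $[0,\eps]$ (if $d=\pmax$), we get $\sign({(\Val_{\vecx}(f))}_c) = \sign(v(f)) = +1$ for every constraint $c$ simultaneously — i.e., the loop guard is satisfied, so the run has not yet terminated at step $|f|$. The remaining task is probabilistic: I must show that with positive probability a run stays in (or repeatedly returns to) this "safe region" forever, or more precisely that the event of reaching such a safe region and having the guard hold there has positive probability persisting for all time, which forces $\E(\LRVar_{\vecx}) = \infty$. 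By the (strong) law of large numbers, $\URVar(\run_1\ldots\run_k) \to 0$ almost surely, so almost every run eventually has $|\URVar(f)| \le \eps$ permanently; and by recurrence properties of the (centered, bounded-increment) random walk $|f|_{\symMat{A}} - p|f|$ — here one can appeal to the same kind of argument used in \cite{braverman06,tiwari04}, namely that $\P(\URVar(f)\cdot|f| \ge r \text{ for infinitely many } f) = 1$ for $d=\pmax$ (resp. $\le -r$ for $d=\nmax$) — a positive-measure set of runs lands in the safe region at arbitrarily large lengths. On this set the guard never fails from some point on, so $\LRVar_{\vecx} = \infty$ with positive probability, whence $\E(\LRVar_{\vecx}) = \infty$.

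The main obstacle I anticipate is the last probabilistic step: one must be careful that "the guard holds at all times $|f| \ge l$ in the safe region" really does follow for \emph{every} execution reaching that region, which is exactly what \Cref{lem:domination_of_eventually_dominating_constraint_term_groups} delivers, but one also needs that outside the safe region (when $\URVar(f)\cdot|f|$ dips below $r$) the guard still does not fail — or rather, one restricts attention to runs whose walk, from some index on, never leaves the band $[-\eps,\eps]$ for $\URVar$ and stays above $r$ (resp. below $-r$) for $\URVar\cdot|f|$. For $d = \pmax$ this is the event that the random walk $S_k = |f|_{\symMat{A}} - pk$ eventually stays $\ge r$ while $S_k/k \le \eps$; this event has positive probability only if $\eps$ can be taken so that the walk has positive drift into it — but the walk is centered, so actually one needs the subtler fact that $\{S_k \ge r \text{ for all } k \ge N\}$ has probability zero, meaning the single safe region is not enough and one must instead argue via the \emph{expected return time} to the safe region, as sketched for \Cref{lem:dual_positive_eigenvalues_for_eventually_dominating_constraints}. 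So the cleanest route mirrors that proof sketch in reverse: define stopping times $B_i$ for the $i$-th visit to the safe region, note $\E(B_i)$ grows at most linearly in $i$ by standard random-walk estimates (the increments of $|f|_{\symMat{A}}$ are i.i.d. Bernoulli$(p)$, so returns to the relevant band happen in expected constant time), observe that at each such visit the guard provably still holds, and conclude that $\LRVar_{\vecx} \ge B_i$ for all $i$ on a probability-one event — in fact $\LRVar_{\vecx} = \infty$ surely — giving $\E(\LRVar_{\vecx}) = \infty$. Formalizing "returns happen in expected constant time" and patching the finitely many initial steps (where the guard might fail, handled by passing to $\matA^j\matB^k\vecx$ as in the definition of $\ENT$) is where the real work lies; everything else is bookkeeping with the uniform bound $\rho$ and the constants from \Cref{lem:domination_of_eventually_dominating_constraint_term_groups}.
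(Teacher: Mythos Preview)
Your setup is correct and matches the paper: the triangle inequality gives a uniform lower bound $\rho_c>0$ on $v_c(f)$ for \emph{all} $f$, and \Cref{lem:domination_of_eventually_dominating_constraint_term_groups} then yields constants $\eps,r,l$ so that the guard is satisfied at every step inside the safe region.

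The gap is in the probabilistic part. Your proposed argument via return times $B_i$ does not work: \Cref{lem:domination_of_eventually_dominating_constraint_term_groups} controls $\sign({(\Val_{\vecx}(f))}_c)$ only \emph{inside} the safe region, and between two visits $B_{i-1}$ and $B_i$ the execution may well leave it. During such an excursion a non-dominating constraint term group can take over and make some constraint negative, so the inference $\LRVar_{\vecx}\ge B_i$ is unjustified. Relatedly, your claim ``$\LRVar_{\vecx}=\infty$ surely'' is false in general: the paper explicitly notes after the proof sketch that the constructed input may terminate with probability $1$ while still having infinite expected runtime, so one cannot hope for an almost-sure argument here.

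The paper's actual mechanism is different and avoids this problem entirely. It fixes one finite path $\hat f$ landing in the \emph{middle} of the (upper, say) safe region, sets $\vecy=\matA^{|\hat f|_{\symMat A}}\matB^{|\hat f|_{\symMat B}}\vecx$, and defines $T$ as the \emph{first exit time} from the safe region when continuing from $\hat f$. Since the guard holds at every step while the execution stays in the safe region, one gets $\LRVar_{\vecy}\ge T$ deterministically. The whole burden is then to prove $\E(T)=\infty$. This is done by a contradiction argument combining the Optional Stopping Theorem with the Azuma--Hoeffding inequality: the centered random walk $X_t(\hat f_i)=\URVar(\hat f_i\run_1\ldots\run_t)\cdot|\hat f_i\run_1\ldots\run_t|-r$ is a bounded-increment martingale; if $\E(T(\hat f_i))$ were finite one could apply optional stopping to equate $\E(X_0(\hat f_i))$ with $\E(X_{T(\hat f_i)}(\hat f_i))$, and a Hoeffding bound on the probability of exiting through the upper boundary shows the latter stays bounded as $i\to\infty$, while $\E(X_0(\hat f_i))\to\infty$ because the safe region widens linearly. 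The linearly growing width of the safe region is precisely what makes the expected exit time infinite, and this is the idea missing from your sketch.
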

\report{
  \begin{proof}[Proof Sketch]
    The full proof can be found in \cref{app:proofs_for_towards_witnesses_for_non-termination}.
    Let $\vecx \in W$.
    Thus, there is some $d\in\{\nmax,\pmax\}$ such that for all $c\in[m]$ we have
    \begin{equation}
      \label{realGreaterComplex main part}
      \sum_{i\in\mathfrak{R}_{d,c,\vecx}} \gamma_{c,i}(\vecx) > \sum_{i\in\mathfrak{C}_{d,c,\vecx}} |\gamma_{c,i}(\vecx)|.
    \end{equation}
    W.l.o.g., we assume $d=\pmax$ as the other case is symmetric.

    For all $c\in[m]$ and $f\in\Path$, we define $v_{c}(f) = \sum_{i\in\constrainttermgroup_{\pmax,c,\vecx}} \zeta_{i,\matA}^{|f|_{\symMat{A}}} \zeta_{i,\matB}^{|f|_{\symMat{B}}} \gamma_{c,i}(\vecx)$.
    We have
    \begin{align*}
      v_{c}(f) & = \sum_{i\in\mathfrak{R}_{\pmax,c,\vecx}} \gamma_{c,i}(\vecx) +
      \sum_{i\in\mathfrak{C}_{\pmax,c,\vecx}} \zeta_{i,\matA}^{|f|_{\symMat{A}}}
      \zeta_{i,\matB}^{|f|_{\symMat{B}}} \gamma_{c,i}(\vecx) \\
               & \geq \sum_{i\in\mathfrak{R}_{\pmax,c,\vecx}} \gamma_{c,i}(\vecx) -
      \left| \sum_{i\in\mathfrak{C}_{\pmax,c,\vecx}} \zeta_{i,\matA}^{|f|_{\symMat{A}}}
      \zeta_{i,\matB}^{|f|_{\symMat{B}}} \gamma_{c,i}(\vecx) \right| \\
               & \geq \sum_{i\in\mathfrak{R}_{\pmax,c,\vecx}} \gamma_{c,i}(\vecx) -
      \sum_{i\in\mathfrak{C}_{\pmax,c,\vecx}} \left|\zeta_{i,\matA}^{|f|_{\symMat{A}}}
      \zeta_{i,\matB}^{|f|_{\symMat{B}}}\right| \cdot \left|\gamma_{c,i}(\vecx) \right| \\
               & = \sum_{i\in\mathfrak{R}_{\pmax,c,\vecx}} \gamma_{c,i}(\vecx) -
      \sum_{i\in\mathfrak{C}_{\pmax,c,\vecx}} \left|\gamma_{c,i}(\vecx) \right|
    \end{align*}
    where we used the triangle inequation and that $|\zeta_{i,\matA}| = |\zeta_{i,\matB}| = 1$ for all $i\in[n]$.
    For all $c\in[m]$, we define $\rho_{c}$ as the value of this inequation's right-hand side, where $\rho_c > 0$ by \cref{realGreaterComplex main part}.

    \begin{figure}
      \begin{center}
        {
\begin{tikzpicture}[scale=1.1,every node/.style={font=\tiny}]
  \draw [draw=none, fill=SeaGreen!5] (2,0.75) -- (5,0.75) -- (5,2) -- (4,2) -- (2,1) -- (2,0.75);
  \draw [draw=none, line space=5pt,pattern=my north east lines, pattern color=SeaGreen!30] (2,0.75) -- (5,0.75) -- (5,2) -- (4,2) -- (2,1) -- (2,0.75);

  \draw [draw=none, fill=SeaGreen!5] (2,-0.75) -- (5,-0.75) -- (5,-2) -- (4,-2) -- (2,-1) -- (2,-0.75);
  \draw [draw=none,line space=5pt,pattern=my north east lines, pattern color=SeaGreen!30] (2,-0.75) -- (5,-0.75) -- (5,-2) -- (4,-2) -- (2,-1) -- (2,-0.75);

  \draw [->,line width=0.6pt] (-1,0)--(5,0) node[below]{\footnotesize $|f|$};
  \draw [->,line width=0.6pt] (0,-2)--(0,2) node[left]{\footnotesize $\URVar(f) \cdot |f|$};

  \draw[domain=0:3, smooth, variable=\x, gray] plot ({\x}, {0.6875*\x*\x*\x - 3.313*\x*\x + 4.125*\x});
  \node[gray] at (0.8, 1.65) (A) {$\URVar(f)\cdot|f|$};
  \node[gray] at (3,1.125)[circle,fill,inner sep=0.75pt]{};

  \draw [red] (0,0) -- (4,2) node[below]{$\eps |f|$};
  \draw [red] (0,0) -- (4,-2) node[above]{$-\eps |f|$};

  \draw [blue] (-0.75, 0.75) node[below,xshift=0.25cm]{$ r$} -- (5, 0.75);
  \draw [blue] (-0.75,-0.75) node[above,xshift=0.25cm]{$-r$} -- (5,-0.75);

  \draw [Green] (2,-2) node[below]{$l$} -- (2,2);

  \draw [Plum] (2,0.875) -- (5,1.625) node[right] {$\frac{\eps \cdot |f| + r}{2}$};
\end{tikzpicture}
}
      \end{center}
      \caption{\label{fig:safe_with_middle_sketch} Illustration of the General Idea Using Safe Regions with Middle Line}
    \end{figure}

    Then, by \Cref{lem:domination_of_eventually_dominating_constraint_term_groups}, for all $c\in[m]$ there are constants $\eps_{c} \in \AA_{>0}$, $r_{c} \in \NN$, and $l_{c} \in \NN_{>0}$ such that $\sign({(\Val_{\vecx} (f))}_{c}) = \sign(v_{c}(f)) = 1$ for all $f \in \Path$ with $|f| \geq l_c$, $\URVar(f) \cdot |f| \geq r_c$, and $\URVar\in[0,\eps_{c}]$.
    We select $r\in\NN$ with $r \geq \max\{r_{1},\ldots, r_{m}\}$, $\eps \in (0,1-p]$ with $\eps \leq \min \{\eps_{1},\ldots,\eps_{m}\}$, and $l \geq \max\{l_{1}, \ldots,l_{m}\}$ large enough such that $\eps l \geq r+2$.
    Then, $\Val_{\vecx} (f) > \veczero$ whenever $|f| \geq l$, $\URVar(f) \cdot |f| \geq r$, and $\URVar(f) \in [0,\eps]$.
    The intuition is that we regard the line $\frac{\eps \cdot |f| + r}{2}$ in the middle of the upper safe region (the purple line in \Cref{fig:safe_with_middle_sketch}) and the \emph{middle region} which is between 1 above and 1 below this middle line.
    We need $\eps \leq 1-p$ to ensure that $\URVar(f) \cdot |f|$ can reach the whole (upper) safe region (recall that $\URVar(f) \leq 1-p$).
    Moreover, we need $\eps l \geq r+2$ (i.e., $\eps \cdot l - r \geq 2$) to ensure that the whole middle region is in the safe region.

    Let $\hat{f} \in \Path$ satisfy these requirements such that additionally $\abs{\URVar(\hat{f}) \cdot |\hat{f}| - \frac{\eps |\hat{f}| + r}{2}} \leq 1$, i.e., $\URVar(\hat{f}) \cdot |\hat{f}|$ reaches the middle region.
    Such an $\hat{f}$ always exists (see \cref{app:proofs_for_towards_witnesses_for_non-termination}).

    Consider the ($\F$-measurable) random variable $T\colon \Runs \to \NN\cup\{\infty\}$ defined as \begin{equation*}
  T = \min \{i\in\NN_{>0} \mid \URVar(\hat{f}\run_{1}\ldots\run_{i}) \cdot (|\hat{f}| + i) < r \text{ or } \URVar(\hat{f}\run_{1}\ldots\run_{i}) \not \in [0,\eps]\}.
\end{equation*}

    So $T$ maps any run $\run$ to the smallest $i$ such that if $\hat{f}$ takes place before $\run$, then $\run$ reaches an unsafe region after performing $i$ steps from $\hat{f}$ (according to \Cref{fig:safe_with_middle_sketch}).
    The expectation of the random variable $T$ is infinite, i.e.,
    \begin{equation}
      \E (T) = \infty. \label{eq:soundness_of_witnesses1 main part}
    \end{equation}
    A proof of \cref{eq:soundness_of_witnesses1 main part} can be found in \Cref{app:proofs_for_towards_witnesses_for_non-termination}.

    Let $\vecy = \matA^{|\hat{f}|_{\symMat{A}}} \matB^{|\hat{f}|_{\symMat{B}}} \vecx$, i.e., $\vecy$ results from $\vecx$ by executing the program according to the path $\hat{f}$.
    We now show $\vecy \in \NT$, which implies $\vecx \in \ENT$.

    If $T(\run) = k$ for some arbitrary run $\run\in\Runs$ and $k\in\NN\cup\{\infty\}$, then $\URVar(\hat{f}\run_{1}\ldots\run_{i}) \cdot i \geq r$ as well as $\URVar(\hat{f}\run_{1}\ldots\run_{i}) \in [0,\eps]$ for all $i\in\{1,\ldots,k-1\}$.
    Note that since $\URVar(\hat{f}\run_{1}\ldots\run_{i}) \geq 0$, this implies $\URVar(\hat{f}\run_{1}\ldots\run_{i}) \cdot (|\hat{f}| + i) \geq \URVar(\hat{f}\run_{1}\ldots\run_{i}) \cdot i \geq r$.
    It follows that $\Val_{\vecy} (\run_{1}\ldots\run_{i}) = \Val_{\vecx} (\hat{f}\run_{1}\ldots\run_{i}) > \veczero$.
    Therefore, $\LRVar_{\vecy} (\run) > i$ and hence, $\LRVar_{\vecy} (\run) \geq k$.
    As this holds for arbitrary runs, one concludes $\LRVar_{\vecy} \geq T$.
    Combined with \cref{eq:soundness_of_witnesses1 main part}, one obtains $\E \left(\LRVar_{\vecy}\right) \geq \E (T) = \infty$, completing the proof of \Cref{lem:soundness_of_witnesses}.
  \end{proof}

  \medskip

  In the proof of \cref{lem:soundness_of_witnesses}, the fact that we are interested in finite expected runtimes (positive almost sure termination) and not termination with probability $1$ (almost sure termination) is crucial, as in expectation, infinitely many steps are required to leave the safe region.
  However, the safe region will eventually be left with probability $1$.
  Thus, the constructed input $\vecy$ might in general terminate with probability $1$.
}

\Cref{def:witnesses_for_ent} introduced a set $W\subseteq\AA^{n}$ that, as shown by \Cref{lem:soundness_of_witnesses}, under-approximates the set of eventually non-terminating inputs $\ENT \supseteq W$.
While in general we may have $\ENT \supsetneq W$, for the program from \Cref{ex:witnesses}
we have $\ENT = W$ as for every $d\in\{\nmax,\pmax\}$ one either has $\constrainttermgroup_{d,1,\vecx} = \mathfrak{R}_{d,1,\vecx}$ or $\constrainttermgroup_{d,1,\vecx} = \mathfrak{C}_{d,1,\vecx}$.
So here, $\vecx \in \ENT$ implies $\sum_{i \in \mathfrak{R}_{d,1,\vecx}} \gamma_{1,i}(\vecx) > 0 \Rightarrow \mathfrak{R}_{d,1,\vecx} \neq \emptyset \Rightarrow \mathfrak{C}_{d,1,\vecx} = \emptyset$ for some $d\in\{\nmax,\pmax\}$ by \Cref{lem:dual_positive_eigenvalues_for_eventually_dominating_constraints} and hence $\sum_{i \in \mathfrak{R}_{d,1,\vecx}} \gamma_{1,i}(\vecx) > \sum_{i \in \mathfrak{C}_{d,1,\vecx}} |\gamma_{1,i}(\vecx)|$, i.e., $\vecx \in W$.

As the set $W$ is rather simple to characterize in contrast to $\ENT$, our goal is to only check for the existence of some $\vecx \in W$.
This input then witnesses the eventual non-termination of the loop $\program$.
The following \Cref{lem:boosting} establishes that whenever $\program$ is eventually non-terminating, then such a witness $\vecx \in W$ does indeed exist.
This then leads to our overall decision procedure, because we have that $\program$ is non-terminating $\iff$ $\program$ is eventually non-terminating $\iff$ $W \neq \emptyset$, see \Cref{CharacterizingTermination}.
In \Cref{sect:Deciding PAST}, we will show that emptiness of $W$ is decidable (not only over the algebraic reals, but also over different sub-semirings $\semiring$ of $\AA$ such as the naturals, integers, or rationals) and if $W \neq \emptyset$, then an element of $W$ is computable.

The intuition behind \cref{lem:boosting} is as follows: Given an input $\vecx \in \NT\cap\semiring^{n}$, we want to construct
\pagebreak[2]
a non-terminating input in $W\cap\semiring^{n}$.
Recall that for any constraint $c \in [m]$ and $d \in \{\nmax,\pmax\}$, the set $\mathfrak{R}_{d,c,\vecx}$ contains those indices $i \in \{1,\ldots,n\}$ from the dominant constraint term group where the corresponding eigenvalues $a_{i}$ and $b_{i}$ of both update matrices $\matA$ and $\matB$ are positive reals.
On the other hand, $\mathfrak{C}_{d,c,\vecx}$ contains the remaining indices from the dominant constraint term group.
Moreover, the $\gamma_{c,i}(\vecx)$ help to determine the sign of the corresponding dominant pair's coefficient.
If $x \notin W$, then
\begin{equation}
  \sum_{i \in \mathfrak{C}_{d,c,\vecx}} |\gamma_{c,i}(\vecx)| \geq \sum_{i \in \mathfrak{R}_{d,c,\vecx}} \gamma_{c,i}(\vecx) \label{eq:boosting_idea_motivation}
\end{equation}
for some $c \in [m]$.
One can now modify $\vecx$ to make \cref{eq:boosting_idea_motivation}'s left-hand side smaller.
For every $i \in \mathfrak{C}_{d,c,\vecx}$, if $a_{i}$ is not positive real, then we multiply $\vecx$ by $\matA$, and otherwise by $\matB$.
Since we have $\gamma_{c,i} (\matA \vecx) = a_{i} \cdot \gamma_{c,i}(\vecx)$ by \cref{GammaAB}, this ``shifts'' the phase of at least $\gamma_{c,i}(\vecx)$ on the left-hand side, but not for the addends on the right-hand side of \cref{eq:boosting_idea_motivation}.
By performing such multiplications repeatedly and taking a linear combination of the obtained inputs, \cref{eq:boosting_idea_motivation}'s left-hand side becomes arbitrarily small since addends ``cancel out'', whereas this is not the case for the right-hand side.
So one obtains a non-terminating input where \cref{eq:boosting_idea_motivation} does not hold for any $c \in [m]$.
Thus, $\vecx \in W$.

\begin{restatable}[Boosting]{lemma}{boosting}
  \label{lem:boosting}
  Let $\program$ be a non-terminating loop over $\semiring$, i.e., $\matA,\matB \in \semiring^{n \times n}$ and $\NT\cap\semiring^{n} \neq \emptyset$.
  Then there is a corresponding witness in $W\cap\semiring^{n}$.
\end{restatable}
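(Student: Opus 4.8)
The plan is to start from a given $\vecx\in\NT\cap\semiring^{n}$ and transform it, one constraint at a time, into a witness in $W\cap\semiring^{n}$, where each step replaces the current vector by a \emph{non‑negative integer} linear combination of its $\matA^{j}\matB^{k}$‑images. Since $\NT\subseteq\ENT$, \Cref{lem:dual_positive_eigenvalues_for_eventually_dominating_constraints} first yields a fixed $d\in\{\nmax,\pmax\}$ with $\sum_{i\in\mathfrak{R}_{d,c,\vecx}}\gamma_{c,i}(\vecx)>0$ for all $c\in[m]$. I write $G_{c}=\constrainttermgroup_{d,c,\vecx}$, $R_{c}=\mathfrak{R}_{d,c,\vecx}$, $C_{c}=\mathfrak{C}_{d,c,\vecx}$; by \Cref{lem:shared_modulus_of_eigenvalues_in_constraint_term_groups} all $i\in G_{c}$ share the eigenvalue moduli $\lvert a_{i}\rvert=a_{c}^{*}$, $\lvert b_{i}\rvert=b_{c}^{*}\in\AA_{>0}$, and $R_{c}=\{\,i\mid (a_{i},b_{i})=(a_{c}^{*},b_{c}^{*})\,\}$, so $\sum_{i\in R_{c}}\gamma_{c,i}(\cdot)$ is real‑valued by \Cref{lem:sums_of_conjugated_addends_are_real_valued}. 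I would maintain a vector $\vecz\in\semiring^{n}$ with the invariants: (i) $\constrainttermgroup_{d,c,\vecz}=G_{c}$ for all $c$; (ii) $\sum_{i\in R_{c}}\gamma_{c,i}(\vecz)>0$ for all $c$; and (iii) $\sum_{i\in R_{c}}\gamma_{c,i}(\vecz)>\sum_{i\in C_{c}}\lvert\gamma_{c,i}(\vecz)\rvert$ for all already‑handled $c$. Initially $\vecz=\vecx$ satisfies (i), (ii) and, vacuously, (iii).

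Two observations drive every step. First, for any non‑negative integer coefficients $(\lambda_{j,k})$ that are not all zero, the vector $\vecz'=\sum_{j,k}\lambda_{j,k}\matA^{j}\matB^{k}\vecz$ satisfies $\gamma_{c,i}(\vecz')=\gamma_{c,i}(\vecz)\cdot\sum_{j,k}\lambda_{j,k}a_{i}^{j}b_{i}^{k}$ by \Cref{GammaAB}; since $\lvert a_{i}\rvert=a_{c}^{*}$ and $\lvert b_{i}\rvert=b_{c}^{*}$ for $i\in G_{c}$, the factor $\sum_{j,k}\lambda_{j,k}a_{i}^{j}b_{i}^{k}$ is one and the same positive real $P_{c}=\sum_{j,k}\lambda_{j,k}(a_{c}^{*})^{j}(b_{c}^{*})^{k}$ for all $i\in R_{c}$ and has modulus $\le P_{c}$ for all $i\in C_{c}$. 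Hence $\sum_{i\in R_{c}}\gamma_{c,i}(\vecz')=P_{c}\sum_{i\in R_{c}}\gamma_{c,i}(\vecz)$ and $\sum_{i\in C_{c}}\lvert\gamma_{c,i}(\vecz')\rvert\le P_{c}\sum_{i\in C_{c}}\lvert\gamma_{c,i}(\vecz)\rvert$, so invariants (ii) and (iii) survive for \emph{every} constraint. Second, such combinations can never make a constraint term group active that was inactive (the $G_{c}$‑sum of $\vecz'$ along a path is a non‑negative combination of $G_{c}$‑sums of $\vecz$ along longer paths); combining this with (ii) and the fact that, by \Cref{lem:braverman_generalisation} together with real‑valuedness of group sums (\Cref{rem:sums_of_constraint_term_groups_are_real_valued}), the $\mathfrak{C}$‑part of the $G_{c}$‑sum of $\vecz'$ is real and either identically $0$ or non‑constant, one sees that $\sum_{i\in R_{c}}\gamma_{c,i}(\vecz')>0$ already forces $G_{c}$ to remain active; being still $\le_{\mathrm{lex},d}$‑maximal among the active groups, it stays dominant, so (i) survives too.

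The key step is handling a still‑unsatisfied constraint $c$. There I would take $\vecz'=\sum_{j=0}^{N_{A}-1}\sum_{k=0}^{N_{B}-1}\mu_{j}\nu_{k}\,\matA^{j}\matB^{k}\vecz$ with $\mu_{j},\nu_{k}\in\NN$ and $\mu_{0},\nu_{0}\ge 1$. Dividing the target inequality $\sum_{i\in C_{c}}\lvert\gamma_{c,i}(\vecz')\rvert<\sum_{i\in R_{c}}\gamma_{c,i}(\vecz')$ by $P_{c}=\bigl(\sum_{j}\mu_{j}(a_{c}^{*})^{j}\bigr)\bigl(\sum_{k}\nu_{k}(b_{c}^{*})^{k}\bigr)$, it suffices, for each $i\in C_{c}$ with $\gamma_{c,i}(\vecz)\neq 0$ (so $\zeta_{i,\matA}\neq 1$ or $\zeta_{i,\matB}\neq 1$), to make the quotient $\bigl\lvert\sum_{j}\mu_{j}(a_{c}^{*})^{j}\zeta_{i,\matA}^{j}\bigr\rvert/\sum_{j}\mu_{j}(a_{c}^{*})^{j}$ small whenever $\zeta_{i,\matA}\neq 1$ and the analogous $\nu$‑quotient small whenever $\zeta_{i,\matB}\neq 1$; the product of the two quotients then falls below any prescribed $\delta>0$, and choosing $\delta$ small relative to the fixed positive number $\sum_{i\in R_{c}}\gamma_{c,i}(\vecz)$ closes the constraint. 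To make the $\mu$‑quotient small I would fix $N_{A}$ with $2/\bigl(N_{A}\min_{\zeta}\lvert\zeta-1\rvert\bigr)<\delta/2$ over the finitely many relevant units $\zeta=\zeta_{i,\matA}\neq 1$ and set $\mu_{j}=\lfloor K\,(a_{c}^{*})^{-j}\rfloor\in\NN$; as $K\to\infty$ the normalized weights $\mu_{j}(a_{c}^{*})^{j}/\sum_{s}\mu_{s}(a_{c}^{*})^{s}$ converge to the uniform distribution on $\{0,\dots,N_{A}-1\}$, so a geometric‑series estimate pushes the quotient below $\delta$ for all these $\zeta$ once $K$ is large; symmetrically for $\nu$ and the units $\zeta_{i,\matB}$. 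Since every $\matA^{j}\matB^{k}\vecz$ lies in $\semiring^{n}$ and $\NN\subseteq\semiring$, we get $\vecz'\in\semiring^{n}$, and by the two observations $\vecz'$ still satisfies (i), (ii) and now also (iii) for $c$. Iterating over all $c\in[m]$ produces $\vecy\in\semiring^{n}$ with $\mathfrak{R}_{d,c,\vecy}=R_{c}$, $\mathfrak{C}_{d,c,\vecy}=C_{c}$ and (iii) for every $c$, i.e.\ $\vecy\in W_{d}\cap\semiring^{n}\subseteq W\cap\semiring^{n}$.

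The main difficulty I anticipate is staying inside $\semiring^{n}$: the ``ideal'' averaging weights $(a_{c}^{*})^{-j}$, $(b_{c}^{*})^{-k}$ are typically irrational, so they cannot be used verbatim; the fix is to round them to integers and to rely on the fact that an \emph{approximately} uniform normalized distribution already suffices, which is precisely what makes the argument valid over $\ZZ$ and $\QQ$ as well as $\AA$. The other delicate point is verifying that the per‑constraint averaging does not wreck constraints already fixed, and this hinges on the ``ratio can only shrink'' and ``groups can only die'' observations above together with \Cref{lem:braverman_generalisation}.
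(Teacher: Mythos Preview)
Your proposal is correct and follows the same overall strategy as the paper: start from \Cref{lem:dual_positive_eigenvalues_for_eventually_dominating_constraints} to fix $d$, then repeatedly replace the current vector by a non\nobreakdash-negative integer combination of its $\matA^{j}\matB^{k}$\nobreakdash-images so that the real part of each dominant group stays positive while the complex contributions are damped, with integrality obtained by rounding real ``ideal'' weights. Your two observations (ratios only shrink; inactive groups stay inactive while the dominant group stays active because its $\mathfrak{R}$\nobreakdash-part is a positive constant and its $\mathfrak{C}$\nobreakdash-part is, by \Cref{lem:braverman_generalisation}, either identically zero or non\nobreakdash-constant) are exactly the invariance facts the paper also proves and relies on.

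The mechanics, however, differ in a way worth noting. The paper runs a \emph{double} induction: an outer loop over constraints $c$ and an inner loop over the individual ``bad'' indices $I_{c,j}\in\mathfrak{C}_{d,c,\vecx}$. For each index it uses only one matrix direction (choosing $\matA$ if $a_{I_{c,j}}\notin\AA_{>0}$ and $\matB$ otherwise), and finds exponents $e_{1},\dots,e_{k}$ making $\lvert \zeta^{e_{1}}+\cdots+\zeta^{e_{k}}+1\rvert$ small via a case split (root of unity versus irrational rotation density); integer coefficients then come from clearing a common denominator. You instead handle an entire constraint in one shot with a \emph{product} combination $\sum_{j,k}\mu_{j}\nu_{k}\matA^{j}\matB^{k}\vecz$ and a uniform\nobreakdash-averaging argument: taking $\mu_{j}\approx K(a_{c}^{*})^{-j}$ makes the normalized weights nearly uniform, so a single geometric\nobreakdash-series bound $ \bigl\lvert\frac{1}{N}\sum_{j<N}\zeta^{j}\bigr\rvert\le\frac{2}{N\lvert\zeta-1\rvert}$ simultaneously shrinks the quotient for every relevant $\zeta\neq 1$, with no root\nobreakdash-of\nobreakdash-unity case split and no inner induction. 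The price is that you average in both directions at once; the gain is a shorter invariant and a cleaner preservation argument for previously handled constraints (your ``ratio can only shrink'' observation gives it immediately, whereas the paper must track the partial inequality $\frac{j'}{l_{c'}}\,w_{c'}(\vecx_{c,j})>\sum_{\tilde{j}\le j'}\lvert\gamma_{c',I_{c',\tilde{j}}}(\vecx_{c,j})\rvert$ through the inner induction).
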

\report{
  \begin{proof}[Proof Sketch]
    The full version of this proof can again be found in \cref{app:proofs_for_towards_witnesses_for_non-termination}.
    Let $\vecx \in \semiring^{n}$ be a non-terminating input, i.e., $\vecx \in \NT$.
    Then, according to \Cref{lem:dual_positive_eigenvalues_for_eventually_dominating_constraints}, there is a $d\in\{\nmax,\pmax\}$ such that for all $c\in[m]$ we have $\sum_{i\in\mathfrak{R}_{d,c,\vecx}} \gamma_{c,i}(\vecx) > 0$.
    In particular, $\constrainttermgroup_{d,c,\vecx} \supseteq \mathfrak{R}_{d,c,\vecx}\neq \emptyset$.
    For every $c \in [m]$, we define the sets $\{I_{c,1}, \ldots, I_{c,l_{c}}\} = \{i \in \mathfrak{C}_{d,c,\vecx} \mid \gamma_{c_h,i} (\vecx) \neq 0\}$.
    We prove the lemma by inductively constructing inputs $\vecx_{\!c,j} \in \semiring^{n}$ for all $(c,j) \in [m]\times ([l_c] \cup \{0\})$ such that
    \begin{equation}
      \begin{split}
        \text{for all $(c',j') \leq_{\mathrm{lex}} (c,j):$ If $j' \neq 0$, then }
        \frac{j'}{l_{c'}} \sum_{i\in\mathfrak{R}_{d,c',\vecx_{\!c,j}}} \!\!\!\!\!\!\!\! \gamma_{c',i} (\vecx_{\!c,j})
        > \sum_{\tilde{j}\in [j']} \! \left| \gamma_{c',I_{c',\tilde{j}}} \left(\vecx_{\!c,j}\right) \right|
      \end{split}
      \label{eq:boosting_main_part_claim}
    \end{equation}
    where $\leq_{\mathrm{lex}} {}={} \leq_{\mathrm{lex},\pmax}$ is the usual non-strict lexicographic order.
    Hence, for the final input $\vecx_{\!m,l_{m}}$ we have
    \[
      \sum_{i\in\mathfrak{R}_{d,c',\vecx_{\!m,l_m}}} \!\!\!\!\!\!\!\! \gamma_{c',i} (\vecx_{\!m,l_{m}})
      > \sum_{\tilde{j}\in [l_{c'}]} \! \left| \gamma_{c',I_{c',\tilde{j}}} \left(\vecx_{\!m,l_{m}}\right) \right|
    \]
    for all $c' \in [m]$, where the right-hand side is equal to $\sum_{i \in \mathfrak{C}_{d,c',\vecx_{\!m,l_m}}} |\gamma_{c',i}\left(\vecx_{\!m,l_{m}}\right)|$.
    Thus, $\vecx_{\!m,l_{m}} \in W\cap\semiring^{n}$.

    We set $\vecx_{1,0} = \vecx \in \semiring^{n}$ and $\vecx_{\!c,0} = \vecx_{\!c-1, l_{c-1}} \in \semiring^{n}$ for $c > 1$.
    Next consider $c \in [m]$ and $j \in \{1,\ldots,l_{c}\}$ such that \cref{eq:boosting_main_part_claim} holds for $\vecx_{\!c,j-1}$ (by the induction hypothesis).
    W.l.o.g.\ we may assume $a_{I_{c,j}} \not \in \AA_{>0}$.
    Otherwise we have $b_{I_{c,j}} \not \in \AA_{>0}$ as $I_{c,j} \in \constrainttermgroup_{d,c,\vecx}$ and one can replace $\matA$ with $\matB$ below.
    As shown in the full proof, there are constants $e_{1} \ldots e_{k} \in \NN$ and $p_{1},\ldots,p_{k},q \in \NN_{>0} \subseteq \semiring$ depending on the choice of $c$ and $j$ such that \cref{eq:boosting_main_part_claim} holds for
    \[ \vecx_{\!c,j} = (p_{1} \matA^{e_{1}} + \cdots + p_{k} \matA^{e_{k}})\cdot \vecx_{\!c,j-1} + q \cdot \vecx_{\!c,j-1} \,\in \semiring^{n}. \]
    By \cref{GammaAB} and linearity of all $\gamma_{c',i'}$ we have
    \[\gamma_{c',i'}(\vecx_{\!c,j}) = \left(\sum_{l = 1}^{k} p_{l} a_{i'}^{e_{l}} + q\right) \cdot \gamma_{c',i'}(\vecx_{\!c,j-1}) \]
    for all $(c',i') \in [m] \times [n]$.
    The intuition behind the choice of these constants is as follows: While for all $i' \in [n]$ with $a_{i'} \not \in \AA_{>0}$ the complex-valued addends in the coefficient $(\sum_{l = 1}^{k} p_{l} a_{i'}^{e_{l}} + q)$ of $\gamma_{c',i'}(\vecx_{\!c,j-1})$ in \cref{eq:boosting_main_part_claim} cancel each other out, such that the choice of the constants can make the sum's value arbitrarily close to $0$, this is not the case for the $i' \in [n]$ with $a_{i'} \in \AA_{>0}$, as those addends are real-valued and therefore cannot cancel each other out.
    As the $\gamma_{c',i'}$ with $a_{i'} \in \AA_{>0}$ make up the left-hand side in \cref{eq:boosting_main_part_claim} whereas those with complex-valued $a_{i'} \not \in \AA_{>0}$ constitute the sum on the right-hand side, \cref{eq:boosting_main_part_claim} can be re-established for $\vecx_{\!c,j}$, as shown in the proof's full version.
  \end{proof}

  \medskip
}

The following corollary summarizes our results so far, i.e., it shows that non-termination is equivalent to the existence of an element in $W$.

\begin{restatable}[Characterizing Termination]{corollary}{charactTermination}
  \label{CharacterizingTermination}
  A loop is terminating over a semiring $\semiring$ iff $W \cap \semiring^n =\emptyset$.
\end{restatable}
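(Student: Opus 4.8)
The plan is to derive this corollary directly by chaining together the three main results established in \cref{sect:programsTermination}--\ref{sec:towards_witnesses_for_non-termination}, without introducing any new argument. Recall that ``$\program$ is terminating over $\semiring$'' means $\NT \cap \semiring^{n} = \emptyset$ (consistent with \Cref{def:programs} and the hypothesis of \Cref{lem:boosting}), that \Cref{lem:soundness_of_witnesses} gives $W \subseteq \ENT$, and that \Cref{lem:boosting} gives a converse at the level of non-emptiness.

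For the direction ``$\program$ terminating over $\semiring$ $\Rightarrow$ $W \cap \semiring^{n} = \emptyset$'': I would assume $\NT \cap \semiring^{n} = \emptyset$, apply \Cref{lem:nt_vs_ent} to obtain $\ENT \cap \semiring^{n} = \emptyset$, and then use \Cref{lem:soundness_of_witnesses}, which yields $W \subseteq \ENT$, to conclude $W \cap \semiring^{n} \subseteq \ENT \cap \semiring^{n} = \emptyset$.

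For the converse direction ``$W \cap \semiring^{n} = \emptyset$ $\Rightarrow$ $\program$ terminating over $\semiring$'': I would argue by contraposition. If $\program$ is not terminating over $\semiring$, i.e.\ $\NT \cap \semiring^{n} \neq \emptyset$, then since $\matA, \matB \in \semiring^{n \times n}$, \Cref{lem:boosting} provides a witness in $W \cap \semiring^{n}$, so $W \cap \semiring^{n} \neq \emptyset$.

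Since the statement is essentially a bookkeeping consequence of \Cref{lem:nt_vs_ent}, \Cref{lem:soundness_of_witnesses}, and \Cref{lem:boosting}, there is no genuine obstacle at this point; all the real work has already been done in establishing those lemmas (in particular \Cref{lem:boosting}, whose proof carries the main difficulty of preserving membership in $\semiring^{n}$ while pushing a non-terminating input into $W$). The only subtlety is purely notational: one must read ``terminating over $\semiring$'' as $\NT \cap \semiring^{n} = \emptyset$ throughout, matching the formulations used in the cited lemmas.
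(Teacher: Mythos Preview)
Your proposal is correct and matches the paper's own proof essentially line for line: both directions are obtained by combining \Cref{lem:nt_vs_ent}, \Cref{lem:soundness_of_witnesses} (giving $W \subseteq \ENT$), and \Cref{lem:boosting}, with the only cosmetic difference being that you argue one direction directly while the paper phrases both via contraposition.
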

\report{
  \begin{proof}
    We have to show the equivalence of the statements $\NT \cap \semiring^{n} = \emptyset$ and $W \cap \semiring^{n} = \emptyset$.
    For the ``if'' direction, \Cref{lem:soundness_of_witnesses} states $W\cap\semiring^{n} \subseteq \ENT\cap\semiring^{n}$.
    Following \Cref{lem:nt_vs_ent}, $\emptyset \neq W\cap\semiring^{n} \subseteq \ENT\cap\semiring^{n}$ then implies $\NT\cap\semiring^{n} \neq \emptyset$.

    For the ``only if'' direction, let $\emptyset \neq \NT\cap\semiring^{n}$.
    Then, $W\cap\semiring^{n} \neq \emptyset$ according to \Cref{lem:boosting}.
  \end{proof}
}

\section{Deciding PAST}
\label{sect:Deciding PAST}
Finally, we present our novel technique for deciding whether a loop is (positively almost surely) terminating, i.e., whether its expected runtime is finite for every input.
As discussed in \Cref{sec:towards_witnesses_for_non-termination}, to this end we only have to show decidability of $W \neq \emptyset$ for the set of witnesses $W$ for eventual non-termination from \Cref{def:witnesses_for_ent}.
We now explain how to translate this emptiness problem into an SMT problem.
More precisely, we show that the witness set $W$ is semialgebraic, i.e., it corresponds to a formula over polynomial arithmetic (which is linear in the variables $\vecx$).
For this we have to take into account that for different values of $\vecx$, different addends may be eventually dominating.
Then, decidability over the algebraic reals is clear.

As before, $\mathfrak{R}$ are those indices from $[n]$ where both eigenvalues $a_i$ and $b_i$ are positive reals, and $\mathfrak{C}$ are the remaining indices.

\begin{restatable}[Semialgebraic Sets of Witnesses for Algebraic Loops]{lemma}{semialgebraicwitnesssetforalgebraicloops}
  \label{lem:witness_set_is_semialgebraic}
  Let $\matA,\matB \in \semiring^{n \times n}$, $\matC \in \AA^{m \times n}$.
  We define the sets $\CEqZero_{c,(\scaleinside,\scaleoutside)}, \CGTZero_{c,(\scaleinside,\scaleoutside)} \subseteq \AA^{n}$ as\report{\footnote{In $H_{i,(\scaleinside,\scaleoutside)}$, due to \Cref{lem:sums_of_conjugated_addends_are_real_valued}
      it would be possible to replace the condition $\sum_{\substack{j \in \constrainttermgroup_{(\scaleinside,\scaleoutside)} \\
            \zeta_{j,\matA} = \zeta_{i,\matA} \\
            \zeta_{j,\matB} = \zeta_{i,\matB}}} \gamma_{c,j}(\vecx) + \conj{\sum_{\substack{j \in \constrainttermgroup_{(\scaleinside,\scaleoutside)} \\
              \zeta_{j,\matA} = \conj{\zeta_{i,\matA}} \\
              \zeta_{j,\matB} = \conj{\zeta_{i,\matB}}}} \gamma_{c,j}(\vecx)} = 0$ by ``$\Re(\sum_{\substack{j \in \constrainttermgroup_{(\scaleinside,\scaleoutside)} \\
            \zeta_{j,\matA} = \zeta_{i,\matA} \\
            \zeta_{j,\matB} = \zeta_{i,\matB}}} \gamma_{c,j}(\vecx)) = 0$''.}}
  \begin{align*}
    \CEqZero_{c,(\scaleinside,\scaleoutside)} & = \textstyle
    \left\{\vecx \in \AA^{n} \mid \sum_{i \in
      \constrainttermgroup_{(\scaleinside,\scaleoutside)}\cap\mathfrak{R}} \; \gamma_{c,i}(\vecx) = 0 \right\} \cap \bigcap_{i \in \constrainttermgroup_{(\scaleinside,\scaleoutside)}\cap\mathfrak{C}} H_{i,(\scaleinside,\scaleoutside)}
    \\
    H_{i,(\scaleinside,\scaleoutside)}        & =
    \left\{\vecx \in
    \AA^{n}\mid \sum\limits_{\substack{j \in
    \constrainttermgroup_{(\scaleinside,\scaleoutside)} \\
    \zeta_{j,\matA} = \zeta_{i,\matA} \\
        \zeta_{j,\matB} = \zeta_{i,\matB}}} \gamma_{c,j}(\vecx) \; + \;
    \conj{\sum\limits_{\substack{j \in \constrainttermgroup_{(\scaleinside,\scaleoutside)} \\
    \zeta_{j,\matA} = \conj{\zeta_{i,\matA}} \\
          \zeta_{j,\matB} = \conj{\zeta_{i,\matB}}}} \gamma_{c,j}(\vecx)} = 0\right\}
    \\
    \CGTZero_{c,(\scaleinside,\scaleoutside)} & = \textstyle
    \left\{ \vecx \in \AA^{n}\mid
    \sum_{i\in\constrainttermgroup_{(\scaleinside,\scaleoutside)}\cap\mathfrak{R}} \;
    \gamma_{c,i} (\vecx) >
    \sum_{i\in\constrainttermgroup_{(\scaleinside,\scaleoutside)}\cap\mathfrak{C}} \; \left|\gamma_{c,i} (\vecx)\right|\right\}
  \end{align*}
  for all $(c,(\scaleinside,\scaleoutside)) \in [m]\times\mathcal{I}$.
  Then, for all $d\in\{\nmax,\pmax\}$ and $\mathfrak{c} = ((\scaleinside_{1},\scaleoutside_{1}),\ldots,(\scaleinside_{m},\scaleoutside_{m})) \in \mathcal{I}^{m}$ we define $W_{d,\mathfrak{c}} \subseteq W_{d}$ as
  \[ \textstyle
    W_{d,\mathfrak{c}} = \left\{ \vecx \in W_{d} \Bigm\vert \bigwedge_{c \in [m]}
    \constrainttermgroup_{d,c,\vecx} = \constrainttermgroup_{(\scaleinside_{c},\scaleoutside_{c}),c,\vecx} \right\} .\]
  Then
  \begin{equation}
    W_{d,\mathfrak{c}} = \bigcap_{c\in[m]}
    \left(\CGTZero_{c,(\scaleinside_{c},\scaleoutside_{c})} \cap
    \bigcap_{\substack{(\scaleinside',\scaleoutside') \in \mathcal{I}\\
        (\scaleinside',\scaleoutside') >_{\mathrm{lex},d} (\scaleinside_{c},\scaleoutside_{c})}} \CEqZero_{c,(\scaleinside',\scaleoutside')} \right) \label{eq:witness_set_is_semialgebraic1} .
  \end{equation}

  Furthermore, we have $W = W_{\nmax} \cup W_{\pmax}= \biguplus_{\mathfrak{c} \in \mathcal{I}^{m}} W_{\nmax,\mathfrak{c}} \,\cup\, \biguplus_{\mathfrak{c} \in \mathcal{I}^{m}} W_{\pmax,\mathfrak{c}}$.
  The sets $W_{d,\mathfrak{c}}$ and the set $W$ are moreover semialgebraic.
\end{restatable}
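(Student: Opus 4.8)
The plan is to strip the definition of $W_{d,\mathfrak{c}}$ down to its atoms, one constraint $c$ at a time, and to show that the ``for all executions $f$'' conditions buried in the definition of the dominating constraint term groups (\Cref{def:dominating_constraint_term_groups}) can be replaced by the finitely many polynomial conditions collected in $\CEqZero_{c,(\scaleinside,\scaleoutside)}$ and $\CGTZero_{c,(\scaleinside,\scaleoutside)}$. First I would unfold the definitions. Fix $d$ and $\mathfrak{c}=((\scaleinside_1,\scaleoutside_1),\ldots,(\scaleinside_m,\scaleoutside_m))\in\mathcal{I}^m$. By \Cref{def:witnesses_for_ent}, $\vecx\in W_{d,\mathfrak{c}}$ iff for every $c\in[m]$ both $\sum_{i\in\mathfrak{R}_{d,c,\vecx}}\gamma_{c,i}(\vecx)>\sum_{i\in\mathfrak{C}_{d,c,\vecx}}|\gamma_{c,i}(\vecx)|$ and $\constrainttermgroup_{d,c,\vecx}=\constrainttermgroup_{(\scaleinside_c,\scaleoutside_c),c,\vecx}$ hold. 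The first condition forces $\mathfrak{R}_{d,c,\vecx}\neq\emptyset$, hence $\constrainttermgroup_{d,c,\vecx}\neq\emptyset$; since for distinct $(\scaleinside,\scaleoutside)\in\mathcal{I}$ the index sets $\constrainttermgroup_{(\scaleinside,\scaleoutside)}$ are nonempty and pairwise disjoint, \Cref{def:dominating_constraint_term_groups} turns $\constrainttermgroup_{d,c,\vecx}=\constrainttermgroup_{(\scaleinside_c,\scaleoutside_c),c,\vecx}$ into the conjunction ``$\constrainttermgroup_{(\scaleinside_c,\scaleoutside_c),c,\vecx}\neq\emptyset$ and $\constrainttermgroup_{(\scaleinside',\scaleoutside'),c,\vecx}=\emptyset$ for all $(\scaleinside',\scaleoutside')\in\mathcal{I}$ with $(\scaleinside',\scaleoutside')>_{\mathrm{lex},d}(\scaleinside_c,\scaleoutside_c)$''. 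When $\constrainttermgroup_{(\scaleinside_c,\scaleoutside_c),c,\vecx}\neq\emptyset$ it equals $\constrainttermgroup_{(\scaleinside_c,\scaleoutside_c)}$, so $\mathfrak{R}_{d,c,\vecx}=\constrainttermgroup_{(\scaleinside_c,\scaleoutside_c)}\cap\mathfrak{R}$, $\mathfrak{C}_{d,c,\vecx}=\constrainttermgroup_{(\scaleinside_c,\scaleoutside_c)}\cap\mathfrak{C}$, and the first condition is exactly the one defining $\CGTZero_{c,(\scaleinside_c,\scaleoutside_c)}$.

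The technical core is the claim: for $(\scaleinside,\scaleoutside)\in\mathcal{I}$ and $c\in[m]$ one has $\constrainttermgroup_{(\scaleinside,\scaleoutside),c,\vecx}=\emptyset\iff\vecx\in\CEqZero_{c,(\scaleinside,\scaleoutside)}$, and $\vecx\in\CGTZero_{c,(\scaleinside,\scaleoutside)}$ already implies $\constrainttermgroup_{(\scaleinside,\scaleoutside),c,\vecx}\neq\emptyset$. To prove it I would write $s_f(\vecx):=\sum_{i\in\constrainttermgroup_{(\scaleinside,\scaleoutside)}}\zeta_{i,\matA}^{|f|_{\symMat A}}\zeta_{i,\matB}^{|f|_{\symMat B}}\gamma_{c,i}(\vecx)$ (whose vanishing for all $f$ is the definition of $\constrainttermgroup_{(\scaleinside,\scaleoutside),c,\vecx}=\emptyset$) as $R(\vecx)+z_f(\vecx)$, where $R(\vecx)=\sum_{i\in\constrainttermgroup_{(\scaleinside,\scaleoutside)}\cap\mathfrak{R}}\gamma_{c,i}(\vecx)$ is independent of $f$ (since $\zeta_{i,\matA}=\zeta_{i,\matB}=1$ there) and $z_f(\vecx)$ is the $\mathfrak{C}$-part; regrouping $z_f$ by the finitely many pairwise distinct tuples $(\zeta_{i,\matA},\zeta_{i,\matB})\neq(1,1)$ yields grouped coefficients $\Gamma_{(\zeta_{\matA},\zeta_{\matB})}(\vecx)$. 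Because the functions $(j,k)\mapsto\zeta_{\matA}^{\,j}\zeta_{\matB}^{\,k}$ for distinct torus tuples are $\CC$-linearly independent on $\NN^{2}$ (a two-variable Vandermonde argument; alternatively one invokes \Cref{lem:braverman_generalisation}\,(\labelcref{it:braverman_generalisation_itemA})/(\labelcref{it:braverman_generalisation_itemB})), $s_f(\vecx)=0$ for all $f$ holds iff $R(\vecx)=0$ and every $\Gamma_{(\zeta_{\matA},\zeta_{\matB})}(\vecx)=0$. By \Cref{lem:sums_of_conjugated_addends_are_real_valued} together with \Cref{lem:shared_modulus_of_eigenvalues_in_constraint_term_groups} one has $\Gamma_{(\conj{\zeta_{\matA}},\conj{\zeta_{\matB}})}(\vecx)=\conj{\Gamma_{(\zeta_{\matA},\zeta_{\matB})}(\vecx)}$, so the equation defining $H_{i,(\scaleinside,\scaleoutside)}$ reduces to $\Gamma_{(\zeta_{i,\matA},\zeta_{i,\matB})}(\vecx)=0$; hence the conjunction of the $H_{i,(\scaleinside,\scaleoutside)}$ over $i\in\constrainttermgroup_{(\scaleinside,\scaleoutside)}\cap\mathfrak{C}$ together with $R(\vecx)=0$ is exactly $\vecx\in\CEqZero_{c,(\scaleinside,\scaleoutside)}$. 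For the $\CGTZero$ half, if $\vecx\in\CGTZero_{c,(\scaleinside,\scaleoutside)}$ then the real part of $s_f(\vecx)$ at the empty $f$ is at least $R(\vecx)-\sum_{i\in\constrainttermgroup_{(\scaleinside,\scaleoutside)}\cap\mathfrak{C}}|\gamma_{c,i}(\vecx)|>0$ by the triangle inequality (using that $R(\vecx)$ is real), so $s_f(\vecx)\neq0$ and $\constrainttermgroup_{(\scaleinside,\scaleoutside),c,\vecx}\neq\emptyset$.

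Combining the two steps yields \eqref{eq:witness_set_is_semialgebraic1} directly: $\vecx\in W_{d,\mathfrak{c}}$ iff for every $c$ one has $\vecx\in\CGTZero_{c,(\scaleinside_c,\scaleoutside_c)}$ and $\vecx\in\CEqZero_{c,(\scaleinside',\scaleoutside')}$ for all $(\scaleinside',\scaleoutside')\in\mathcal{I}$ with $(\scaleinside',\scaleoutside')>_{\mathrm{lex},d}(\scaleinside_c,\scaleoutside_c)$. For the decomposition of $W$: $W=W_{\nmax}\cup W_{\pmax}$ is \Cref{def:witnesses_for_ent}, and for each $d$ we get $W_d=\biguplus_{\mathfrak{c}\in\mathcal{I}^m}W_{d,\mathfrak{c}}$ — ``$\supseteq$'' is trivial, and ``$\subseteq$'' holds because any $\vecx\in W_d$ has $\constrainttermgroup_{d,c,\vecx}\neq\emptyset$ for all $c$, so $\constrainttermgroup_{d,c,\vecx}=\constrainttermgroup_{(\scaleinside_c,\scaleoutside_c),c,\vecx}$ for the unique $\leq_{\mathrm{lex},d}$-maximal $(\scaleinside_c,\scaleoutside_c)\in\mathcal{I}$, placing $\vecx$ into $W_{d,\mathfrak{c}}$ for $\mathfrak{c}=((\scaleinside_c,\scaleoutside_c))_{c}$; the union is disjoint since distinct $\mathfrak{c}$'s would force $\constrainttermgroup_{(\scaleinside_c,\scaleoutside_c)}=\constrainttermgroup_{(\scaleinside'_c,\scaleoutside'_c)}$ for some $c$ with $(\scaleinside_c,\scaleoutside_c)\neq(\scaleinside'_c,\scaleoutside'_c)$, contradicting disjointness of the groups.

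Finally, semialgebraicity. Each $\gamma_{c,i}$ is a $\CC$-linear form in $\vecx$ with coefficients in $\QQbar$ (footnote~\ref{algebraicMatrices}), so $\Re\gamma_{c,i}$ and $\Im\gamma_{c,i}$ are $\RR$-linear forms in $x_1,\ldots,x_n$ with coefficients in $\AA$. Hence every $\CEqZero_{c,(\scaleinside,\scaleoutside)}$ is a finite intersection of zero sets of linear forms, and every $\CGTZero_{c,(\scaleinside,\scaleoutside)}$ has the shape $\{\vecx\mid \ell(\vecx)>\sum_i\sqrt{(\Re\gamma_{c,i}(\vecx))^2+(\Im\gamma_{c,i}(\vecx))^2}\,\}$ for a real linear form $\ell$, which is the projection to the $\vecx$-coordinates of the semialgebraic set $\{(\vecx,\vec t)\mid t_i\geq0,\; t_i^2=(\Re\gamma_{c,i}(\vecx))^2+(\Im\gamma_{c,i}(\vecx))^2,\; \ell(\vecx)>\sum_i t_i\}$, hence semialgebraic by Tarski--Seidenberg; since $\mathcal{I}$, and thus $\mathcal{I}^m$, is finite, \eqref{eq:witness_set_is_semialgebraic1} exhibits $W_{d,\mathfrak{c}}$ as a finite intersection and $W$ as a finite union of semialgebraic sets, so both are semialgebraic. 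I expect the equivalence $\constrainttermgroup_{(\scaleinside,\scaleoutside),c,\vecx}=\emptyset\iff\vecx\in\CEqZero_{c,(\scaleinside,\scaleoutside)}$ to be the main obstacle: it collapses a quantification over all infinitely many executions $f$ to the vanishing of finitely many grouped coefficients via linear independence of torus characters, and at the same time requires recognizing — through \Cref{lem:sums_of_conjugated_addends_are_real_valued} — that these grouped vanishing conditions are precisely what the sets $H_{i,(\scaleinside,\scaleoutside)}$ encode; the bookkeeping in the first step, relating $\constrainttermgroup_{d,c,\vecx}=\constrainttermgroup_{(\scaleinside_c,\scaleoutside_c),c,\vecx}$ to a conjunction of (non)emptiness conditions over $\mathcal{I}$, is a second place where care is needed.
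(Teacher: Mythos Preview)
Your proposal is correct and follows essentially the same approach as the paper: both reduce \eqref{eq:witness_set_is_semialgebraic1} to the characterization $\constrainttermgroup_{(\scaleinside,\scaleoutside),c,\vecx}=\emptyset\iff\vecx\in\CEqZero_{c,(\scaleinside,\scaleoutside)}$ (the paper via the dichotomy of \Cref{lem:braverman_generalisation}, you via linear independence of torus characters, which is the same idea), then use that $\vecx\in\CGTZero_{c,(\scaleinside_c,\scaleoutside_c)}$ forces the group to be nonempty. Your treatment is slightly more explicit in two places---the disjointness of $\biguplus_{\mathfrak{c}}W_{d,\mathfrak{c}}$ and the Tarski--Seidenberg argument for the modulus terms in $\CGTZero$---but the overall structure matches the paper's proof.
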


\begin{example}
  We continue \Cref{ex:witnesses} and consider $\vecx =
    \begin{pmatrix}
      1 & 1 & 0
    \end{pmatrix}
    ^{\mathrm{T}}\in \NN^{3}$. By \Cref{ex:witnesses}
  we have $\vecx \in \CGTZero_{1,(10\sqrt{2},2)}$.
  Moreover, there is no $(\scaleinside',\scaleoutside') \in \mathcal{I} = \{(10\sqrt{2},2), (10\sqrt{2},\nicefrac{1}{2})\}$ with $(\scaleinside',\scaleoutside') >_{\mathrm{lex},\pmax} (10\sqrt{2},2)$.
  For $\mathfrak{c} = (10\sqrt{2},2)$ this implies $\vecx \in W_{\pmax,\mathfrak{c}} \subseteq W_{\pmax} \subseteq W \subseteq \ENT \neq \emptyset \Rightarrow \NT \neq \emptyset$.

  Thus, the loop initially introduced in \Cref{ex:initial_program} is non-terminating for all $\semiring \subseteq \{\ZZ, \QQ, \AA\}$.\footnote{We did not consider $\semiring \in \{\NN,\QQ_{\geq 0}, \AA_{\geq 0}\}$ as for such a choice of $\semiring$ we do not have $\matA,\matB \in \semiring^{n \times n}$.}
\end{example}

Note that while for $(\scaleinside,\scaleoutside) \in \mathcal{I}$ the number $\scaleinside$ is not necessarily algebraic, the representation of $W_{d} \cap \AA^{n}$ as a finite union/intersection of the semialgebraic sets $\CEqZero_{(\scaleinside,\scaleoutside),c}, \CGTZero_{(\scaleinside,\scaleoutside),c}$ is still computable by \Cref{lem:comparison_of_constraint_term_groups} as one simply has to determine the corresponding ordering $>_{\mathrm{lex},d}$ on $\mathcal{I}$.
This is the reason why we restricted $p$ to the set of algebraic reals.

To show that emptiness of $W$ is also decidable over various sub-semirings $\semiring$ of the algebraic reals, we prove the convexity of the sets $W_{d,\mathfrak{c}}$.
Note that the set $W$ as well as the sets $W_{\nmax}, W_{\pmax}$ themselves are in general \emph{not} convex.

\begin{restatable}[$W_{d}$ as Finite Union of Convex Sets]{lemma}{wdasfiniteunionofconvexsets}
  \label{lem:witness_set_is_union_of_convex_sets}
  For $d \in \{\nmax,\pmax\}$ and $\mathfrak{c} \in \mathcal{I}^{m}$, the set $W_{d,\mathfrak{c}}$ is convex, i.e., $t \vecx + (1-t)\vecy \in W_{d,\mathfrak{c}}$ for all $\vecx,\vecy \in W_{d,\mathfrak{c}}$ and $t \in (0,1)$.
\end{restatable}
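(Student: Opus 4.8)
The plan is to exploit the explicit description of $W_{d,\mathfrak c}$ furnished by \cref{eq:witness_set_is_semialgebraic1} in \Cref{lem:witness_set_is_semialgebraic}, which presents $W_{d,\mathfrak c}$ as a \emph{finite} intersection of sets of the two shapes $\CGTZero_{c,(\scaleinside,\scaleoutside)}$ and $\CEqZero_{c,(\scaleinside',\scaleoutside')}$. Since a finite intersection of convex sets is convex, it suffices to verify that each of these two building blocks is convex; then $W_{d,\mathfrak c}$ inherits convexity. I read ``convex'' in the usual way: for $\vecx,\vecy\in W_{d,\mathfrak c}$ and $t\in(0,1)$ the point $t\vecx+(1-t)\vecy$ satisfies all the defining (in)equalities; when $t\in\AA$ this point also lies in $\AA^n$, so equivalently the defining formula cuts out a convex subset of $\RR^n$ whose intersection with $\AA^n$ is $W_{d,\mathfrak c}$.

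For the sets $\CEqZero_{c,(\scaleinside,\scaleoutside)}$: recall that each $\gamma_{c,i}\colon\AA^n\to\CC$ is $\RR$-linear with $\gamma_{c,i}(\veczero)=0$. Hence the condition $\sum_{i\in\constrainttermgroup_{(\scaleinside,\scaleoutside)}\cap\mathfrak R}\gamma_{c,i}(\vecx)=0$ describes the kernel of an $\RR$-linear map into $\CC$, and likewise each defining equation of $H_{i,(\scaleinside,\scaleoutside)}$ has the form $\gamma(\vecx)+\conj{\gamma'(\vecx)}=0$ with $\gamma,\gamma'$ $\RR$-linear (note $\vecx\mapsto\conj{\gamma'(\vecx)}$ is still $\RR$-linear). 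So $\CEqZero_{c,(\scaleinside,\scaleoutside)}$ is a finite intersection of $\RR$-linear subspaces of $\AA^n$, hence convex.

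For the sets $\CGTZero_{c,(\scaleinside,\scaleoutside)}$: I would write this set as $\{\vecx\mid h(\vecx)>0\}$ where $h(\vecx)=\sum_{i\in\constrainttermgroup_{(\scaleinside,\scaleoutside)}\cap\mathfrak R}\gamma_{c,i}(\vecx)-\sum_{i\in\constrainttermgroup_{(\scaleinside,\scaleoutside)}\cap\mathfrak C}\abs{\gamma_{c,i}(\vecx)}$. The left summand is a real linear functional in $\vecx$ — for this one needs that it is genuinely real-valued, which follows since by \Cref{lem:shared_modulus_of_eigenvalues_in_constraint_term_groups} all indices in $\constrainttermgroup_{(\scaleinside,\scaleoutside)}\cap\mathfrak R$ share one eigenvalue pair of positive reals (hence self-conjugate), so \Cref{lem:sums_of_conjugated_addends_are_real_valued} makes the sum equal to its own conjugate. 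The right summand is a sum of maps $\vecx\mapsto\abs{\gamma_{c,i}(\vecx)}$, each a seminorm (the norm $\abs{\cdot}$ on $\CC$ precomposed with a linear map), hence convex. Therefore $h$ is concave (linear minus convex), and a superlevel set $\{h>0\}$ of a concave function is convex by the one-line argument $h(t\vecx+(1-t)\vecy)\ge t\,h(\vecx)+(1-t)\,h(\vecy)>0$; so $\CGTZero_{c,(\scaleinside,\scaleoutside)}$ is convex.

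Putting this together, $W_{d,\mathfrak c}$ is a finite intersection of convex sets and therefore convex. I do not expect a genuine obstacle here: the only care needed is the bookkeeping that the ``real'' parts of the defining expressions are in fact real-valued (handled by \Cref{lem:shared_modulus_of_eigenvalues_in_constraint_term_groups} and \Cref{lem:sums_of_conjugated_addends_are_real_valued}) together with the minor $\AA$-versus-$\RR$ caveat about the domain; everything else reduces to the standard facts that kernels of linear maps, sublevel sets of seminorms, and superlevel sets of concave functions are convex, and that convexity is preserved under finite intersection.
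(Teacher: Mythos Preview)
Your argument is correct. It differs from the paper's proof in its packaging: the paper verifies convexity directly from the \emph{definition} of $W_{d,\mathfrak c}$, i.e., it takes $\vecz=t\vecx+(1-t)\vecy$ and first re-establishes that $\constrainttermgroup_{d,c,\vecz}=\constrainttermgroup_{(\scaleinside_c,\scaleoutside_c)}$ for every $c$ (which in particular requires the non-trivial step $\constrainttermgroup_{(\scaleinside_c,\scaleoutside_c),c,\vecz}\neq\emptyset$, handled via \Cref{lem:braverman_generalisation}), and only then checks the witness inequality by linearity and the triangle inequality. You instead go through the characterization \cref{eq:witness_set_is_semialgebraic1} of \Cref{lem:witness_set_is_semialgebraic} and show each building block $\CEqZero_{c,(\scaleinside,\scaleoutside)}$ and $\CGTZero_{c,(\scaleinside,\scaleoutside)}$ is convex; this is cleaner and completely sidesteps the ``$\constrainttermgroup_{(\scaleinside_c,\scaleoutside_c),c,\vecz}\neq\emptyset$'' step, at the cost of depending on the already-proved \Cref{lem:witness_set_is_semialgebraic}. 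Your justification that the left summand in $\CGTZero$ is real (via \Cref{lem:shared_modulus_of_eigenvalues_in_constraint_term_groups} and \Cref{lem:sums_of_conjugated_addends_are_real_valued}) is exactly what is needed to make the ``linear minus seminorm is concave'' argument go through, and your remark about $\AA$ versus $\RR$ is the right way to interpret the statement for the later application of \cite{khachiyan97}.
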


Note that \Cref{lem:witness_set_is_semialgebraic,lem:witness_set_is_union_of_convex_sets} imply that for $d\in \{\nmax,\pmax\}$ the set $W \cap \AA^{n}$ is semialgebraic and a finite union of convex sets.

\begin{theorem}[Deciding PAST]
  \label{thm:deciding_past}
  Let $\semiring \in \{\NN,\ZZ, \QQ_{\geq0}, \QQ, \AA_{\geq0}, \AA\}$.
  Then, the question whether a loop is terminating on $\semiring^{n}$ is decidable, and if the loop is non-terminating, then a witness $\vecx \in W \cap \semiring^{n}$ for eventual non-termination can be computed.
\end{theorem}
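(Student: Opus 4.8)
The plan is to reduce the decision problem to the emptiness of the witness set $W$ from \Cref{def:witnesses_for_ent} and then exploit the structural results of \Cref{lem:witness_set_is_semialgebraic,lem:witness_set_is_union_of_convex_sets}. By \Cref{CharacterizingTermination}, the loop $\program$ is terminating on $\semiring^{n}$ iff $W \cap \semiring^{n} = \emptyset$. Hence it suffices to decide emptiness of $W \cap \semiring^{n}$ and, when it is non-empty, to compute one of its elements, which is then the desired witness for eventual non-termination.

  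First I would treat $\semiring \in \{\AA, \AA_{\geq0}\}$. By \Cref{lem:witness_set_is_semialgebraic}, $W = \biguplus_{\mathfrak{c} \in \mathcal{I}^{m}} W_{\nmax,\mathfrak{c}} \,\cup\, \biguplus_{\mathfrak{c} \in \mathcal{I}^{m}} W_{\pmax,\mathfrak{c}}$ is a union of finitely many sets $W_{d,\mathfrak{c}}$ (recall that $\mathcal{I}$ is finite), each described by the explicit Boolean combination of polynomial (in)equations \cref{eq:witness_set_is_semialgebraic1}. To turn this description into an actual formula one must know the orderings $>_{\mathrm{lex},d}$ on $\mathcal{I}$, which requires comparing the in general transcendental scalings $\scaleinside = |a_{i}|^{p}/|b_{i}|^{p-1}$; this is computable by \Cref{lem:comparison_of_constraint_term_groups}. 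All remaining data, namely the coefficients of the linear maps $\gamma_{c,i}$ occurring in $\CEqZero_{c,(\scaleinside,\scaleoutside)}$ and $\CGTZero_{c,(\scaleinside,\scaleoutside)}$, are algebraic by \cref{def:gammaci}. Thus $W$ is \emph{effectively} presented as a semialgebraic set over $\AA$, and adding the constraint $\vecx \geq \veczero$ for $\semiring = \AA_{\geq0}$ keeps it semialgebraic. Emptiness of a semialgebraic set is decidable by the decidability of the first-order theory of real closed fields (the finitely many algebraic coefficients can first be eliminated via their defining polynomials), and the standard decision procedures additionally return a sample point with coordinates in $\AA$ whenever the set is non-empty; this point is the required witness in $W \cap \semiring^{n}$.

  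Next I would treat $\semiring \in \{\NN, \ZZ, \QQ_{\geq0}, \QQ\}$. Since emptiness of an arbitrary semialgebraic set over $\ZZ$ or $\QQ$ is undecidable, here one additionally invokes \Cref{lem:witness_set_is_union_of_convex_sets}: every $W_{d,\mathfrak{c}}$ is convex. Then $W \cap \semiring^{n} = \emptyset$ iff $W_{d,\mathfrak{c}} \cap \semiring^{n} = \emptyset$ for each of the finitely many pairs $(d,\mathfrak{c})$, and for $\semiring \in \{\NN,\QQ_{\geq0}\}$ one intersects $W_{d,\mathfrak{c}}$ in addition with the convex polyhedron $\{\vecx \mid \vecx \geq \veczero\}$, preserving convexity and semialgebraicity. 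For a single convex semialgebraic set, deciding whether it contains an integer (resp.\ rational) point, and computing one if it does, is possible by the algorithm of Khachiyan and Porkolab~\cite{khachiyan97}; for the rational case one may also note that every $W_{d,\mathfrak{c}}$ is invariant under scaling $\vecx \mapsto \lambda\vecx$ with $\lambda > 0$ (all conditions in \cref{eq:witness_set_is_semialgebraic1} are homogeneous in $\vecx$), whence it contains a rational point iff it contains an integer point, obtained by clearing denominators. Running this procedure on each (intersected) $W_{d,\mathfrak{c}}$ decides emptiness of $W \cap \semiring^{n}$ and, if non-empty, yields a witness $\vecx \in W \cap \semiring^{n}$.

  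I expect the main obstacle to be the case $\semiring \in \{\NN, \ZZ, \QQ_{\geq0}, \QQ\}$: because emptiness of general (even piecewise-linear) semialgebraic sets over the integers is undecidable, the whole argument hinges on the decomposition of $W$ into \emph{finitely many convex} pieces provided by \Cref{lem:witness_set_is_semialgebraic,lem:witness_set_is_union_of_convex_sets}, which is precisely what makes \cite{khachiyan97} applicable. A secondary difficulty, already resolved by \Cref{lem:comparison_of_constraint_term_groups}, is that the scalings $\scaleinside$ that select the dominant constraint term groups are typically transcendental, so one cannot write them down and must instead decide the finitely many required order comparisons in order to fix the correct instance of \cref{eq:witness_set_is_semialgebraic1} for each $W_{d,\mathfrak{c}}$.
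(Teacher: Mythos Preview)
Your proposal is correct and follows essentially the same route as the paper: reduce to emptiness of $W\cap\semiring^{n}$ via \Cref{CharacterizingTermination}, handle $\AA$ (and $\AA_{\geq0}$) by the decidability of the first-order theory of real closed fields after using \Cref{lem:comparison_of_constraint_term_groups} to compute the orderings on $\mathcal{I}$, and handle $\ZZ,\QQ,\NN,\QQ_{\geq0}$ by applying Khachiyan--Porkolab~\cite{khachiyan97} to each convex semialgebraic piece $W_{d,\mathfrak{c}}$ (intersected with the nonnegative orthant where needed), with the reduction $\QQ\to\ZZ$ via the homogeneity/scaling argument. The only cosmetic difference is that the paper treats $\AA$, $\ZZ$, $\QQ$, and the nonnegative variants in separate short paragraphs, whereas you group $\{\AA,\AA_{\geq0}\}$ and $\{\NN,\ZZ,\QQ_{\geq0},\QQ\}$; the ingredients and their roles are identical.
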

\report{
  \begin{proof}
    By \Cref{CharacterizingTermination}, emptiness of the set $W\cap\semiring^{n}$ is equivalent to termination over $\semiring$.
    So we only have to show that $W\cap\semiring^{n} = \emptyset$ is decidable.

    If $\semiring = \AA$, then one can use a decision procedure (e.g., cylindrical algebraic decomposition) for the first-order theory of the reals \cite{tarski51,collins75}
    to decide whether $W\cap\AA^{n} = \emptyset$.
    Note that for the computation of the corresponding formula according to \Cref{lem:witness_set_is_semialgebraic}, in general one has to compare transcendental numbers in order to compute $\constrainttermgroup_{d,c,\vecx}$ for $(d,c) \in \{\nmax,\pmax\}\times[m]$.
    However, in our case this is decidable as stated by \Cref{lem:comparison_of_constraint_term_groups}.
    In addition, one can compute a witness $\vecx \in W\cap\AA^{n}$ whenever $W\cap\AA^{n} \neq \emptyset$.

    Next, consider $\semiring = \ZZ$.
    Given a convex semialgebraic set $A\subseteq\AA^{k}$, the result~\cite[Thm.\ 1.2]{khachiyan97} by Khachiyan and Porklab gives an algorithm to decide whether $A\cap\ZZ^{n} \neq \emptyset$ and in addition, it allows the computation of an integral point $\vecx \in A\cap\ZZ^{n}$ whenever this is the case.
    The theorem then follows in this case as $W = \bigcup_{d\in\{\nmax,\pmax\}} \biguplus_{\mathfrak{c} \in \mathcal{I}^{m}}W_{d,\mathfrak{c}}$ and the finitely many sets $W_{d,\mathfrak{c}}$ are each convex and semialgebraic.
    Note that~\cite[Thm.\ 1.2]{khachiyan97} has been previously applied to the termination analysis of linear loops in~\cite{hosseini19,ouaknine15}.

    For $\semiring = \QQ$ one observes~\cite{braverman06} that $W\cap\ZZ \subseteq W\cap\QQ$ and $\vecx \in W\cap\QQ$ entails $q\vecx \in W\cap\ZZ$ where $q \in \NN_{>0}$ is the product of the denominators of all rational entries of $\vecx$.
    Here, $q \vecx \in W$ is due to the linearity of all $\gamma_{c,i}$.
    So the loop is terminating on $\QQ^n$ iff it is terminating on $\ZZ^n$, i.e., we can use the decision procedure for $\semiring = \ZZ$ again.

    Finally, consider $\semiring \in \{\NN,\QQ_{\geq0},\AA_{\geq0}\}$.
    Here, one uses the same procedures as outlined above but instead considers the set $W' = W'_{\nmax} \cup W'_{\pmax}$ with $W'_{d} = W_{d} \cap \{\vecx\in\AA^{n} \mid \vecx \geq \veczero\}$ for $d\in\{\nmax,\pmax\}$.
    Note that the sets $W',W'_{\nmax},W'_{\pmax}$ are again semialgebraic and $W_{d}$ as an intersection of convex sets is a convex set again.
  \end{proof}
}

\begin{remark}
  \label{rem:on_computations}
  The theory of the reals and the algebraic reals are elementary equivalent as both are real closed fields.
  Thus, \Cref{thm:deciding_past} directly entails that the question whether there exists a non-terminating non-negative real input $\vecx\in\RR_{\geq0}^{n}$ or real input $\vecx\in\RR^{n}$ for an algebraic loop $\program$ is decidable as well, if one extends the set $\NT$ and the corresponding definitions to real inputs $\vecx \in \RR^{n}$.
  Note that in this case $\NT \neq \emptyset$ iff $\NT \cap \AA^{n} \neq \emptyset$.
\end{remark}

While the procedure outlined in the proof of \Cref{thm:deciding_past} only allows for the computation of a witness $\vecx \in \ENT \cap \semiring^{n}$ for eventual non-termination, one can lift this to the computation of a witness $\vecy \in \NT \cap \semiring^{n}$ according to the constructive\paper{ \pagebreak}
proofs of \Cref{lem:domination_of_eventually_dominating_constraint_term_groups,lem:soundness_of_witnesses}.
\begin{restatable}[Computing Witnesses for Non-Termination]{corollary}{computingwitnessesfornontermination}
  \label{cor:computing_witnesses_for_nontermination}
  Let $\semiring \in \{\NN,\ZZ, \QQ_{\geq0},\QQ, \AA_{\geq0}, \AA\}$.
  If a loop is non-terminating, then a witness for non-termination from $\NT \cap \semiring^{n}$ can be computed.
\end{restatable}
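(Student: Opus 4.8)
The plan is to invoke \Cref{thm:deciding_past} to obtain an eventually non-terminating witness and then to turn the constructive content of the proofs of \Cref{lem:domination_of_eventually_dominating_constraint_term_groups} and \Cref{lem:soundness_of_witnesses} into an algorithm that lifts it to an actual non-terminating input. If $\program$ is non-terminating, \Cref{thm:deciding_past} yields an explicit $\vecx \in W \cap \semiring^n$, and for $\semiring \in \{\NN,\QQ_{\geq0},\AA_{\geq0}\}$ even a non-negative such $\vecx$. For this $\vecx$ we first determine, by checking the (decidable) condition of \Cref{def:witnesses_for_ent} — which only requires computing the sets $\constrainttermgroup_{d,c,\vecx}$ via \Cref{lem:comparison_of_constraint_term_groups} and \Cref{lem:braverman_generalisation}, and comparing the algebraic numbers $\gamma_{c,i}(\vecx)$ — a $d\in\{\nmax,\pmax\}$ with $\vecx\in W_d$, and we compute the positive algebraic reals $\rho_c = \sum_{i\in\mathfrak{R}_{d,c,\vecx}}\gamma_{c,i}(\vecx) - \sum_{i\in\mathfrak{C}_{d,c,\vecx}}|\gamma_{c,i}(\vecx)|$ for $c\in[m]$.

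Next I would revisit the proof of \Cref{lem:domination_of_eventually_dominating_constraint_term_groups} to confirm that, given $c$, $d$, $\vecx$, and $\rho=\rho_c$, its constants $\eps_c,r_c,l_c$ are effectively computable rather than merely existent. The number $r_c$ can be taken as the least $r\in\NN$ with $\tfrac{\rho_c}{2}\cdot\scaleoutside_{\max}^{r} > \scaleoutside_{\max 2}^{r}\cdot\sum_{i\in[n]}|\gamma_{c,i}(\vecx)|$; such $r$ exists since $\scaleoutside_{\max}>\scaleoutside_{\max 2}$, and for each candidate $r$ the inequality is a comparison of algebraic numbers and hence decidable, so one finds $r_c$ by incrementing $r$. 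The constant $\eps_c$ can be taken as a positive rational that under-approximates $\min\{(\ln\scaleinside_{\max}-\ln\scaleinside)/(\ln\scaleoutside-\ln\scaleoutside_{\max}) \mid (\scaleinside,\scaleoutside)\in\mathcal{I},\ \constrainttermgroup_{(\scaleinside,\scaleoutside),c,\vecx}\neq\emptyset,\ \scaleinside<\scaleinside_{\max},\ \scaleoutside>\scaleoutside_{\max}\}$ (and $\eps_c$ arbitrary if this set is empty); this is computable because $\scaleinside<\scaleinside_{\max}$ is decidable by \Cref{lem:comparison_of_constraint_term_groups}, $\scaleoutside>\scaleoutside_{\max}$ is a comparison of algebraic reals, and the logarithms of the (algebraic) moduli of eigenvalues can be approximated to arbitrary precision. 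The constant $l_c$ and then the globally chosen $r$, $\eps\in(0,1-p]$, and $l$ (with $\eps l\geq r+2$) from the proof of \Cref{lem:soundness_of_witnesses} are then fixed by explicit inequalities over algebraic constants and hence computable.

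With these constants in hand, the finite execution $\hat f$ from the proof of \Cref{lem:soundness_of_witnesses} is found by a bounded search: the requirements $|\hat f|\geq l$, $|\URVar(\hat f)\cdot|\hat f||\geq r$, $\URVar(\hat f)\in[0,\eps]$, and $|\,\URVar(\hat f)\cdot|\hat f| - \tfrac{\eps|\hat f|+r}{2}\,|\leq 1$ depend only on the two naturals $|\hat f|$ and $|\hat f|_{\symMat A}$ and are comparisons of algebraic numbers, so one enumerates pairs $(|\hat f|,|\hat f|_{\symMat A})$ with $|\hat f|$ increasing until all hold; the proof of \Cref{lem:soundness_of_witnesses} guarantees that such a pair exists, so the search halts. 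Finally, setting $\vecy=\matA^{|\hat f|_{\symMat A}}\matB^{|\hat f|_{\symMat B}}\vecx$ is a finite computation; since $\matA,\matB\in\semiring^{n\times n}$, $\vecx\in\semiring^n$, and $\semiring$ is closed under addition and multiplication — and for the non-negative semirings $\vecx\geq\veczero$ gives $\vecy\geq\veczero$ — we obtain $\vecy\in\semiring^n$, while the proof of \Cref{lem:soundness_of_witnesses} shows $\LRVar_{\vecy}\geq T$ for a random variable $T$ with $\E(T)=\infty$, so $\E(\LRVar_{\vecy})=\infty$, i.e., $\vecy\in\NT\cap\semiring^n$.

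The main obstacle I anticipate is precisely this middle step: certifying that each quantity hidden behind an existential quantifier in \Cref{lem:domination_of_eventually_dominating_constraint_term_groups} (and the ``such an $\hat f$ exists'' claim in \Cref{lem:soundness_of_witnesses}) is genuinely effective. This is delicate because some of the relevant reals — in particular the $\scaleinside$-values $\tfrac{|a_i|^{p}}{|b_i|^{p-1}}$ — are transcendental in general by Gelfond--Schneider, so they cannot be handled symbolically; the points to nail down are that the algorithm only ever \emph{compares} such numbers (decidable by \Cref{lem:comparison_of_constraint_term_groups}) or needs a valid rather than optimal constant (so that rational under/over-approximations suffice), and that every search it performs is certified to terminate by the corresponding non-constructive existence statement. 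Once this bookkeeping is in place, producing $\vecy$ reduces to the finite matrix computation above.
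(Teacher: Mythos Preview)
Your proposal is correct and follows essentially the same route as the paper: obtain $\vecx\in W\cap\semiring^n$ via \Cref{thm:deciding_past}, extract the constants $\rho_c,\eps_c,r_c,l_c$ from the constructive proof of \Cref{lem:domination_of_eventually_dominating_constraint_term_groups}, combine them into $\eps,r,l$ as in the proof of \Cref{lem:soundness_of_witnesses}, search for $\hat f$, and output $\vecy=\matA^{|\hat f|_{\symMat A}}\matB^{|\hat f|_{\symMat B}}\vecx$. Your write-up is in fact more careful than the paper's own proof about \emph{why} each existentially-quantified constant is effectively computable (the paper simply cites the proofs of the two lemmas as ``constructive''), so the anticipated obstacle you flag is real but you have already addressed it adequately.
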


\section{Implementation and Conclusion}
\label{Implementation and Conclusion}

\emph{Prototype Implementation:} To demonstrate the practical applicability of our decision procedure, we implemented it in our prototype tool \textsf{\toolname}
(for ``\underline{Si}mple \underline{R}andomized Lo\underline{op}s'').
The tool and a corresponding collection of exemplary programs can be obtained from
\[ \mbox{\url{https://github.com/aprove-developers/SiRop}} \]
The tool is implemented in \textsf{Python} and uses the \textsf{SageMath} open-source mathematics software system~\cite{sagemath} in order to perform necessary computations such as simultaneous diagonalization and determining the mappings $\gamma_{c,i}$.
\textsf{\toolname}
tries to compute a witness $\vecx \in W$ for eventual non-termination by creating a corresponding SMT problem which is then solved using the \textsf{SMT-RAT}~\cite{SMT-RAT}
solver.
If the SMT problem is unsatisfiable, then the program is terminating.
In contrast, if such a witness $\vecx$ is found, then the program is non-terminating and the tool computes a non-terminating input $\vecy \in \NT$ from $\vecx$.
Currently, \textsf{\toolname}
handles loops over the algebraic reals only, i.e., $\semiring = \AA$, as for all other considered sub-semirings of $\AA$, the decision procedure relies on the technique presented in~\cite{khachiyan97}
which (to the best of our knowledge) has not yet been implemented.

\medskip

\emph{Conclusion:}
We have shown the decidability of universal positive almost sure termination (\textsf{UPAST}) for the class of simple randomized loop ranging over numerous semirings $\semiring$, thereby transferring a line of research started in 2004 by Tiwari~\cite{tiwari04} on universal termination of linear loops to the realm of randomized programs.
To that end, we devised a corresponding decision procedure and presented a prototype implementation for the case $\semiring = \AA$, showing the practical applicability of the presented approach.
In particular, our tool managed to find a non-terminating algebraic input for one\footnote{\url{https://github.com/TermCOMP/TPDB/blob/11.3/C_Integer/Stroeder_15/ChenFlurMukhopadhyay-SAS2012-Ex2.06_false-termination.c}} of the only two problems from the category \texttt{C Integer} which were not solved by any tool at the 2023 Termination Competition \cite{TermComp19},\footnote{This category was not part of the 2024 competition.} the other one being the Collatz problem.
While our tool only considers $\semiring = \AA$ (whereas the problem is formulated over the integers), the constraints generated by \textsf{\toolname} are unsatisfiable over $\ZZ$, which implies universal termination of the program over the integers.

\medskip

\emph{Future Work}: While our procedure can decide positive almost sure termination for all inputs, in the future we want to improve it such that it can also compute bounds on expected runtimes.
Moreover, decision procedures for termination or complexity of subclasses of non-randomized programs (e.g., \cite{floriantriangularinitial,hark20,hark23,solvable-loops}) have been integrated in (incomplete) tools that analyze general programs \cite{koat-targeting-completeness,koat-twn}, and we would like to investigate such an integration for randomized programs as well.
Finally, we plan to adapt our approach to a decision procedure for universal almost sure termination (\textsf{UAST}), i.e., whether a program terminates with probability $1$ on all inputs.
Clearly, \textsf{UPAST} implies \textsf{UAST} but the converse does not hold in general.

\bigskip

\noindent
\textbf{Acknowledgements:} We thank Sophia Greiwe for her help with the implementation of our decision procedure in \textsf{\toolname}.

\bibliography{references}

\providecommand{\noopsort}[1]{}
\begin{thebibliography}{10}

\bibitem{lexrsm}
Sheshansh Agrawal, Krishnendu Chatterjee, and Petr Novotn{\'{y}}.
\newblock Lexicographic ranking supermartingales: An efficient approach to
  termination of probabilistic programs.
\newblock {\em Proc. {ACM} Program. Lang.}, 2({POPL}):34:1--34:32, 2018.
\newblock \href {https://doi.org/10.1145/3158122} {\path{doi:10.1145/3158122}}.

\bibitem{ashprobabilityandmeasuretheory}
Robert~B. Ash and Catherine~A. Dol\'{e}ans-Dade.
\newblock {\em {Probability and Measure Theory}}.
\newblock Academic Press, 2nd ed. edition, 2000.

\bibitem{ecoimp}
Martin Avanzini, Georg Moser, and Michael Schaper.
\newblock A modular cost analysis for probabilistic programs.
\newblock {\em Proc. {ACM} Program. Lang.}, 4({OOPSLA}), 2020.
\newblock \href {https://doi.org/10.1145/3428240} {\path{doi:10.1145/3428240}}.

\bibitem{prob-solv1}
Ezio Bartocci, Laura Kov{\'{a}}cs, and Miroslav Stankovic.
\newblock Automatic generation of moment-based invariants for prob-solvable
  loops.
\newblock In {\em Proc.\ ATVA~'19}, LNCS 11781, pages 255--276, 2019.
\newblock \href {https://doi.org/10.1007/978-3-030-31784-3_15}
  {\path{doi:10.1007/978-3-030-31784-3_15}}.

\bibitem{cegispro2}
Kevin Batz, Mingshuai Chen, Sebastian Junges, Benjamin~Lucien Kaminski,
  Joost{-}Pieter Katoen, and Christoph Matheja.
\newblock Probabilistic program verification via inductive synthesis of
  inductive invariants.
\newblock In {\em Proc.\ TACAS~'23}, LNCS 13994, pages 410--429, 2023.
\newblock \href {https://doi.org/10.1007/978-3-031-30820-8_25}
  {\path{doi:10.1007/978-3-031-30820-8_25}}.

\bibitem{bournez2005proving}
Olivier Bournez and Florent Garnier.
\newblock Proving positive almost-sure termination.
\newblock In {\em Proc.\ RTA~'05}, LNCS 3467, pages 323--337, 2005.
\newblock \href {https://doi.org/10.1007/978-3-540-32033-3_24}
  {\path{doi:10.1007/978-3-540-32033-3_24}}.

\bibitem{braverman06}
Mark Braverman.
\newblock Termination of integer linear programs.
\newblock In {\em Proc.\ CAV~'06}, LNCS 4144, pages 372--385, 2006.
\newblock \href {https://doi.org/10.1007/11817963_34}
  {\path{doi:10.1007/11817963_34}}.

\bibitem{brazdilpVASS}
Tom{\'{a}}s Br{\'{a}}zdil, Krishnendu Chatterjee, Anton{\'{\i}}n Kucera, Petr
  Novotn{\'{y}}, and Dominik Velan.
\newblock Deciding fast termination for probabilistic {VASS} with
  nondeterminism.
\newblock In {\em Proc.\ ATVA~'19}, LNCS 11781, pages 462--478, 2019.
\newblock \href {https://doi.org/10.1007/978-3-030-31784-3_27}
  {\path{doi:10.1007/978-3-030-31784-3_27}}.

\bibitem{program-analysis-with-martingales}
Aleksandar Chakarov and Sriram Sankaranarayanan.
\newblock Probabilistic program analysis with martingales.
\newblock In {\em Proc.\ CAV~'13}, LNCS 8044, pages 511--526, 2013.
\newblock \href {https://doi.org/10.1007/978-3-642-39799-8_34}
  {\path{doi:10.1007/978-3-642-39799-8_34}}.

\bibitem{termination-analysis-positivstellensatz}
Krishnendu Chatterjee, Hongfei Fu, and Amir~Kafshdar Goharshady.
\newblock Termination analysis of probabilistic programs through
  positivstellensatz's.
\newblock In {\em Proc.\ CAV~'16}, LNCS 9779, pages 3--22, 2016.
\newblock \href {https://doi.org/10.1007/978-3-319-41528-4_1}
  {\path{doi:10.1007/978-3-319-41528-4_1}}.

\bibitem{collins75}
George~E.\ Collins.
\newblock Quantifier elimination for real closed fields by cylindrical
  algebraic decomposition.
\newblock In {\em Proc.\ 2nd {GI} Conference on Automata Theory and Formal
  Languages}, LNCS 33, pages 134--183, 1975.
\newblock \href {https://doi.org/10.1007/3-540-07407-4_17}
  {\path{doi:10.1007/3-540-07407-4_17}}.

\bibitem{SMT-RAT}
Florian Corzilius, Gereon Kremer, Sebastian Junges, Stefan Schupp, and Erika
  {\'{A}}brah{\'{a}}m.
\newblock {\textsf{SMT-RAT:}} an open source {{\textsf{{C++}}}} toolbox for
  strategic and parallel {SMT} solving.
\newblock In {\em Proc.\ SAT~'15}, LNCS 9340, pages 360--368, 2015.
\newblock \href {https://doi.org/10.1007/978-3-319-24318-4_26}
  {\path{doi:10.1007/978-3-319-24318-4_26}}.

\bibitem{hilbertspaces}
L.~Debnath and P.~Mikusinski.
\newblock {\em Introduction to Hilbert Spaces with Applications}.
\newblock Elsevier Science, 2005.

\bibitem{fioriti15}
Luis Mar{\'{\i}}a~Ferrer Fioriti and Holger Hermanns.
\newblock Probabilistic termination: Soundness, completeness, and
  compositionality.
\newblock In {\em Proc.\ POPL~'15}, pages 489--501, 2015.
\newblock \href {https://doi.org/10.1145/2676726.2677001}
  {\path{doi:10.1145/2676726.2677001}}.

\bibitem{floriantriangularinitial}
Florian Frohn and J{\"u}rgen Giesl.
\newblock Termination of triangular integer loops is decidable.
\newblock In {\em Proc.\ CAV~'19}, LNCS 11562, pages 426--444, 2019.
\newblock \href {https://doi.org/10.1007/978-3-030-25543-5_24}
  {\path{doi:10.1007/978-3-030-25543-5_24}}.

\bibitem{constantprobability}
J{\"{u}}rgen Giesl, Peter Giesl, and Marcel Hark.
\newblock Computing expected runtimes for constant probability programs.
\newblock In {\em Proc.\ CADE~'19}, LNCS 11716, pages 269--286, 2019.
\newblock \href {https://doi.org/10.1007/978-3-030-29436-6_16}
  {\path{doi:10.1007/978-3-030-29436-6_16}}.

\bibitem{TermComp19}
J{\"{u}}rgen Giesl, Albert Rubio, Christian Sternagel, Johannes Waldmann, and
  Akihisa Yamada.
\newblock The termination and complexity competition.
\newblock In {\em Proc.\ TACAS~'19}, LNCS 11429, pages 156--166, 2019.
\newblock Website of the Annual Termination Competition:
  \url{https://termination-portal.org/wiki/Termination_Competition}.
\newblock \href {https://doi.org/10.1007/978-3-030-17502-3_10}
  {\path{doi:10.1007/978-3-030-17502-3_10}}.

\bibitem{grimmettprobability}
Geoffrey Grimmett and David Stirzaker.
\newblock {\em Probability and Random Processes}.
\newblock Oxford University Press, 2001.

\bibitem{hark20}
Marcel Hark, Florian Frohn, and J{\"{u}}rgen Giesl.
\newblock Polynomial loops: Beyond termination.
\newblock In {\em Proc.\ LPAR~'20}, EPiC 73, pages 279--297, 2020.
\newblock \href {https://doi.org/10.29007/nxv1} {\path{doi:10.29007/nxv1}}.

\bibitem{hark23}
Marcel Hark, Florian Frohn, and J{\"{u}}rgen Giesl.
\newblock Termination of triangular polynomial loops.
\newblock {\em Formal Methods in Syst.\ Des.}, 65(1), 2025.
\newblock \href {https://doi.org/10.1007/s10703-023-00440-z}
  {\path{doi:10.1007/s10703-023-00440-z}}.

\bibitem{horntopicsinmatrixanalysis}
Roger~A. Horn and Charles~R. Johnson.
\newblock {\em Topics in Matrix Analysis}.
\newblock Cambridge University Press, 1991.
\newblock \href {https://doi.org/10.1017/CBO9780511840371}
  {\path{doi:10.1017/CBO9780511840371}}.

\bibitem{hornmatrixanalysis}
Roger~A.\ Horn\noopsort{1} and Charles~R. Johnson.
\newblock {\em Matrix Analysis}.
\newblock Cambridge University Press, 2 edition, 2012.
\newblock \href {https://doi.org/10.1017/CBO9781139020411}
  {\path{doi:10.1017/CBO9781139020411}}.

\bibitem{hosseini19}
Mehran Hosseini, Jo{\"{e}}l Ouaknine, and James Worrell.
\newblock Termination of linear loops over the integers.
\newblock In {\em Proc.\ ICALP~'19}, LIPIcs 132, 2019.
\newblock \href {https://doi.org/10.4230/LIPIcs.ICALP.2019.118}
  {\path{doi:10.4230/LIPIcs.ICALP.2019.118}}.

\bibitem{hardness-probabilistic-termination}
Benjamin~Lucien Kaminski, Joost{-}Pieter Katoen, and Christoph Matheja.
\newblock On the hardness of analyzing probabilistic programs.
\newblock {\em Acta Informatica}, 56(3):255--285, 2019.
\newblock URL: \url{https://doi.org/10.1007/s00236-018-0321-1}, \href
  {https://doi.org/10.1007/S00236-018-0321-1}
  {\path{doi:10.1007/S00236-018-0321-1}}.

\bibitem{khachiyan97}
Leonid Khachiyan and Lorant Porkolab.
\newblock Computing integral points in convex semi-algebraic sets.
\newblock In {\em Proc.\ FOCS~'97}, pages 162--171, 1997.
\newblock \href {https://doi.org/10.1109/SFCS.1997.646105}
  {\path{doi:10.1109/SFCS.1997.646105}}.

\bibitem{prob-solv3}
Andrey Kofnov, Marcel Moosbrugger, Miroslav Stankovic, Ezio Bartocci, and
  Efstathia Bura.
\newblock Exact and approximate moment derivation for probabilistic loops with
  non-polynomial assignments.
\newblock {\em {ACM} Trans. Model. Comput. Simul.}, 34(3):18:1--18:25, 2024.
\newblock \href {https://doi.org/10.1145/3641545} {\path{doi:10.1145/3641545}}.

\bibitem{koat-targeting-completeness}
Nils Lommen and J{\"{u}}rgen Giesl.
\newblock Targeting completeness: Using closed forms for size bounds of integer
  programs.
\newblock In {\em Proc.\ FroCoS~'23}, LNCS 14279, pages 3--22, 2023.
\newblock \href {https://doi.org/10.1007/978-3-031-43369-6_1}
  {\path{doi:10.1007/978-3-031-43369-6_1}}.

\bibitem{koat-twn}
Nils Lommen, Fabian Meyer, and J{\"{u}}rgen Giesl.
\newblock Automatic complexity analysis of integer programs via triangular
  weakly non-linear loops.
\newblock In {\em Proc.\ \mbox{IJCAR}~'22}, LNCS 13385, pages 734--754, 2022.
\newblock \href {https://doi.org/10.1007/978-3-031-10769-6_43}
  {\path{doi:10.1007/978-3-031-10769-6_43}}.

\bibitem{probkoat}
Fabian Meyer, Marcel Hark, and J{\"{u}}rgen Giesl.
\newblock Inferring expected runtimes of probabilistic integer programs using
  expected sizes.
\newblock In {\em Proc.\ TACAS~'21}, LNCS 12651, pages 250--269, 2021.
\newblock \href {https://doi.org/10.1007/978-3-030-72016-2_14}
  {\path{doi:10.1007/978-3-030-72016-2_14}}.

\bibitem{mfcs25paper}
{\'{E}}l{\'{e}}anore Meyer\noopsort{3} and Jürgen Giesl.
\newblock Deciding termination of simple randomized loops.
\newblock In {\em Proc.\ MFCS~'25}, LIPIcs, 2025.
\newblock To appear.

\bibitem{amber}
Marcel Moosbrugger, Ezio Bartocci, Joost{-}Pieter Katoen, and Laura
  Kov{\'{a}}cs.
\newblock The probabilistic termination tool {{\textsf{{Amber}}}}.
\newblock {\em Formal Methods Syst. Des.}, 61(1):90--109, 2022.
\newblock URL: \url{https://doi.org/10.1007/s10703-023-00424-z}, \href
  {https://doi.org/10.1007/S10703-023-00424-Z}
  {\path{doi:10.1007/S10703-023-00424-Z}}.

\bibitem{prob-solv2}
Marcel Moosbrugger, Miroslav Stankovic, Ezio Bartocci, and Laura Kov{\'{a}}cs.
\newblock This is the moment for probabilistic loops.
\newblock {\em Proc. {ACM} Program. Lang.}, 6({OOPSLA2}):1497--1525, 2022.
\newblock \href {https://doi.org/10.1145/3563341} {\path{doi:10.1145/3563341}}.

\bibitem{transcendentalnumbers}
Saradha Natarajan and Ravindranathan Thangadurai.
\newblock {\em Pillars of Transcendental Number Theory}.
\newblock Springer, 2020.
\newblock \href {https://doi.org/10.1007/978-981-15-4155-1}
  {\path{doi:10.1007/978-981-15-4155-1}}.

\bibitem{absynth}
Van~Chan Ngo, Quentin Carbonneaux, and Jan Hoffmann.
\newblock Bounded expectations: Resource analysis for probabilistic programs.
\newblock In {\em Proc.\ {PLDI}~'18}, pages 496--512, 2018.
\newblock \href {https://doi.org/10.1145/3192366.3192394}
  {\path{doi:10.1145/3192366.3192394}}.

\bibitem{ouaknine15}
Jo{\"{e}}l Ouaknine, Jo{\~{a}}o~Sousa Pinto, and James Worrell.
\newblock On termination of integer linear loops.
\newblock In {\em Proc.\ SODA~'15}, pages 957--969, 2015.
\newblock \href {https://doi.org/10.1137/1.9781611973730.65}
  {\path{doi:10.1137/1.9781611973730.65}}.

\bibitem{solvable-loops}
Enric Rodr\'{\i}guez-Carbonell and Deepak Kapur.
\newblock Automatic generation of polynomial loop invariants: Algebraic
  foundations.
\newblock In {\em Proc.\ ISSAC~'04}, page 266–273, 2004.
\newblock \href {https://doi.org/10.1145/1005285.1005324}
  {\path{doi:10.1145/1005285.1005324}}.

\bibitem{sagemath}
{\em {\normalfont{\textsf{{S}ageMath}}}, the {S}age {M}athematics {S}oftware
  {S}ystem ({V}ersion 10.3)}, 2024.
\newblock {\url{https://www.sagemath.org}}.

\bibitem{DBLP:conf/mfcs/Saheb-Djahromi78}
Nasser Saheb{-}Djahromi.
\newblock Probabilistic {LCF}.
\newblock In {\em Proc.\ MFCS~'78}, LNCS 64, pages 442--451, 1978.
\newblock \href {https://doi.org/10.1007/3-540-08921-7_92}
  {\path{doi:10.1007/3-540-08921-7_92}}.

\bibitem{ergodictheory}
Cesar~E. Silva.
\newblock {\em Invitation to Ergodic Theory}.
\newblock American Mathematical Society, 2008.

\bibitem{tarski51}
Alfred Tarski and J.~C.~C. McKinsey.
\newblock {\em A Decision Method for Elementary Algebra and Geometry}.
\newblock University of California Press, 1951.
\newblock \href {https://doi.org/10.1525/9780520348097}
  {\path{doi:10.1525/9780520348097}}.

\bibitem{tiwari04}
Ashish Tiwari.
\newblock Termination of linear programs.
\newblock In {\em Proc.\ CAV~'04}, LNCS 3114, pages 70--82, 2004.
\newblock \href {https://doi.org/10.1007/978-3-540-27813-9_6}
  {\path{doi:10.1007/978-3-540-27813-9_6}}.

\bibitem{hoffmanncostanalysistypes}
Di~Wang, David~M. Kahn, and Jan Hoffmann.
\newblock Raising expectations: Automating expected cost analysis with types.
\newblock {\em Proc.\ ACM Program.\ Lang.}, 4(ICFP), 2020.
\newblock \href {https://doi.org/10.1145/3408992} {\path{doi:10.1145/3408992}}.

\end{thebibliography}

\makeatletter \newcommand{\proofsForSection}[1]{
  \section[Proofs for \cref@section@name~\ref{#1}]{Proofs for \cref{#1}}
}
\makeatother \report{
  \appendix{}
  \label{sec:appendix}
  \section{Preliminaries from Probability Theory}
\label{sect:Preliminaries from Probability Theory}
This section provides a brief introduction to concepts from probability theory used in this paper.
It is based on the book by Grimmett and Stirzaker \cite{grimmettprobability}.
Throughout this section we regard an abstract but fixed set $\Omega$ of events.

\begin{definition}[$\sigma$-Field \protect{\cite[Def.\ 1.2(5)]{grimmettprobability}}]
  A set $\F$ of subsets of $\Omega$ is called a \emph{$\sigma$-field} over $\Omega$ if it
  \begin{enumerate}
    \item contains the empty set: $\emptyset \in \F$
    \item is closed under (countable) union: $\bigcup_{i=0}^{\infty} A_i \in \F$ for all $A_{0},A_{1},\ldots \in \F$
    \item is closed under complement: $\Omega \setminus A \in \F$ for all $A \in \F$
  \end{enumerate}
\end{definition}

The sets in $\F$ are commonly referred to as \emph{measurable} sets.
It is well known that the intersection $\cap_{i \in I} \F_{i}$ of a family ${(\F_{i})}_{i \in I}$ of $\sigma$-fields is again a $\sigma$-field~\cite[1.6.(1)]{grimmettprobability}.
This fact motivates the following definition of \emph{generated $\sigma$-fields}.

\begin{definition}[$\sigma$-Field Generated by Sets]
  \label{def:sigma_field_generated_by_family_of_sets}
  For a set $\mathcal{A} \subseteq 2^{\Omega}$, we denote by
  \[\sigma(\mathcal{A}) = \bigcap_{\substack{\mathcal{A} \subseteq \F\\\text{$\F$ $\sigma$-field over $\Omega$}}} \F\]
  the $\sigma$-field generated by all $A \in \mathcal{A}$, i.e., the smallest $\sigma$-field containing $\mathcal{A}$.
\end{definition}

A tuple $(\Omega,\F)$ where $\F\subseteq2^\Omega$ is a $\sigma$-field is called a \emph{measurable space}.
Given such a measurable space, a \emph{probability measure} $\P$ then assigns a probability to every measurable set $A \in \F$.

\begin{definition}[Probability Measure \protect{\cite[Def.\ 1.3(1)]{grimmettprobability}}]
  \label{def:probability_measures}
  A function $\P: \F \to [0,1]$ is called a \emph{probability measure} (on $(\Omega,\F)$) if
  \begin{enumerate}
    \item $\P (\Omega) = 1$
    \item $\P \left(\biguplus_{i=0}^{\infty} A_{i}\right) = \sum_{i=0}^{\infty} \P (A_{i})$ for all pairwise disjoint sets $A_{0},A_{1},\ldots \in \F$
  \end{enumerate}
\end{definition}

From $\P (\Omega) = 1$ one directly concludes $\P(\emptyset) = 0$ as $\P (\emptyset) + \P (\Omega) = \P (\emptyset \uplus \Omega) = \P(\Omega)$.
By equipping a measurable space $(\Omega,\F)$ with a probability measure $\P$ on $(\Omega,\F)$, a \emph{probability space} $(\Omega,\F,\P)$ is obtained.

\begin{definition}[Probability Space]
  A tuple $(\Omega,\F,\P)$ consisting of a $\sigma$-field $\F$ and a probability measure $\P$ on $(\Omega,\F)$ is called a \emph{probability space}.
\end{definition}

In the remainder of this section, we consider some fixed probability space $(\Omega,\F,\P)$.
Often, one is not interested in specific outcomes of a random experiment but rather in the values taken by a function $X: \Omega \rightarrow \RR$, called a \emph{random variable}, on the observed outcome.
In this paper, we will restrict ourselves to discrete random variables for the sake of simplicity, i.e., those that take on only a discrete set of reals.
A well-known discrete random variable is the indicator function $\ind_{A}: \Omega \to \{0,1\}$ of a set $A \subseteq \Omega$.
It is defined such that $\ind_{A} (\omega) = 1$ iff $\omega \in A$.

\begin{definition}[Measurability, Discrete Random Variables, \& Stochastic Processes \protect{\cite[Def.\ 2.1(3) and 2.3(1)]{grimmettprobability}}]
  Let $X: \Omega \to \RR$ such that the image of $X$ is countable.
  If moreover, $X^{-1}(r) \in \F$ for all $r \in \RR$ and some $\sigma$-field $\F$, then $X$ is said to be ($\F$-)\emph{measurable}.
  The function $X$ is called a \emph{(discrete) random variable.}

  A (countable) collection ${(X_{t})}_{t=0}^{\infty}$ of discrete random variables is called a \emph{stochastic process}.
\end{definition}

As we deal with discrete random variables only, for the $\F$-measurability of $X$, it suffices to require $X^{-1}(r) \in \F$ for all $r \in \RR$, instead of the usual (stronger) requirement $\{ \omega \in \Omega \mid X(\omega) \leq r\}
  \in \F$ from~\cite[2.1(3)]{grimmettprobability}. In addition to \cref{def:sigma_field_generated_by_family_of_sets}, we define the $\sigma$-field generated by a family of (discrete) random variables below.
\begin{definition}[$\sigma$-Field Generated by Random Variables]
  Given a family ${(X_{i})}_{i \in I}$ of (discrete) random variables $X_{i} \colon \Omega \to \RR$, we denote by
  \[\sigma\left({(X_{i})}_{i \in I}\right) = \sigma\left(\{X_{i}^{-1} (r) \mid i \in I,\, r \in \RR\} \right)\]
  the $\sigma$-field generated by ${(X_{i})}_{i \in I}$, i.e., the smallest $\sigma$-field such that all random variables $X_{i}$, for $i \in I$, are measurable.
\end{definition}

Note that usually equations and inequations $X \mathrel{\bowtie} X'$ between two random variables $X,X'$ for ${\mathrel{\bowtie}} \in \{<, \leq,=, \geq,>\}$ only hold at most with probability $1$, i.e., $\P (\{ \omega \in \Omega \mid X(\omega) \mathrel{\bowtie} X'(\omega)\}) = 1$.
Then, one says that $X \bowtie X'$ holds \emph{almost surely}.
Similarly, we say that a statement $\varphi$ holds for \emph{almost all} $\omega \in \Omega$ iff the set $A = \{\omega \in \Omega \mid \neg\varphi(\omega)\}$ is a null-set, i.e., $\P(A) = 0$.

The sum of all values taken by a random variable weighed with their corresponding probabilities is known as its \emph{expectation}.
Note that a sum (instead of the more general Lebesgue integral) suffices here as we only regard discrete random variables.

\begin{definition}[Integrability \& Expectations (\protect{\cite[Def.\ 3.3(1)]{grimmettprobability}})]
  Let $X: \Omega \to \RR$ be a discrete random variable and consider the sum $S = \sum_{x \in \RR} x \cdot \P \left(X^{-1} (x)\right)$.
  $X$ is said to be \emph{integrable} whenever this sum converges absolutely.
  In this case, the \emph{expectation} of $X$ is defined as $\E(X) = S$.
\end{definition}

For non-integrable \emph{non-negative} random variables $X: \Omega \to \RR_{\geq 0}$ one defines $\E (X) = \lim_{n\to\infty} \E (\min(X,n))= \infty$.

Often instead of relying on the expectation of a random variable alone, one focuses on its expectation in view of some already ``revealed'' information.
To that end, one considers a sub $\sigma$-field $\mathcal{G} \subseteq \F$.
The \emph{conditional expectation} of $X : \Omega \to \RR$ given $\mathcal{G}$ is then a $\mathcal{G}$-measurable random variable $\E (X \mid \mathcal{G}) : \Omega \to \RR$.
Intuitively, the value of $\E (X \mid \mathcal{G}) (\omega)$ is the ``closest approximation'' to the value $X(\omega)$ when limited to the information contained in the $\sigma$-field $\mathcal{G}$.
In particular, $\E (X \mid \F) = X$ as $X$ is by definition $\F$-measurable, so the ``available information'' is not limited.
\begin{definition}[Conditional Expectation of a Random Variable \protect{\cite[Thm.\ 7.9(26)]{grimmettprobability}}]
  \label{def:conditional_expectations}
  Let $X: \Omega \to \RR$ be an integrable random variable and $\mathcal{G} \subseteq \F$ be a $\sigma$-field.
  Then, the \emph{conditional expectation} of $X$ given $\mathcal{G}$ is the unique random variable $\E (X \mid \mathcal{G}): \Omega \to \RR$ such that
  \begin{enumerate}
    \item[(1)] $\E (X \mid \mathcal{G})$ is $\mathcal{G}$-measurable\label{it:conditional_expectations_item1}
    \item[(2)] $\E (|\E (X \mid \mathcal{G})|) < \infty$ and $\E (\E(X \mid \mathcal{G}) \cdot \ind_{G}) = \E (X \cdot \ind_{G})$ for all $G \in \mathcal{G}$\label{it:conditional_expectations_item2}
  \end{enumerate}
\end{definition}
Note that the conditional expectation of $X$ given $\mathcal{G}$ is only unique in the sense that for all such conditional expectations $Y,Y'$ we have $\P (\{ \omega \in \Omega \mid Y(\omega) = Y'(\omega) \}) = 1$.
Moreover, as for ordinary expectations, we will also consider conditional expectations $\E (X \mid \mathcal{G}): \Omega \to \RR_{\geq 0} \cup \{\infty\}$ of non-integrable \emph{non-negative} random variables $X: \Omega \to \RR$ by setting $\E (X \mid \mathcal{G}) = \lim_{n\to\infty} \E (\min(X,n) \mid \mathcal{G})$.
Of course, the first part of \Cref{def:conditional_expectations} does not apply anymore in this case.

\Cref{def:conditional_expectations} is not constructive.
However, often one considers the conditional expectation $\E (X \mid \mathcal{G})(\omega)$ for some $\omega \in \Omega$, random variable $X\colon \Omega \to \RR$, $\sigma$-field $\mathcal{G}$, and $\omega \in A \in \mathcal{G}$ such that no non-empty proper subset of $A$ is contained in $\mathcal{G}$.
In this case, we call $A$ an \emph{atom} of $\mathcal{G}$, inspired by the literature on measure theory, e.g.,~\cite[Thm.\ 5.5.8]{ashprobabilityandmeasuretheory}.
Furthermore, the value of the conditional expectation $\E (X \mid \mathcal{G})(\omega)$ is straightforward for all such $\omega \in A$, as stated by the following lemma.
\begin{lemma}[Atoms and Conditional Expectations]
  \label{lem:atoms_and_conditional_expectations}
  Let $\mathcal{G}$ be a $\sigma$-field and $A \in \mathcal{G}$, such that for all $B \in \mathcal{G}$ we have either $A \subseteq B$ or $A \cap B = \emptyset$.
  Then, we call $A$ an \emph{atom} of $\mathcal{G}$ and we have $\E (X \mid \mathcal{G})(\omega) = \E (X \mid \mathcal{G})(\omega')$ for all $\omega,\omega' \in A$.
  In particular, $\P(A) \cdot \E(X \mid \mathcal{G})(\omega) = \E(\ind_{A} \cdot X)$ for all $\omega \in A$.

  Moreover, if $\mathcal{G} =\sigma\left(\{A_{i} \mid i \in I\}\right)$ where $A_{i}
    \subseteq \Omega$ for all $i \in I$, and there is some $i \in I$ such that there is no $A_{j}$ with $j \in I$ and $\emptyset \subsetneq A_{j} \subsetneq A_{i}$, then $A_{i}$ is an atom of $\mathcal{G}$.
\end{lemma}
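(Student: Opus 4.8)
The plan is to establish the three assertions in turn. Write $Y = \E(X \mid \mathcal{G})$. We may assume $A \neq \emptyset$: otherwise the equality $\E(X\mid\mathcal{G})(\omega)=\E(X\mid\mathcal{G})(\omega')$ for $\omega,\omega'\in A$ is vacuous, and since $\P(\emptyset)=0$ and $\ind_{\emptyset}\cdot X = 0$, both sides of the identity $\P(A)\cdot\E(X\mid\mathcal{G})(\omega)=\E(\ind_A\cdot X)$ vanish. For the constancy of $Y$ on $A$: by \Cref{def:conditional_expectations}, $Y$ is $\mathcal{G}$-measurable, so $Y^{-1}(r)\in\mathcal{G}$ for all $r\in\RR$, and since $Y$ is a discrete random variable, $\Omega = \biguplus_{r} Y^{-1}(r)$ ranging over the countably many values $r$ attained by $Y$. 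Fix $\omega\in A$ and put $r_0 = Y(\omega)$. Then $\omega\in A\cap Y^{-1}(r_0)$, so this set is non-empty; as $A$ is an atom and $Y^{-1}(r_0)\in\mathcal{G}$, this forces $A\subseteq Y^{-1}(r_0)$, hence $Y(\omega')=r_0=Y(\omega)$ for every $\omega'\in A$.

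For the explicit formula, I would apply \Cref{def:conditional_expectations} with $G := A\in\mathcal{G}$, giving $\E(Y\cdot\ind_A)=\E(X\cdot\ind_A)=\E(\ind_A\cdot X)$. By the previous step, $Y\cdot\ind_A = r_0\cdot\ind_A$ as functions on $\Omega$, where $r_0 = \E(X\mid\mathcal{G})(\omega)$ for any $\omega\in A$; therefore $\E(Y\cdot\ind_A)=r_0\cdot\E(\ind_A)=r_0\cdot\P(A)$, which is the claim. For a non-negative but non-integrable $X$, one replaces $X$ by $\min(X,n)$, applies the integrable case, and lets $n\to\infty$, using monotone convergence on the left-hand side and the defining limit $\E(X\mid\mathcal{G})=\lim_n\E(\min(X,n)\mid\mathcal{G})$ on the right.

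For the ``moreover'' part, let $i\in I$ be an index as in the hypothesis and set $\mathcal{D} = \{B\in\mathcal{G}\mid A_i\subseteq B \text{ or } A_i\cap B=\emptyset\}$. A routine check shows $\mathcal{D}$ is a $\sigma$-field over $\Omega$: it contains $\emptyset$; it is closed under complement, since $A_i\subseteq B$ gives $A_i\cap(\Omega\setminus B)=\emptyset$ and $A_i\cap B=\emptyset$ gives $A_i\subseteq\Omega\setminus B$; and it is closed under countable unions, since if some member of the union contains $A_i$ then so does the union, and otherwise $A_i$ is disjoint from each member, hence from the union. It then remains to verify that every generator $A_j$ lies in $\mathcal{D}$, i.e.\ that $A_i\subseteq A_j$ or $A_i\cap A_j=\emptyset$. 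Once that is shown, $\mathcal{G}=\sigma(\{A_i\mid i\in I\})\subseteq\mathcal{D}$, and since $A_i\in\mathcal{G}$ this says precisely that $A_i$ is an atom of $\mathcal{G}$.

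The main obstacle I expect is exactly this last verification: the minimality hypothesis ``no generator is a non-empty proper subset of $A_i$'' does not by itself preclude a generator from meeting $A_i$ only partially, so the argument must invoke the structure of the generating families actually occurring in our proofs — namely that they are laminar (any two members are nested or disjoint), as for the cylinder sets $\Pre_f$ with $|f|=k$, which are even pairwise disjoint and generate $\sigma(\run_1,\ldots,\run_k)$. For such a family $A_i\cap A_j\in\{\emptyset, A_i, A_j\}$; the case $A_i\cap A_j = A_j$ with $\emptyset\neq A_j\subsetneq A_i$ is excluded by hypothesis, leaving $A_i\cap A_j\in\{\emptyset,A_i\}$, so $A_j\in\mathcal{D}$. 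The only other point needing care is the passage to non-integrable non-negative $X$ in the second step, handled by the truncation-and-limit argument above.
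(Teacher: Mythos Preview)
Your treatment of the first two assertions is essentially the paper's: both use $\mathcal{G}$-measurability of $Y=\E(X\mid\mathcal{G})$ to intersect $A$ with a level set of $Y$, invoke atomicity to force $A\subseteq Y^{-1}(r_0)$, and then apply the defining property $\E(Y\cdot\ind_A)=\E(X\cdot\ind_A)$ together with constancy of $Y$ on $A$. Your remark on extending to non-integrable non-negative $X$ via truncation is a welcome addition that the paper omits.

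For the ``moreover'' clause your approach differs from and improves on the paper's. The paper argues in one line that since no generator $A_j$ sits strictly between $\emptyset$ and $A_i$, no element of $\mathcal{G}$ does either; this is exactly the step you flag as unjustified, and you are right: the counterexample $\Omega=\{a,b,c\}$, $A_1=\{a,b\}$, $A_2=\{b,c\}$ satisfies the hypothesis for $i=1$ yet $A_1$ is not an atom of $\sigma(A_1,A_2)=2^{\Omega}$. Your $\sigma$-field-of-good-sets argument via $\mathcal{D}$ is the clean way to reduce the claim to showing each generator lies in $\mathcal{D}$, and you correctly isolate the missing ingredient --- that the generating family be laminar (any two members nested or disjoint). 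This holds for the only families the paper ever uses, namely $\{\Pre_f : |f|\le t\}$, where in fact any two cylinders are nested or disjoint; so your reading that the lemma is intended (and only applied) in that setting is the right repair. In short: your first two parts match the paper, and on the third part you have spotted a genuine gap in the stated hypothesis that the paper's proof glosses over, together with the correct fix for the paper's actual use cases.
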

\begin{proof}
  We first prove the first statement in the first paragraph of \cref{lem:atoms_and_conditional_expectations}.
  To that end, let $\omega, \omega' \in A$ (the statement is trivial for $A = \emptyset$) and consider $r_{\omega} = \E(X \mid \mathcal{G}) (\omega)$ as well as $r_{\omega'} = \E(X \mid \mathcal{G}) (\omega')$.
  Consider the sets $A_{\omega} = A \cap {\left(\E(X \mid \mathcal{G})\right)}^{-1} (r_{\omega})$ and $A_{\omega'} = A \cap {\left(\E(X \mid \mathcal{G})\right)}^{-1} (r_{\omega'})$.
  By \cref{def:conditional_expectations}(1) the conditional expectation $\E(X \mid \mathcal{G})$ is $\mathcal{G}$-measurable.
  Thus, ${\left(\E(X \mid \mathcal{G})\right)}^{-1} (r) \in \mathcal{G}$ for $r \in \{r_{\omega},r_{\omega'}\}$.
  This implies $A_{\omega}, A_{\omega'} \in \mathcal{G}$.
  As $\emptyset \subsetneq A_{\omega},A_{\omega'} \subseteq A$, we must have $A_{\omega} = A_{\omega'} = A$ since $A$ is an atom of $\mathcal{G}$ and hence $\E(X \mid G)(\omega) = \E(X \mid \mathcal{G}) (\omega')$.

  To prove the second statement of the first paragraph, we consider $\omega \in A$ and
  \begin{align*}
    \E(\ind_{A} \cdot X)
     & = \E ( \ind_{A} \cdot \E(X \mid \mathcal{G}) ) \tag{by \cref{def:conditional_expectations}(2)} \\
     & = \E ( \ind_{A} \cdot \E(X \mid \mathcal{G})(\omega) ) ,
  \end{align*}
  where in the last line we used the already proven statement of \cref{lem:atoms_and_conditional_expectations}.
  We continue our computation and obtain
  \begin{equation*}
    \E(\ind_{A} \cdot X) = \E ( \ind_{A} \cdot \E(X \mid \mathcal{G})(\omega) ) = \E ( \ind_{A}) \cdot \E(X \mid \mathcal{G})(\omega) = \P (A) \cdot \E(X \mid \mathcal{G})(\omega)
  \end{equation*}
  which ends the proof of the first paragraph of \cref{lem:atoms_and_conditional_expectations}.

  To see why the last paragraph of \cref{lem:atoms_and_conditional_expectations} holds, we note that by \cref{def:sigma_field_generated_by_family_of_sets}, $\mathcal{G} = \sigma\left(\{A_{i} \mid i \in I\}\right)$ is the smallest set with $\emptyset \in \mathcal{G}$ and $A_{i} \in \mathcal{G}$ for all $i \in I$ that is closed under countable union and complement.
  As there is no $j \in I$ such that $\emptyset \subsetneq A_{j} \subsetneq A_{i}$, there is no proper non-empty subset of $A_i$ in $\mathcal{G}$, i.e., $A_{i}$ is an atom of $\mathcal{G}$.
\end{proof}

\medskip

A stochastic process ${(X_{t})}_{t=0}^{\infty}$ can be thought of as ``a value evolving over time''.
This can be mirrored by a sequence $\F_{0} \subseteq \F_{1} \subseteq \ldots \subseteq \F$ of increasingly finer $\sigma$-fields where $\F_{t}$ represents the information revealed up to time step $t \in \NN$.
\begin{definition}[Filtration \& Adaptedness \protect{\cite[p.\ 473]{grimmettprobability}}]
  \label{def:filtration_and_adaptedness}
  A sequence $\F_{0} \subseteq \F_{1} \subseteq \ldots \subseteq \F$ of $\sigma$-fields is called a \emph{filtration} (of $\F$).
  A stochastic process ${(X_{t})}_{t=0}^{\infty}$ is said to be \emph{adapted} to $\F_{0} \subseteq \ldots$ whenever $X_{t}$ is $\F_{t}$-measurable for all $t \in \NN$.
\end{definition}

Now consider a stochastic process ${(X_{t})}_{t=0}^{\infty}$ adapted to a corresponding filtration $\F_{0} \subseteq \ldots$.
The values of the random variable $X_{t}$ correspond to the values of the process at time $t \in \NN$, whereas the conditional expectation $\E (X_{t+1} \mid \F_{t})$ intuitively corresponds to the ``best forecast'' for the next value of the process, i.e., $X_{t+1}$, given the information revealed up to (and including) time $t$, i.e., $\F_{t}$.
If for all $t \in \NN$ one expects that the value of the process does not change within the next step, i.e., $\E (X_{t+1} \mid \F_{t}) = X_{t}$, then ${(X_{t})}_{t=0}^{\infty}$ constitutes a so-called \emph{martingale} (w.r.t.\ the filtration $\F_{0} \subseteq \ldots$).

\begin{definition}[Martingale \protect{\cite[Def.\ 12.1(8)]{grimmettprobability}}]
  \label{def:martingale}
  Let ${(X_{t})}_{t=0}^{\infty}$ be a stochastic process adapted to a filtration $\F_{0} \subseteq \F_{1} \subseteq \ldots$.
  Then, ${(X_{t})}_{t=0}^{\infty}$ is called a \emph{martingale} (w.r.t.\ filtration $\F_{0} \subseteq \ldots$) if for all $t \in \NN$
  \begin{enumerate}
    \item $\E (|X_{t}|) < \infty$
    \item $\E (X_{t+1} \mid \F_{t}) = X_{t}$
  \end{enumerate}
\end{definition}

The following inequation, also known under the names ``Azuma's Inequation'' and ``Azuma-Hoeffding Inequation'' provides a concentration bound for martingales, i.e., a bound on the probability that a martingale ${(X_{t})}_{t=0}^{\infty}$ has moved away $\eps$ steps from its initial value $X_{0}$ after $t$ time steps.
\begin{theorem}[(Azuma-)Hoeffding's Inequation \protect{\cite[Thm.\ 12.2(3)]{grimmettprobability}}]
  \label{thm:azumahoeffdinginequality}
  Let ${(X_{t})}_{t=0}^{\infty}$ be a martingale w.r.t.\ a filtration $\F_{0} \subseteq \F_{1} \subseteq \ldots$.
  Let $c_{0}, c_{1}, \ldots \in \RR_{\geq 0}$ such that $|X_{t+1} - X_{t}| \leq c_{t}$ for all $t \in \NN$.
  Then,
  \[ \P (X_{t} - X_{0} \geq \eps) \leq \exp\left(\frac{-\eps^{2}}{2 \sum_{i=0}^{t-1} c_{i}^{2}}\right)\]
  for all $\eps > 0$.
\end{theorem}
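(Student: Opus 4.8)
The plan is to run the standard Chernoff-bound argument for martingales: combine an exponential Markov inequality with a sub-Gaussian estimate (Hoeffding's lemma) for each bounded martingale difference, and then optimize the free parameter.

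First I would introduce the martingale differences $D_{i} = X_{i} - X_{i-1}$ for $i \geq 1$, so that $X_{t} - X_{0} = \sum_{i=1}^{t} D_{i}$. The martingale property w.r.t.\ $\F_{0} \subseteq \F_{1} \subseteq \ldots$ gives $\E(D_{i} \mid \F_{i-1}) = 0$, and the hypothesis $|X_{i} - X_{i-1}| \leq c_{i-1}$ says that $D_{i}$ lies almost surely in an interval of length $2 c_{i-1}$. If all $c_{i} = 0$, then $X_{t} = X_{0}$ almost surely and the claim is trivial, so I would assume $\sum_{i=0}^{t-1} c_{i}^{2} > 0$.

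Next, for a parameter $s > 0$ I would apply Markov's inequality to the non-negative random variable $e^{s(X_{t} - X_{0})}$:
\[
  \P(X_{t} - X_{0} \geq \eps) \;=\; \P\!\left(e^{s(X_{t}-X_{0})} \geq e^{s\eps}\right) \;\leq\; e^{-s\eps}\,\E\!\left(e^{s(X_{t}-X_{0})}\right).
\]
To bound the exponential moment I would peel off differences one at a time: since $e^{s\sum_{i=1}^{t-1} D_{i}}$ is $\F_{t-1}$-measurable, the tower property gives $\E(e^{s(X_{t}-X_{0})}) = \E\!\big(e^{s\sum_{i=1}^{t-1} D_{i}} \cdot \E(e^{sD_{t}} \mid \F_{t-1})\big)$. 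The key ingredient is the conditional form of Hoeffding's lemma: if $\E(D \mid \mathcal{G}) = 0$ and $D$ takes values in an interval of length $\ell$ almost surely, then $\E(e^{sD} \mid \mathcal{G}) \leq e^{s^{2}\ell^{2}/8}$. With $\ell = 2c_{t-1}$ this yields $\E(e^{sD_{t}} \mid \F_{t-1}) \leq e^{s^{2}c_{t-1}^{2}/2}$, and iterating over $i = t, t-1, \dots, 1$ gives $\E(e^{s(X_{t}-X_{0})}) \leq \exp\!\big(\tfrac{s^{2}}{2}\sum_{i=0}^{t-1}c_{i}^{2}\big)$. Hence $\P(X_{t}-X_{0}\geq\eps) \leq \exp\!\big(-s\eps + \tfrac{s^{2}}{2}\sum_{i=0}^{t-1}c_{i}^{2}\big)$, and the optimal choice $s = \eps / \sum_{i=0}^{t-1}c_{i}^{2}$ turns the exponent into $-\eps^{2} / \big(2\sum_{i=0}^{t-1}c_{i}^{2}\big)$, which is exactly the claimed bound.

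The main obstacle is proving Hoeffding's lemma. For a centered $D \in [a,b]$ I would use convexity of $t \mapsto e^{st}$ to write $e^{sD} \leq \tfrac{b-D}{b-a}e^{sa} + \tfrac{D-a}{b-a}e^{sb}$, take expectations to obtain $\E(e^{sD}) \leq e^{\varphi(s)}$ where $\varphi(s) = sa + \log\!\big(\tfrac{b}{b-a} - \tfrac{a}{b-a}e^{s(b-a)}\big)$, and then check $\varphi(0) = \varphi'(0) = 0$ together with $\varphi''(s) = (b-a)^{2}\,q(1-q) \leq (b-a)^{2}/4$ for a suitable $q \in [0,1]$ (the ``variance'' term), so that Taylor's theorem gives $\varphi(s) \leq s^{2}(b-a)^{2}/8$. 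Transferring this to the conditional statement needs a little care; since the underlying random variables here are discrete, one can argue atom-wise on $\mathcal{G}$ using \Cref{lem:atoms_and_conditional_expectations}, where on each atom the conditional expectation reduces to an ordinary weighted expectation and the same computation applies verbatim.
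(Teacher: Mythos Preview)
The paper does not actually prove this theorem: it is stated in the preliminaries appendix as a background result and attributed to \cite[Thm.\ 12.2(3)]{grimmettprobability} with no accompanying proof. So there is nothing in the paper to compare your argument against.

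That said, your proposal is correct and is precisely the standard textbook proof (and in particular the one in Grimmett--Stirzaker): the exponential Markov/Chernoff step, the tower-property peeling of the bounded martingale differences, Hoeffding's lemma applied conditionally to each difference, and optimization in $s$. Your sketch of Hoeffding's lemma via convexity and the second-derivative bound $\varphi''(s) \leq (b-a)^{2}/4$ is also the usual route. The only point worth flagging is that your remark about handling the conditional version ``atom-wise'' via \Cref{lem:atoms_and_conditional_expectations} relies on the paper's standing restriction to discrete random variables; in the general (non-discrete) setting one instead appeals directly to the conditional form of Hoeffding's lemma, but within this paper's framework your argument is fine.
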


For a martingale ${(X_{t})}_{i=0}^{\infty}$ w.r.t.\ $\F_{0} \subseteq \F_{1} \subseteq \ldots$ one deduces $\E (X_{t}) = \E (X_{t} \cdot \ind_{\Omega}) = \E (\E(X_{t} \mid \F_{t-1}) \cdot \ind_{\Omega}) = \E (\E(X_{t} \mid \F_{t-1})) = \E(X_{t-1}) = \ldots = \E(X_{0})$ for all $t \in \NN$.
But what if the time $T$ at which the value of the process is inspected is itself a random variable $T: \Omega \to \NN$?
To answer this question, it is necessary that at every time step $t \in \NN$ it must be known whether we want to inspect the value of $X_{t}$, i.e., $\{\omega \in \Omega \mid T(\omega) = t\} \in \F_{t}$ for all $t \in \NN$.
Such a random variable $T$ is called a \emph{stopping time} (w.r.t.\ $\F_{0} \subseteq \ldots$).
The following special case of a theorem by Doob sets out conditions under which $\E(X_{T})$ not only exists but is equal to $\E(X_{0})$.

\begin{theorem}[Optional Stopping Theorem (\protect{\cite[Thm.\ 12.5(1) and 12.5(9)]{grimmettprobability}})]
  \label{thm:optionalstopping}
  Let ${(X_{t})}_{t=0}^{\infty}$ be a martingale w.r.t.\ $\F_{0} \subseteq \F_{1} \subseteq \ldots$ and $T$ a corresponding stopping time such that
  \begin{enumerate}
    \item $\E (T) < \infty$
    \item $\E (| X_{t+1} - X_{t}| \mid \F_{t}) \leq c$ for all $t < T$ and some constant $c \in \RR$.
  \end{enumerate}
  Then, $\E (X_{T}) = \E (X_{0})$.
\end{theorem}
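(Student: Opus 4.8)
The plan is to follow the classical route via the stopped process together with the dominated convergence theorem, using assumptions (1) and (2) precisely to control the accumulated martingale increments up to time $T$. First I would consider the stopped process $(X_{\min(t,T)})_{t=0}^{\infty}$ and show it is again a martingale w.r.t.\ $\F_0 \subseteq \F_1 \subseteq \ldots$. Indeed, $X_{\min(t+1,T)} - X_{\min(t,T)} = \ind_{\{T > t\}} \cdot (X_{t+1} - X_t)$, and since $T$ is a stopping time we have $\{T > t\} = \Omega \setminus \{T \le t\} \in \F_t$, so the $\F_t$-measurable factor $\ind_{\{T>t\}}$ pulls out of the conditional expectation and $\E(X_{\min(t+1,T)} - X_{\min(t,T)} \mid \F_t) = \ind_{\{T>t\}} \cdot \E(X_{t+1}-X_t \mid \F_t) = 0$. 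Integrability of $X_{\min(t,T)}$ is clear from $|X_{\min(t,T)}| \le \sum_{s \le t} |X_s|$. Hence $\E(X_{\min(t,T)}) = \E(X_0)$ for every $t \in \NN$, as for any martingale.

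Next I would pass to the limit $t \to \infty$. From $\E(T) < \infty$ it follows that $T < \infty$ almost surely, hence $X_{\min(t,T)} \to X_T$ almost surely as $t \to \infty$. To interchange the limit with the expectation I would exhibit an integrable dominator. Writing $X_{\min(t,T)} = X_0 + \sum_{s=0}^{\min(t,T)-1}(X_{s+1}-X_s)$ gives $|X_{\min(t,T)}| \le |X_0| + Y$ uniformly in $t$, where $Y = \sum_{s=0}^{\infty} \ind_{\{T > s\}} \, |X_{s+1}-X_s|$.

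The crux is the estimate $\E(Y) < \infty$, and this is the step where both hypotheses are used. By monotone convergence $\E(Y) = \sum_{s=0}^{\infty} \E(\ind_{\{T>s\}} |X_{s+1}-X_s|)$; since $\{T > s\} \in \F_s$, the defining property of conditional expectation (\Cref{def:conditional_expectations}(2)) together with assumption (2) yields $\E(\ind_{\{T>s\}} |X_{s+1}-X_s|) = \E(\ind_{\{T>s\}} \cdot \E(|X_{s+1}-X_s| \mid \F_s)) \le c \cdot \P(T > s)$. Summing over $s$ and invoking the identity $\sum_{s=0}^{\infty} \P(T > s) = \E(T)$, valid for the $\NN$-valued random variable $T$, gives $\E(Y) \le c \cdot \E(T) < \infty$. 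With the integrable dominator $|X_0| + Y$ available, dominated convergence yields $\E(X_T) = \lim_{t\to\infty} \E(X_{\min(t,T)}) = \E(X_0)$, completing the proof. The main obstacle is exactly this integrability bound for $Y$: one must be careful that it is the event $\{T > s\}$ (and not $\{T \ge s\}$) that is $\F_s$-measurable, so that the increment $X_{s+1}-X_s$ at time $s$ is ``still alive'', and that assumption (2) is only imposed for $t < T$, which is precisely the region that the indicator $\ind_{\{T>s\}}$ restricts to.
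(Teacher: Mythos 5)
The paper does not give its own proof of this result: \Cref{thm:optionalstopping} is stated in the preliminaries appendix and imported directly from Grimmett--Stirzaker \cite[Thm.\ 12.5(1) and 12.5(9)]{grimmettprobability}, with no argument supplied. Your proof is nonetheless correct and is the standard textbook argument for this form of the theorem: show the stopped process $(X_{\min(t,T)})_t$ is a martingale using $X_{\min(t+1,T)} - X_{\min(t,T)} = \ind_{\{T>t\}}(X_{t+1}-X_t)$ and the $\F_t$-measurability of $\{T>t\}$, so that $\E(X_{\min(t,T)}) = \E(X_0)$ for all $t$; then pass to the limit via dominated convergence with dominator $|X_0| + Y$, where $Y = \sum_{s\ge 0}\ind_{\{T>s\}}|X_{s+1}-X_s|$, and verify $\E(Y) \le c\,\E(T) < \infty$ by the tower property together with the bounded-increment hypothesis restricted to $\{T>s\}$ and the identity $\sum_{s\ge 0}\P(T>s) = \E(T)$. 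The one point worth being fully explicit about is the reading of hypothesis (2): ``$\E(|X_{t+1}-X_t|\mid\F_t)\le c$ for all $t<T$'' should be interpreted as the pointwise inequality $\ind_{\{T>t\}}\,\E(|X_{t+1}-X_t|\mid\F_t)\le c\,\ind_{\{T>t\}}$ a.s., which is exactly the form in which you use it after multiplying by $\ind_{\{T>s\}}$ and taking expectations; you implicitly do this, and your closing remark that the relevant event is $\{T>s\}$ (not $\{T\ge s\}$) shows you understood why the measurability works out. There is nothing to correct.
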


  \section{Preliminaries from Linear Algebra}
\label{sec:preliminaries_from_linear_algebra}
\begin{lemma}[Complex Eigenvalues of Real Matrices Appear in Conjugate Pairs]
  \label{lem:complex_eigenvalues_of_real_matrices_appear_in_conjugate_pairs}
  Let $\matA \in \RR^{n \times n}$ and let $\vec{v} =
    \begin{pmatrix}
      v_{1} & \cdots & v_{n}
    \end{pmatrix}
    \in \CC^{n}$ be an eigenvector of $\matA$ for some eigenvalue $\lambda$. Then, $\conj{\vec{v}} =
    \begin{pmatrix}
      \conj{v_{1}} & \cdots & \conj{v_{n}}
    \end{pmatrix}
  $ is an eigenvector of $\matA$ for the eigenvalue $\conj{\lambda}$.
\end{lemma}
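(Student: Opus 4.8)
The plan is to apply componentwise complex conjugation to the defining eigenvector equation $\matA \vec{v} = \lambda \vec{v}$ and to exploit that $\matA$ has real entries. First I would record the elementary fact that, since complex conjugation is a ring automorphism of $\CC$ that commutes with finite sums and products, for every matrix $\matM \in \CC^{k \times l}$ and every $\vec{w} \in \CC^{l}$ one has $\conj{\matM \vec{w}} = \conj{\matM}\, \conj{\vec{w}}$, where $\conj{\matM}$ and $\conj{\vec{w}}$ denote the entrywise conjugates; concretely, for each row index $i$, $\conj{(\matM \vec w)_i} = \conj{\sum_{j} \matM_{i,j} w_j} = \sum_{j} \conj{\matM_{i,j}}\, \conj{w_j} = (\conj{\matM}\, \conj{\vec w})_i$.

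Next I would conjugate both sides of $\matA \vec{v} = \lambda \vec{v}$ componentwise, which by the preceding observation yields $\conj{\matA}\, \conj{\vec{v}} = \conj{\lambda}\, \conj{\vec{v}}$. Since $\matA \in \RR^{n \times n}$, every entry of $\matA$ is fixed by conjugation, so $\conj{\matA} = \matA$ and the equation becomes $\matA \conj{\vec{v}} = \conj{\lambda}\, \conj{\vec{v}}$. Finally, because $\vec{v}$ is an eigenvector it is by definition nonzero, and as conjugation is a bijection of $\CC^{n}$ fixing $\veczero$, we get $\conj{\vec{v}} \neq \veczero$. Hence $\conj{\vec{v}}$ is an eigenvector of $\matA$ for the eigenvalue $\conj{\lambda}$, as claimed.

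I do not anticipate any genuine obstacle: the statement is routine linear algebra. The only step warranting an explicit line of justification is the interchange identity $\conj{\matA \vec{v}} = \conj{\matA}\, \conj{\vec{v}}$, and the only subtlety is the reminder that eigenvectors are required to be nonzero so that $\conj{\vec{v}}$ indeed qualifies as an eigenvector. An alternative route via the characteristic polynomial (observing that $\det(\matA - \conj{\lambda} I) = \conj{\det(\matA - \lambda I)} = 0$) would establish that $\conj{\lambda}$ is an eigenvalue, but it would not directly exhibit $\conj{\vec{v}}$ as a corresponding eigenvector, so the direct conjugation argument is preferable here.
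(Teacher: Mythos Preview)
Your proposal is correct and follows essentially the same approach as the paper: conjugate the eigenvector equation $\matA \vec{v} = \lambda \vec{v}$ and use that $\matA$ is real to obtain $\matA \conj{\vec{v}} = \conj{\lambda}\,\conj{\vec{v}}$. The paper's proof is a one-liner omitting the justification of $\conj{\matA\vec{v}} = \conj{\matA}\,\conj{\vec{v}}$ and the nonzeroness of $\conj{\vec{v}}$, but your added detail is sound and does not diverge from the paper's argument.
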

\begin{proof}
  As $\matA \vec{v} = \lambda \vec{v}$ and $\matA \in \RR^{n \times n}$ we have $\matA \conj{\vec{v}} = \conj{\matA \vec{v}} = \conj{\lambda \vec{v}} = \conj{\lambda}
    \, \conj{\vec{v}}$.
\end{proof}

\begin{lemma}[Eigenvectors for Real Eigenvalues]
  \label{lem:eigenvectors_for_real_eigenvalues}
  Let $\matA \in \RR^{n \times n}$ and let $\lambda \in \RR$ be a real eigenvalue of $\matA$ with geometric multiplicity $m$.
  Then, there are $m$ corresponding real eigenvectors $\vec{v_{1}},\ldots,\vec{v_{m}}$ for the eigenvalue $\lambda$, i.e., $\vec{v_{i}} \in \ker(A - \lambda I)$ for all $i\in[m]$.
\end{lemma}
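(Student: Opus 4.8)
The plan is to show that the eigenspace $\ker(\matA - \lambda I)$, although \emph{a priori} a subspace of $\CC^{n}$, in fact admits a basis consisting of real vectors; these then furnish the desired $m$ real eigenvectors. Set $\matM = \matA - \lambda I$; since $\lambda \in \RR$ and $\matA \in \RR^{n\times n}$, the matrix $\matM$ has real entries, and by definition of geometric multiplicity its complex nullspace has $\dim_{\CC}\ker_{\CC}\matM = m$.

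The key observation I would use is that for a real matrix $\matM$ the rank does not depend on whether it is computed over $\RR$ or over $\CC$: the rank equals the size of the largest square submatrix with nonzero determinant, and these determinants are real numbers whose vanishing is insensitive to the ambient field. Hence $\dim_{\RR}\ker_{\RR}\matM = n - \rank_{\RR}\matM = n - \rank_{\CC}\matM = m$, so $\ker_{\RR}\matM$ contains $m$ linearly independent real vectors $\vec{v_{1}},\ldots,\vec{v_{m}}$, each satisfying $\matM\vec{v_{i}} = \veczero$, i.e.\ $\matA\vec{v_{i}} = \lambda\vec{v_{i}}$. These are the required $m$ real eigenvectors for $\lambda$.

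Alternatively, and more in the spirit of the conjugation/real-imaginary-part splitting used in the preceding lemma, one can argue constructively: take any complex basis $\vec{w_{1}},\ldots,\vec{w_{m}}$ of $\ker_{\CC}\matM$ and write $\vec{w_{j}} = \vec{a_{j}} + \im\,\vec{b_{j}}$ with $\vec{a_{j}},\vec{b_{j}} \in \RR^{n}$. Since $\matM$ is real, $\matM\vec{w_{j}} = \veczero$ forces $\matM\vec{a_{j}} = \matM\vec{b_{j}} = \veczero$, so all $2m$ real vectors $\vec{a_{j}},\vec{b_{j}}$ lie in the eigenspace. Their $\CC$-span contains every $\vec{w_{j}}$, hence equals the $m$-dimensional space $\ker_{\CC}\matM$; therefore some $m$ of these real vectors are linearly independent over $\CC$, and a family of real vectors that is linearly independent over $\CC$ is \emph{a fortiori} linearly independent over $\RR$. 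Those $m$ vectors are the claimed real eigenvectors.

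I do not anticipate any real obstacle; the only point requiring mild care is the invariance of rank (equivalently, of linear independence of real vectors) under the field extension $\RR \subseteq \CC$, which is precisely what lets us transfer the dimension count of the complex eigenspace into a statement about real eigenvectors. The second argument also makes clear how this fits with the conjugate-pair structure, since $\Re$ and $\Im$ of a complex eigenvector are real eigenvectors for the same eigenvalue $\lambda$.
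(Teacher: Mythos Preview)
Your proposal is correct. Your first argument is essentially the paper's approach: the paper simply observes that since $\matA - \lambda I$ is real, computing a basis of its kernel via reduced row echelon form can be carried out entirely over $\RR$, yielding $m$ real eigenvectors; you make the same point more abstractly via rank invariance under the field extension $\RR \subseteq \CC$, which is a cleaner justification of why the RREF over $\RR$ produces the right number of vectors. Your second argument (splitting complex eigenvectors into real and imaginary parts) is a genuinely different, equally valid route that the paper does not take, and it has the minor advantage of dovetailing with the conjugate-pair structure of the surrounding lemmas.
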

\begin{proof}
  A basis for the kernel of the matrix $A - \lambda I$ can for example be computed by considering its reduced row echelon form.
  As $A - \lambda I \in \RR^{n \times n}$, the computation can entirely be performed over the real numbers giving rise to real eigenvectors $v_{1}, \ldots, v_{m} \in \RR^{n}$.
\end{proof}

\begin{lemma}[Eigenvectors for Complex Eigenvalues]
  \label{lem:eigenvectors_for_complex_eigenvalues}
  Let $\matA \in \RR^{n \times n}$, $\lambda \in \CC$ be an eigenvalue of $\matA$, and $(\vec{v_{1}},\ldots,\vec{v_{m}})$ be a basis consisting of eigenvectors for the eigenvalue $\lambda$, i.e., a basis of $\ker (A - \lambda I)$.
  Then, $(\conj{\vec{v_{1}}}, \ldots, \conj{\vec{v_{m}}})$ is an eigenbasis for the conjugated eigenvalue $\conj{\lambda}$.
\end{lemma}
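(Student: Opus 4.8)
The plan is to leverage the fact that entrywise complex conjugation is a field automorphism of $\CC$ which, because $\matA$ has real entries, commutes with multiplication by $\matA$; hence it restricts to a bijection between the two eigenspaces that preserves all linear-algebraic structure up to conjugating scalars. First I would apply \Cref{lem:complex_eigenvalues_of_real_matrices_appear_in_conjugate_pairs}: since each $\vec{v_{i}}$ is an eigenvector of $\matA$ for the eigenvalue $\lambda$, its conjugate $\conj{\vec{v_{i}}}$ is an eigenvector of $\matA$ for $\conj{\lambda}$, i.e.\ $\conj{\vec{v_{i}}} \in \ker(\matA - \conj{\lambda} I)$ for every $i \in [m]$. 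So the $m$ vectors $\conj{\vec{v_{1}}},\ldots,\conj{\vec{v_{m}}}$ all lie in the eigenspace $\ker(\matA - \conj{\lambda} I)$, and it remains to show that they form a basis of it.

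Next I would verify linear independence. Suppose $\sum_{i=1}^{m} c_{i}\,\conj{\vec{v_{i}}} = \vec{0}$ for some $c_{1},\ldots,c_{m} \in \CC$. Applying entrywise conjugation to both sides and using that conjugation is additive and multiplicative on $\CC$ yields $\sum_{i=1}^{m} \conj{c_{i}}\,\vec{v_{i}} = \vec{0}$. Since $(\vec{v_{1}},\ldots,\vec{v_{m}})$ is a basis of $\ker(\matA - \lambda I)$, and in particular linearly independent, all $\conj{c_{i}}$ vanish, hence all $c_{i}$ vanish. Thus $(\conj{\vec{v_{1}}},\ldots,\conj{\vec{v_{m}}})$ is a linearly independent family inside $\ker(\matA - \conj{\lambda} I)$.

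Finally I would match dimensions to upgrade this to a basis. By the first step, conjugation maps $\ker(\matA - \lambda I)$ into $\ker(\matA - \conj{\lambda} I)$, and it is injective since it is an involution on $\CC^{n}$; hence $\dim \ker(\matA - \conj{\lambda} I) \geq m$. Applying the very same argument with $\conj{\lambda}$ in place of $\lambda$, and using $\conj{\conj{\lambda}} = \lambda$, gives the reverse inequality, so $\dim \ker(\matA - \conj{\lambda} I) = m$. An $m$-element linearly independent family in an $m$-dimensional space is a basis, so $(\conj{\vec{v_{1}}},\ldots,\conj{\vec{v_{m}}})$ is indeed an eigenbasis for $\conj{\lambda}$. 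I do not expect a genuine obstacle here: the argument is pure bookkeeping around the conjugate-linear bijection between the two eigenspaces, the only point needing a little care being the independence step, where one must remember to conjugate the scalars $c_{i}$ along with the vectors.
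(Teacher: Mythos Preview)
Your proposal is correct and follows essentially the same approach as the paper: both invoke \Cref{lem:complex_eigenvalues_of_real_matrices_appear_in_conjugate_pairs} to show each $\conj{\vec{v_{i}}}$ lies in the eigenspace for $\conj{\lambda}$, and both argue that conjugation preserves linear independence. Your version is more thorough in that you spell out the independence argument and additionally verify that the dimensions of the two eigenspaces coincide via the involution, whereas the paper leaves this spanning part implicit.
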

\begin{proof}
  The linear independence of $(\vec{v_{1}},\ldots,\vec{v_{m}})$ directly implies the linear independence of $(\conj{\vec{v_{1}}}, \ldots, \conj{\vec{v_{m}}})$.
  By \Cref{lem:complex_eigenvalues_of_real_matrices_appear_in_conjugate_pairs}, all $\conj{\vec{v_{i}}}$ are eigenvectors for $\conj{\lambda}$.
\end{proof}

\medskip

Throughout this paper we will always assume that the eigenbasis used in the diagonalization of a matrix $\mat{A}$ corresponds to \Cref{lem:eigenvectors_for_real_eigenvalues,lem:eigenvectors_for_complex_eigenvalues}.
Here, $\matS_{\_{},i}$ denotes the $i$-th column vector of the matrix $\matS$.

\begin{lemma}[Conjugate Coefficients in Eigenbasis Representation]
  \label{lem:conjugate_coefficients_in_eigenbasis_representation}
  Let $\matA \in \RR^{n \times n}$ and $\mat{D} \in \CC^{n \times n}$ be the diagonal form of $\matA$, i.e., $\matA = \matS \mat{D} \matS^{-1}$ for some regular $\matS \in \CC^{n \times n}$.
  Then, for all $\vecx\in\RR^{n}$ and all $i\in[n]$ there is some unique $j\in[n]$ such that $\matS_{\_{},i} = \conj{\matS_{\_{},j}}$ and ${(\matS^{-1}\vecx)}_{i} = \conj{{(\matS^{-1}\vecx)}_{j}}$.
\end{lemma}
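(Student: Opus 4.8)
The plan is to exploit the standing assumption stated just before the lemma, namely that the eigenbasis realizing $\matA = \matS \mat{D} \matS^{-1}$ is chosen in accordance with \Cref{lem:eigenvectors_for_real_eigenvalues,lem:eigenvectors_for_complex_eigenvalues}. Concretely, if the $i$-th diagonal entry $\lambda$ of $\mat{D}$ is real, then $\matS_{\_{},i} \in \RR^{n}$, and if $\lambda \notin \RR$, then $\conj{\lambda}$ is also an eigenvalue of $\matA$ (by \Cref{lem:complex_eigenvalues_of_real_matrices_appear_in_conjugate_pairs}) and the columns of $\matS$ belonging to $\conj{\lambda}$ are exactly the complex conjugates of those belonging to $\lambda$. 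Hence the (multi)set of columns $\{\matS_{\_{},1},\ldots,\matS_{\_{},n}\}$ is closed under complex conjugation, so for every $i\in[n]$ there is some $j\in[n]$ with $\matS_{\_{},i} = \conj{\matS_{\_{},j}}$. Since $\matS$ is regular, its columns are linearly independent and in particular pairwise distinct, so this $j$ is unique; I would denote the resulting map by $\pi\colon[n]\to[n]$, $\pi(i) = j$, and note that it is a bijection with $\pi\circ\pi = \mathrm{id}$ because conjugation is an involution on the set of (distinct) columns.

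It remains to verify the statement about $\matS^{-1}\vecx$. First I would set $\vecy = \matS^{-1}\vecx$, so that $\vecx = \matS \vecy = \sum_{k\in[n]} y_{k}\,\matS_{\_{},k}$. Because $\vecx \in \RR^{n}$, we have $\vecx = \conj{\vecx} = \sum_{k\in[n]} \conj{y_{k}}\,\conj{\matS_{\_{},k}} = \sum_{k\in[n]} \conj{y_{k}}\,\matS_{\_{},\pi(k)}$, and reindexing the last sum via $l = \pi(k)$ (hence $k = \pi(l)$) yields $\vecx = \sum_{l\in[n]} \conj{y_{\pi(l)}}\,\matS_{\_{},l}$. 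Comparing this expansion with $\vecx = \sum_{l\in[n]} y_{l}\,\matS_{\_{},l}$ and using the linear independence of the columns of $\matS$ gives $y_{l} = \conj{y_{\pi(l)}}$ for all $l\in[n]$, i.e.\ ${(\matS^{-1}\vecx)}_{i} = \conj{{(\matS^{-1}\vecx)}_{\pi(i)}}$. Thus $j = \pi(i)$ simultaneously satisfies $\matS_{\_{},i} = \conj{\matS_{\_{},j}}$ and ${(\matS^{-1}\vecx)}_{i} = \conj{{(\matS^{-1}\vecx)}_{j}}$, and it is already unique as the unique index with $\matS_{\_{},i} = \conj{\matS_{\_{},j}}$.

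The only slightly delicate point — and the one I would present most carefully — is the first step: showing that the set of columns of $\matS$ is closed under conjugation. This is exactly where the assumption on the chosen eigenbasis from \Cref{lem:eigenvectors_for_real_eigenvalues,lem:eigenvectors_for_complex_eigenvalues} is essential (without it the claim is false, e.g.\ for $\matA = I_{2}$ with a non-real eigenbasis), and it requires organizing the columns by eigenvalue and invoking the two eigenvector lemmas case by case. The remaining bookkeeping (uniqueness of $j$ from distinctness of columns, and keeping track of the permutation $\pi$ in the reindexing of the sum) is routine linear algebra.
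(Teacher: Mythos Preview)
Your argument is correct. Both your proof and the paper's rely on the standing assumption that the columns of $\matS$ are chosen according to \Cref{lem:eigenvectors_for_real_eigenvalues,lem:eigenvectors_for_complex_eigenvalues}, so that the set of columns is closed under complex conjugation; after that, the two proofs diverge. The paper proceeds by explicitly building an auxiliary \emph{real} basis $\{\vec{r_1},\ldots,\vec{r_k},\vec{u_1},\ldots,\vec{u_m},\vec{w_1},\ldots,\vec{w_m}\}$ of $\RR^n$ from the complex eigenbasis (with $\vec{u_j}=\vec{v_j}+\conj{\vec{v_j}}$ and $\vec{w_j}=\im\vec{v_j}-\im\conj{\vec{v_j}}$), expands $\vecx$ in that basis with real coefficients, and then reads off the conjugate pairing of the coordinates in the original basis. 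Your approach avoids this detour entirely: once the conjugation involution $\pi$ on the column indices is in hand, you conjugate the identity $\vecx=\sum_k y_k\,\matS_{\_,k}$, reindex via $\pi$, and invoke linear independence of the columns to conclude $y_i=\conj{y_{\pi(i)}}$. This is shorter and conceptually cleaner; the paper's version has the minor advantage of exhibiting the real and imaginary parts of the coefficients concretely, but that extra information is not used elsewhere.
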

\begin{proof}
  Following \Cref{lem:eigenvectors_for_real_eigenvalues,lem:eigenvectors_for_complex_eigenvalues}, we partition the set $\{\matS_{\_{},1},\ldots,\matS_{\_{},n}\}$ of $\matS$'s column vectors into three sets: $\{\matS_{\_{},i} \mid \matS_{\_{},i} \in \RR^{n} \} = \{\vec{r_{1}},\ldots,\vec{r_{k}}\}$, $\{\vec{v_{1}}, \ldots, \vec{v_{m}}\}$, and $\{\conj{\vec{v_{1}}},\ldots,\conj{\vec{v_{m}}}\}$ such that $\{\vec{v_{1}}, \ldots, \vec{v_{m}}, \conj{\vec{v_{1}}},\ldots,\conj{\vec{v_{m}}}\} = \{\matS_{\_{},i} \mid \matS_{\_{},i} \not \in \RR^{n}\}$.

  For $j \in [m]$ consider the vectors $\vec{u_{j}} = \vec{v_{j}} + \conj{\vec{v_{j}}} \in \RR^{n}$ and $\vec{w_{j}} = i\vec{v_{j}} + (-i) \conj{\vec{v_{j}}} \in \RR^{n}$.
  The vectors in $\{\matS_{\_{},1}, \ldots, \matS_{\_{},n}\}$ are linearly independent as they constitute an eigenbasis.
  Hence, the $k + 2m = n$ vectors in $\{\vec{r_{1}},\ldots,\vec{r_{k}},\vec{u_{1}},\ldots,\vec{u_{m}}, \vec{w_{1}}, \ldots, \vec{w_{m}}\} \subseteq \RR^{n}$ are linearly independent as well and form a basis for the $\RR$-vector space $\RR^{n}$.

  Let $\vecx \in \RR^{n}$.
  The previous paragraph implies that $\vecx$ can be uniquely represented as a linear combination
  \begin{align*}
    \vecx & = \sum_{j\in[k]} c_{j} \vec{r_{k}} + \sum_{j\in[m]} (c'_{j} \vec{u_{i}} + c''_{j} \vec{w_{m}}) \\
          & = \sum_{j\in[k]} c_{j} \vec{r_{k}} + \sum_{j\in[m]} ((c'_{j} + ic''_{j}) \vec{v_{j}} + (c'_{j} - ic''_{j}) \conj{\vec{v_{j}}})
  \end{align*}
  of the column vectors $\{\vec{r_{1}}, \ldots, \vec{r_{k}}, \vec{v_{1}}, \ldots, \vec{v_{m}}, \conj{\vec{v_{1}}}, \ldots, \conj{\vec{v_{m}}}\}$ of the matrix $\matS$ for real scalars $c_{1},\ldots,c_{k},\allowbreak c'_{1},\ldots,c'_{m},\allowbreak c''_{1},\ldots, c''_{m} \in \RR$.

  Let $\operatorname{J}\colon \{\matS_{\_{},1}, \ldots, \matS_{\_{},n}\} \to \{1,\ldots,n\}$ be the bijection with $\operatorname{J}(\matS_{\_{},j}) = j$ for all $j \in [n]$.
  Note that $\operatorname{J}$ is well defined as the column vectors of $\matS$ are linearly independent and hence pairwise different.
  Thus, $\matS^{-1} \vecx = \vecy$ with $\vecy_{\operatorname{J}(\vec{r_{j}})} = c_{j}$ for all $j\in[k]$ and $\vecy_{\operatorname{J}(\vec{v_{j}})} = c'_{j} + i c''_{j}, \vecy_{\operatorname{J} (\conj{\vec{v_{j}}})} = c'_{j} - i c''_{j} = \conj{\vecy_{\operatorname{J} (\vec{v_{j}})}}$ for all $j\in[m]$.

  One then concludes the proof by setting $j = i$ if $i = \operatorname{J}
    (\vec{r_{h}})$ for some $h \in [k]$, $j = \operatorname{J} (\conj{\vec{v_{h}}})$ if $i = \operatorname{J}(\vec{v_{h}})$ for some $h \in [m]$, and $j = \operatorname{J}
    (\vec{v_{h}})$ if $i = \operatorname{J} (\conj{\vec{v_{h}}})$ for some $h \in [m]$.
\end{proof}

  \proofsForSection{sect:programsTermination}
\label{app:programsTermination}

\valuesofconstraints*{}
\begin{proof}
  Let $\matS$ be the matrix that simultaneously diagonalizes $\matA$ and $\matB$.
  Then, for all inputs $\vecx\in\AA^{n}$ and $f\in\Path$ we have
  \begin{align*}
    \Val_{\vecx} (f) & = \matC \cdot \matA^{|f|_{\symMat{A}}} \cdot \matB^{|f|_{\symMat{B}}} \cdot \vecx \\
                     & = \matC \cdot {(\matS \matAD \matS^{-1})}^{|f|_{\symMat{A}}} \cdot {(\matS \matBD \matS^{-1})}^{|f|_{\symMat{B}}}\cdot \vecx \\
                     & = \matC \cdot \matS \cdot \diag(a_{1}^{|f|_{\symMat{A}}} \cdot b_{1}^{|f|_{\symMat{B}}}, \ldots, a_{n}^{|f|_{\symMat{A}}} \cdot b_{n}^{|f|_{\symMat{B}}}) \cdot \matS^{-1} \cdot \vecx \qedhere
  \end{align*}
\end{proof}

\sumsofconjugatedaddendsarerealvalued*
\begin{proof}
  W.l.o.g.\ we can assume that $a$ is an eigenvalue of $\matA$ and $b$ is an eigenvalue of $\matB$ as otherwise $\gamma_{1} = \gamma_{2} = 0$.

  Recall that the $\gamma_{c,i} (\vecx)$ are obtained by computing the closed form of the expression ${(\Val_{\vecx}(f))}_{c} = \matC \cdot \matA^{|f|_{\symMat{A}}} \cdot \matB^{|f|_{\symMat{B}}} \cdot \vecx$.
  To that end, we considered the simultaneous diagonalization $\mat{A_{D}} = \diag{(a_{1},\ldots,a_{n})} = \mat{S}^{-1} \matA \mat{S}$ and $\mat{B_{D}} = \diag(b_{1},\ldots,b_{n}) = \mat{S}^{-1} \matB \mat{S}$ of $\matA,\matB$ and in $\gamma_{c,i} (\vecx)$, we collected the resulting coefficients of the exponential expressions $a_{i}^{|f|_{\symMat{A}}} b_{i}^{|f|_{\symMat{B}}}$, i.e., $\gamma_{c,i}(\vecx) = {(\matC \cdot \matS)}_{c,i} \cdot {(\matS^{-1} \cdot \vecx)}_{i}$, see \cref{def:gammaci}.

  Consider $
    \begin{pmatrix}
      c_{1}(\vecx) & \ldots & c_{n}(\vecx)
    \end{pmatrix}
    ^{\T} = \mat{S}^{-1} \vecx$ where $c_{i} \colon \AA^{n} \to \CC$ with $i\in[n]$ are linear functions. According to \Cref{lem:conjugate_coefficients_in_eigenbasis_representation} for every $i \in [n]$ there is some unique $j \in [n]$ with conjugated column vectors $\matS_{\_{},i} = \conj{\matS_{\_{},j}}$ and $c_{i}(\vecx) = \conj{c_{j}(\vecx)}$. Moreover, $\mat{S}_{\_{},i}$ is an eigenvector for the eigenvalue $a_i$ of $\matA$ as well as $b_i$ of $\matB$ and $\mat{S}_{\_{},j}$ is an eigenvector for the conjugated eigenvalue $a_j = \conj{a_i}$ of $\matA$ as well as $b_j = \conj{b_i}$ of $\matB$. Hence,
  \begin{equation}
    \mat{A_{D}}^{|f|_{\symMat{A}}} \cdot \mat{B_{D}}^{|f|_{\symMat{B}}} \cdot \mat{S}^{-1} \cdot \vecx =
    \begin{pmatrix}
      a_{1}^{|f|_{\symMat{A}}} b_{1}^{|f|_{\symMat{B}}} c_{1} (\vecx) \\
      \vdots                                                          \\
      a_{n}^{|f|_{\symMat{A}}} b_{n}^{|f|_{\symMat{B}}} c_{n} (\vecx)
    \end{pmatrix}
    \label{eq:real_valuedness_of_conjugated_constraint_terms_proof1}
  \end{equation}
  such that for all $i\in[n]$ there is some unique $j\in[n]$ with
  \[
    {(\mat{A_{D}}^{|f|_{\symMat{A}}} \cdot \mat{B_{D}}^{|f|_{\symMat{B}}} \cdot
    \mat{S}^{-1}\cdot \vecx)}_{i}
    = \conj{{(\mat{A_{D}}^{|f|_{\symMat{A}}} \cdot \mat{B_{D}}^{|f|_{\symMat{B}}}
    \cdot \mat{S}^{-1} \cdot \vecx)}_{j}}
  \] for all $\vecx \in \AA^{n}$.

  By multiplying \cref{eq:real_valuedness_of_conjugated_constraint_terms_proof1} with $\mat{S}$ from the left one obtains
  \begin{align*}
        & \mat{S} \cdot \mat{A_{D}}^{|f|_{\symMat{A}}} \cdot \mat{B_{D}}^{|f|_{\symMat{B}}} \cdot \mat{S}^{-1} \cdot \vecx \\
    ={} &
    \begin{pmatrix}
      \mat{S}_{1,1} a_{1}^{|f|_{\symMat{A}}} b_{1}^{|f|_{\symMat{B}}} c_{1} (\vecx)+ \cdots + \mat{S}_{1,n} a_{n}^{|f|_{\symMat{A}}} b_{n}^{|f|_{\symMat{B}}} c_{n} (\vecx) \\
      \vdots                                                                                                                                                                \\
      \mat{S}_{n,1} a_{1}^{|f|_{\symMat{A}}} b_{1}^{|f|_{\symMat{B}}} c_{1}(\vecx) + \cdots + \mat{S}_{n,n} a_{n}^{|f|_{\symMat{A}}} b_{n}^{|f|_{\symMat{B}}} c_{n} (\vecx) \\
    \end{pmatrix}
    \\
    ={} &
    \begin{pmatrix}
      \sum_{i=1}^{n} \left(a_{i}^{|f|_{\symMat{A}}} b_{i}^{|f|_{\symMat{B}}} c'_{1,i} (\vecx) + \conj{a_{i}}^{|f|_{\symMat{A}}} \conj{b_{i}}^{|f|_{\symMat{B}}} \conj{c'_{1,i} (\vecx)} \right) \\
      \vdots                                                                                                                                                                                    \\
      \sum_{i=1}^{n} \left(a_{i}^{|f|_{\symMat{A}}} b_{i}^{|f|_{\symMat{B}}} c'_{n,i} (\vecx) + \conj{a_{i}}^{|f|_{\symMat{A}}} \conj{b_{i}}^{|f|_{\symMat{B}}} \conj{c'_{n,i} (\vecx)} \right)
    \end{pmatrix}
    \in \AA^{n}
  \end{align*}
  for linear functions $c'_{i,i'}\colon \AA^{n} \to \CC$ for $i,i' \in [n]$.

  By multiplying with $\matC \in \AA^{m \times n}$ from the left one then obtains
  \begin{align*}
        & {\left(\matC \cdot \mat{S} \cdot \mat{A_{D}}^{|f|_{\symMat{A}}} \cdot \mat{B_{D}}^{|f|_{\symMat{B}}} \cdot \mat{S}^{-1} \cdot \vecx\right)}_{c} \\
    {}= & \sum_{i=1}^{n} \matC_{c,i} \sum_{j=1}^{n} \left(a_{j}^{|f|_{\symMat{A}}} b_{j}^{|f|_{\symMat{B}}} c'_{i,j} (\vecx) + \conj{a_{j}}^{|f|_{\symMat{A}}} \conj{b_{j}}^{|f|_{\symMat{B}}} \conj{c'_{i,j} (\vecx)} \right)
  \end{align*}
  for all $c \in [m]$.
  Hence, for all $a,b \in \CC$ and $c \in [m]$ we have
  \begin{align*}
    \gamma_{1} & = \sum_{\substack{i \in [n] \\ (a_{i},b_{i}) = (a,b)}} \gamma_{c,i} (\vecx) \\
               & = \sum_{i=1}^{n} \mat{C}_{c,i} \left( \sum_{\substack{j \in [n] \\ (a_{j},b_{j}) = (a,b)}} c'_{i,j} (\vecx) + \sum_{\substack{j \in [n]\\ (a_{j},b_{j}) = (\conj{a},\conj{b})}} \conj{c'_{i,j} (\vecx)} \right)\\
    \gamma_{2} & = \sum_{\substack{i \in [n] \\ (a_{i},b_{i}) = (\conj{a},\conj{b})}} \gamma_{c,i} (\vecx) \\
               & = \sum_{i=1}^{n} \mat{C}_{c,i} \left( \sum_{\substack{j \in [n] \\ (a_{j},b_{j}) = (\conj{a},\conj{b})}} c'_{i,j} (\vecx) + \sum_{\substack{j \in [n]\\ (a_{j},b_{j}) = (a,b)}} \conj{c'_{i,j} (\vecx)} \right)
  \end{align*}
  by the definition of $\gamma_{c,i}$.
  This implies $\gamma_{2} = \conj{\gamma_{1}}$ since $\matC$ is real and every individual addend $c'_{i,j} (\vecx)$ ($\conj{c'_{i,j}(\vecx)}$) in the sum of the expression for $\gamma_{1}$ above directly corresponds to the conjugated addend $\conj{c'_{i,j} (\vecx)}$ ($c'_{i,j} (\vecx)$) in the sum of the expression for $\gamma_{2}$.
\end{proof}

\correspondencebetweenentandnt*{}
\begin{proof}
  Clearly, $\NT \subseteq \ENT$ and hence $\NT \cap \semiring^{n} \neq \emptyset$ implies $\ENT \cap \semiring^{n} \neq \emptyset$.
  On the other hand, let $\ENT \cap \semiring^{n} \neq \emptyset$.
  Then, there is some $\vecx\in\ENT\cap\semiring^{n}$ and by \Cref{def:ent} there are $j,k\in\NN$ such that $\vecy = \matA^{j} \matB^{k} \vec{x} \in \NT$.
  As $\matA,\matB \in \semiring^{n \times n}$, we have $\vecy \in \semiring^{n}$ as well and hence $\ENT\cap\semiring^{n} \neq \emptyset$ implies $\NT \cap \semiring^{n} \neq \emptyset$.
\end{proof}

  \proofsForSection{sect:On Constraint Terms}
\label{app:proofs_for_on_constraint_terms}

\normalformofconstraintterms*{}
\begin{proof}
  Pick an arbitrary $f\in\Path$ with $\URVar(f)\in(-p, 1-p)$ and let $a^{|f|_{\symMat{A}}}b^{|f|_{\symMat{B}}}\gamma$ be defined as required for the application of \Cref{lem:normal_form_of_constraint_terms}.
  We show \[a^{|f|_{\symMat{A}}}b^{|f|_{\symMat{B}}} = \zeta_{\matA}^{|f|_{\symMat{A}}}\zeta_{\matB}^{|f|_{\symMat{B}}} {\left( \frac{|a|^{p}}{|b|^{p-1}} {\left(\frac{|a|}{|b|}\right)}^{\URVar(f)}\right)}^{|f|}.\]
  We have
  \begin{align*}
    a^{|f|_{\symMat{A}}} b^{|f|_{\symMat{B}}} & = {(|a|\zeta_{\matA})}^{|f|_{\symMat{A}}} {(|b|\zeta_{\matB})}^{|f|_{\symMat{B}}} \\
                                              & =
    \zeta_{\matA}^{|f|_{\symMat{A}}}\zeta_{\matB}^{|f|_{\symMat{B}}}
    {\left(|a|^{\frac{|f|_{\symMat{A}}}{|f|}}
    |b|^{\frac{|f|_{\symMat{B}}}{|f|}}\right)}^{|f|} \tag{as $|f| > 0$} \\
                                              & = \zeta_{\matA}^{|f|_{\symMat{A}}}\zeta_{\matB}^{|f|_{\symMat{B}}} {\left(|a|^{\URVar(f)}|a|^{p}|b|^{-\URVar(f)}|b|^{1-p}\right)}^{|f|} \\
                                              & = \zeta_{\matA}^{|f|_{\symMat{A}}}\zeta_{\matB}^{|f|_{\symMat{B}}} {\left(\frac{|a|^{p}}{|b|^{p-1}} {\left(\frac{|a|}{|b|}\right)}^{\URVar(f)}\right)}^{|f|}
  \end{align*}
  where we used $\tfrac{|f|_{\symMat{A}}}{|f|} + \tfrac{|f|_{\symMat{B}}}{|f|} = 1$ in the penultimate line.
\end{proof}

\equalityofeigenvalues*{}
\begin{proof}
  From $\frac{a_{1}^{p}}{b_{1}^{p-1}} = \frac{a_{2}^{p}}{b_{2}^{p-1}}$ we conclude $a_{1} = a_{2} {\left(\frac{b_{1}}{b_{2}}\right)}^{\frac{p-1}{p}}$ and thus, $a_{1} = a_{2} {\left(\frac{b_{2}}{b_{1}}\right)}^{\frac{1-p}{p}}$ with $p\neq0$.
  Substituting this into the remaining equation $\frac{a_{1}}{b_{1}} = \frac{a_{2}}{b_{2}}$, one obtains
  \[
    \frac{a_{2} {\left(\frac{b_{2}}{b_{1}}\right)}^{\frac{1-p}{p}}}{b_{1}} =
    \frac{a_{2}}{b_{2}} \iff \frac{b_{2}}{b_{1}}
    {\left(\frac{b_{2}}{b_{1}}\right)}^{\frac{1-p}{p}} \!\!\!= 1 \iff
    {\left(\frac{b_{2}}{b_{1}} \right)}^{\frac{1}{p}} \!\! = 1 .
  \]
  Thus, $b_{1} = b_{2}$ and from $\frac{a_{1}}{b_{1}} = \frac{a_{2}}{b_{2}}$ we directly obtain that also $a_{1} = a_{2}$.
\end{proof}

\coefficientsofconstrainttermgroupsarerealvalued*
\begin{proof}
  Let $(\scaleinside,\scaleoutside) \in \mathcal{I}$, $\vecx \in \AA^{n}$, and $c \in [m]$.
  First note that for all $i,j \in [n]$ with $(|a_{i}|,|b_{i}|) = (|a_{j}|,|b_{j}|)$ we have $i \in \constrainttermgroup_{(\scaleinside,\scaleoutside),c,\vecx}$ iff $j \in \constrainttermgroup_{(\scaleinside,\scaleoutside),c,\vecx}$ by \Cref{def:constraint_term_groups}.
  We partition the set $\{ (a_{i},b_{i}) \mid i \in \constrainttermgroup_{(\scaleinside,\scaleoutside),c,\vecx}\}$ into the sets $\mathfrak{R} \subseteq \AA_{>0}^{2}$ and $\mathfrak{C}_{1}, \mathfrak{C}_{2} \subseteq \{ (a,b) \mid \{a,b\} \not \subseteq \AA_{>0}\}$ such that $\mathfrak{C}_{2} = \{ (\conj{a},\conj{b}) \mid (a,b) \in \mathfrak{C}_{1}\}$.
  Then, we have
  \begin{align*}
        & \sum_{i \in \constrainttermgroup_{(\scaleinside,\scaleoutside),c,\vecx}} \zeta_{i,\matA}^{|f|_{\symMat{A}}} \zeta_{i,\matB}^{|f|_{\symMat{B}}} \gamma_{c,i} (\vecx) \\
    ={} & \sum_{(a,b) \in \mathfrak{R}} \sum_{\substack{i \in [n] \\ (a_{i},b_{i}) = (a,b)}} \gamma_{c,i} (\vecx) \\
        & {}+ \sum_{(a,b) \in \mathfrak{C}_{1}} \sum_{\substack{i \in [n] \\ (a_{i},b_{i}) = (a,b)}} \zeta_{i,\matA}^{|f|_{\symMat{A}}} \zeta_{i,\matB}^{|f|_{\symMat{B}}} \gamma_{c,i} (\vecx) + \sum_{(a,b) \in \mathfrak{C}_{2}} \sum_{\substack{i \in [n] \\ (a_{i},b_{i}) = (a,b)}} \zeta_{i,\matA}^{|f|_{\symMat{A}}} \zeta_{i,\matB}^{|f|_{\symMat{B}}} \gamma_{c,i} (\vecx) \\
    ={} & \sum_{(a,b) \in \mathfrak{R}} \!\!\!\! \sum_{\substack{i \in [n] \\ (a_{i},b_{i}) = (a,b)}}\!\!\!\!\!\! \gamma_{c,i} (\vecx) \!\!\!\!\sum_{(a,b) \in \mathfrak{C}_{1}} \left(\sum_{\substack{i \in [n] \\ (a_{i},b_{i}) = (a,b)}}\!\!\!\!\!\! \zeta_{i,\matA}^{|f|_{\symMat{A}}} \zeta_{i,\matB}^{|f|_{\symMat{B}}} \gamma_{c,i} (\vecx) \right. \\ &\left.{}+ \sum_{\substack{i \in [n]\\ (a_{i},b_{i}) = (\conj{a},\conj{b})}} \zeta_{i,\matA}^{|f|_{\symMat{A}}} \zeta_{i,\matB}^{|f|_{\symMat{B}}} \gamma_{c,i} (\vecx) \right)
  \end{align*}
  Now, \Cref{lem:sums_of_conjugated_addends_are_real_valued} directly implies that the left addend is real.
  To prove the same for the right addend, let $(a,b) \in \mathfrak{C}_{1}$.
  For the inner sum of the right addend we then obtain
  \begin{align*}
        & \sum_{\substack{i \in [n] \\ (a_{i},b_{i}) = (a,b)}}\!\!\!\!\!\! \zeta_{i,\matA}^{|f|_{\symMat{A}}} \zeta_{i,\matB}^{|f|_{\symMat{B}}} \gamma_{c,i} (\vecx) + \!\!\!\!\sum_{\substack{i \in [n]\\ (a_{i},b_{i}) = (\conj{a},\conj{b})}}\!\!\!\!\!\! \zeta_{i,\matA}^{|f|_{\symMat{A}}} \zeta_{i,\matB}^{|f|_{\symMat{B}}} \gamma_{c,i} (\vecx) \\
    ={} & \left(\frac{a}{|a|}\right)^{|f|_{\symMat{A}}}
    \left(\frac{b}{|b|}\right)^{|f|_{\symMat{B}}} \!\!\!\!\sum_{\substack{i \in [n] \\ (a_{i},b_{i}) = (a,b)}}\!\!\!\!\!\! \gamma_{c,i}
    (\vecx)
    + \left(\frac{\conj{a}}{|a|}\right)^{|f|_{\symMat{A}}} \left(\frac{\conj{b}}{|b|}\right)^{|f|_{\symMat{B}}} \!\!\!\!\sum_{\substack{i \in [n] \\ (a_{i},b_{i}) =
    (\conj{a},\conj{b})}}\!\!\!\!\!\! \gamma_{c,i} (\vecx) \tag{Def.\ of $\zeta_{i,\matA}$ and
    $\zeta_{i,\matB}$} \\
    ={} & \left(\frac{a}{|a|}\right)^{|f|_{\symMat{A}}} \left(\frac{b}{|b|}\right)^{|f|_{\symMat{B}}} \!\!\!\! \sum_{\substack{i \in [n] \\ (a_{i},b_{i}) = (a,b)}} \!\!\!\!\!\! \gamma_{c,i} (\vecx)
    + \left(\frac{\conj{a}}{|a|}\right)^{|f|_{\symMat{A}}} \left(\frac{\conj{b}}{|b|}\right)^{|f|_{\symMat{B}}} \conj{\sum_{\substack{i \in [n] \\ (a_{i},b_{i}) = (a,b)}} \!\!\!\!\!\! \gamma_{c,i} (\vecx)} \tag{\Cref{lem:sums_of_conjugated_addends_are_real_valued}} \\
    ={} & 2 \Re \left(\left(\frac{a}{|a|}\right)^{|f|_{\symMat{A}}} \left(\frac{b}{|b|}\right)^{|f|_{\symMat{B}}} \sum_{\substack{i \in [n] \\ (a_{i},b_{i}) = (a,b)}} \gamma_{c,i} (\vecx) \right)
  \end{align*}
  which is real-valued as well.
  Thus, as $\zeta_{i,\matA},\zeta_{i,\matB},\gamma_{c,i}(\vecx) \in \QQbar$ for all $i \in \constrainttermgroup_{(\scaleinside,\scaleoutside),c,\vecx}$, we have
  \[\sum_{i \in \constrainttermgroup_{(\scaleinside,\scaleoutside),c,\vecx}}
    \zeta_{i,\matA} \zeta_{i,\matB} \gamma_{c,i} (\vecx) \in \QQbar \cap \RR = \AA .\]
\end{proof}

\onthecomparisonofconstrainttermgroups*{}
\begin{proof}
  Whenever $T_{1} \neq T_{2}$ one can simply check whether $T_{1} < T_{2}$ by iteratively computing closer rational approximations.
  Thus, consider the case $T_{1} = T_{2}$.
  If $p \in \QQ$, then $T_{1}$ and $T_{2}$ are computable algebraic reals for which equality can easily be checked.
  So assume $p \in \AA\setminus\QQ$.
  Consider the rearrangement
  \begin{alignat*}{3}
     &      & T_{1}                                            & = T_{2} \\
     & \iff & \frac{a_{1}^{p}}{b_{1}^{p-1}}                    & = \frac{a_{2}^{p}}{b_{2}^{p-1}} \\
     & \iff & \frac{a_{1}^{p}b_{1}}{b_{1}^{p}}                 & = \frac{a_{2}^{p}b_{2}}{b_{2}^{p}} \\
     & \iff & {\left(\frac{a_{1}b_{2}}{b_{1}a_{2}}\right)}^{p} & = \frac{b_{2}}{b_{1}}
  \end{alignat*}
  of the equation $T_{1} = T_{2}$.
  Recall that $b_{1},b_{2} \in \AA_{>0}$ and hence $\frac{b_{2}}{b_{1}} \in \AA_{>0}$.
  However, following the Gelfond-Schneider Theorem~\cite[Thm.\ 3.0.1]{transcendentalnumbers}, ${\left(\frac{a_{1}b_{2}}{b_{1}a_{2}}\right)}^{p} \in \AA$ iff $\frac{a_{1}b_{2}}{b_{1}a_{2}} \in \{0,1\}$.
  Hence, the equation can also be checked in a trivial manner for (algebraic) irrational $p$.
\end{proof}

\dominationofeventuallydominatingconstrainttermgroups*{}
\begin{proof}
  W.l.o.g.\ assume $d = \pmax$ as the other case is symmetric.
  Moreover, we assume $\constrainttermgroup_{\pmax,c,\vecx} \neq \emptyset$ as otherwise $|v(f)| = 0 \not \geq \rho > 0$.
  So, let $(\scaleinside_{\max},\scaleoutside_{\max}) \in \mathcal{I}$ such that $\constrainttermgroup_{(\scaleinside_{\max},\scaleoutside_{\max})} = \constrainttermgroup_{(\scaleinside_{\max},\scaleoutside_{\max}), c, \vecx} = \constrainttermgroup_{\pmax,c,\vecx}$.
  In the spirit of \cref{eq:complete_constraint_in_grouped_normal_form} let
  \[
    w = \!\!\!\!\!\!\sum_{\substack{(\scaleinside,\scaleoutside)\in\mathcal{I}\\
        (\scaleinside,\scaleoutside) \neq (\scaleinside_{\max},\scaleoutside_{\max})}}\!\!\!\!
    \left( {\left(\scaleinside \cdot \scaleoutside^{\URVar(f)}\right)}^{|f|} \!\!\!\!\sum_{i\in\constrainttermgroup_{(\scaleinside,\scaleoutside)}}\!\!\!\!\!\! \zeta_{i,\matA}^{|f|_{\symMat{A}}} \zeta_{i,\matB}^{|f|_{\symMat{B}}} \gamma_{c,i}(\vecx)\right).
  \]

  Note that
  \begin{align*}
    {(\Val_{\vecx}(f))}_{c} & = {(\scaleinside_{\max} \cdot \scaleoutside_{\max}^{\URVar(f)})}^{|f|}
    \!\!\!\!\sum_{i\in\constrainttermgroup_{\pmax,c,\vecx}}\!\!\!\!\!\! \zeta_{i,\matA}^{|f|_{\symMat{A}}}
    \zeta_{i,\matB}^{|f|_{\symMat{B}}} \gamma_{c,i}(\vecx) + w \\
                            & = {(\scaleinside_{\max} \cdot \scaleoutside_{\max}^{\URVar(f)})}^{|f|} \cdot v(f) + w.
  \end{align*}
  As $|v(f)| \geq \rho$, to prove $\sign ({(\Val_{\vecx}(f))}_{c}) = \sign(v (f))$, it suffices to show ${(\scaleinside_{\max} \cdot \scaleoutside_{\max}^{\URVar(f)})}^{|f|}
    \rho > |w| \iff 0 < {(\scaleinside_{\max} \cdot \scaleoutside_{\max}^{\URVar(f)})}^{|f|} \rho - |w|$ for all $f\in\Path$ meeting the requirements.

  The goal is to show that the addend for $(\scaleinside_{\max}, \scaleoutside_{\max})$ dominates the sum of all other addends.
  To over-estimate the sum of the other addends, we consider the ``second largest'' pairs of numbers, i.e., pairs of numbers which are as large as possible, but smaller than $(\scaleinside_{\max}, \scaleoutside_{\max})$.
  There are two reasons why a pair of numbers can be smaller than $(\scaleinside_{\max}, \scaleoutside_{\max})$ in the lexicographic ordering.
  The first reason is that the number in the first component can be smaller than $\scaleinside_{\max}$.
  Thus, let $\scaleinside_{\max 2} < \scaleinside_{\max}$ be the second largest number such that $\constrainttermgroup_{(\scaleinside_{\max 2},\scaleoutside),c,\vecx} \neq \emptyset$ for some $\scaleoutside\in\RR_{>0}$ or $\scaleinside_{\max 2} = 0$ if no such $\scaleoutside$ exists.
  To obtain a pair $(\scaleinside_{\max 2},\ldots)$ which is ``as large as possible'', in the second component, we use the largest number $\scaleoutside_{\max'} \geq \scaleoutside_{\max}$ such that $\constrainttermgroup_{(\scaleinside,\scaleoutside_{\max'}),c,\vecx} \neq \emptyset$ for some $\scaleinside\in\RR_{>0}$.
  So the first ``second largest'' pair of number that we regard is $(\scaleinside_{\max 2},\scaleoutside_{\max'})$.
  The second reason why a pair of numbers can be smaller than $(\scaleinside_{\max}, \scaleoutside_{\max})$ in the lexicographic ordering is when the first component stays the same, but the second component gets smaller.
  Hence, let $\scaleoutside_{\max 2} < \scaleoutside_{\max}$ be the second largest number such that $\constrainttermgroup_{(\scaleinside_{\max},\scaleoutside_{\max 2}),c,\vecx} \neq \emptyset$ or $0$ if no such value exists.
  So the other ``second largest'' pair of number that we regard is $(\scaleinside_{\max}, \scaleoutside_{\max 2})$.

  Next, we define a \underline{\textbf{s}}caling factor $\mathfrak{s}$ that compares the largest pair $(\scaleinside_{\max}, \scaleoutside_{\max})$ with the second largest pair $(\scaleinside_{\max 2}, \scaleoutside_{\max'})$.
  To this end, we choose $\eps > 0$ suitably small such that $\mathfrak{s} = \max_{u\in[0,\eps]}
    \frac{\scaleinside_{\max 2} \cdot \scaleoutside_{\max'}^{u}}{\scaleinside_{\max}
      \cdot \scaleoutside_{\max}^{u}} = \frac{\scaleinside_{\max 2} \cdot \scaleoutside_{\max'}^{\eps}}{\scaleinside_{\max}
      \cdot \scaleoutside_{\max}^{\eps}}
    < 1$. Such an $\eps$ always exists, since $\scaleinside_{\max 2} < \scaleinside_{\max}$ (but $\scaleoutside_{\max'} \geq \scaleoutside_{\max}$). Thus, we choose $\eps$ in such a way that as long as $\URVar(f) \in [0,\eps]$, we know that $\scaleinside_{\max 2} \cdot \scaleoutside_{\max'}^{\URVar(f)}$ is smaller than $\scaleinside_{\max} \cdot \scaleoutside_{\max}^{\URVar(f)}$ by a factor of at least $\mathfrak{s}$. Thus, we have
  \begin{align}
    |w| & \leq \!\!\!\!\sum_{\substack{(\scaleinside,\scaleoutside)\in\mathcal{I} \nonumber \\
        (\scaleinside,\scaleoutside) \neq (\scaleinside_{\max},\scaleoutside_{\max})}}\!\!\!\!
    \left(
    {\left(\scaleinside \cdot \scaleoutside^{\URVar(f)}\right)}^{|f|}
    \sum_{i\in\constrainttermgroup_{(\scaleinside,\scaleoutside)}}
    |\gamma_{c,i}(\vecx)| \right) \tag{Triang. Ineq.} \nonumber \\
        & \leq \left( {\left( \scaleinside_{\max} \cdot \scaleoutside_{\max 2}^{\URVar(f)}
      \right)}^{|f|}
    + {\left( \scaleinside_{\max 2} \cdot \scaleoutside_{\max'}^{\URVar(f)}\right)}^{|f|}\right) \cdot \sum_{i\in[n]} |\gamma_{c,i} (\vecx)| \nonumber \\
        & \leq \left( {\left( \scaleinside_{\max} \cdot \scaleoutside_{\max 2}^{\URVar(f)}
      \right)}^{|f|} + {\left( \mathfrak{s} \cdot \scaleinside_{\max} \cdot
      \scaleoutside_{\max}^{\URVar(f)}\right)}^{|f|}\right) \cdot \sum_{i\in[n]} |\gamma_{c,i} (\vecx)| \label{w expression}
  \end{align}
  Here, the first inequation holds because of the triangle inequation ($|x + y| \leq |x| + |y|$) and the fact that $|\zeta_{i,\matA}^{|f|_{\symMat{A}}}
    \zeta_{i,\matB}^{|f|_{\symMat{B}}}| = 1$. In the second inequation, one over-approximates the coefficients $\sum_{i\in\constrainttermgroup_{(\scaleinside,\scaleoutside)}}
    |\gamma_{c,i}(\vecx)|$ by $\sum_{i\in[n]} |\gamma_{c,i} (\vecx)|$ and the expression ${\left(\scaleinside \cdot \scaleoutside^{\URVar(f)}\right)}^{|f|}$ by ${\left( \scaleinside_{\max} \cdot \scaleoutside_{\max 2}^{\URVar(f)}
        \right)}^{|f|}$ and ${\left( \scaleinside_{\max 2} \cdot \scaleoutside_{\max'}^{\URVar(f)}\right)}^{|f|}$. In the last inequation, we use that $\URVar(f) \in [0,\eps]$ and thus, $\scaleinside_{\max 2} \cdot \scaleoutside_{\max'}^{\URVar(f)} \leq \mathfrak{s} \cdot \scaleinside_{\max} \cdot \scaleoutside_{\max}^{\URVar(f)}$.

  To compare the largest pair $(\scaleinside_{\max}, \scaleoutside_{\max})$ with the other second largest pair $(\scaleinside_{\max}, \scaleoutside_{\max 2})$, we choose a large enough number $r$ such that for all $|\URVar(f) \cdot |f|| \ge r$, $\scaleoutside_{\max}^{r}$ multiplied by its coefficient is greater than $\scaleoutside_{\max 2}^r$ multiplied by the sum of the coefficients of all other addends.
  So we choose $r \in \NN$ large enough such that $t \cdot \scaleoutside_{\max}^{r} > \scaleoutside_{\max 2}^{r} \cdot \sum_{i\in[n]} |\gamma_{c,i} (\vecx)|$ for some $t\in(0,\rho)$.
  Here, $t$ is smaller than the coefficient of the dominating addend (i.e., $t < \rho$).
  We need $t < \rho$, because the dominating addend with its actual coefficient must also ``outgrow'' the addend corresponding to $(\scaleinside_{\max 2}, \scaleoutside_{\max'})$.
  Note that by the choice of $r$ above, we also have $t \cdot \scaleoutside_{\max}^{r'} > \scaleoutside_{\max 2}^{r'} \cdot \sum_{i\in[n]} |\gamma_{c,i} (\vecx)|$ for all $r' \geq r$, because $\scaleoutside_{\max} > \scaleoutside_{\max 2}$.
  Thus, since $|\URVar(f) \cdot |f|| = \URVar(f) \cdot |f| \geq r$, we also have $t \cdot \scaleoutside_{\max}^{\URVar(f) \cdot |f|}
    > \scaleoutside_{\max 2}^{\URVar(f) \cdot |f|} \cdot \sum_{i\in[n]} |\gamma_{c,i} (\vecx)|$.

  Substituting \cref{w expression} into the expression ${(\scaleinside_{\max} \cdot \scaleoutside_{\max}^{\URVar(f)})}^{|f|} \rho - |w|$ one obtains
  \begin{align*}
           & {(\scaleinside_{\max} \cdot \scaleoutside_{\max}^{\URVar(f)})}^{|f|} \rho - |w| \\
    \geq{} & {(\scaleinside_{\max} \cdot \scaleoutside_{\max}^{\URVar(f)})}^{|f|} \rho -
    \left( {\left( \scaleinside_{\max} \cdot \scaleoutside_{\max 2}^{\URVar(f)} \right)}^{|f|} + {\left( \mathfrak{s} \cdot \scaleinside_{\max} \cdot \scaleoutside_{\max}^{\URVar(f)}\right)}^{|f|}\right) \sum_{i\in[n]} |\gamma_{c,i} (\vecx)| \\
    \geq{} & {(\scaleinside_{\max} \cdot \scaleoutside_{\max}^{\URVar(f)})}^{|f|} \left(\rho - \mathfrak{s}^{|f|} \sum_{i\in[n]} |\gamma_{c,i} (\vecx)|\right) - \scaleinside_{\max}^{|f|} \cdot \scaleoutside_{\max 2}^{\URVar(f)\cdot |f|} \sum_{i\in[n]} |\gamma_{c,i} (\vecx)| \\
    >{}    & {(\scaleinside_{\max} \cdot \scaleoutside_{\max}^{\URVar(f)})}^{|f|} \left(\rho - \mathfrak{s}^{|f|} \sum_{i\in[n]} |\gamma_{c,i} (\vecx)|\right) - \scaleinside_{\max}^{|f|} \cdot \scaleoutside_{\max}^{\URVar(f) \cdot |f|} \cdot t \\
    \geq{} & {(\scaleinside_{\max} \cdot \scaleoutside_{\max}^{\URVar(f)})}^{|f|} \left(\rho - t - \mathfrak{s}^{|f|} \sum_{i\in[n]} |\gamma_{c,i} (\vecx)|\right)
  \end{align*}
  Clearly, ${\scaleinside_{\max} \cdot \scaleoutside_{\max}^{\URVar(f)}}^{|f|} > 0$.
  By definition of $t$ we have $\rho - t > 0$.
  Recall that $\mathfrak{s}\in(0,1)$.
  The statement then follows by choosing $l$ large enough and considering $|f| \geq l$.
\end{proof}

  \proofsForSection{sect:Positive Eigenvalues}
\label{app:proofs_for_positive_eigenvalues}

\bravermangeneralisation*{}
\begin{proof}
  The proof below follows the one given by Braverman in spirit.

  Similar to the proof by Braverman we will make use of the following adapted key claims.
  However, unlike~\cite{braverman06}, we give a detailed proof for both of them at the end.
  \begin{enumerate}
  \item There is some constant $C_{1} > 0$ such that
        \[
          \left| \sum_{j=0}^{N-1} \sum_{k=0}^{N-1} z_{j_{0}+j,k_{0}+k} \right| \leq N C_{1}
        \]
        for all $N\in\NN_{>0}$ and $j_{0},k_{0} \in \NN$.
        Note that in contrast to the corresponding key claim in~\cite{braverman06}, the upper bound $N C_{1}$ scales linearly with $N$.
        So although we add $N^2$ complex numbers on the left-hand side of the equation, the upper bound is linear in $N$.
  \item There is some constant $C_{2} > 0$ such that \[\sum_{j=0}^{l-1}\sum_{k=0}^{l-1}
          |z_{j_{0}+j,k_{0}+k}| \ge C_{2}\] for all $j_{0},k_{0} \in \NN$. Thus, this provides a lower bound for a sum of $N^2$ addends when choosing $N = l$.
\end{enumerate}

  Note that the first key claim implies that in (\labelcref{it:braverman_generalisation_itemB}) the $z_{j,k}$ cannot be equal for all $j,k\in \NN$.
  The reason is that then we would have $\left| \sum_{j=0}^{N-1} \sum_{k=0}^{N-1} z_{j_{0}+j,k_{0}+k} \right| = N^2 \cdot z_{j,k} \leq N C_1$ for all $N \in \NN_{>0}$ which can only happen if $z_{j,k} = 0$.
  Hence, this would be a contradiction to the first part of (\labelcref{it:braverman_generalisation_itemB}).

  Let us assume that the first item of \Cref{lem:braverman_generalisation} does not hold, i.e., the real part of the sum $z_{j,k}$ is not equal to zero for all $j,k$.
  Then it suffices to show (\labelcref{it:braverman_generalisation_itemB}).

  As in~\cite{braverman06}, from now on we only consider real $z_{j,k}$.
  This is not an actual restriction as one can consider $y_{j,k} = z_{j,k} + \conj{z_{j,k}} = \sum_{i=1}^{l} (\gamma_{i}\zeta_{1,i}^{j}\zeta_{2,i}^{k} + \conj{\gamma_{i}} \conj{\zeta_{1,i}}^{j} \conj{\zeta_{2,i}}^{k})$ instead.
  Note that the two claims also hold for $y_{j_{0}+j,k_{0}+k}$.
  To simplify the notation, from now on we write $z_{j,k}$ to denote $y_{j_{0}+j,k_{0}+k}$ and have $z_{j,k} = \Re(z_{j,k})$.

  We choose $N\in\NN_{>0}$ large enough such that $N^{2} C_{2} \geq 3 N l C_{1}$.
  By the first claim we have
  \begin{equation}
    \label{Re-upper bound}
    \left| \sum_{j=0}^{Nl-1}\sum_{k=0}^{Nl-1} \Re(z_{j,k})\right| = \left| \sum_{j=0}^{Nl-1}\sum_{k=0}^{Nl-1} z_{j,k} \right| \leq NlC_{1} .
  \end{equation}

  On the other hand, the second claim implies
  \begin{align}
    \hspace*{-.4cm} \sum_{j=0}^{Nl-1}\sum_{k=0}^{Nl-1} |\Re(z_{j,k})| & = \!\! \sum_{j=0}^{Nl-1}\sum_{k=0}^{Nl-1} |z_{j,k}| \nonumber \\
                                                                      & = \!\!\sum_{j=0}^{N-1}\sum_{k=0}^{N-1} \sum_{j'=0}^{l-1}\sum_{k'=0}^{l-1} \left| z_{lj+j',lk+k'} \right| \nonumber \\
                                                                      & \geq \!\! \sum_{j=0}^{N-1}\sum_{k=0}^{N-1} C_{2} = N^{2}C_{2} \geq 3NlC_{1}
    \label{Re-lower bound}
  \end{align}

  Now assume that
  \begin{equation}
    \label{Re-contradiction}
    \sum_{j=0}^{Nl-1}\sum_{\substack{k\in\{0,\dots,Nl-1\}\\
        \Re(z_{j,k})<0}} \Re(z_{j,k}) > -NlC_{1}
  \end{equation}
  This would imply
  \[ \left|\sum_{j=0}^{Nl-1}\sum_{\substack{k\in\{0,\dots,Nl-1\}\\ \Re(z_{j,k})<0}} \Re(z_{j,k})\right| < NlC_{1} .\]
  Hence, we get
  \begin{equation}
    \label{Re-contradiction-consequence}
    \sum\limits_{j=0}^{Nl-1}\sum\limits_{\substack{k\in\{0,\dots,Nl-1\}\\
        \Re(z_{j,k})\geq 0}} \Re(z_{j,k}) \geq 2NlC_{1}
  \end{equation}
  because
  \begin{align*}
    3NlC_{1} & \leq \sum\limits_{j=0}^{Nl-1}\sum\limits_{k=0}^{Nl-1} |\Re(z_{j,k})|
    \tag{by \Cref{Re-lower bound}} \\
             & = \sum\limits_{j=0}^{Nl-1}\sum\limits_{\substack{k\in\{0,\dots,Nl-1\} \\ \Re(z_{j,k})<0}} |\Re(z_{j,k})| + \sum\limits_{j=0}^{Nl-1}\sum\limits_{\substack{k\in\{0,\dots,Nl-1\} \\ \Re(z_{j,k})\geq 0}} \Re(z_{j,k})
  \end{align*}
  and therefore
  \begin{align*}
    \sum\limits_{j=0}^{Nl-1}\sum\limits_{\substack{k\in\{0,\dots,Nl-1\} \\ \Re(z_{j,k})\geq 0}}\!\!\!\!\!\!\!\! \Re(z_{j,k}) &\geq 3NlC_{1} - \sum\limits_{j=0}^{Nl-1}\sum\limits_{\substack{k\in\{0,\dots,Nl-1\} \\ \Re(z_{j,k})<0}} \!\!\!\!\!\!\!\! |\Re(z_{j,k})| \\
     & = 3NlC_{1} - \left| \sum\limits_{j=0}^{Nl-1}\sum\limits_{\substack{k\in\{0,\dots,Nl-1\} \\ \Re(z_{j,k})<0}} \!\!\!\!\!\!\!\! \Re(z_{j,k}) \right| \\
     & > 3NlC_{1} - NlC_{1} \tag{by \Cref{Re-contradiction}} \\
     & = 2NlC_{1}
  \end{align*}
  But this implies
  \begin{align*}
    \sum\limits_{j=0}^{Nl-1}\sum\limits_{k=0}^{Nl-1} \Re(z_{j,k})
     & = \sum\limits_{j=0}^{Nl-1}\sum\limits_{\substack{k\in\{0,\dots,Nl-1\} \\ \Re(z_{j,k})\geq 0}}\!\!\!\!\!\!\!\! \Re(z_{j,k}) + \sum\limits_{j=0}^{Nl-1}\sum\limits_{\substack{k\in\{0,\dots,Nl-1\}\\ \Re(z_{j,k}) < 0}}\!\!\!\!\!\!\!\! \Re(z_{j,k}) \\
     & > 2NlC_{1} - NlC_1 \tag{by \Cref{Re-contradiction-consequence,Re-contradiction}} \\
     & = NlC_1
  \end{align*}
  and hence, $\left|\sum\limits_{j=0}^{Nl-1}\sum\limits_{k=0}^{Nl-1} \Re(z_{j,k})\right| > NlC_1$ which is a contradiction to \cref{Re-upper bound}.

  So the assumption \cref{Re-contradiction} was wrong and we have
  \begin{align}
    \sum_{j=0}^{Nl-1}\sum_{\substack{k\in\{0,\dots,Nl-1\} \\ \Re(z_{j,k})<0}} \Re(z_{j,k}) &= \sum_{j=0}^{Nl-1}\sum_{\substack{k\in\{0,\dots,Nl-1\}\\z_{j,k}<0}} z_{j,k} \nonumber \\
     & \leq -NlC_{1}.
    \label{Re-consequence}
  \end{align}
  Assume that for all $j,k$ where $z_{j,k} < 0$, we have $z_{j,k} > -\frac{C_{1}}{Nl}$.
  Then we would have
  \begin{align*}
    \sum\limits_{j=0}^{Nl-1}\sum\limits_{\substack{k\in\{0,\dots,Nl-1\} \\ z_{j,k}<0}} z_{j,k} &> \sum\limits_{j=0}^{Nl-1}\sum\limits_{\substack{k\in\{0,\dots,Nl-1\}\\ z_{j,k}<0}} -\frac{C_{1}}{Nl} \\
     & > (Nl)^2 \cdot \left(-\frac{C_{1}}{Nl}\right) \\
     & = -NlC_{1}
  \end{align*}
  in contradiction to \cref{Re-consequence}.
  Thus there exist some $j,k\in\{0,\ldots,Nl-1\}$ with $\Re(z_{j,k}) = z_{j,k} \leq -\frac{C_{1}}{Nl}$.
  We set $K=Nl-1$ and $C = -\frac{C_{1}}{Nl}$ to conclude the proof.
  We now prove the two key claims used in the proof:

\begin{enumerate}
  \item There is some constant $C_{1} > 0$ such that
        \[
          \left| \sum_{j=0}^{N-1} \sum_{k=0}^{N-1} z_{j_{0}+j,k_{0}+k} \right| \leq N C_{1}
        \]
        for all $N\in\NN_{>0}$ and $j_{0},k_{0} \in \NN$.
  \item There is some constant $C_{2} > 0$ such that \[\sum_{j=0}^{l-1}\sum_{k=0}^{l-1}
          |z_{j_{0}+j,k_{0}+k}| \ge C_{2}\] for all $j_{0},k_{0} \in \NN$.
\end{enumerate}

Instead of proving the two key claims directly, we prove two slightly more general claims.
To that end, let $\mathfrak{m}_{1},\ldots,\mathfrak{m}_{l} \in \RR_{>0}$.
Let the complex units $\zeta_{1,1},\ldots,\zeta_{1,l},\zeta_{2,1},\ldots,\zeta_{2,l}$ fulfill the prerequisites of \Cref{lem:braverman_generalisation}.
We assume $l>0$ as we are only interested in the lemma's second case and disregard the addends of $z_{j,k}$ where $\gamma_i = 0$.
We will then prove the following two claims for all $j,k \in \NN$:
\begin{enumerate}
  \item There is some constant $C_{1} > 0$ such that
        \[
          \left| \sum_{j=0}^{N-1}\sum_{k=0}^{N-1} \sum_{i=1}^{l} \gamma_{i} \zeta_{1,i}^{j} \zeta_{2,i}^{k} \right| \leq NC_{1}
        \]
        for all $N\in\NN_{>0}$ and $\gamma_{1},\ldots,\gamma_{l} \in \CC$ with $|\gamma_{i}| = \mathfrak{m}_{i}$ for all $i\in[l]$.

  \item There is some constant $C_{2}$ such that
        \[\sum_{j=0}^{l-1}\sum_{k=0}^{l-1} \left| \sum_{i=1}^{l} \gamma_{i} \zeta_{1,i}^{j} \zeta_{2,i}^{k} \right| \geq C_{2}\] for all $\gamma_{1},\ldots,\gamma_{l} \in \CC$ with $|\gamma_{i}| = \mathfrak{m}_{i}$ for all $i\in[n]$.
\end{enumerate}
Here, it can be seen that the more general claims imply the original claims from above by ``pushing'' the product $\zeta_{1,i}^{j_{0}} \zeta_{2,i}^{k_{0}}$ into the coefficients $\gamma_{i}$, i.e., by considering coefficients $\gamma'_{i} = \gamma_{i}\zeta_{1,i}^{j_{0}}\zeta_{2,i}^{k_{0}}$ instead.
As $\zeta_{1,i},\zeta_{2,i}$ are complex units, the modulus (i.e., absolute value) $\mathfrak{m}_{i}$ of the coefficient $\gamma_{i}$ is retained, i.e., $|\gamma'_i| = |\gamma_i| = \mathfrak{m}_{i}$.

\bigskip

\textbf{Proof of the First Claim}: We elaborate on the term $\left| \sum_{j=0}^{N-1}\sum_{k=0}^{N-1}\sum_{i=1}^{l} \gamma_{i} \zeta_{1,i}^{j} \zeta_{2,i}^{k}
  \right|$:
\begin{align*}
         & \left| \sum_{j=0}^{N-1}\sum_{k=0}^{N-1}\sum_{i=1}^{l} \gamma_{i} \zeta_{1,i}^{j} \zeta_{2,i}^{k} \right| \\
  \leq{} & \sum_{i=1}^{l} \left| \sum_{j=0}^{N-1}\sum_{k=0}^{N-1} \gamma_{i} \zeta_{1,i}^{j} \zeta_{2,i}^{k} \right| \tag{Triang.\ Ineq.} \\
  ={}    & \sum_{\substack{i\in[l] \\\zeta_{1,i} = 1}} \left| \sum_{j=0}^{N-1}\sum_{k=0}^{N-1} \gamma_{i} \zeta_{2,i}^{k} \right| + \sum_{\substack{i\in[l]\\\zeta_{2,i} = 1}} \left| \sum_{j=0}^{N-1}\sum_{k=0}^{N-1} \gamma_{i} \zeta_{1,i}^{j} \right| + \sum_{\substack{i\in[l] \\\zeta_{1,i}\neq1\\\zeta_{2,i}\neq1}} \left| \sum_{j=0}^{N-1}\sum_{k=0}^{N-1} \gamma_{i}\zeta_{1,i}^{j}\zeta_{2,i}^{k} \right| \\
  ={}    & N \left(\sum_{\substack{i\in[l] \\\zeta_{1,i} = 1}} \left| \sum_{k=0}^{N-1} \gamma_{i} \zeta_{2,i}^{k} \right| + \sum_{\substack{i\in[l] \\\zeta_{2,i} = 1}} \left| \sum_{j=0}^{N-1} \gamma_{i} \zeta_{1,i}^{j} \right|\right) + \sum_{\substack{i\in[l] \\\zeta_{1,i}\neq1\\\zeta_{2,i}\neq1}} \left| \sum_{j=0}^{N-1}\sum_{k=0}^{N-1} \gamma_{i}\zeta_{1,i}^{j}\zeta_{2,i}^{k} \right| \\
  ={}    & N \left(\sum_{\substack{i\in[l] \\\zeta_{1,i} = 1}} \left| \gamma_{i} \frac{\zeta_{2,i}^{N} - 1}{\zeta_{2,i} - 1} \right| + \sum_{\substack{i\in[l]\\\zeta_{2,i} = 1}} \left| \gamma_{i} \frac{\zeta_{1,i}^{N} - 1}{\zeta_{1,i} - 1} \right|\right) + \sum_{\substack{i\in[l] \\\zeta_{1,i}\neq1\\\zeta_{2,i}\neq1}} \left| \gamma_{i} \frac{\zeta_{1,i}^{N} - 1}{\zeta_{1,i} - 1} \frac{\zeta_{2,i}^{N} - 1}{\zeta_{2,i} - 1} \right| \tag{Geometr.\ Series} \\
  \leq{} & N \left( \sum_{\substack{i\in[l] \\\zeta_{1,i}=1}} \frac{2\mathfrak{m}_{i}}{|\zeta_{2,i} - 1|} + \sum_{\substack{i\in[l]\\\zeta_{2,i} = 1}} \frac{2\mathfrak{m}_{i}}{|\zeta_{1,i}-1|} \right) + \sum_{\substack{i\in[l] \\\zeta_{1,i}\neq1\\\zeta_{2,i}\neq1}} \frac{4\mathfrak{m}_{i}}{|\zeta_{1,i} - 1| |\zeta_{2,i} - 1|} \tag{Triang. Ineq.}
\end{align*}
For the penultimate step, recall that for all $i\in[l]$ we have $\zeta_{1,i} \neq 1$ or $\zeta_{2,i} \neq 1$ by the prerequisites of the lemma.
For the last step, we have $|\gamma_i \cdot (\zeta_{2,i}^{N} - 1)| = |\gamma_i| \cdot |\zeta_{2,i}^{N} + (- 1)| \leq \mathfrak{m}_{i} \cdot (|\zeta_{2,i}^{N}| + |- 1|) = 2\mathfrak{m}_{i}$, etc.

By choosing $C_{1,1},C_{1,2},C_{1,3} \in \RR_{>0}$ large enough, one obtains
\[\left| \sum_{j=0}^{N-1}\sum_{k=0}^{N-1} \sum_{i=1}^{l} \gamma_{i}\zeta_{1,i}^{j}\zeta_{2,i}^{k} \right| \leq N (C_{1,1} + C_{1,2}) + C_{1,3}.\]
We choose $C_{1} \geq C_{1,1} + C_{1,2} + C_{1,3}$.
Then, we have
\[\left| \sum_{j=0}^{N-1}\sum_{k=0}^{N-1} \sum_{i=1}^{l}
  \gamma_{i}\zeta_{1,i}^{j}\zeta_{2,i}^{k} \right| \leq N C_{1}\] and the claim follows.

\bigskip

\textbf{Proof of the Second Claim}: Let $I_{1,1},\ldots,I_{1,d_{1}}$ be indices such that $\zeta_{1,I_{1,1}}, \ldots,\zeta_{1,I_{1,d_{1}}}$ are pairwise different complex numbers with $\{\zeta_{1,I_{1,1}},\ldots,\zeta_{1,I_{1,d_{1}}}\} = \{\zeta_{1,1},\ldots,\zeta_{1,l}\}$ and analogously let $I_{2,1},\ldots,I_{2,d_{2}}$ be indices such that $\zeta_{2,I_{2,1}},\ldots, \zeta_{2,I_{2,d_{2}}}$ are pairwise different with $\{\zeta_{2,I_{2,1}},\ldots,\zeta_{2,I_{2,d_{2}}}\} = \{\zeta_{2,1},\ldots,\zeta_{2,l}\}$.
According to the prerequisites of \Cref{lem:braverman_generalisation}, for every $i \in [l]$, there are unique $a\in[d_{1}]$ and $b\in[d_{2}]$ such that $(\zeta_{1,i},\zeta_{2,i}) = (\zeta_{1,I_{1,a}},\zeta_{2,I_{2,b}})$.

Consider the two matrices
\begin{align*}
  \mat{M_{1}} & = {\left(\Vandermonde \left(\zeta_{1,I_{1,1}},\ldots,\zeta_{1,I_{1,d_{1}}}\right)\right)}^{\T} \in \CC^{d_{1} \times d_{1}} \\
  \mat{M_{2}} & = {\left(\Vandermonde\left(\zeta_{2,I_{2,1}},\ldots, \zeta_{2,I_{2,d_{2}}}\right)\right)}^{\T} \in \CC^{d_{2} \times d_{2}},
\end{align*}
where $\Vandermonde$ denotes the Vandermonde matrix, and their Kronecker product $\mat{M_{1}} \otimes \mat{M_{2}} \in \CC^{d_{1}d_{2} \times d_{1}d_{2}}$ given by
\[
  \mat{M_{1}} \otimes \mat{M_{2}} = \begin{pmatrix}
    \zeta_{1,I_{1,1}}^{0} \cdot \mat{M_{2}}       & \cdots & \zeta_{1,I_{1,d_{1}}}^{0} \cdot \mat{M_{2}}       \\
    \vdots                                        & \ddots & \vdots                                            \\
    \zeta_{1,I_{1,1}}^{d_{1}-1} \cdot \mat{M_{2}} & \cdots & \zeta_{1,I_{1,d_{1}}}^{d_{1}-1} \cdot \mat{M_{2}} \\
  \end{pmatrix}
\]
with
\[
  \mat{M_{2}} =
  \begin{pmatrix}
    \zeta_{2,I_{2,1}}^{0}       & \cdots & \zeta_{2,I_{2,d_{2}}}^{0}       \\
    \vdots                      & \ddots & \vdots                          \\
    \zeta_{2,I_{2,1}}^{d_{2}-1} & \cdots & \zeta_{2,I_{2,d_{2}}}^{d_{2}-1}
  \end{pmatrix} .
\]
As a Kronecker product, the rank of $\mat{M_{1}} \otimes \mat{M_{2}}$ is given by the product of the ranks of $\mat{M_{1}}$ and $\mat{M_{2}}$.
Since both, $\mat{M_{1}}$ and $\mat{M_{2}}$, have full rank as transposed Vandermonde matrices of distinct values (see~\cite[Section 0.9.11]{hornmatrixanalysis}), $\mat{M_{1}} \otimes \mat{M_{2}}$ has full rank as well~\cite[Cor.\ 4.2.11]{horntopicsinmatrixanalysis}, i.e., its rank is $d_1 \cdot d_2$.

Let $\gamma_{1},\ldots,\gamma_{l} \in \CC$ be coefficients as required by the claim, i.e., $|\gamma_{i}| = \mathfrak{m}_{i}$ for all $i\in[l]$.
From this, let $\vec{\beta} =
  \begin{pmatrix}
    \beta_{1,1} & \cdots & \beta_{1,d_{2}} & \cdots & \beta_{d_{1},1} & \cdots & \beta_{d_{1},d_{2}}
  \end{pmatrix}
  ^{\T}$ be defined such that $\beta_{a,b} = \gamma_{i}$ whenever $\zeta_{1,I_{1,a}} = \zeta_{1,i}$ and $\zeta_{2,I_{2,b}} = \zeta_{2,i}$ for some (unique) $i\in[l]$ for $(a,b)\in[d_{1}]\times[d_{2}]$. Otherwise, $\beta_{a,b} = 0$.

Let $u \in \{1,\ldots,d_1 \cdot d_2\}$.
Then the $u$-th entry of $(\mat{M_{1}}\otimes\mat{M_{2}})\vec{\beta}$ has the form
\begin{align*}
   & \zeta_{1,I_{1,1}}^j \cdot \zeta_{2,I_{2,1}}^k \cdot \beta_{1,1} + \ldots + \zeta_{1,I_{1,1}}^j \cdot \zeta_{2,I_{2,d_2}}^k \cdot \beta_{1,d_2} + \ldots \\ &{}+
  \zeta_{1,I_{1,d_1}}^j \cdot \zeta_{2,I_{2,1}}^k \cdot \beta_{d_1,1} + \ldots +
  \zeta_{1,I_{1,d_1}}^j \cdot \zeta_{2,I_{2,d_2}}^k \cdot \beta_{d_1,d_2}
\end{align*}
For every $i\in [l]$, there are unique $a,b$ with $\beta_{a,b} = \gamma_i$ and where the corresponding addend has the form $\zeta_{1,I_{1,a}}^j \cdot \zeta_{2,I_{2,b}}^k \cdot \beta_{a,b} =\zeta_{1,I_{1,i}}^j \cdot \zeta_{2,I_{2,i}}^k \cdot \gamma_i$.
Thus, $\sum_{i=1}^{l} \gamma_{i}\zeta_{1,i}^{j}\zeta_{2,i}^{k} = {((\mat{M_{1}}\otimes\mat{M_{2}})\vec{\beta})}_{u}$ for all $(j,k)\in\{0,\ldots,d_{1}-1\}\times\{0,\ldots, d_{2} - 1\}$ whenever the $u$-th row of $\mat{M_{1}}\otimes\mat{M_{2}}$ contains terms $\zeta_{1,I_{1,i}}^{j}\zeta_{2,I_{2,i'}}^{k}$ for some $i \in \{1,\ldots,d_{1}\}$ and $i' \in \{1,\ldots,d_{2}\}$.
As $l>0$ and $\mathfrak{m}_{1},\ldots,\mathfrak{m}_{l} \in \RR_{>0}$ we have\footnote{Here we need that the tuples $(\zeta_{1,i},\zeta_{2,i})$ for $i\in\{1,\ldots,l\}$ are pairwise different.
Otherwise, $\beta_{a,b}$ would have to be the sum of all $\gamma_i$ where $\zeta_{1,I_{1,a}} = \zeta_{1,i}$ and $\zeta_{2,I_{2,b}} = \zeta_{2,i}$.
But this means that $\beta_{a,b}$ could be $0$ and if that happens for all $(a,b)\in[d_{1}]\times[d_{2}]$, then we would have $\vec{\beta} = \veczero$.}
$\vec{\beta} \neq \veczero$ and hence by the full rank of $\mat{M_{1}}\otimes\mat{M_{2}}$, we additionally have $\left(\mat{M_{1}}\otimes\mat{M_{2}}\right)\vec{\beta} \neq \vec{0}$.
This in turn implies $\sum_{i=1}^{l} \gamma_{i}\zeta_{1,i}^{j}\zeta_{2,i}^{k} \neq 0$ for some $(j,k)\in \{0,\ldots,d_{1}-1\}\times\{0,\ldots,d_{2} - 1\}$.

Moreover, there is a lower bound $C_{2,1} > 0$ such that for all $\vec{\beta}$ defined as above (i.e., for any $\gamma_{1},\ldots,\gamma_{l}$ with $|\gamma_{i}| = \mathfrak{m}_{i}$) we have $\left\lVert (\mat{M_{1}}\otimes\mat{M_{2}}) \vec{\beta}\right\rVert^{2} > C_{2,1}$.
Here, as usual we define the norm of a vector
$(x_{1} \ldots x_{k})^{\T}$ as $\left\lVert(x_{1} \ldots x_{k})^{\T}\right\rVert = \sqrt{x_1^2 + \ldots + x_k^2}$.
To see that, assume w.l.o.g.\ $I_{1,1} = I_{2,1} = 1$.
Then, fix $\beta_{1,1} = |\gamma_{1}| = \mathfrak{m}_{1}$.
As the concrete value of $\gamma_{1}$ is unknown to us (we only have $|\gamma_{1}| = \mathfrak{m}_{1}$) we set $\beta_{1,1}$ to its modulus instead.
This, however, does not invalidate the subsequent line of argument, as multiplications with complex units (such as $\frac{\gamma_{1}}{|\gamma_{1}|}$) do not change the norm of the vector $\left\lVert \left(\mat{M_{1}}\otimes\mat{M_{2}}\right)\vec{\beta}\right\rVert$.
Recall $(\mat{M_{1}}\otimes\mat{M_{2}})\vec{\beta} \neq \vec{0}$ for all $\vec{\beta} =
  \begin{pmatrix}
    \beta_{1,1} & \cdots
  \end{pmatrix}
$. If $d_{1}d_{2} = 1$ set $C_{2,1} = \left\lVert (\mat{M_{1}\otimes\mat{M_{2}}})\vec{\beta} \right\rVert = | \beta_{1,1}| = \mathfrak{m}_{1} > 0$. Otherwise, define $\vec{\beta'}$ as $\vec{\beta}$ with the first entry removed and $\mat{M'}$ as $\mat{M_{1}}\otimes\mat{M_{2}}$ with the first column removed. Furthermore, set $\vec{\delta} = (\mat{M_{1}}\otimes\mat{M_{2}})
  \begin{pmatrix}
    \beta_{1,1} & 0 & \cdots & 0
  \end{pmatrix}
  ^{\T}$. Then, $(\mat{M_{1}}\otimes\mat{M_{2}})\vec{\beta} = \mat{M'}\vec{\beta'} + \vec{\delta} \neq \veczero$. As we are dealing with the finite dimensional Hilbert space $\CC^{d_{1}d_{2} - 1}$ there is a~\cite[Thm.\ 3.6.4]{hilbertspaces} (unique) best approximation $\vec{\beta^{*}}$ minimizing $\operatorname{dist}(\vec{\beta'}) = \left\lVert \mat{M'}\vec{\beta'} + \vec{\delta} \right\rVert^{2}$ for all $\vec{\beta'} \in \CC^{d_{1}d_{2} - 1}$. We define $C_{2,1} = \operatorname{dist}(\vec{\beta^{*}}) > 0$.

Assume that $|((\mat{M_1} \otimes \mat{M_2}) \vec{\beta})_u| < \sqrt{\frac{C_{2,1}}{d_{1}d_{2}}}$ for all $u \in \{1, \ldots, d_1\cdot d_2 \}$.
This would imply $(((\mat{M_1} \otimes \mat{M_2}) \vec{\beta})_u)^2 < \frac{C_{2,1}}{d_{1}d_{2}}$ for all $u$ and hence, $\sum_{u \in\{1, \ldots, d_1\cdot d_2 \}} (((\mat{M_1} \otimes \mat{M_2}) \vec{\beta})_u)^2 < C_{2,1}$, i.e., $\left\lVert (\mat{M_{1}}\otimes\mat{M_{2}}) \vec{\beta}\right\rVert^{2} < C_{2,1}$ in contradiction to our observation above.
Hence, for arbitrary $\gamma_{1},\ldots,\gamma_{l}$ with $|\gamma_{1}| = \mathfrak{m}_{1} > 0$ we have $\abs{\sum_{i=1}^{l} \gamma_{i} \zeta_{1,i}^{j} \zeta_{2,i}^{k}} \geq \sqrt{\frac{C_{2,1}}{d_{1}d_{2}}} > 0$ for some $(j,k) \in \{0,\dots,d_{1}-1\}\otimes\{0,\dots,d_{2}-1\} \subseteq {\{0,\dots,l-1\}}^{2}$ implying the claim of the second item by choosing $C_{2} > 0$ correspondingly.

\end{proof}

\finiteexecutionleadingtotermination*{}
\begin{proof}
  W.l.o.g.\ assume $d = \pmax$ as the other case is symmetric.
  We exclude the case $\constrainttermgroup_{\pmax,c,\vecx} = \emptyset$ as in this case one can simply set $g_f$ to be the empty path and $u=0$, resulting in ${(\Val_{\vecx}(f\, g_f))}_{c}
    = 0$.

  For all finite executions $f\in\Path$, we have
  \begin{equation*}
    \sum_{i\in\constrainttermgroup_{\pmax,c,\vecx}} \zeta_{i,\matA}^{|f|_{\symMat{A}}} \zeta_{i,\matB}^{|f|_{\symMat{B}}} \gamma_{c,i}(\vecx) = \sum_{i\in\mathfrak{R}_{\pmax,c,\vecx}} \gamma_{c,i}(\vecx) + \sum_{i\in\mathfrak{C}_{\pmax,c,\vecx}} \zeta_{i,\matA}^{|f|_{\symMat{A}}} \zeta_{i,\matB}^{|f|_{\symMat{B}}} \gamma_{c,i}(\vecx) .
  \end{equation*}

  First note that there are constants $C \in \AA_{<0}$ and $K\in\NN$ such that for all $f\in\Path$ we have
  \begin{equation}
    \label{FiniteExecutionTerminationUpperBound}
    \sum_{i\in\constrainttermgroup_{\pmax,c,\vecx}} \zeta_{i,\matA}^{|f|_{\symMat{A}}+j} \zeta_{i,\matB}^{|f|_{\symMat{B}}+k} \gamma_{c,i}(\vecx) \leq C < 0
  \end{equation}
  for some $j,k\in\{0,\ldots,K\}$.
  To see this, note that
  \[\Re(\sum_{i\in\mathfrak{C}_{\pmax,c,\vecx}}
    \zeta_{i,\matA}^{|f|_{\symMat{A}}} \zeta_{i,\matB}^{|f|_{\symMat{B}}}
    \gamma_{c,i}(\vecx)) = \sum_{i\in\mathfrak{C}_{\pmax,c,\vecx}}
    \zeta_{i,\matA}^{|f|_{\symMat{A}}} \zeta_{i,\matB}^{|f|_{\symMat{B}}}
    \gamma_{c,i}(\vecx)\] by \Cref{rem:sums_of_constraint_term_groups_are_real_valued}. If $\Re(\sum_{i\in\mathfrak{C}_{\pmax,c,\vecx}}
    \zeta_{i,\matA}^{|f|_{\symMat{A}}} \zeta_{i,\matB}^{|f|_{\symMat{B}}}
    \gamma_{c,i}(\vecx)) = 0$, then $\constrainttermgroup_{\pmax,c,\vecx} \neq \emptyset$ implies
  \[\sum_{i\in\constrainttermgroup_{\pmax,c,\vecx}} \zeta_{1,\matA}^{|f|_{\symMat{A}}+j}
    \zeta_{2,\matB}^{|f|_{\symMat{B}}+k} \gamma_{c,i}(\vecx) =
    \sum_{i\in\mathfrak{R}_{\pmax,c,\vecx}} \gamma_{c,i}(\vecx) < 0\] which implies
  \cref{FiniteExecutionTerminationUpperBound}.
  Otherwise, \Cref{lem:braverman_generalisation}(\labelcref{it:braverman_generalisation_itemB}) implies
  \begin{align*}
    \sum_{i\in\mathfrak{C}_{\pmax,c,\vecx}} \zeta_{i,\matA}^{|f|_{\symMat{A}}+j}
    \zeta_{i,\matB}^{|f|_{\symMat{B}}+k} \gamma_{c,i}(\vecx)
     & = \Re(\sum_{i\in\mathfrak{C}_{\pmax,c,\vecx}} \zeta_{i,\matA}^{|f|_{\symMat{A}}+j}
    \zeta_{i,\matB}^{|f|_{\symMat{B}}+k} \gamma_{c,i}(\vecx)) \\
     & \leq C' \quad \text{for some $C' \in \AA_{<0}$,}
  \end{align*}
  which implies \cref{FiniteExecutionTerminationUpperBound} since $\sum_{i\in\mathfrak{R}_{\pmax,c,\vecx}} \gamma_{c,i}(\vecx) \leq 0$.
  In order to satisfy the requirement of pairwise different $(\zeta_{i,\matA},\zeta_{i,\matB})$ that is needed for \Cref{lem:braverman_generalisation}, one has to group the terms correspondingly.
  For all $f\in\Path$, we define $g_f = \symMat{A}^{j} \symMat{B}^{k}$ and let $u = 2K$ such that $|g_f| \leq u$.
  Then we have
  \begin{equation}
    \label{sign new path}
    \sum_{i\in\constrainttermgroup_{\pmax,c,\vecx}} \zeta_{i,\matA}^{|f\,
    g_f|_{\symMat{A}}} \zeta_{i,\matB}^{|f \, g_f|_{\symMat{B}} }\gamma_{c,i}
    (\vecx)
    \leq C < 0,
  \end{equation}
  i.e., after the path $f \, g_f$, the dominating constraint is negative.

  Let $\rho = |C|$.
  \Cref{lem:domination_of_eventually_dominating_constraint_term_groups} then ensures the existence of constants $\eps' \in \AA_{> 0}$, $r'\in\NN$, and $l'\in\NN_{>0}$ such that for all $f'\in\Path$ with $|f'| \geq l'$, $\URVar(f') \in [0,\eps']$, and $|\URVar(f')\cdot |f'|| = \URVar(f') \cdot |f'| = |f'|_{\symMat{A}} - p \cdot |f'| \geq r'$ we have $\sign({(\Val_{\vecx} (f'))}_{c}) = \sign(\sum_{i\in\constrainttermgroup_{\pmax,c,\vecx}}
    \zeta_{1,\matA}^{|f'|_{\symMat{A}}} \zeta_{2,\matB}^{|f'|_{\symMat{B}}}
    \gamma_{c,i}(\vecx))$.

  We now want to apply \Cref{lem:domination_of_eventually_dominating_constraint_term_groups} to the actual path $f \, g_f$ (i.e., where $f' = f \, g_f$).
  We choose $\eps$ with $0 < \eps < \eps'$ small enough and $l \geq l'$ large enough such that $\eps l + u \leq \eps' l$ and $r > r'$ large enough such that $r - u \geq r'$.
  Let $f\in\Path$ satisfy the prerequisites of \Cref{lem:braverman_generalisation_corollary}.
  Then, we have
  \begin{equation}
    \label{finite execution 1}
    |fg_f|
    \geq |f| \geq l \geq l'.
  \end{equation}
  Moreover, we also have
  \begin{equation}
    \label{finite execution 2}
    \URVar (f\,g_f) \in [0,\eps'].
  \end{equation}
  To show this, we first prove $\URVar (f\,g_f) \geq 0$.
  We have
  \begin{align*}
    \URVar(f\, g_{f}) \cdot |f \, g_{f}| & = |f\,g_f|_{\symMat{A}} - p \cdot |f \, g_f| \\
                                         & = |f|_{\symMat{A}} - p \cdot |f| + |g_f|_{\symMat{A}} - p \cdot |g_f| \\
                                         & = \URVar(f) \cdot |f|+ |g_f|_{\symMat{A}} - p \cdot |g_f| \\
                                         & \geq \URVar(f) \cdot |f| - |g_f| \\
                                         & \geq \URVar(f) \cdot |f| - u \tag{as $|g_{f}| \leq u$} \\
                                         & \geq r - u \tag{by the prerequisites} \\
                                         & \geq r' \tag{by the choice of $r$} \\
                                         & \geq 0
  \end{align*}
  and thus, $\URVar (f\,g_f) \geq 0$.
  To prove $\URVar (f\,g_f) \leq \eps'$, note that
  \[ \begin{array}{rcl}
       &                     & \URVar (f\,g_f) \leq \eps'                                             \\
       & \Longleftrightarrow & \URVar (f\,g_f) \cdot |f \, g_f| \leq \eps' \cdot |f \, g_f|           \\
       & \Longleftrightarrow & |f\,g_f|_{\symMat{A}} - p \cdot |f \, g_f| \leq \eps' \cdot |f \, g_f| \\
       & \Longleftrightarrow & \underbrace{|f|_{\symMat{A}} - p \cdot |f|}_{= \, \URVar(f) \cdot
      |f| \, \leq \, \eps \cdot |f|} +
      \underbrace{|g_f|_{\symMat{A}} - p \cdot |g_f|}_{\leq \, |g_f| \leq u}
      \leq \eps' \cdot |f \, g_f|                                                                     \\
       & \Longleftarrow      & \eps \cdot |f| + u \leq \eps' \cdot |f \, g_f|                         \\
       & \Longleftrightarrow & \eps \cdot l + \eps \cdot (|f| - l) + u \leq \eps' \cdot (l + |f \,
      g_f| - l)                                                                                       \\
       & \Longleftarrow      & \eps \cdot l + \eps' \cdot (|f| - l) + u \leq \eps' \cdot (l + |f \,
      g_f| - l)                                                                                       \\
       & \Longleftrightarrow & \eps \cdot l + u \leq \eps' \cdot l
    \end{array}
  \]
  Finally, we also have
  \begin{equation}
    \label{finite execution 3}
    |\URVar(f\,g_f) \cdot |f\,g_f|| \geq r'.
  \end{equation}
  The reason is that
  \[ \begin{array}{rcl}
      |\URVar(f\,g_f) \cdot |f\,g_f||
       & =    & \URVar(f\,g_f) \cdot |f\,g_f|                                                    \\
       & =    & |f\,g_f|_{\symMat{A}} - p \cdot |f \, g_f|                                       \\
       & \geq & |f|_{\symMat{A}} - p \cdot |f \, g_f|                                            \\
       & =    & \underbrace{|f|_{\symMat{A}} - p \cdot |f|}_{= \, \URVar(f) \cdot |f| \, \geq \,
      r} - p \cdot |g_f|                                                                         \\
       & \geq & r - p \cdot |g_f|                                                                \\
       & \geq & r - p \cdot u                                                                    \\
       & \geq & r - u                                                                            \\
       & \geq & r'
    \end{array}\]
  Hence, by \cref{finite execution 1}, \cref{finite execution 2}, and \cref{finite execution 3} we can apply \Cref{lem:domination_of_eventually_dominating_constraint_term_groups} to the path $f \, g_f$ and conclude that $\sign( (\Val_{\vecx} (f \, g_f))_{c} ) = \sign(\sum_{i\in\constrainttermgroup_{\pmax,c,\vecx}} \zeta_{i,\matA}^{|f\, g_f|_{\symMat{A}}} \zeta_{i,\matB}^{|f\, g_f|_{\symMat{B}} }\gamma_{c,i}
    (\vecx))$. Thus, by \cref{sign new path} we obtain ${(\Val_{\vecx} (fg_f))}_{c} < 0$.
\end{proof}

\dualpositiveeigenvaluesforeventuallydominatingconstraints*{}
\begin{proof}
  We prove \Cref{lem:dual_positive_eigenvalues_for_eventually_dominating_constraints} by contradiction.
  To that end, let $\vecx\in\ENT$ and let $d\in\{\nmax,\pmax\}$ such that for all $c\in[m]$ we have $\sum_{i\in\mathfrak{R}_{d,c,\vecx}} \gamma_{c,i}(\vecx) > 0$.

  W.l.o.g., we may assume $\vecx \in \NT$ as otherwise one would simply consider $\vecy = A^{j}B^{k}\vecx$ where $j,k\in\NN$ such that $\vecy \in \NT$ according to the definition of $\ENT$.
  Then it suffices to show that $\sum_{i\in\mathfrak{R}_{d,c,\vecy}} \gamma_{c,i}(\vecy) > 0$, because that implies $\sum_{i\in\mathfrak{R}_{d,c,\vecx}} \gamma_{c,i}(\vecx) > 0$.
  To see this, recall the definition of $\gamma_{c,i}$.
  We had ${\left(\Val_{\vecx} (f)\right)}_{c} = \sum_{i\in[n]}
    a_{i}^{|f|_{\symMat{A}}}b_{i}^{|f|_{\symMat{B}}}\gamma_{c,i}(\vecx)$. Hence,
  \begin{equation*}
    {\left(\Val_{\vecy} (f)\right)}_{c}
      = {\left(\Val_{A^j B^k \vecx} (f)\right)}_{c}
    = \sum_{i\in[n]} a_{i}^{|f|_{\symMat{A}}+j}b_{i}^{|f|_{\symMat{B}}+k}\gamma_{c,i}(\vecx)
    = \sum_{i\in[n]} a_{i}^{|f|_{\symMat{A}}}b_{i}^{|f|_{\symMat{B}}}\gamma_{c,i}(\vecy),
  \end{equation*}
  see \Cref{GammaAB}.
  Thus, $\gamma_{c,i}(\vecy) = a_i^j b_i^k \gamma_{c,i}(\vecx)$.
  Note that $\mathfrak{R}_{d,c,\vecy}= \mathfrak{R}_{d,c,\vecx}$ holds, since by the prerequisite $\sum_{i\in\mathfrak{R}_{d,c,\vecy}} \gamma_{c,i}(\vecy) > 0$ we have $\mathfrak{R}_{d,c,\vecy} \neq \emptyset$ and $\gamma_{c,i}(\vecy)$ results from $\gamma_{c,i}(\vecx)$ by multiplication with positive reals $a_i^j, b_i^k$.
  Hence, if $\sum_{i\in\mathfrak{R}_{d,c,\vecy}} \gamma_{c,i}(\vecy) = \sum_{i\in\mathfrak{R}_{d,c,\vecy}} a_i^j b_i^k \gamma_{c,i}(\vecx) = \sum_{i\in\mathfrak{R}_{d,c,\vecx}} a_i^j b_i^k \gamma_{c,i}(\vecx)$ is positive, then $\sum_{i\in\mathfrak{R}_{d,c,\vecx}}
    \gamma_{c,i}(\vecx)$ is positive as well. To see this, note that for $i \in \mathfrak{R}_{d,c,\vecx}$, all $(a_{i}, b_{i})$ are (componentwise) equal. The reason is that that by \Cref{lem:shared_modulus_of_eigenvalues_in_constraint_term_groups}, all $a_{i}$ have the same modulus (absolute value) and similarly, also all $b_i$ have the same modulus. Moreover, as $i \in \mathfrak{R}_{d,c,\vecx}$ we have $a_{i},b_{i} \in \AA_{>0}$ by definition of $\mathfrak{R}_{d,c,\vecx}$.

  For a proof by contradiction, we assume the existence of $c_{\nmax},c_{\pmax} \in [m]$ such that we have $\sum_{i\in\mathfrak{R}_{\nmax,c_{\nmax},\vecx}} \gamma_{c_{\nmax},i}(\vecx) \leq 0$ and $\sum_{i\in\mathfrak{R}_{\pmax,c_{\pmax},\vecx}} \gamma_{c_{\pmax},i}(\vecx) \leq 0$.

  We denote the constants from \Cref{lem:braverman_generalisation_corollary} as $\eps_{\nmax}, r_{\nmax}, l_{\nmax}, u_{\nmax}$ for $c = c_{\nmax}, d=\nmax$ and $\eps_{\pmax},r_{\pmax},l_{\pmax},u_{\pmax}$ for $c = c_{\pmax}, d=\pmax$.
  We define $\eps = \min\{\eps_{\nmax},\eps_{\pmax}\}$, $r = \max\{r_{\nmax},r_{\pmax}\}$, $u = \max\{u_{\nmax},u_{\pmax}\}$, and choose $l \geq \max\{l_{\nmax},l_{\pmax},u\}$ large enough such that $\eps \cdot l \geq r+1$.\footnote{This ensures that if $|f|$ is increased by 1, then the value of $\URVar(f) \cdot |f|$ changes by at most 1.
    Therefore, if $\URVar(f) \cdot |f| \geq r$ for the first time, then we also have $\URVar(f) \cdot |f| \leq \eps \cdot |f|$, i.e., $\URVar(f) \cdot |f|$ has reached a safe region.}
  Thus, for every path $f\in\Path$ leading to one of the two ``safe regions'', i.e., $|f| \geq l$, $|\URVar(f) \cdot |f|| \geq r$, and $\URVar(f) \in [-\eps,\eps]$, there is some $g_f$ with $|g_f| \leq u$ such that ${(\Val_{\vecx} (f \, g_f))}_{c_{\nmax}} \leq 0$ or ${(\Val_{\vecx} (f \, g_f))}_{c_{\pmax}} \leq 0$ and hence the path $f \,g_f$ is terminating.

  For all $i\in\NN$, we introduce the ($\F$-measurable) random variables $B_{i}\colon \Runs \to \NN \cup \{\infty\}$ such that $B_{i}$ maps runs to an index where one is in one of the ``safe regions'' specified by the constants $\eps$, $l$, and $r$ for at least the $i$-th time: 
  We ensure that $B_{i}(\run_1 \ldots) - B_{i-1}(\run_1 \ldots) \geq l \geq u$ to provide enough ``space'' for a terminating continuation $g$ of length $|g| \leq u$ to occur between the indices $B_{i}(\run_1 \ldots)$ and $B_{i-1}(\run_1 \ldots)$.
  Let $\run = \run_{1}\ldots \in \Runs$ such that $h = B_{i} (\run) < \infty$ for some $i\in\NN_{>0}$.
  Then, $(\Val_{\vecx}(\run_{1}\ldots\run_{h} \, g_{\run_{1}\ldots\run_{h}}))_{c} \leq 0$ for some $c\in\{c_{\nmax},c_{\pmax}\}$ and hence
  \begin{equation}
    \label{termination time claim}
    \LRVar_{\vecx} (\run') \leq h + |g_{\run_{1}\ldots\run_{h}}| \leq h + u \leq h + l
  \end{equation}
  for all runs $\run'\in\Pre_{\run_{1}\ldots\run_{h} \, g_{\run_{1}\ldots\run_{h}}}$.

  We have $\P(B_{i} = \infty) = 0$ for all $i\in\NN$ which allows us to ignore all runs $\run\in\Runs$ with $B_{i}(\run) = \infty$.
  A corresponding proof can be found at the end.
  Let $f_{B_{i}}: \Runs \to \Path$ denote the function such that for all $i\in\NN$ and all runs $\run =\run_1 \run_2 \ldots$ we have $f_{B_i}(\run) = \run_{1}\ldots\run_{B_{i}(\run)}$.
  Moreover, for all $f\in\Path$, let $\Pre_{f \, \overline{g_f}}$ denote the set $\Pre_{f}\setminus\Pre_{f \, g_f}$.
  We elaborate on the value of $\E (\LRVar_{\vecx})$.
  Let $\Pre_{f_{B_{i}} \, g_{f_{B_{i}}}} = \{ \run \in \Runs \mid \run \in \Pre_{f_{B_{i}}(\run) \, g_{f_{B_{i}}(\run)}} \}$.
  Given a (measurable) set $A \subseteq \Runs$, let $\ind_{A}\colon \Runs \to \{0,1\}$ denote its indicator function, i.e., the random variable that maps $\run\in\Runs$ to $1$ iff $\run \in A$.
  Then for any $\run \in \Runs$, we have $\ind_{\Pre_{f_{B_{i}} \, g_{f_{B_{i}}}}}(\run) = 1$ if $\run$ has the prefix $f_{B_{i}}(\run)\, g_{f_{B_{i}}(\run)}$ and otherwise, we have $\ind_{\Pre_{f_{B_{i}} \, g_{f_{B_{i}}}}}(\run) = 0$.
  So $\ind_{\Pre_{f_{B_{i}} \, g_{f_{B_{i}}}}}$ maps a run $\run = \run_1 \run_2 \ldots$ to 1 iff $f_{B_i}(\run)$ reaches the region specified by the constants $\eps$, $l$, and $r$, and afterwards, one has the path $g_{\run_1 \ldots \run_{B_i(\run)}}$ which leads to termination.
  Similarly, $\ind_{\Pre_{f_{B_{i}} \, g_{f_{B_{i}}}}} \cdot \prod_{i'=1}^{i-1} \ind_{\Pre_{f_{B_{i'}} \, \overline{g_{f_{B_{i'}}}}}}$ maps a run $\run$ to 1 iff it starts with $f_{B_{i}}(\run)\, g_{f_{B_{i}}(\run)}$, but not with $f_{B_{i'}}(\run)\, g_{f_{B_{i'}}(\run)}$ for any $i' < i$.
  Thus, after the prefix $\run_1 \ldots\run_{B_i(\run)}$, one has the path that leads to termination, but the earlier prefixes $\run_1 \ldots\run_{B_{i'}(\run)}$ were not followed by the corresponding paths $g_{\run_1 \ldots\run_{B_{i'}(\run)}}$.
  So if $\ind_{\Pre_{f_{B_{i}} \, g_{f_{B_{i}}}}} \cdot \prod_{i'=1}^{i-1} \ind_{\Pre_{f_{B_{i'}} \, \overline{g_{f_{B_{i'}}}}}}$ maps a run $\run$ to 1, then the number of steps until termination is at most $B_i(\run) + |g_{\run_1 \ldots\run_{B_{i}(\run)}}| \leq B_i(\run) + l$, see \cref{termination time claim}.
  \begin{align*}
    \E (\LRVar_{\vecx})
     & = \E (\min \{k\in\NN_{>0} \mid {(\Val_{\vecx}(\run_{1}\ldots\run_{k}))}_{c} \leq 0 \text{ for some } c\in[m]\}) \\
     & \leq \E (\min \{k\in\NN_{>0} \mid {(\Val_{\vecx}(\run_{1}\ldots\run_{k}))}_{c} \leq 0 \text{ for some } c\in\{c_{\nmax},c_{\pmax}\}\}) \\
     & \leq \E \left( \sum_{i=1}^{\infty} \left( (B_{i} + l) \cdot \ind_{\Pre_{f_{B_{i}} \, g_{f_{B_{i}}}}} \cdot\prod_{i'=1}^{i-1} \ind_{\Pre_{f_{B_{i'}} \, \overline{g_{f_{B_{i'}}}}}} \right) \right) \\
     & \leq \E \left( \sum_{i=1}^{\infty} \left( \left(l \cdot \ind_{\Pre_{f_{B_{i}} \, g_{f_{B_{i}}}}} + B_{i} - B_{i-1}\right) \cdot \prod_{i'=1}^{i-1} \ind_{\Pre_{f_{B_{i'}} \, \overline{g_{f_{B_{i'}}}}}} \right)\right) \\
     & = \sum_{i=1}^{\infty }\E \left( \left(l \cdot \ind_{\Pre_{f_{B_{i}} g_{f_{B_{i}}}}} \!\! + B_{i} - B_{i-1}\right) \cdot \prod_{i'=1}^{i-1}\! \ind_{\Pre_{f_{B_{i'}} \overline{g_{f_{B_{i'}}}}}} \right) \tag{Mon.\ Conv.}
  \end{align*}
  The last line holds because of monotone convergence, i.e., the expected value of the limit is the limit of the expected value (for monotonic sequences of random variables as above, since all addends are non-negative and hence the sum is monotonically increasing).
  To see why the penultimate line holds, consider some $f_{B_{i}}$ for $i\in\NN_{>0}$ with $f_{B_{i'} \, g_{B_{i'}}}$ not being a prefix of $f_{B_{i}}$ for all $i'\in\{1,\ldots,i-1\}$.
  The contribution of runs in $\Pre_{f_{B_{i}} \, g_{f_{B_{i}}}}$ to the expected value in the third line is then given by $B_{i} + l$ and their contribution in the penultimate line is given by $l + \sum_{i'=1}^{i} (B_{i'} - B_{i'-1}) = B_{i} + l$.
  So in the third line, such a run only contributes one addend (viz.\ $B_{i} + l$) to the expected value, whereas in the penultimate line, such a run contributes $i$ addends (i.e., $B_{i'} - B_{i'-1}$ for all $1 \leq i' \leq i$) plus the addend $l$ (once).
  We have
  \begin{equation}
    \E\left((B_{i} - B_{i-1}) \prod_{i'=1}^{i-1} \ind_{\Pre_{f_{B_{i'}} \overline{g_{f_{B_{i'}}}}}} \right) \leq {\left( 1 - {(\min\{p,1-p\})}^{l}\right)}^{i-1} (l+C)
    \label{eq:positive_eigenvalues_dominating_constraints1}
  \end{equation}
  for all $i\in\NN_{>0}$ and some constant $C>0$.
  A corresponding proof is given at the end.
  By combining both inequations, one obtains
  \begin{align*}
           & \E (\LRVar_{\vecx}) \\
    \leq{} & \sum_{i=1}^{\infty }\E \left( \left(l \cdot \ind_{\Pre_{f_{B_{i}} \, g_{f_{B_{i}}}}} \!\! + B_{i} - B_{i-1}\right) \prod_{i'=1}^{i-1}\! \ind_{\Pre_{f_{B_{i'}} \, \overline{g_{f_{B_{i'}}}}}} \right) \\
    ={}    & \sum_{i=1}^{\infty} \left( \E \left( l \cdot \ind_{\Pre_{f_{B_{i}}\, g_{f_{B_{i}}}}} \cdot \prod_{i'=1}^{i-1} \ind_{\Pre_{f_{B_{i'}} \, \overline{g_{f_{B_{i'}}}}}} \right) + \E \left( (B_{i} - B_{i-1}) \cdot \prod_{i'=1}^{i-1} \ind_{\Pre_{f_{B_{i'}}\, \overline{g_{f_{B_{i'}}}}}} \right) \right) \\
    ={}    & \sum_{i=1}^{\infty} \left( l \cdot \E \left(\ind_{\Pre_{f_{B_{i}}\, g_{f_{B_{i}}}}} \cdot \prod_{i'=1}^{i-1} \ind_{\Pre_{f_{B_{i'}} \, \overline{g_{f_{B_{i'}}}}}} \right) + \E \left( (B_{i} - B_{i-1}) \cdot \prod_{i'=1}^{i-1} \ind_{\Pre_{f_{B_{i'}}\, \overline{g_{f_{B_{i'}}}}}} \right) \right) \\
    \leq{} & \sum_{i=1}^{\infty} \left( l \cdot \E \left(\prod_{i'=1}^{i-1} \ind_{\Pre_{f_{B_{i'}} \, \overline{g_{f_{B_{i'}}}}}} \right) + \E \left( (B_{i} - B_{i-1}) \cdot \prod_{i'=1}^{i-1} \ind_{\Pre_{f_{B_{i'}}\, \overline{g_{f_{B_{i'}}}}}} \right) \right) \\
    ={}    & \sum_{i=1}^{\infty} \left( l \cdot \E \left(\ind_{\bigcap_{i'=1}^{i-1} \Pre_{f_{B_{i'}} \, \overline{g_{f_{B_{i'}}}}}} \right) + \E \left( (B_{i} - B_{i-1}) \cdot \prod_{i'=1}^{i-1} \ind_{\Pre_{f_{B_{i'}}\, \overline{g_{f_{B_{i'}}}}}} \right) \right) \\
    ={}    & \sum_{i=1}^{\infty} \left( l \cdot \P \left(\bigcap_{i'=1}^{i-1} \Pre_{f_{B_{i'}} \, \overline{g_{f_{B_{i'}}}}} \right) + {( 1- {(\min\{p,1-p\})}^{l})}^{i-1} \cdot \left( l + C \right) \right) \tag{by \cref{eq:positive_eigenvalues_dominating_constraints1}} \\
    \leq{} & (2l + C) \sum_{i=1}^{\infty} {\left( 1- {\left(\min\{p,1-p\}\right)}^{l}\right)}^{i-1} < \infty .
  \end{align*}
  For the last line, $\Pre_{f_{B_{i'}} \, \overline{g_{f_{B_{i'}}}}}$ is the set of all runs except those runs $\run$ where after the path $f_{B_{i'}(\run)}$ one has the path $g_{f_{B_{i'}(\run)}}$.
  Let $u' \leq u \leq l$ be the length of $g_{f_{B_{i'}(\run)}}$, i.e., it has the form $\run_1' \ldots \run_{u'}'$.
  The probability for choosing $\run_1'$ is at least $\min\{p,1-p\}$ and thus, the probability for choosing $g_{f_{B_{i'}(\run)}}$ is at least $(\min\{p,1-p\})^{u'} \geq (\min\{p,1-p\})^{u} \geq (\min\{p,1-p\})^{l}$.
  Hence, the probability for \emph{not} choosing $g_{f_{B_{i'}(\run)}}$ (i.e., the probability for a run in $\Pre_{f_{B_{i'}} \, \overline{g_{f_{B_{i'}}}}}$) is at most $1 - (\min\{p,1-p\})^{u'} \leq 1 - (\min\{p,1-p\})^{l}$.
  Thus, the probability for a run in $\bigcap_{i'=1}^{i-1} \Pre_{f_{B_{i'}} \, \overline{g_{f_{B_{i'}}}}}$ is at most $\left(1 - (\min\{p,1-p\})^{l}\right)^{i-1}$.

  Hence, we have $\vecx \not \in \NT$ which contradicts our assumption.

  \medskip

  We now prove the missing parts in the proof of \Cref{lem:dual_positive_eigenvalues_for_eventually_dominating_constraints}, i.e., we show $\P(B_{i} = \infty) = 0$ for all $i\in\NN$ and we prove the inequation \cref{eq:positive_eigenvalues_dominating_constraints1}.

Assume the existence of constants $l\in\NN_{>0}$, $r\in\NN$, and $\eps \in \AA_{>0}$ such that $\eps \cdot l \geq r + 1$.
The intuition is that we consider a prefix of a run $\run$ where we have first performed $B_{i}(\run) + l$ steps (i.e., $\URVar(f) \cdot |f|$ was in a safe region for at least the $i$-th time and then we performed $l$ more steps in an additional path $f'$).
These $l$ additional steps would suffice for the sub-path $g_f$, since its length is at most $l$.
We now want to approximate how many steps one needs after the path $f f'$ in order to reach a safe region again.

\begin{figure}
  \begin{center}
    {
\begin{tikzpicture}[scale=1.1,every node/.style={font=\tiny}]
  \draw [draw=none, fill=SeaGreen!5] (2,0.75) -- (5,0.75) -- (5,2) -- (4,2) -- (2,1) -- (2,0.75);
  \draw [draw=none, line space=5pt,pattern=my north east lines, pattern color=SeaGreen!30] (2,0.75) -- (5,0.75) -- (5,2) -- (4,2) -- (2,1) -- (2,0.75);

  \draw [draw=none, fill=SeaGreen!5] (2,-0.75) -- (5,-0.75) -- (5,-2) -- (4,-2) -- (2,-1) -- (2,-0.75);
  \draw [draw=none,line space=5pt,pattern=my north east lines, pattern color=SeaGreen!30] (2,-0.75) -- (5,-0.75) -- (5,-2) -- (4,-2) -- (2,-1) -- (2,-0.75);

  \draw [->,line width=0.6pt] (-1,0)--(5,0) node[below]{\footnotesize $|f|$};
  \draw [->,line width=0.6pt] (0,-2)--(0,2) node[left]{\footnotesize $\URVar(f) \cdot |f|$};

  \node[Brown] at (2.5, 1.75) (A) {(B)};
  \node[Brown] at (2.5, -1.75) (A) {(C)};

  \draw[domain=0:3, smooth, variable=\x, gray] plot ({\x}, {0.6875*\x*\x*\x - 3.313*\x*\x + 4.125*\x});
  \node[gray] at (0.8, 1.65) (A) {$\URVar(f)\cdot|f|$};
  \node[gray] at (3,1.125)[circle,fill,inner sep=0.75pt]{};

  \draw [red] (0,0) -- (4,2) node[below]{$\eps |f|$};
  \draw [red] (0,0) -- (4,-2) node[above]{$-\eps |f|$};

  \draw [blue] (-0.75, 0.75) node[below,xshift=0.25cm]{$ r$} -- (5, 0.75);
  \draw [blue] (-0.75,-0.75) node[above,xshift=0.25cm]{$-r$} -- (5,-0.75);

  \draw [Green] (2,-2) node[below]{$l$} -- (2,2);

  \draw [Brown,decorate,decoration={brace,mirror,amplitude=5pt,aspect=0.65}] (3,-0.70) -- (3,0.70) node [pos=0.65,right,xshift=3pt] {(A)};
\end{tikzpicture}
}
  \end{center}
  \caption{\label{fig:safe_with_ABC} Variant of \Cref{fig:safe} Indicating the Considered Safe Regions (A), (B), and (C)}
\end{figure}
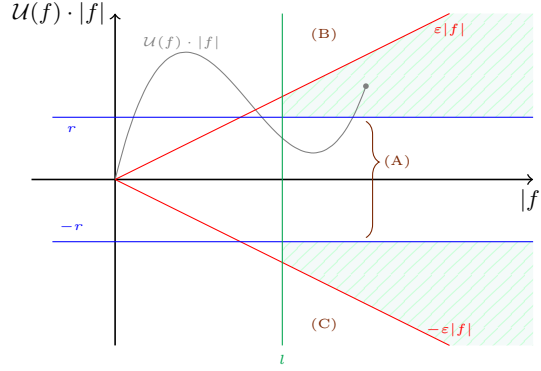

Before proving the parts that were missing in the proof of \Cref{lem:dual_positive_eigenvalues_for_eventually_dominating_constraints}, we define three ($\F$-measurable) random variables $T_{\mathrm{in}}, T_{\mathrm{out},1}, T_{\mathrm{out},2}\colon \Runs \to \NN \cup \{\infty\}$.
We need the random variable $T_{\mathrm{in}}$ for the case where $\URVar(f f') \cdot |f f'|$ is in the (unsafe) region $(A)$ of \Cref{fig:safe_with_ABC} (i.e., where $|\URVar(f f') \cdot |f f'|| < r$).
If one continues with a path $f''$ where $|\URVar(f'') \cdot |f''|| \geq 2r$, then during the execution of the path $f''$ one must reach a safe region.
The reason is that $\URVar(f f' f'') \cdot |f f' f''|$ cannot ``jump'' over the safe region, since when $|f|$ increases by $1$, then $\URVar(f) \cdot |f|$ changes by at most 1.
Note that we made sure that the ``height'' of the safe region is at least 1 by requiring $\eps \cdot l \geq r+ 1$.
We will show that in expectation, one only needs a path $f''$ of constantly many steps (bounded by some $C_{\mathrm{in}}$) until we have $|\URVar(f'') \cdot |f''|| \geq 2r$ and one thus reaches the safe region again.

The random variable $T_{\mathrm{out},1}$ is needed if $\URVar(ff') \cdot |ff'|$ is in the (unsafe) region $(B)$ of \Cref{fig:safe_with_ABC} (i.e., where $\URVar(ff') \cdot |ff'| > \eps \cdot |ff'|$).
(Analogously, $T_{\mathrm{out},2}$ is needed if $\URVar(ff') \cdot |ff'|$ is in the (unsafe) region $(C)$ of \Cref{fig:safe_with_ABC}.) Since we performed $B_i(\run) + l$ steps, $\URVar(ff') \cdot |ff'|$ is at most $l$ above the border $\eps \cdot |f|$ of the upper safe region.
If one continues with a path $f''$ where $\URVar(f'') \cdot |f''| \leq \eps \cdot |f''| -l$, then ``$-l$'' ensures that one ``undoes'' the previous increase by at most $l$ and then one is below the (red) border of the upper safe region in \Cref{fig:safe_with_ABC}.
We will show that in expectation, one only needs a path $f''$ of constantly many steps (bounded by some $C_{\mathrm{out}}$) until one has $\URVar(f'') \cdot |f''| \leq \eps \cdot |f''|-l$ and one thus reaches the safe region again.
To prove this, we will use the idea of a ranking supermartingale (see, e.g., \cite[Def.\ 5.4]{fioriti15}) and the initial value of the ranking supermartingale will yield the constant upper bound $C_{\mathrm{out}}$.
In the following, recall again that $\URVar(f) \cdot |f| = |f|_{\symMat{A}} - p \cdot |f|$.
\begin{align*}
  T_{\mathrm{in}}    & = \min \{ j \in \NN \mid |\URVar (\run_{1}\ldots\run_{j}) \cdot j| \geq 2r \} \\
                     & = \min \{j \in \NN \mid \left||\run_{1}\run_{2}\ldots\run_{j}|_{\symMat{A}} - jp\right| \geq 2r \} \\
  T_{\mathrm{out},1} & = \min \{ j \in \NN \mid \URVar (\run_{1}\ldots\run_{j}) \cdot j \leq \eps j - l\} \\
                     & = \min \{ j \in \NN \mid |\run_{1}\ldots\run_{j}|_{\symMat{A}} - jp + l - \eps j \leq 0\} \\
  T_{\mathrm{out},2} & = \min \{j \in \NN \mid \URVar (\run_{1}\ldots\run_{j}) \cdot j \geq l - \eps j\} \\
                     & = \min \{ j \in \NN \mid - |\run_{1}\ldots\run_{j}|_{\symMat{A}} + jp + l - \eps j \leq 0\}
\end{align*}
where we again use the convention $\min \emptyset = \infty$.

\bigskip

\textbf{On $\mathbf{\E (T_{\mathrm{in}})}$}: Recall $p\in(0,1)$ and set $\delta = \left\lceil\tfrac{4r}{1-p}\right\rceil$.
We have
\begin{equation}
  \E (T_{\mathrm{in}}) \leq \E \left(\min \{ j\in\NN \mid j>\delta, \, \run_{j - \delta + 1} \run_{j - \delta + 2} \ldots \run_{j} = \symMat{A}^{\delta} \}\right).
  \label{TinAlternative}
\end{equation}
Here, $\min \{ j\in\NN \mid j>\delta, \, \run_{j - \delta + 1} \run_{j - \delta + 2}
  \ldots \run_{j} = \symMat{A}^{\delta} \}$ maps every run to the smallest number $j$ such that the $\delta$ steps before the $j$-th step were all $\symMat{A}$s. To show that the inequation \cref{TinAlternative} holds, it suffices to prove that for every run $\run = \run_{1}\ldots\run_{h} \run_{h+1} \ldots \run_{h+\delta}$ with $h\in\NN$ and $\run_{h+1} = \ldots = \run_{h+\delta} = \symMat{A}$ (i.e., where the last $\delta$ steps were all $\symMat{A}$s), there is some $j\in\{1,\dots,h+\delta\}$ with $||\run_{1}\ldots\run_{j}|_{\symMat{A}} - jp| \geq 2r$ (i.e., $T_{\mathrm{in}}$ yields at most the index of the last of these $\delta$ steps). To prove this sufficient claim, we show that in fact there is a $j\in\{h,h+\delta\}$ with $||\run_{1}\ldots\run_{j}|_{\symMat{A}} - jp| \geq 2r$. The reason is that if the inequation does not hold for $j = h$, then we have $||\run_{1}\ldots\run_{h}|_{\symMat{A}} - hp| < 2r$. This implies $- 2 r < |\run_{1}\ldots\run_{h}|_{\symMat{A}} - hp$. Then one concludes $|\run_{1}\ldots\run_{h}\run_{h+1}\ldots \run_{h+\delta}|_{\symMat{A}} - p (h+\delta) = |\run_{1}\ldots\run_{h}|_{\symMat{A}} - h p + (1-p) \delta > -2r + 4r = 2r$.

Next, consider the following over-approximation.
In the second line, we divide the run into sub-paths of length $\delta$ and $j$ indicates when the first sub-path only consists of $\symMat{A}$s.
\begin{align*}
  \E (T_{\mathrm{in}})
   & \leq \E \left(\min \{ j\in\NN \mid j>\delta, \, \run_{j - \delta + 1} \run_{j - \delta + 2} \ldots \run_{j} = \symMat{A}^{\delta} \}\right) \tag{by \cref{TinAlternative}} \\
   & \leq \E \left(\delta\min \{ j\in\NN_{>0} \mid \run_{(j-1)\delta+1} \run_{(j-1)\delta + 2} \ldots \run_{j\delta} = \symMat{A}^{\delta} \}\right) \\
   & = \E \left(\delta \sum_{j=1}^{\infty} j\ind_{M_{i}}\right),
\end{align*}
where we define the corresponding $\F$-measurable (pairwise disjoint) sets
\begin{equation*}
  M_{j} = \{\run_{1}\ldots \mid \run_{(j-1)\delta+1}\ldots\run_{j\delta} = \symMat{A}^{\delta},\, \run_{(j'-1)\delta+1}\ldots\run_{j'\delta} \neq \symMat{A}^{\delta} \text{ for all } j'\in[j-1]\}
\end{equation*}
of runs for all $j\in\NN_{>0}$.
Here, $M_j$ are all runs where the $j$-th sub-path of length $\delta$ only consists of $\symMat{A}$s, whereas all previous sub-paths of length $\delta$ also contained $\symMat{B}$s.
To obtain an upper bound for the probability of runs in $M_j$, note that $p^{\delta}$ is the probability for $\delta$ many subsequent $\symMat{A}$s.
Thus, $1-p^{\delta}$ is the probability \emph{not} to have $\delta$ many subsequent $\symMat{A}$s.
Hence, an upper bound for the probability of runs in $M_j$ is the probability that $j-1$ times one does not have $\delta$ many subsequent $\symMat{A}$s, i.e., $\P (M_{j}) < {(1-p^{\delta})}^{j-1}$.
Moreover, $\P (\bigcup_{j=1}^{\infty} M_{j}) = 1$, because the probability of \emph{never} having a sub-path of length $\delta$ consisting only of $\symMat{A}$s is $0$.
Hence,
\begin{align*}
  \E (T_{\mathrm{in}}) \leq \E \left(\delta \sum_{j=1}^{\infty} j\ind_{M_{j}}\right) & = \delta \sum_{j=1}^{\infty} j \E (\ind_{M_{j}}) = \delta \sum_{j=1}^{\infty} j \P (M_{j}) \\
                                                                                     & < \delta \sum_{j=1}^{\infty} j {\left(1 - p^{\delta}\right)}^{j-1} < C_{\mathrm{in}}
\end{align*}
for some large enough constant $C_{\mathrm{in}} > 0$ as $|(1 - p^{\delta})| < 1$ (i.e., the geometric series converges), where we used monotone convergence to obtain the first equation.

\bigskip

\textbf{On $\mathbf{\E (T_{\mathrm{out}})}$}: We define (discrete) random variables $R_{h,j}\colon \Runs \rightarrow \AA_{\geq 0}$ for all $h\in\{1,2\},j\in\NN_{>0}$ which express how far we are away from the (red) border of \Cref{fig:safe_with_ABC} in the $j$-th step.
\[
  R_{h,j} (\run_{1}\ldots) =
  \begin{cases}
    v + 1+ \eps, & \text{if } v = {(-1)}^{h+1}|\run_{1}\ldots\run_{j}|_{\symMat{A}} + {(-1)}^{h}jp + l - \eps j > 0 \\
    0,           & \text{otherwise}
  \end{cases}
\]
We will use these random variables as ranking supermartingales and therefore, we have to ensure that all random variables $R_{h,j}$ are non-negative.\footnote{We use $v + 1+ \eps$ instead of just $v$ to ensure the ``decrease'', i.e., $\E (R_{h,j+1} - R_{h,j} \mid \F_{j}) \leq R_{h,j} - 1$.
  To provide some intuition: Assume that at point $j$ we are ``very close'' to the barrier, i.e., $v$ is almost $0$.
  If in the next step we surpass the barrier, then by definition $R_{h,j+1} = 0$ by the second case.
  But this means that the value of $v$ has not decreased by $\eps$ in expectation (as we performed the move away from the barrier to $v' = v' + (1-p) - \eps$ undisturbed).
  So we have to add $1 + \eps$ to the value of $R_{h,j}$ to decrease ``enough'' in this final approach of the barrier (by letting $R_{h,j}$ fall from close to $1 + \eps$ to $0$ in the surpassing step).
}
Consider the two stochastic processes ${(R_{h,j})}_{j=1}^{\infty}$ for $h\in\{1,2\}$ w.r.t.\ the filtration $\F_{1} \subseteq \F_{2} \subseteq \ldots$ where $\F_{j} = \sigma\left( \left\{\Pre_{f} \mid f \in \Path,\, |f| \leq j\right\} \right)$ for all $j \in \NN_{>0}$.
Clearly, the two stochastic processes ${(R_{h,j})}_{j=1}^{\infty}$ for $h\in\{1,2\}$ are adapted to the filtration $\F_{1} \subseteq \F_{2} \subseteq \ldots$ (see \cref{def:filtration_and_adaptedness}).
By definition, $T_{\mathrm{out},h} (\run) = \min \{j\in\NN_{>0} \mid R_{h,j}(\run) = 0\}$ for $h\in\{1,2\}$.
Hence, $T_{\mathrm{out},h}(\run) = \min \{j \in \NN_{>0} \mid R_{h,j}(\run)=0\}$ is a stopping time for ${(R_{h,j})}_{j=1}^{\infty}$ with $h\in\{1,2\}$.

Hence, ${(R_{h,j})}_{j=1}^{\infty}$ with $h\in\{1,2\}$ are ranking-su\-per\-mar\-tin\-gales, see \cite[Def.\ 5.4]{fioriti15}.
This means on the one hand that $\E (|R_{h,j}|)$ is finite.
The reason is that for every run we have $v = {(-1)}^{h+1}|\run_{1}\ldots\run_{j}|_{\symMat{A}} +
  {(-1)}^{h}jp + l - \eps j \leq j + l - \eps j$ and thus $R_{h,j} (\run_{1}\ldots) \leq j + l - \eps j + 1 + \eps$ which implies $\E (|R_{h,j}|) \leq j + l - \eps j + 1 + \eps < \infty$ for all $(h,j)\in\{1,2\}\times\NN_{>0}$. On the other hand, this means that in expectation the distance to the (red) border of \Cref{fig:safe_with_ABC} decreases by at least the drift $\eps$, i.e., for all $j\in\NN_{>0}$ and $h\in\{1,2\}$ we have
\begin{align*}
  \E \left(R_{h,j+1} - R_{h,j}\mid \F_{j}\right)(\run) & \leq \E ( {(-1)}^{h+1}|\run_{j+1}|_{\symMat{A}} \mid \F_{j})(\run) + {(-1)}^{h}p - \eps \\
                                                       & = {(-1)}^{h+1} \E(|\run_{j+1}|_{\symMat{A}} \mid \F_{j})(\run) + {(-1)}^{h}p - \eps \\
                                                       & \leq -\eps
\end{align*}
for (almost) all $\run \in \Runs$ with $T_{\mathrm{out},h}(\run) > j$, i.e., runs that did not yet reach the safe region.
This is due to \cref{lem:atoms_and_conditional_expectations}, as for every $j\in\NN_{>0}$ and every run $\run = \run_{1}\run_{2}\ldots$, the set $\Pre_{\run_{1}\run_{2}\ldots\run_{j}}$ is an atom of the $\sigma$-field $\F_{j}$ and hence $\E(|\run_{j+1}|_{\symMat{A}} \mid \F_{j})(\run) \cdot \P(\Pre_{\run_{1}\run_{2}\ldots\run_{j}})= \E (|\run_{j+1}|_{\symMat{A}} \cdot \ind_{\Pre_{\run_{1}\run_{2}\ldots\run_{j}}}) = p \cdot \P(\Pre_{\run_{1}\run_{2}\ldots\run_{j}}) > 0$.
Therefore, $\E(|\run_{j+1}|_{\symMat{A}} \mid \F_{j})(\run) = p$.
Here, we used that for $h = 1$, $R_{h,j} (\run_{1}\ldots) > 0$ implies $R_{h,j+1} (\run_{1}\ldots) = R_{h,j} (\run_{1}\ldots) + 1 - p - \eps$ for $\run_{j+1} = \matA$ ($\matA$ is selected with probability $p$) and $R_{h,j+1} (\run_{1}\ldots) \leq R_{h,j} (\run_{1}\ldots) - p - \eps$ if $\run_{j+1} = \matB$ (selected with probability $1-p$) for all runs $\run = \run_{1}\ldots$.
In the second case, the inequation is necessary as for the value of $R_{h,j+1}$ the second case in its definition might apply (i.e., $v \leq 0$).
Likewise, if $h = 2$, then $R_{h,j+1} (\run_{1}\ldots) \leq R_{h,j} (\run_{1}\ldots) - 1 + p - \eps$ for $\run_{j+1} = \matA$ (probability $p$) and $R_{h,j+1} (\run_{1}\ldots) = R_{h,j} (\run_{1}\ldots) + p - \eps$ if $\run_{j+1} = \matB$ (probability $1-p$).
Therefore, the expected value of the initial value of the ranking supermartingale divided by the drift is an upper bound for the expected value of the stopping time of the stochastic process ${(R_{h,j})}_{j=1}^{\infty}$.
In other words, \cite[Lemma 5.5]{fioriti15} implies $\E (T_{\mathrm{out},h}) \leq \frac{\E (R_{h,1})}{\eps} < \frac{2+l}{\eps} = C_{\mathrm{out}}$ for both $h\in\{1,2\}$.
Here, note that $R_{h,1} (\run_{1}\ldots) \leq 1 + l - \eps + 1 + \eps = 2 + l$.

\medskip

Recall the $\F$-measurable random variables $B_{i}\colon \Runs \to \NN\cup\{\infty\}$ defined as  for all $i\in\NN$ with the convention $\min \emptyset = \infty$.

For all $i\in\NN$, we will now show that $P(B_{i} = \infty) = 0$ by induction on $i$.
In the induction base, we have $i = 0$.
By definition of $B_0$, we have $\P(B_{0} = \infty) = \P (B_{0} > 0) = 0$.

In the induction step, let $i\in\NN_{>0}$.
For all $i' \in \{1, \ldots, i\}$, by the induction hypothesis we have $\P (B_{i'-1} = \infty) = 0$.
For every such $i'$, we define the set of paths $F_{i'-1} =\{
  f_{B_{i'-1}}(\run) \mid \run \in \Runs, \, B_{i'-1}(\run) < \infty \}$ where as before, $f_{B_{i}}(\run) = \run_{1}\ldots \run_{B_{i}(\run)}$ for all $i \in \NN$ and all $\run = \run_{1}\run_{2}\ldots$ with $B_{i}(\run) < \infty$. The induction hypothesis implies
\begin{equation}
  \label{probFi-1}
  \P(\Pre_{F_{i'-1}}) = 1.
\end{equation}
Now consider an arbitrary subset $F \subseteq F_{i'-1}$.
Note that
\begin{equation}
  \label{disjoint}
  \mbox{for all $f,f' \in F$ with $f \neq f'$ we have $\Pre_{f} \cap \Pre_{f'} =
      \emptyset$,}
\end{equation}
i.e., $F$ does not contain two paths where one is a proper prefix of the other.
For paths in $F$ (which therefore are prefixes of a run $\run$ that ends in $B_{i'-1}(\run)$), we now want to approximate how many steps one needs in expectation from this point onwards to reach $B_{i'}(\run)$.
We have
\begin{align}
  \E \left((B_{i'} - B_{i'-1}) \cdot \ind_{\Pre_{F}}\right)
   & = \sum_{f \in F} \E \left( (B_{i'} - B_{i'-1}) \cdot \ind_{\Pre_{f}} \right) \nonumber \\
   & = \sum_{f \in F}\sum_{f' \in {\{\symMat{A},\symMat{B}\}}^{l}} \E \left((B_{i'} - B_{i'-1}) \cdot \ind_{\Pre_{ff'}} \right)
  \label{stepsFromi-1Toi}
\end{align}

Recall the constants $C_{\mathrm{in}},C_{\mathrm{out}}$ from our previous discussion on $\E(T_{\mathrm{in}})$ and $\E(T_{\mathrm{out},h})$ for $h\in\{1,2\}$ and set $C = \max\{C_{\mathrm{in}},C_{\mathrm{out}}\}$.
In the following we first show
\begin{equation}
  \E \left( \left(B_{i'} - B_{i'-1}\right) \ind_{\Pre_{ff'}}\right) \leq \P(\Pre_{ff'}) \cdot (l + C) \tag{\dag} \label{eq:dual_positive_eigenvalues_for_eventually_dominating_constraints_remaining_parts_dag}
\end{equation}
for all addends of the above sum, i.e., all $f \in F$ and $f'\in{\{\symMat{A},\symMat{B}\}}^{l}$.
Recall that $B_{i'}$ is at least $l$ greater than $B_{i'-1}$ (we defined it in this way such that $B_{i'}$ is greater than the number of steps needed to reach $B_{i'-1}$ and execute $g_f$), corresponding to the addend $l$ on the right-hand side of the equation above.
We distinguish three mutually exclusive cases depending upon the value of $\URVar(ff')$.
\begin{enumerate}
  \item The execution $ff'$ leads to a safe region, i.e., we have $|\URVar(ff') \cdot |ff'|| \geq r$ and $\URVar (ff') \in [-\eps,\eps]$.
        Then, $\E ((B_{i'} - B_{i'-1}) \ind_{\Pre_{ff'}}) = \E ((|ff'| - |f|) \ind_{\Pre_{ff'}}) \leq l \cdot \P(\Pre_{ff'})$.
  \item The execution $ff'$ leads to a position between the two safe regions, i.e., to area $(A)$ in \Cref{fig:safe_with_ABC}.
        In other words, we have $|\URVar (ff') \cdot |ff'|| < r \le \eps l - 1 $.
        Then, a safe region is entered within the execution $ff'f''$ for some $f''\in\Path$ when $|\URVar (f'') \cdot |f''|| \geq 2r$.
        The reason is that for some prefix $f_{s}$ of $f''$ we have $r \leq |\URVar (ff'f_{s}) \cdot |ff'f_{s}|| \leq r+1 \leq \eps l$, since when $|f|$ increases by $1$, then $\URVar(f) \cdot |f|$ changes by at most 1 (see the discussion before defining $T_{\mathrm{in}}$).
        So $f_s$ is the minimal prefix of $f''$ such that the path $ff'f_{s}$ enters the safe region.

        Combining this with our previous discussion on $\E (T_{\mathrm{in}})$, one obtains
        \begin{align*}
                 & \E \left((B_{i'} - B_{i'-1}) \cdot \ind_{\Pre_{ff'}}\right) \\
          \leq{} & \E \big( \big(|ff'| + \min\{j \in \NN \mid |\URVar (\run_{|ff'| + 1}\ldots\run_{|ff'| + j})j| \geq 2r \} - B_{i'-1}\big) \cdot \ind_{\Pre_{ff'}} \big) \\
          ={}    & \E \big( \big(|f'| + \min\{j \in \NN \mid |\URVar (\run_{|ff'|+1}\ldots\run_{|ff'| + j})j| \geq 2r \}\big) \cdot \ind_{\Pre_{ff'}} \big) \tag{as $B_{i'-1} = |f|$} \\
          ={}    & \P (\Pre_{ff'}) \cdot \E \left( |f'| + \min\{j \in \NN \mid |\URVar (\run_{1}\ldots\run_{j})j| \geq 2r \} \right) \tag{as $\run_{j}$ are i.i.d.} \\
          ={}    & \P (\Pre_{ff'}) \cdot (l + \E \left(T_{\mathrm{in}}\right)) \leq \P(\Pre_{ff'}) \cdot (l + C_{\mathrm{in}}),
        \end{align*}
        where $C_{\mathrm{in}} \leq C$.
        For the penultimate line, note that since $\run_{j}$ are i.i.d.\ and we are interested in the expected value, it does not matter whether we regard paths of length $j$ that start at position $|f f'| +1$ or paths of length $j$ that start at position $1$.

  \item The execution $ff'$ leads to a position outside of the safe regions, i.e., to area $(B)$ or $(C)$ in \Cref{fig:safe_with_ABC}.
        In other words, we have $|\URVar (ff') \cdot |ff'|| > \eps |ff'| \geq r+1$.
        We have the following, where the second line holds because $ \left||f|_{\symMat{A}} - p|f| \right| = |\URVar(f) \cdot |f|| \leq \eps\cdot |f|$ since $B_{i'-1}
          = |f|$ means that after the path $f$ one is in a safe region.
        \begin{align*}
          \left| \URVar(ff')\cdot|ff'| \right| & = \left||f|_{\symMat{A}} - p|f| + |f'|_{\symMat{A}} - p|f'|\right| \\
                                               & \leq \left| |f|_{\symMat{A}} - p|f| \right| + \left| |f'|_{\symMat{A}} - p|f'| \right| \\
                                               & \leq \eps|f| + \left| |f'|_{\symMat{A}} - p|f'| \right| \tag{$B_{i'-1} = |f|$} \\
                                               & < \eps|f| + |f'| = \eps|f| + l.
        \end{align*}
        We first regard the case where $ff'$ leads to a position in the unsafe area $(B)$, i.e., where $\URVar(ff') \cdot |ff'| > \eps|ff'|$.
        Then, the safe region is entered within the execution $ff'f''$ for some $f''\in\Path$ when $\URVar(ff'f'') \cdot |ff'f''| \leq \eps|ff'f''|$.
        This is for instance the case when $\URVar(f'') \cdot |f''| \leq \eps|f''| - l$.
        Intuitively, the reason is that because of $|f'| = l$, after the path $ff'$ we are at most $l$ above the upper border of the safe region.
        So if one extends $ff'$ by such an $f''$ we reach (or traverse) the safe region.
        Formally, $\URVar(f'') \cdot |f''| \leq \eps|f''| - l$ implies
        \begin{align*}
          \URVar(ff'f'') \cdot |ff'f''|
           & = |ff'f''|_{\symMat{A}} - p \cdot (|f| + |f'| + |f''|) \\
           & \leq |f|_{\symMat{A}} - p\cdot |f| + |f'|_{\symMat{A}} + \URVar(f'') \cdot |f''| \\
           & \leq \underbrace{|f|_{\symMat{A}} - p\cdot |f|}_{\leq \eps |f|, \text{ as $B_{i'-1} = |f|$}}+ \underbrace{|f'|_{\symMat{A}}}_{\leq l} + \eps|f''| - l \\
           & \leq \eps |f| + \eps|f''| \\
           & \leq \eps|ff'f''|
        \end{align*}
        Thus, then there is a prefix $f_{s}$ of $f''$ such that $ff'f_{s}$ reaches the safe region.
        If $f_{s}$ is the minimal prefix of $f''$ with $\eps \cdot |ff'f_{s}| \geq \URVar(ff'f_{s})\cdot |ff'f_{s}|$, then we must have $\URVar(ff'f_{s})\cdot |ff'f_{s}| > r$ as $\eps|ff'f_{s}| \geq \eps |f'| = \eps l \geq r+1$.

        The case where $ff'$ leads to a position in the unsafe area $(C)$ is analogous, i.e., where $\URVar(ff')|ff'| < - \eps|ff'|$.
        Here, the lower safe region is entered within the execution $ff'f''$ for $f''\in\Path$ when $\URVar(f'') \cdot |f''| \geq l - \eps |f''|$.

        Combining the first case $\URVar(ff') \cdot |ff'| \geq \eps|ff'|$ with our previous discussion on $\E (T_{\mathrm{out},1})$, one obtains
        \begin{align*}
               & \E \left( (B_{i'} - B_{i'-1}) \cdot \ind_{\Pre_{ff'}} \right) \\
          \leq & \E \big( \big(|ff'| + \min\{ j\in\NN \mid \URVar(\run_{|ff'|+1}\ldots\run_{|ff'|+j}) \cdot j \leq \eps j - l\} - B_{i'-1} \big) \cdot \ind_{\Pre_{ff'}} \big) \\
          =    & \E \big( \big(|f'| + \min\{
            j\in\NN \mid \URVar(\run_{|ff'|+1}\ldots\run_{|ff'|+j}) \cdot j \leq \eps j -
          l\} \big) \cdot \ind_{\Pre_{ff'}} \big) \tag{as $B_{i'-1} = |f|$} \\
          =    & \P(\Pre_{ff'}) \cdot \big(l + \E \big( \min\{ j\in\NN \mid \URVar(\run_{|ff'|+1}\ldots\run_{|ff'|+j}) \cdot j + l - \eps j \leq 0\} \big)\big) \\
          =    & \P(\Pre_{ff'}) \cdot \big(l + \E \big( \min\{
            j\in\NN \mid \URVar(\run_{1}\ldots\run_{j}) \cdot j + l - \eps j \leq
          0\} \big)\big) \tag{as $\run_{j}$ are i.i.d.} \\
          =    & \P(\Pre_{ff'}) \cdot \left(l + \E \left(T_{\mathrm{out},1}\right)\right) \leq \P(\Pre_{ff'}) \cdot \left(l + C_{\mathrm{out}}\right)
        \end{align*}
        where $C_{\mathrm{out}} \leq C$.
        If $\URVar(ff') \cdot |ff'| < \eps|ff'|$, one similarly derives $\E ((B_{i'} - B_{i'-1}) \cdot \ind_{\Pre_{ff'}}) \leq \P(\Pre_{ff'}) \cdot (l + C_{\mathrm{out}})$.
\end{enumerate}

Now we obtain
\begin{align*}
  \E (B_{i})
   & = \sum_{i'=1}^{i} \E \left(B_{i'} - B_{i'-1}\right) \\
   & = \sum_{i'=1}^{i} \P(\Pre_{F_{i'-1}}) \cdot \E \left(B_{i'} - B_{i'-1}\right) \tag{by \cref{probFi-1}} \\
   & = \sum_{i'=1}^{i} \E \left( (B_{i'} - B_{i'-1})\ind_{\Pre_{F_{i'-1}}} \right) \\
   & = \sum_{i'=1}^{i}\sum_{f \in F_{i'-1}}\sum_{{f'\in\{\symMat{A},\symMat{B}\}}^{l}} \E \left(\left(B_{i'}-B_{i'-1}\right) \ind_{\Pre_{ff'}}\right) \tag{by \cref{stepsFromi-1Toi}} \\
   & \leq \sum_{i'=1}^{i}\sum_{f \in F_{i'-1}}\sum_{{f'\in\{\symMat{A},\symMat{B}\}}^{l}} \P(\Pre_{ff'}) \cdot (l + C) \tag{by \cref{eq:dual_positive_eigenvalues_for_eventually_dominating_constraints_remaining_parts_dag}} \\
   & = \sum_{i'=1}^{i} (l + C) \tag{as $\sum_{f \in F_{i'-1}}\sum_{{f'\in\{\symMat{A},\symMat{B}\}}^{l}} \P(\Pre_{ff'}) = 1$} \\
   & = i \cdot (l + C) \\
   & < \infty
\end{align*}
implying $\P(B_{i} = \infty) = 0$, which proves the induction step.
As $P(B_{i} = \infty) = 0$ for all $i\in\NN$, we will ignore runs $\run\in\Runs$ with $B_{i} (\run) = \infty$ from now on.
This motivates the denotation of the path $\run_{1}\ldots\run_{B_{i}(\run)}$ by $f_{B_{i}}(\run)$.
\medskip

Finally, we prove \cref{eq:positive_eigenvalues_dominating_constraints1}, i.e., we prove the over-approximation
\begin{equation*}
  \E \left((B_{i} - B_{i-1}) \prod_{i'=1}^{i-1} \ind_{\Pre_{f_{B_{i'}} \, \overline{g_{f_{B_{i'}}}}}} \right) \leq {(1 - {(\min\{p,1-p\})}^{l})}^{i-1} (l+C)
\end{equation*}
for all $i\in\NN_{>0}$, where $g_f \in \Path$ are possibly empty finite paths such that $|g_f| \leq l$ for all $f\in\Path$.
We denoted the set $\Pre_{f}\setminus\Pre_{f \, g_f}$ as $\Pre_{f \, \overline{g_f}}$ for all $f\in\Path$.

Let $i \in\NN_{>0}$.
Then let $F\subseteq\Path$ be the set of all prefixes of runs $\run$ which end in $B_{i}(\run)$ and where for all $i' < i$ we have that $f_{B_{i'}}(\run) g_{f_{B_{i'}}}(\run)$ is not a prefix of $\run$.
Thus,
\[ F = \left\{ f_{B_{i}}(\run) \Bigm\vert \run \in \Runs,
  \left(\prod_{i'=1}^{i-1} \ind_{\Pre_{f_{B_{i'}}\overline{g_{f_{B_{i'}}}}}}\right) (\run) =
  1\right\}.\]

To obtain an upper bound for the probability of runs in $\Pre_F$, note that the probability for paths from $g_{f_{B_{i'}}}$ is at least $(\min\{p,1-p\})^l$ and hence, the probability for paths from $\overline{g_{f_{B_{i'}}}}$ is at most $1 - {(\min\{p,1-p\})}^{l}$.
Thus, the probability for runs where $\left(\prod_{i'=1}^{i-1} \ind_{\Pre_{f_{B_{i'}}\overline{g_{f_{B_{i'}}}}}}\right) (\run) = 1$ is at most ${\left(1 - {(\min\{p,1-p\})}^{l}\right)}^{i-1}$ which implies $\P(\Pre_F) \leq {\left(1 - {(\min\{p,1-p\})}^{l}\right)}^{i-1}$.
We conclude
\begin{align*}
  \E \left((B_{i} - B_{i-1}) \prod_{i'=1}^{i-1} \ind_{\Pre_{f_{B_{i'}} \, \overline{g_{f_{B_{i'}}}}}} \right)
   & = \sum_{f \in F} \E \left(\left(B_{i} - B_{i-1}\right) \ind_{\Pre_{f}}\right) \tag{by \cref{disjoint}} \\
   & = \sum_{f \in F}\sum_{f'\in{\{\symMat{A},\symMat{B}\}}^{l}} \E \left(\left(B_{i} - B_{i-1}\right) \ind_{\Pre_{ff'}}\right) \tag{by \cref{stepsFromi-1Toi}} \\
   & \leq \sum_{f \in F}\sum_{f'\in{\{\symMat{A},\symMat{B}\}}^{l}} \P (\Pre_{ff'}) (l + C) \tag{by \cref{eq:dual_positive_eigenvalues_for_eventually_dominating_constraints_remaining_parts_dag}} \\
   & = \sum_{f \in F}\P (\Pre_{f}) (l + C) \\
   & = (l + C) \cdot \P(\Pre_F) \\
   & \leq {(1 - {(\min\{p,1-p\})}^{l})}^{i-1} (l+C) .
\end{align*}

\end{proof}

  \proofsForSection{sec:towards_witnesses_for_non-termination}
\label{app:proofs_for_towards_witnesses_for_non-termination}

\soundnessofwitnesses*{}
\begin{proof}
  Let $\vecx \in W$.
  Thus, there is some $d\in\{\nmax,\pmax\}$ such that for all $c\in[m]$ we have
  \begin{equation}
    \label{realGreaterComplex}
    \sum_{i\in\mathfrak{R}_{d,c,\vecx}} \gamma_{c,i}(\vecx) > \sum_{i\in\mathfrak{C}_{d,c,\vecx}} |\gamma_{c,i}(\vecx)|.
  \end{equation}
  W.l.o.g., we assume $d=\pmax$ as the other case is symmetric.

  For all $c\in[m]$ and $f\in\Path$, we define $v_{c}(f) = \sum_{i\in\constrainttermgroup_{\pmax,c,\vecx}} \zeta_{i,\matA}^{|f|_{\symMat{A}}} \zeta_{i,\matB}^{|f|_{\symMat{B}}} \gamma_{c,i}(\vecx)$.
  We have
  \begin{align*}
    v_{c}(f) & = \sum_{i\in\mathfrak{R}_{\pmax,c,\vecx}} \gamma_{c,i}(\vecx) +
    \sum_{i\in\mathfrak{C}_{\pmax,c,\vecx}} \zeta_{i,\matA}^{|f|_{\symMat{A}}}
    \zeta_{i,\matB}^{|f|_{\symMat{B}}} \gamma_{c,i}(\vecx) \\
             & \geq \sum_{i\in\mathfrak{R}_{\pmax,c,\vecx}} \gamma_{c,i}(\vecx) -
    \left| \sum_{i\in\mathfrak{C}_{\pmax,c,\vecx}} \zeta_{i,\matA}^{|f|_{\symMat{A}}}
    \zeta_{i,\matB}^{|f|_{\symMat{B}}} \gamma_{c,i}(\vecx) \right| \\
             & \geq \sum_{i\in\mathfrak{R}_{\pmax,c,\vecx}} \gamma_{c,i}(\vecx) -
    \sum_{i\in\mathfrak{C}_{\pmax,c,\vecx}} \left|\zeta_{i,\matA}^{|f|_{\symMat{A}}}
    \zeta_{i,\matB}^{|f|_{\symMat{B}}}\right| \cdot \left|\gamma_{c,i}(\vecx) \right| \\
             & = \sum_{i\in\mathfrak{R}_{\pmax,c,\vecx}} \gamma_{c,i}(\vecx) -
    \sum_{i\in\mathfrak{C}_{\pmax,c,\vecx}} \left|\gamma_{c,i}(\vecx) \right|
  \end{align*}
  where we used the triangle inequation and that $|\zeta_{i,\matA}| = |\zeta_{i,\matB}| = 1$ for all $i\in[n]$.
  For all $c\in[m]$, we define $\rho_{c}$ as the value of this inequation's right-hand side, where $\rho_c > 0$ by \cref{realGreaterComplex}.

  \begin{figure}
    \begin{center}
      {
\begin{tikzpicture}[scale=1.1,every node/.style={font=\tiny}]
  \draw [draw=none, fill=SeaGreen!5] (2,0.75) -- (5,0.75) -- (5,2) -- (4,2) -- (2,1) -- (2,0.75);
  \draw [draw=none, line space=5pt,pattern=my north east lines, pattern color=SeaGreen!30] (2,0.75) -- (5,0.75) -- (5,2) -- (4,2) -- (2,1) -- (2,0.75);

  \draw [draw=none, fill=SeaGreen!5] (2,-0.75) -- (5,-0.75) -- (5,-2) -- (4,-2) -- (2,-1) -- (2,-0.75);
  \draw [draw=none,line space=5pt,pattern=my north east lines, pattern color=SeaGreen!30] (2,-0.75) -- (5,-0.75) -- (5,-2) -- (4,-2) -- (2,-1) -- (2,-0.75);

  \draw [->,line width=0.6pt] (-1,0)--(5,0) node[below]{\footnotesize $|f|$};
  \draw [->,line width=0.6pt] (0,-2)--(0,2) node[left]{\footnotesize $\URVar(f) \cdot |f|$};

  \draw[domain=0:3, smooth, variable=\x, gray] plot ({\x}, {0.6875*\x*\x*\x - 3.313*\x*\x + 4.125*\x});
  \node[gray] at (0.8, 1.65) (A) {$\URVar(f)\cdot|f|$};
  \node[gray] at (3,1.125)[circle,fill,inner sep=0.75pt]{};

  \draw [red] (0,0) -- (4,2) node[below]{$\eps |f|$};
  \draw [red] (0,0) -- (4,-2) node[above]{$-\eps |f|$};

  \draw [blue] (-0.75, 0.75) node[below,xshift=0.25cm]{$ r$} -- (5, 0.75);
  \draw [blue] (-0.75,-0.75) node[above,xshift=0.25cm]{$-r$} -- (5,-0.75);

  \draw [Green] (2,-2) node[below]{$l$} -- (2,2);

  \draw [Plum] (2,0.875) -- (5,1.625) node[right] {$\frac{\eps \cdot |f| + r}{2}$};
\end{tikzpicture}
}
    \end{center}
    \caption{\label{fig:safe_with_middle} Illustration of the General Idea Using Safe Regions with Middle Line}
  \end{figure}

  Then, by \Cref{lem:domination_of_eventually_dominating_constraint_term_groups}, for all $c\in[m]$ there are constants $\eps_{c} \in \AA_{>0}$, $r_{c} \in \NN$, and $l_{c} \in \NN_{>0}$ such that $\sign({(\Val_{\vecx} (f))}_{c}) = \sign(v_{c}(f)) = 1$ for all $f \in \Path$ with $|f| \geq l_c$, $\URVar(f) \cdot |f| \geq r_c$, and $\URVar\in[0,\eps_{c}]$.
  We select $r\in\NN$ with $r \geq \max\{r_{1},\ldots, r_{m}\}$, $\eps \in (0,1-p]$ with $\eps \leq \min \{\eps_{1},\ldots,\eps_{m}\}$, and $l \geq \max\{l_{1}, \ldots,l_{m}\}$ large enough such that $\eps l \geq r+2$.
  Then, $\Val_{\vecx} (f) > \veczero$ whenever $|f| \geq l$, $\URVar(f) \cdot |f| \geq r$, and $\URVar(f) \in [0,\eps]$.
  The intuition is that we regard the line $\frac{\eps \cdot |f| + r}{2}$ in the middle of the upper safe region (the purple line in \Cref{fig:safe_with_middle}) and the \emph{middle region} which is between 1 above and 1 below this middle line.
  We need $\eps \leq 1-p$ to ensure that $\URVar(f) \cdot |f|$ can reach the whole (upper) safe region (recall that $\URVar(f) \leq 1-p$).
  Moreover, we need $\eps l \geq r+2$ (i.e., $\eps \cdot l - r \geq 2$) to ensure that the whole middle region is in the safe region.

  Let $\hat{f} \in \Path$ satisfy these requirements such that additionally $\abs{\URVar(\hat{f}) \cdot |\hat{f}| - \frac{\eps |\hat{f}| + r}{2}} \leq 1$, i.e., $\URVar(\hat{f}) \cdot |\hat{f}|$ reaches the middle region.
  This will be required for \cref{eq:soundness_of_witnesses1}, see the end of the proof.
  To see that such an $\hat{f}$ always exists consider the surjective mapping $u_{n}\colon
    {\{\matA,\matB\}}^{n} \rightarrow \{r_{n,i} \mid i \in \{0,\ldots,n\}\}$ with $r_{n,i}
    = i - p n$ and $u_{n}(f) = \URVar(f) \cdot |f| = |f|_{\symMat{A}} - p |f|$ for $n \in \NN$. As we only consider paths $\hat{f}$ with $|\hat{f}| \geq l$ let $n \geq l$. We have $\eps n + (1-p) n \geq r + (1-p) n$ (as $\eps l \geq r + 2$) and thus $2 (1-p) n \geq r + \eps n$ (as $\eps \leq 1 - p$) implying $r_{n,n} = (1-p) n \geq \frac{\eps n + r}{2} \geq 0 \geq r_{n,0}$. Moreover, $r_{n,i+1} - r_{n,i} = 1$ for all $i \in \{0,\ldots,n-1\}$ and therefore there exists an $i \in \{0,\ldots,n\}$ with $|r_{n,i} - \frac{\eps n + r}{2}| \leq 1$. Let $\hat{f}$ with $|\hat{f}| = n \geq l$ be such that $|\hat{f}|_{\symMat{A}} = i$. Then, $|\URVar(\hat{f}) \cdot |\hat{f}| - \frac{\eps n + r}{2}| = |u_{n}(\hat{f}) - \frac{\eps|\hat{f}| + r}{2}|= |r_{n,i} - \frac{\eps n + r}{2}| \leq 1$.

  Consider the ($\F$-measurable) random variable $T\colon \Runs \to \NN\cup\{\infty\}$ defined as 
  So $T$ maps any run $\run$ to the smallest $i$ such that if $\hat{f}$ takes place before $\run$, then $\run$ reaches an unsafe region after performing $i$ steps from $\hat{f}$ (according to \Cref{fig:safe_with_middle}).
  The expectation of the random variable $T$ is infinite, i.e.,
  \begin{equation}
    \E (T) = \infty. \label{eq:soundness_of_witnesses1}
  \end{equation}
  We will show \cref{eq:soundness_of_witnesses1} at the end of the proof.

  Let $\vecy = \matA^{|\hat{f}|_{\symMat{A}}} \matB^{|\hat{f}|_{\symMat{B}}} \vecx$, i.e., $\vecy$ results from $\vecx$ by executing the program according to the path $\hat{f}$.
  We now show $\vecy \in \NT$, which implies $\vecx \in \ENT$.

  If $T(\run) = k$ for some arbitrary run $\run\in\Runs$ and $k\in\NN\cup\{\infty\}$, then $\URVar(\hat{f}\run_{1}\ldots\run_{i}) \cdot i \geq r$ as well as $\URVar(\hat{f}\run_{1}\ldots\run_{i}) \in [0,\eps]$ for all $i\in\{1,\ldots,k-1\}$.
  Note that since $\URVar(\hat{f}\run_{1}\ldots\run_{i}) \geq 0$, this implies $\URVar(\hat{f}\run_{1}\ldots\run_{i}) \cdot (|\hat{f}| + i) \geq \URVar(\hat{f}\run_{1}\ldots\run_{i}) \cdot i \geq r$.
  It follows that $\Val_{\vecy} (\run_{1}\ldots\run_{i}) = \Val_{\vecx} (\hat{f}\run_{1}\ldots\run_{i}) > \veczero$.
  Therefore, $\LRVar_{\vecy} (\run) > i$ and hence, $\LRVar_{\vecy} (\run) \geq k$.
  As this holds for arbitrary runs, one concludes $\LRVar_{\vecy} \geq T$.

  Combined with \cref{eq:soundness_of_witnesses1}, one obtains
  \[
    \E \left(\LRVar_{\vecy}\right) \geq \E (T) = \infty,
  \]
  completing the proof of \Cref{lem:soundness_of_witnesses}.

  \medskip

  In order to complete the proof of \Cref{lem:soundness_of_witnesses} we have to show that \cref{eq:soundness_of_witnesses1} holds, which states $\E (T) = \infty$.
To that end, recall the definition of the $\F$-measurable random variable $T$.
It maps any run $\run$ to the smallest $i$ such that $\hat{f}$ extended by the first $i$ steps of $\run$ reaches an unsafe region.
We want to show $\E (T) = \infty$, i.e., that in expectation, an infinite number of steps would be needed to leave the safe region again after the path $\hat{f}$.
Here, $\hat{f}\in\Path$ with $|\hat{f}|\geq l$ and $\eps > 0$, $r\in\NN$, as well as $l \in \NN_{>0}$ are constants such that $0 < \eps \leq 1-p$, $\eps l \geq r+2$, $\URVar(\hat{f}) \cdot |\hat{f}| \geq r$, and $\URVar(\hat{f})\in[0,\eps]$.
Additionally, $|\URVar(\hat{f})\cdot |\hat{f}| - \frac{\eps |\hat{f}| + r}{2}| \leq 1$, i.e., $\hat{f}$ reaches the middle region which is between $1$ above and $1$ below the middle line $\frac{\eps \cdot |f| + r}{2}$.

We now extend $\hat{f}$ by $f'_1, f'_2, \ldots\in \{\symMat{A},\symMat{B}\}$ such that $\hat{f}f'_{1}\ldots f'_{i}$ always stays in the middle region.
To this end, one inductively defines $f'_{i} \in \{\symMat{A},\symMat{B}\}$ for all $i\in\NN_{>0}$ as
\[
  f'_{i} =
  \begin{cases}
    \symMat{A} & \text{if }\URVar (\hat{f}f'_{1}\ldots f'_{i-1}) \cdot
    |\hat{f}f'_{1}\ldots f'_{i-1}| \leq \frac{\eps \cdot (|\hat{f}| + i) + r}{2} \\
    \symMat{B} & \text{otherwise.}
  \end{cases}
\]
We write $\hat{f}_{i} = \hat{f}f'_{1}\ldots f'_{i}$ for all $i\in\NN$ from now on to simplify the notation.
We claim that for all $i\in\NN$ we have $|\URVar(\hat{f}_{i}) \cdot |\hat{f}_{i}| - \frac{\eps |\hat{f}_{i}| + r}{2}| \leq 1$.
Clearly, this holds for $i=0$ by the requirements on the finite path $\hat{f}$.

In the induction step, assume that this claim holds for a fixed $i\in\NN$.
First, we consider the case $\URVar(\hat{f}_{i}) \cdot |\hat{f}_{i}| > \frac{\eps (|\hat{f}_{i}| + 1) + r}{2}$.
Then we have $f'_{i+1} = \symMat{B}$ and thus, $|\hat{f}_{i+1}|_{\symMat{A}} = |\hat{f}_{i}|_{\symMat{A}}$.
This implies
\begin{align*}
  \URVar(\hat{f}_{i+1}) \cdot |\hat{f}_{i+1}| & = |\hat{f}_{i+1}|_{\symMat{A}} - p (i+1) \\
                                              & = |\hat{f}_{i}|_{\symMat{A}} - p \cdot i - p \\
                                              & = \URVar(\hat{f}_{i})\cdot |\hat{f}_{i}| - p > \frac{\eps |\hat{f}_{i+1}| + r}{2} - 1
\end{align*}
and $\frac{\eps |\hat{f}_{i+1}| + r}{2} + 1 > \frac{\eps|\hat{f}_{i}| + r}{2} + 1 \geq \URVar(\hat{f}_{i}) \cdot |\hat{f}_{i}| = \URVar(\hat{f}_{i+1}) \cdot |\hat{f}_{i+1}| + p > \URVar(\hat{f}_{i+1}) \cdot |\hat{f}_{i+1}|$.

Otherwise, we have the case $\URVar(\hat{f}_{i}) \cdot |\hat{f}_{i}| \leq \frac{\eps (|\hat{f}_{i}| + 1) + r}{2}$.
Then we have $f'_{i+1} = \symMat{A}$ and thus, $|\hat{f}_{i+1}|_{\symMat{A}} = |\hat{f}_{i}|_{\symMat{A}} +1$.
This implies
\begin{align*}
  \URVar(\hat{f}_{i+1}) \cdot |\hat{f}_{i+1}|
   & = |\hat{f}_{i+1}|_{\symMat{A}} - p (i+1) \\
   & = |\hat{f}_{i}|_{\symMat{A}} - p \cdot i + 1 -p \\
   & = \URVar(\hat{f}_{i}) \cdot |\hat{f}_{i}| + 1 -p \\
   & \leq \frac{\eps |\hat{f}_{i+1}| + r}{2} + 1-p < \frac{\eps |\hat{f}_{i+1}| + r}{2} + 1
\end{align*}
and $\URVar(\hat{f}_{i+1}) \cdot |\hat{f}_{i+1}| = \URVar(\hat{f}_{i}) \cdot |\hat{f}_{i}| + 1 -p \geq \frac{\eps |\hat{f}_i| +r}{2} - p > \frac{\eps |\hat{f}_{i+1}| +r}{2} - 1$.
The claim follows in both cases.

Note that $\URVar(\hat{f}_{i}) \cdot |\hat{f}_{i}|$ is always in the (upper) safe region, i.e., we have $r \leq \URVar(\hat{f}_{i}) \cdot |\hat{f}_{i}| \leq \eps |\hat{f}_{i}|$.
To see this, note that $\eps l \geq r+2$ implies
\begin{align*}
  \frac{\eps |\hat{f}_i| + r}{2} & \geq \frac{\eps \cdot l + r}{2} \tag{as $|\hat{f}_i| \geq |\hat{f}| \geq l$} \\
                                 & \geq \frac{2r + 2}{2} \tag{as $\eps l \geq r+2$} \\
                                 & \geq r + 1
\end{align*}
and thus, $\URVar(\hat{f}_{i}) \cdot |\hat{f}_{i}| \geq \frac{\eps |\hat{f}_i| + r}{2} -1 \geq r$.
Moreover, we have $\eps \cdot |\hat{f}_i| - r\geq \eps \cdot l - r \geq 2$ and thus $\frac{\eps \cdot |\hat{f}_i|}{2} - \frac{r}{2} \geq 1$, i.e., $\eps \cdot |\hat{f}_i| \geq \frac{\eps \cdot |\hat{f}_i| +r}{2} + 1 \geq \URVar(\hat{f}_{i}) \cdot |\hat{f}_{i}|$.

In order to prove \cref{eq:soundness_of_witnesses1} we redefine the random variable $T\colon \Runs \to \NN\cup\{\infty\}$ as a function $T\colon \Path \to \Runs \to \NN\cup\{\infty\}$ mapping finite paths $f\in\Path$ to random variables $\Runs \to \NN\cup\{\infty\}$ in the following way:
\begin{align*}
  T(f) & = \min \{i\in\NN_{>0} \mid \URVar(f\run_{1}\ldots\run_{i}) \cdot (|f| + i) < r \text{ or } \URVar(f\run_{1}\ldots\run_{i}) \not \in [0,\eps]\} \\
       & = \min \{i \in \NN_{>0} \mid X_{i}(f) < 0 \text{ or } X_{i}(f) > \eps (|f|+i) - r\}
\end{align*}
where
\[
  X_{t}(f) = \URVar(f\run_{1}\ldots\run_{t}) \cdot |f\run_{1}\ldots\run_{t}| - r
\]
for all $t\in\NN_{>0}$.
So $X_t: \Path \to \Runs \to \AA$, where $X_t(f)$ maps any run $\run$ to the result of $\URVar(f') \cdot |f'| - r$, where $f'$ extends $f$ by the first $t$ steps of the run $\run$.
The equation \cref{eq:soundness_of_witnesses1} then corresponds to the statement $\E \left(T\left(\hat{f}\right)\right) = \infty$.

In the following, we show that for every $f\in\Path$, the stochastic process ${(X_{t}(f))}_{t=0}^{\infty}$ is a martingale w.r.t.\ its natural filtration $\F_{0}
  \subseteq \F_{1} \subseteq \cdots$, i.e.,
\begin{equation*}
  \F_{t} = \sigma \left( X_{0}(f), X_{1}(f), \ldots, X_{t}(f) \right) = \sigma\left( \left\{\Pre_{f'} \mid f'\in\Path,\, |f'| \leq t\right\} \right)
\end{equation*}
for all $t\in\NN$, according to \Cref{def:martingale}.
Note that $\F_{t}$ is independent from the chosen $f \in \Path$.
To this end, one has to show that the process ${(X_{t}(f))}_{t=0}^{\infty}$ is adapted to the filtration, it is integrable, and the conditional expectation of $X_{t+1}(f)$ given the $\sigma$-field $\F_{t}$ equals $X_{t}(f)$.
To prove this, we fix an arbitrary $f\in\Path$.
Clearly, ${(X_{t}(f))}_{t=0}^{\infty}$ is adapted to ${(\F_{t})}_{t=0}^{\infty}$, i.e., every $X_{t}$ is $\F_{t}$-measurable.
Moreover, the process is integrable, i.e., for all $t \in \NN$ we have $\E(|X_t(f)|) < \infty$.
The reason is that for all $\run \in \Runs$, we have $|X_{t}(f)(\run)| = |\URVar(f\run_{1}\ldots\run_{t}) \cdot |f\run_{1}\ldots\run_{t}| - r| \leq |\URVar(f\run_{1}\ldots\run_{t}) \cdot |f\run_{1}\ldots\run_{t}|| + r <|f\run_{1}\ldots\run_{t}| + r = |f| + t + r$.
Finally, to show that the conditional expectation of $X_{t+1} (f)$ given $\F_{t}$ coincides with $X_t(f)$, let $t\in\NN$ be arbitrary.\footnote{The conditional expectation $\E (X_{t+1} (f) \mid \F_{t})$ is the (with probability $1$) unique $\F_{t}$-measurable random variable with $\E(\E(X_{t+1} \mid \F_{t}) \ind_{A}) = \E (X_{t+1}(f) \ind_{A})$ for all $A \in \F_{t}$.}
Then,
\begin{align*}
  \E (X_{t+1} (f) \mid \F_{t})(\run) - X_{t} (f)
   & = \E (X_{t+1}(f) - X_{t}(f) \mid \F_{t})(\run) \tag{$X_t(f)$ is $\F_t$-measurable} \\
   & = \E \left(|\run_{t+1}|_{\symMat{A}} \mid \F_{t} \right)(\run) - p \\
   & = 0,
\end{align*}
for (almost) all $\run = \run_{1} \run_{2} \ldots \in \Runs$, following \cref{lem:atoms_and_conditional_expectations}, as $\Pre_{\run_{1}\run_{2}\ldots\run_{t}}$ is an atom of the $\sigma$-field $\F_{t}$ and hence, similar to the proof of \cref{lem:dual_positive_eigenvalues_for_eventually_dominating_constraints}, $\E(|\run_{t+1}|_{\symMat{A}} \mid \F_{t})(\run) \cdot \P(\Pre_{\run_{1}\run_{2}\ldots\run_{t}})= \E (|\run_{t+1}|_{\symMat{A}} \cdot \ind_{\Pre_{\run_{1}\run_{2}\ldots\run_{t}}}) = p \cdot \P(\Pre_{\run_{1}\run_{2}\ldots\run_{t}}) > 0$.
Therefore, $\E(|\run_{t+1}|_{\symMat{A}} \mid \F_{t})(\run) = p$.
This concludes the proof that ${(X_{t}(f))}_{t=0}^{\infty}$ is a martingale.

Moreover, as $\{T = t\} \in \F_{t}$ for all $t\in\NN$, the random variable $T(f)$ is a stopping time for ${(X_{t}(f))}_{t=0}^{\infty}$ for arbitrary $f\in\Path$.
Finally, note that ${(X_{t}(f))}_{t=0}^{\infty}$ is difference-bounded, i.e., we have $|X_{t+1}(f)(\run) - X_{t}(f)(\run)| < 1$ for all $t \in \NN$ and all $\run \in \Runs$.\footnote{This strong form of difference-boundedness will be needed later on for Hoeffding's inequation (\Cref{thm:azumahoeffdinginequality}).
For the optional stopping theorem (that we will also use later on), the weaker difference-boundedness property $\E(|X_{t+1}(f) - X_{t}(f)| \mid \F_{t}) \leq c$ for some constant $c$ would be sufficient.
Recall that for every martingale, we have $\E(X_{t+1}(f) - X_{t}(f) \mid \F_{t}) = 0$, but this does not necessarily imply the boundedness of $\E(|X_{t+1}(f) - X_{t}(f)| \mid \F_{t})$.}

Our goal is to prove $\E (T(\hat{f})) = \infty$, i.e., if we extend the path $\hat{f}$, then in expectation, we need infinitely many steps to leave the safe region again.
Our proof proceeds by contradiction.
Assume $\E (T(\hat{f})) < \infty$.
Note that for all $i\in\NN$ and all $\run \in\Pre_{\hat{f}_{i}}$, we have $T(\hat{f})(\run)\geq T(\hat{f}_{i})(\run)$.
The reason is that for $T(\hat{f})(\run)$, the first $i$ steps after $\hat{f}$ are also taken into account.
Moreover, the probability that a run extends $\hat{f}$ to $\hat{f}_{i}$ is not $0$, since $1 > p > 0$ (i.e., the probability for runs from $\Pre_{\hat{f}_{i}}$ is not 0).
Thus, $\E (T(\hat{f})) < \infty$ implies $\E (T(\hat{f}_{i})) \leq \E (T(\hat{f})) < \infty$.\footnote{Note that we only have $T(\hat{f}) (\run) \geq T(\hat{f}_{i}) (\run)$ for $\run \in \Pre_{\hat{f}_{i}}$ if $\run$ does not leave the safe region between $\hat{f}$ and $\hat{f}_{i}$.
  Here, this is the case as at these positions, $\run$ is always in the middle region which is contained in the safe region.
}

This directly implies $\P (T (\hat{f}_{i}) = \infty) = 0$ for all $i\in\NN$ which allows us to ignore runs $\run\in\Runs$ with $T (\hat{f}_{i})(\run) = \infty$ in the following.
As the expected stopping time $\E (T(\hat{f}_i))$ is finite, we can apply the optional stopping theorem as stated in \Cref{thm:optionalstopping}
to all processes ${(X_{t}(\hat{f}_{i}))}_{t=0}^{\infty}$ with stopping time $T(\hat{f}_{i})$.
(The reason is that the optional stopping theorem requires that the process is a martingale, that the expected stopping time is finite, and that the process is difference-bounded.) Therefore, we obtain
\begin{equation}
  \label{OST}
  \E\left(X_{0} (\hat{f}_{i})\right) = \E \left( X_{T(\hat{f}_{i})} (\hat{f}_{i}) \right)
\end{equation}
for all $i\in\NN$.
Here, $X_0(\hat{f}_{i})$ maps any run $\run$ to the number $\URVar(\hat{f}_{i}) \cdot |\hat{f}_{i}| - r$, i.e., it ignores its input $\run$ and hence, we have $\E\left(X_{0} (\hat{f}_{i})\right) = X_{0} (\hat{f}_{i}) = \URVar(\hat{f}_{i}) \cdot |\hat{f}_{i}| - r$.
So while the martingale property of $X_t$ implies $\E\left(X_{0} (\hat{f}_{i})\right) = \E\left(X_{t} (\hat{f}_{i})\right)$ for all $t \in \NN$, the optional stopping theorem yields the stronger property $\E\left(X_{0} (\hat{f}_{i})\right) = \E \left( X_{T(\hat{f}_{i})} (\hat{f}_{i}) \right)$, because in the right-hand side, the index of $X$ is a random variable instead of a constant, i.e., $X_{T(\hat{f}_{i})(\run)} (\hat{f}_{i})(\run)$ maps any run $\run$ to the value of $X_t(\hat{f}_{i})(\run)$ at the time $t = T(\hat{f}_{i})(\run)$.
To see the difference, consider the process ${(X_{t}(f_{\eps}))}_{t=0}^{\infty}$ where $f_{\eps}$ with $|f_{\eps}| = 0$ is the empty path and the stopping time $T' = \min \{i \in \NN_{>0} \mid X_{i}(f_{\eps}) > -r\}$.
Then, for all $t \in \NN$ we have $\E (X_{t} (f_{\eps})) = \E(X_{0}(f_{\eps})) = -r$ by the martingale property of ${(X_{t}(f_{\eps}))}_{t=0}^{\infty}$ but $\E(X_{T(f_{\eps})}(f_{\eps})) > -r$.
Note that the last inequation is well defined as $\P (T_{f} (\eps) < \infty) = 1$ which can be shown by a simple random walk argument.
The apparent contradiction to the optional stopping theorem is resolved by observing $\E (T_{f} (f_{\eps})) = \infty$.

We now show that \cref{OST} leads to a contradiction.
To this end, we show that $\E \left( X_{T(\hat{f}_{i})} (\hat{f}_{i}) \right)$ is bounded by a constant for $i \to \infty$.
In contrast, $\E\left(X_{0} (\hat{f}_{i})\right) = \URVar(\hat{f}_{i}) \cdot |\hat{f}_{i}| - r$ diverges for $i \to \infty$.
Thus, this shows that the optional stopping theorem cannot be applied here and hence, our assumption $\E (T(\hat{f})) < \infty$ must be wrong.

By only considering the positive contributions to the expectation $\E \left(X_{T(\hat{f}_{i})} (\hat{f}_{i})\right)$ (i.e., the runs where we leave the safe region via its upper border), we obtain:
\begin{equation*}
  \E\left(X_{0} (\hat{f}_{i})\right) = \E \left( X_{T(\hat{f}_{i})} (\hat{f}_{i}) \right)
  \leq \E \left( X_{T(\hat{f}_{i})}(\hat{f}_{i}) \cdot \ind_{\left\{ X_{T(\hat{f}_{i})} (\hat{f}_{i}) > 0 \right\}} \right)
\end{equation*}

We elaborate on the value of this expectation for $i\in\NN$:
\begin{align*}
  \E (X_{0} (\hat{f}_{i}))
   & \leq \E \left( X_{T(\hat{f}_{i})}(\hat{f}_{i}) \cdot \ind_{\left\{ X_{T(\hat{f}_{i})} (\hat{f}_{i}) > 0 \right\}} \right) \\
   & = \E \left( X_{T(\hat{f}_{i})}(\hat{f}_{i}) \cdot \ind_{ \left\{ X_{T(\hat{f}_{i})} (\hat{f}_{i}) \geq \eps (|\hat{f}_{i}| + T(\hat{f}_{i})) - r \right\} }\right ) \tag{Def. of $T (\hat{f}_{i})$} \\
   & \leq \E \left( \left(\eps (|\hat{f}_{i}| + T(\hat{f}_{i})) + 1\right) \cdot \ind_{ \left\{ X_{T(\hat{f}_{i})} (\hat{f}_{i}) \geq \eps (|\hat{f}_{i}| + T(\hat{f}_{i})) - r \right\} }\right )
\end{align*}
as for every $i \in \NN$ we have $X_{T(\hat{f}_{i}) - 1}(\hat{f}_{i}) \leq \eps (|\hat{f}_{i}| + T(\hat{f}_{i}) - 1) - r$ by definition of $T(\hat{f}_{i})$ and $X_{T(\hat{f}_{i})}(\hat{f_{i}}) \leq X_{T(\hat{f}_{i}) - 1}(\hat{f}_{i}) + 1$ by definition of $X_{t}$.
Hence, $X_{T(\hat{f}_{i})}(\hat{f}_{i}) \leq \eps(|\hat{f}_{i}| + T(\hat{f}_{i})) + 1$, i.e., $\eps(|\hat{f}_{i}| + T(\hat{f}_{i})) + 1$ is an upper bound on the value of the process when the run has left the safe region via its upper border for the first time.

We continue in order to express the expected value by probabilities.
Here, note that $T (\hat{f}_{i}) > 0$ by definition of $T$ (it looks for the smallest index $i\in\NN_{>0}$).
\begin{align*}
  \E (X_{0} (\hat{f}_{i}))
   & \leq \E \left( \left(\eps (|\hat{f}_{i}| + T(\hat{f}_{i})) + 1\right) \cdot \ind_{ \left\{ X_{T(\hat{f}_{i})} (\hat{f}_{i}) \geq \eps (|\hat{f}_{i}| + T(\hat{f}_{i})) - r \right\} }\right ) \\
   & = \E \Big( \sum_{t=1}^{\infty} \left(\eps (|\hat{f}_{i}| + t) +
    1\right) \cdot \ind_{ \left\{ X_{t} (\hat{f}_{i}) \geq \eps (|\hat{f}_{i}| + t) -
      r \right\} } \cdot \ind_{ \{T(\hat{f}_{i}) = t\} }\Big
  ) \tag{as $T (\hat{f}_{i}) > 0$} \\
   & = \sum_{t=1}^{\infty} \E \Big( \left(\eps (|\hat{f}_{i}| + t) +
    1\right) \cdot \ind_{ \left\{ X_{t} (\hat{f}_{i}) \geq \eps (|\hat{f}_{i}| + t) -
      r \right\}
    } \cdot \ind_{ \{T(\hat{f}_{i}) = t\}
  }\Big ) \tag{by Mon. Conv.} \\
   & = \sum_{t=1}^{\infty} \left(\eps (|\hat{f}_{i}| + t) + 1\right) \cdot \E \left(\ind_{ \left\{ X_{t} (\hat{f}_{i}) \geq \eps (|\hat{f}_{i}| + t) - r \right\} }\cdot \ind_{ \{T(\hat{f}_{i}) = t\} }\right ) \\
   & \leq \sum_{t=1}^{\infty} \left(\eps (|\hat{f}_{i}| + t) + 1\right) \cdot \E \left(\ind_{ \left\{ X_{t} (\hat{f}_{i}) \geq \eps (|\hat{f}_{i}| + t) - r \right\} } \right ) \\
   & = \sum_{t=1}^{\infty} \left(\eps (|\hat{f}_{i}| + t) + 1\right) \cdot \P \left( \left\{ X_{t} (\hat{f}_{i}) \geq \eps (|\hat{f}_{i}| + t) - r \right\} \right ) \\
   & = \sum_{t=1}^{\infty} \left(\eps (|\hat{f}_{i}| + t) + 1\right) \cdot \P \left( \left\{ X_{t} (\hat{f}_{i}) - X_{0} (\hat{f}_{i}) \geq \eps (|\hat{f}_{i}| + t) - r - X_{0}(\hat{f}_{i})\right\} \right )
\end{align*}
Recall that by construction of $\hat{f}_{i}$ we have $X_{0} (\hat{f}_{i}) = \URVar(\hat{f}_{i}) \cdot |\hat{f}_{i}| - r \leq \tfrac{\eps|\hat{f}_{i}| + r}{2} + 1 - r$ for all $i\in\NN$, and thus, $\eps (|\hat{f}_{i}| + t) - r - X_{0}(\hat{f}_{i}) \geq \eps (|\hat{f}_{i}| + t) - r - \tfrac{\eps|\hat{f}_{i}| + r}{2} - 1 + r = \eps \left(\frac{|\hat{f}_{i}|}{2} + t\right) - 1 - \frac{r}{2}$.

We continue our previous computation by under-appro\-xi\-mating the lower bound $\eps (|\hat{f}_{i}| + t) - r - X_{0}(\hat{f}_{i})$ using that $X_{0}(\hat{f}_{i}) \leq \frac{\eps|\hat{f}_{i}| + r}{2} + 1$.
\begin{equation}
  \begin{split}
    &\E (X_{0} (\hat{f}_{i})) \\
    \leq{}& \sum_{t=1}^{\infty} \left(\eps (|\hat{f}_{i}| + t) + 1\right) \cdot\P \left( \left\{ X_{t} (\hat{f}_{i}) - X_{0} (\hat{f}_{i}) \geq \eps \left(\frac{|\hat{f}_{i}|}{2} + t\right) - 1 - \frac{r}{2} \right\} \right) .
  \end{split}
  \tag{\dag}\label{eq:ent_witnesses_remaining_parts1}
\end{equation}

Next, one over-approximates the probability on the right-hand side of \cref{eq:ent_witnesses_remaining_parts1} by applying Hoeffding's inequation (see \Cref{thm:azumahoeffdinginequality}) to the difference-bounded martingale ${(X_{t}(\hat{f}_i))}_{t=0}^{\infty}$.
Here, $\exp(n)$ stands for $e^n$.
Hoeffding's inequation is the step in Line 2, which is applicable for all difference-bounded martingales.
Essentially, it allows us to conclude that for $|\hat{f}_i| \to \infty$ (i.e., for growing height of the safe region), the probability to leave the safe region via its upper border decreases exponentially.
\begin{align}
   & \P \left( \left\{ X_{t} (\hat{f}_{i}) - X_{0} (\hat{f}_{i}) \geq \eps \left(\frac{|\hat{f}_{i}|}{2} + t\right) - 1 - \frac{r}{2} \right\} \right) \nonumber \\
   & \leq \exp \left( \frac{- {\left( \eps \left(\frac{|\hat{f}_{i}|}{2} + t\right) - 1 - \frac{r}{2} \right)}^{2}}{2t} \right) \nonumber \\
   & = \exp \left( \frac{ - \eps^{2} {\left(\frac{|\hat{f}_{i}|}{2} + t\right)}^{2} + 2 \eps \left(\frac{|\hat{f}_{i}|}{2} + t\right) (1 + \frac{r}{2}) - {(1 + \frac{r}{2})}^{2} }{2t} \right) \nonumber \\
   & \leq \exp \left( \frac{ -\eps^{2} {\left(\frac{|\hat{f}_{i}|}{2} + t\right)}^{2} + 2 \eps \left(\frac{|\hat{f}_{i}|}{2} + t\right) (1 + \frac{r}{2}) }{2t} \right) \nonumber \\
   & = \exp \left( \frac{-\eps^{2} {\left(\frac{|\hat{f}_{i}|^{2}}{4} + |\hat{f}_{i}|t + t^{2}\right)} + 2 \eps \left(\frac{|\hat{f}_{i}|}{2} + t\right) (1 + \frac{r}{2}) }{2t} \right) \tag{\ddag} \label{eq:ent_witnesses_remaining_parts2}
\end{align}
We choose $I\in\NN$ large enough such that
\[
  \eps^2 \left(\frac{|\hat{f}_{i}|^{2}}{4} + |\hat{f}_{i}|t + t^{2}\right) \geq 4 \eps \cdot \left(\frac{|\hat{f}_{i}|}{2} + t\right) \cdot \left(1+\frac{r}{2}\right)
\]
for all $i\in\NN_{\geq I}$ and all $t\in\NN_{>0}$.
We have
\[
  \eps^2 \left(\frac{|\hat{f}_{i}|^{2}}{4} + |\hat{f}_{i}|t + t^{2}\right) \geq 4 \eps \cdot \left(\frac{|\hat{f}_{i}|}{2} + t\right) \cdot \left(1+\frac{r}{2}\right)
\]
which is equivalent to
\[
  -\frac{1}{2} \eps^2 \left(\frac{|\hat{f}_{i}|^{2}}{4} + |\hat{f}_{i}|t + t^{2}\right) \leq -2 \eps \cdot \left(\frac{|\hat{f}_{i}|}{2} + t\right) \cdot \left(1+\frac{r}{2}\right)
\]
which in turn holds iff
\begin{align*}
         & -\eps^2 \left(\frac{|\hat{f}_{i}|^{2}}{4} + |\hat{f}_{i}|t + t^{2}\right) + 2 \eps \cdot \left(\frac{|\hat{f}_{i}|}{2} + t\right) \cdot \left(1+\frac{r}{2}\right) \\
  \leq{} & -\frac{1}{2} \eps^2 \left(\frac{|\hat{f}_{i}|^{2}}{4} + |\hat{f}_{i}|t + t^{2}\right).
\end{align*}
We restrict ourselves to $i \geq I$ and continue our previous computation from inequation \cref{eq:ent_witnesses_remaining_parts2}.
\begin{align*}
         & \P \left( \left\{ X_{t} (\hat{f}_{i}) - X_{0} (\hat{f}_{i}) \geq \eps \left(\frac{|\hat{f}_{i}|}{2} + t\right) - 1 - \frac{r}{2} \right\} \right) \\
  \leq{} & \exp \left( \frac{- \frac{1}{2}\eps^{2} {\left(\frac{|\hat{f}_{i}|^{2}}{4} + |\hat{f}_{i}|t + t^{2}\right)} }{2t} \right) \\
  ={}    & \exp \left( \frac{- \eps^{2} {\left(\frac{|\hat{f}_{i}|^{2}}{4} + |\hat{f}_{i}|t + t^{2}\right)} }{4t} \right) \\
  \leq{} & \exp \left( \frac{-\eps^{2} {\left(|\hat{f}_{i}|t + t^{2}\right)}
  }{4t} \right) \tag{as $\frac{|\hat{f}_{i}|^{2}}{4} > 0$} \\
  ={}    & \exp \left( - \frac{\eps^{2}}{4} (|\hat{f}_{i}| + t) \right )
\end{align*}
By inserting this last inequation into \cref{eq:ent_witnesses_remaining_parts1}, one obtains
\[
  \E (X_{0} (\hat{f}_{i})) \leq \sum_{t=1}^{\infty}
  \frac{\eps (|\hat{f}_{i}| + t)
    + 1}{e^{\frac{\eps^{2}(|\hat{f}_{i}|+t)}{4}}} .
\]
Hence,
\begin{equation*}
  \lim_{i\to\infty} \E \left( X_{0} (\hat{f}_{i})\right) \leq \lim_{i\to\infty} \sum_{t=1}^{\infty} \frac{\eps (|\hat{f}_{i}| + t) + 1}{e^{\frac{\eps^{2}(|\hat{f}_{i}|+t)}{4}}} = \lim_{i\to\infty} \sum_{t=1}^{\infty} \frac{\eps (|\hat{f}| + i + t) + 1}{e^{\frac{\eps^{2}(|\hat{f}|+{i} + t)}{4}}} < \infty .
\end{equation*}
The reason that the last limit is finite is that the nominator $\eps (|\hat{f}| + i + t) + 1$ grows linearly for $i \to \infty$, whereas the denominator $e^{\frac{\eps^{2}(|\hat{f}|+{i}+t)}{4}}$ grows exponentially.
Moreover, $t$ is also linear in the nominator and exponential in the denominator (this is needed because otherwise, the sum from $t = 1$ to $\infty$ would not converge).

On the other hand, recall that we have $X_{0} (\hat{f}_{i}) = \URVar(\hat{f}_{i}) \cdot |\hat{f}_{i}| - r \geq \tfrac{\eps|\hat{f}_{i}| + r}{2} - 1 - r = \tfrac{\eps|\hat{f}_{i}| - r}{2} - 1$.
Hence,
\[
  \lim_{i\to\infty} \E\left(X_{0} (\hat{f}_{i})\right) \geq \lim_{i\to\infty}
  \left(\frac{\eps |\hat{f}_{i}| - r}{2} - 1\right) = \infty .
\]
Thus, we have a contradiction.
Hence, our assumption that $\E (T(\hat{f})) < \infty$ must have been wrong which in turn implies $\E (T(\hat{f})) = \infty$.

\end{proof}

\boosting*{}
\begin{proof}
  Let $\vecx \in \semiring^{n}$ be a non-terminating input, i.e., $\vecx \in \NT$.
  Then, according to \Cref{lem:dual_positive_eigenvalues_for_eventually_dominating_constraints}, there is a $d\in\{\nmax,\pmax\}$ such that for all $c\in[m]$ we have $\sum_{i\in\mathfrak{R}_{d,c,\vecx}} \gamma_{c,i}(\vecx) > 0$.
  In particular, $\constrainttermgroup_{d,c,\vecx} \neq \emptyset$.
  For all $c\in[m]$ we consider the set $\{I_{c,1},\ldots,I_{c,l_{c}}\} = \{i \in \mathfrak{C}_{d,c,\vecx} \mid \gamma_{c,i}(\vecx) \neq 0\}$ of indices corresponding to at least one non-real or non-positive eigenvalue for the eventually dominating terms corresponding to the constraint $c$.
  Here, we filter out indices $i \in \mathfrak{C}_{d,c,\vecx}$ with $\gamma_{c,i}(\vecx) = 0$ as they do not contribute to the sum $\sum_{i\in\mathfrak{C}_{d,c,\vecx}} |\gamma_{c,i} (\vecx)|$.
  In addition, this will allow us to avoid a division by $0$ later on.

  For $c\in[m]$, we define
  \[
    w_{c}(\vecx) = \sum_{i\in\mathfrak{R}_{d,c,\vecx}} \gamma_{c,i} (\vecx) .
  \]

  We will then inductively construct inputs $\vecx_{\!c,j} \in \semiring^{n}$ for all $(c,j) \in [m]\times ([l_c] \cup \{0\})$ such that
  \begin{equation}
    \text{for all $(c',j') \leq_{\mathrm{lex}} (c,j):$ If $j' \neq 0$, then } \frac{j'}{l_{c'}} w_{c'}\left(\vecx_{\!c,j}\right) > \sum_{\tilde{j}\in [j']} \left|
    \gamma_{c',I_{c',\tilde{j}}} \left(\vecx_{\!c,j}\right) \right|
    \label{boosting claim}
  \end{equation}
  Here, we use the lexicographic ordering, where $\leq_{\mathrm{lex}}$ is defined like $\leq_{\mathrm{lex},\pmax}$ in \Cref{def:dominating_constraint_term_groups}, i.e., $(c',j') <_{\mathrm{lex}} (c,j)$ iff $c' < c$ or both $c' = c$ and $j' \leq j$.
  In \cref{boosting claim}, note that if $l_{c'} = 0$, then $j' = 0$, i.e., we never divide by 0.

  So for all $c' \in [m]$ and $j' = l_{c'} \neq 0$, we have $w_{c'}\left(\vecx_{\!m,l_m}\right) > \sum_{\tilde{j} \in [l_{c'}]} \left| \gamma_{c',I_{c',\tilde{j}}} \left(\vecx_{\!m,l_m}\right) \right|$, i.e., $\sum_{i\in\mathfrak{R}_{d,c',\vecx_{\!m,l_m}}} \gamma_{c',i} (\vecx_{\!m,l_m}) > \sum_{i\in\mathfrak{C}_{d,c',\vecx_{\!m,l_m}}}\left| \gamma_{c',i} \left(\vecx_{\!m,l_m}\right) \right|$.
  Moreover, $w_{c'}\left(\vecx_{\!m,l_m}\right) > \sum_{\tilde{j} \in [l_{c'}]} \left| \gamma_{c',I_{c',\tilde{j}}} \left(\vecx_{\!m,l_m}\right) \right|$ also holds if $l_{c'} = 0$.
  Then the right-hand side is 0.
  For the left-hand side we have $w_{c'}\left(\vecx\right) = \sum_{i\in\mathfrak{R}_{d,c',\vecx}} \gamma_{c',i} (\vecx) > 0$.
  We will later see that for $i\in\mathfrak{R}_{d,c',\vecx}$, the $\gamma_{c',i}(\vecx_{\!m,l_m})$ only result from $\gamma_{c',i}(\vecx)$ by multiplication and addition with positive numbers.
  Thus, this implies $w_{c'}\left(\vecx_{\!m,l_m}\right) = \sum_{i\in\mathfrak{R}_{d,c',\vecx_{\!m,l_{m}}}} \gamma_{c',i} (\vecx_{\!m,l_{m}}) > 0$.
  As we will explain below, for all $(c,j) \in [m] \times ([l_{c} \cup \{0\}])$ and $c' \in [m]$ we have $\constrainttermgroup_{d,c',\vecx_{\!c,j}} = \constrainttermgroup_{d,c',\vecx}$ and hence $\mathfrak{R}_{d,c',\vecx_{\!c,j}} = \mathfrak{R}_{d,c',\vecx}$ as well as $\mathfrak{C}_{d,c',\vecx_{\!c,j}} = \mathfrak{C}_{d,c',\vecx}$.

  For the construction of $\vecx_{\!c,j}$, we use induction w.r.t.\ the lexicographic ordering, i.e., when constructing $\vecx_{\!c,j}$, we assume that we already have the input $\vecx_{\!c',j'}$ for its ``predecessor'' $(c',j') <_{\mathrm{lex}} (c,j)$.

  We first regard the case $j = 0$.
  Here, we set $\vecx_{1,0} = \vecx\in\semiring^{n}$ and $\vecx_{\!c,0} = \vecx_{\!c-1,l_{c-1}}\in\semiring^{n}$ for $c\in\{2,\ldots,m\}$.

  Now we regard the case $j \geq 1$.
  W.l.o.g., we assume $\zeta_{I_{c,j},\matA} \neq 1$ as otherwise one simply has to consider $\matB$ instead of $\matA$ in the following.
  We first show the existence of natural numbers $e_{1},\ldots,e_{k} \in \NN$ such that $|\zeta_{I_{c,j},\matA}^{e_{1}} + \cdots + \zeta_{I_{c,j},\matA}^{e_{k}} + 1| < \frac{w_{c}(\vecx_{\!c,j-1})}{l_{c} \cdot |\gamma_{c,I_{c,j}}(\vecx_{\!c,j-1})|}$.
  As $l_{c}, |\gamma_{c,I_{c,j}}(\vecx_{\!c,j-1})| > 0$, the fraction on the right-hand side is well defined and by \Cref{lem:dual_positive_eigenvalues_for_eventually_dominating_constraints} as well as our previous discussion we have $\frac{w_{c}(\vecx_{\!c,j-1})}{l_{c} \cdot |\gamma_{c,I_{c,j}}(\vecx_{\!c,j-1})|} > 0$.
  We distinguish two cases depending on whether the complex unit $\zeta_{I_{c,j},\matA}$ is a root of unity.
  \begin{enumerate}
    \item $\zeta_{I_{c,j},\matA} = e^{2\pi \im r}$ for some rational $r = \frac{p}{q}$ with $p,q \in \ZZ$, i.e., $\zeta_{I_{c,j},\matA}$ is a $q$-th root of unity.
          Then, $\zeta_{I_{c,j},\matA}^{q} = 1$ and $\left|1 + \sum_{\tilde{k}=1}^{q-1} \zeta_{I_{c,j},\matA}^{\tilde{k}}\right| = \left|\frac{\zeta_{I_{c,j},\matA}^{q} - 1}{\zeta_{I_{c,j},\matA} - 1}\right| = 0$.
    \item $\zeta_{I_{c,j},\matA} = e^{2 \pi \im r }$ for some irrational real $r\in\RR\setminus\QQ$.
          It is well known that the orbit $\{\zeta_{I_{c,j},\matA}^{e_{1}} \mid e_{1}\in\NN\}$ of such a complex unit is dense in the unit circle $\{z \in \CC \mid |z|=1\}$.\footnote{This is called irrational rotation.
            For more details we refer to~\cite[Thm.\ 3.2.3]{ergodictheory} for a proof.}
          So $\zeta_{I_{c,j},\matA}^{e_{1}}$ can get arbitrarily close to $-1$ and thus, $\zeta_{I_{c,j},\matA}^{e_{1}} + 1 \in \CC$ can get arbitrarily close to $0$.
          Hence, there is some $e_{1} \in \NN$ such that $|\zeta_{I_{c,j},\matA}^{e_{1}} + 1| < \frac{w_{c} (\vecx_{\!c,j-1})}{l_{c}
            \left| \gamma_{c,I_{c,j}} (\vecx_{\!c,j-1})\right|}$.
  \end{enumerate}

  Following \Cref{lem:shared_modulus_of_eigenvalues_in_constraint_term_groups}, let $\hat{a}_{c}$ be the shared modulus of all eigenvalues $a_{i}$ of $\matA$ for $i\in\constrainttermgroup_{d,c,\vecx}$, i.e., $|a_i| = \hat{a}_{c}$ for all $i \in \constrainttermgroup_{d,c,\vecx}$.
  As the rationals are dense in the reals, one chooses positive rationals $r_{1} = \tfrac{p_{1}}{q},\ldots,r_{k} = \tfrac{p_{k}}{q}$ with $q\in\NN_{>0}$ and $p_{\tilde{k}} \in \NN_{>0}$ for all $\tilde{k}\in[k]$ that are ``close-enough'' approximations of $\hat{a}_{c}^{-e_{1}},\ldots,\hat{a}_{c}^{-e_{k}}$ such that
  \begin{equation*}
    \left|\frac{p_{1}}{q} \hat{a}_{c}^{e_{1}} \zeta_{I_{c,j},\matA}^{e_{1}} + \cdots + \frac{p_{k}}{q} \hat{a}_{c}^{e_{k}} \zeta_{I_{c,j},\matA}^{e_{k}} + 1\right| < \frac{w_{c} (\vecx_{\!c,j-1})}{l_{c} \cdot \left| \gamma_{c,I_{c,j}} (\vecx_{\!c,j-1}) \right|} .
  \end{equation*}
  By multiplying on both sides with the (positive) denominator $q$ one obtains
  \begin{equation}
    \left|p_{1} \hat{a}_{c}^{e_{1}} \zeta_{I_{c,j},\matA}^{e_{1}} + \cdots + p_{k} \hat{a}_{c}^{e_{k}} \zeta_{I_{c,j},\matA}^{e_{k}} + q \right| < q \cdot \frac{w_{c}(\vecx_{\!c,j-1})}{l_{c} \cdot \left| \gamma_{c,I_{j}} (\vecx_{\!c,j-1}) \right|}.
    \label{eq:boosting1}
  \end{equation}
  We set $\vecx_{\!c,j} =(p_{1} \matA^{e_{1}} + \cdots + p_{k} \matA^{e_{k}})\cdot \vecx_{\!c,j-1} + q \cdot \vecx_{\!c,j-1}$.
  Then, $\vecx_{\!c,j} \in \semiring^{n}$ as $\matA\in\semiring^{n \times n}$ and $p_{1},\ldots,p_{k},q\in\NN_{>0} \subseteq \semiring$.

  For the term on the left-hand side of \cref{boosting claim}
  for all $c' \leq c$ we have:
  \begin{align*}
    w_{c'} (\vecx_{\!c,j})
     & = \!\!\!\!\!\!\sum_{i \in \mathfrak{R}_{d,c',\vecx_{\!c,j}}}\!\!\!\!\!\!\!\!\!\! \gamma_{c',i} (\vecx_{\!c,j}) \\
     & = \!\!\!\!\!\!\sum_{i \in \mathfrak{R}_{d,c',\vecx_{\!c,j}}}\!\!\!\!\!\!\!\!\!\! \gamma_{c',i} ((p_{1} \matA^{e_{1}} + \cdots + p_{k} \matA^{e_{k}})\cdot \vecx_{\!c,j-1} + q \cdot \vecx_{\!c,j-1}) \\
     & = \!\!\!\!\!\!\sum_{i \in \mathfrak{R}_{d,c',\vecx_{\!c,j}}}\!\!\!\!\!\!\!\!\!\! \left( p_{1} \cdot a_i^{e_{1}} \cdot \gamma_{c',i} (\vecx_{\!c,j-1}) + \cdots + p_{k} \cdot a_i^{e_{k}} \cdot \gamma_{c',i} (\vecx_{\!c,j-1}) + q \cdot \gamma_{c',i} (\vecx_{\!c,j-1})\right) \tag{by \Cref{GammaAB}, as $\gamma_{c',i}$ is linear} \\
     & = (p_{1} \hat{a}_{c'}^{e_{1}} + \cdots +
    p_{k} \hat{a}_{c'}^{e_{k}} + q) \cdot
    \!\!\!\!\sum_{i \in \mathfrak{R}_{d,c',\vecx_{\!c,j}}}\!\!\!\!\!\!\!\!
    \gamma_{c',i}(\vecx_{\!c,j-1}) \tag{as $a_i = |a_i| = \hat{a}_{c'}$ for positive real
    eigenvalues $a_i$} \\
     & = (p_{1} \hat{a}_{c'}^{e_{1}} + \cdots +
    p_{k} \hat{a}_{c'}^{e_{k}} + q) \cdot \!\!\!\!\!\!\sum_{i \in \mathfrak{R}_{d,c',\vecx_{\!c,j-1}}}\!\!\!\!\!\!\!\!\!\!\!\!
    \gamma_{c',i}(\vecx_{\!c,j-1}) \tag{as $\mathfrak{R}_{d,c',\vecx_{\!c,j}} =
    \mathfrak{R}_{d,c',\vecx_{\!c,j-1}}$} \\
     & = (p_{1} \hat{a}_{c'}^{e_{1}} + \cdots +
    p_{k} \hat{a}_{c'}^{e_{k}} + q) \cdot
    w_{c'}(\vecx_{\!c,j-1}) \tag{$\dagger\dagger$} \label{eq:boosting_proof_daggerdagger} \\
     & \geq q \cdot w_{c'}(\vecx_{\!c,j-1}) \tag{$\dagger\dagger\dagger$} \label{eq:boosting_proof_daggerdaggerdagger}
  \end{align*}
  Here, to see that $\mathfrak{R}_{d,c',\vecx_{\!c,j}} = \mathfrak{R}_{d,c',\vecx_{\!c,j-1}}$, let $(\scaleinside,\scaleoutside) \in \mathcal{I}$ be such that $\constrainttermgroup_{(\scaleinside,\scaleoutside)} = \constrainttermgroup_{d,c',\vecx_{\!c,j-1}}$.
  Hence, for $(\scaleinside',\scaleoutside')$ with $(\scaleinside',\scaleoutside') >_{\mathrm{lex},d} (\scaleinside,\scaleoutside)$ we have $\constrainttermgroup_{(\scaleinside',\scaleoutside'),c',\vecx_{\!c,j-1}} = 0$.
  By \Cref{def:constraint_term_groups} and the definition of $\vecx_{\!c,j}$, we conclude $\constrainttermgroup_{(\scaleinside',\scaleoutside'),c',\vecx_{\!c,j}} = \emptyset$ as
  \begin{align*}
        & \sum_{i \in \constrainttermgroup_{(\scaleinside',\scaleoutside')}} \zeta_{i,\matA}^{|f|_{\symMat{A}}} \zeta_{i,\matB}^{|f|_{\symMat{B}}} \gamma_{c',i} (\vecx_{\!c',j}) \\
    ={} & \sum_{i=1}^{k} p_{i} \sum_{i'\in\constrainttermgroup_{(\scaleinside',\scaleoutside')}} \zeta_{i',\matA}^{|f|_{\symMat{A}} + e_{i}} \zeta_{i',\matB}^{|f|_{\symMat{B}}} \gamma_{c',i'} (\vecx_{\!c,j-1}) + q \sum_{i\in\constrainttermgroup_{(\scaleinside',\scaleoutside')}} \zeta_{i,\matA}^{|f|_{\symMat{A}} + e_{i}} \zeta_{i,\matB}^{|f|_{\symMat{B}}} \gamma_{c',i} (\vecx_{\!c,j-1})
  \end{align*}
  is $0$ for all $f\in\Path$ since by \Cref{def:constraint_term_groups}
  already all its (outer) addends are $0$.
  Moreover, $\constrainttermgroup_{(\scaleinside,\scaleoutside),c',\vecx_{\!c,j-1}} \neq \emptyset$ implies $\constrainttermgroup_{(\scaleinside,\scaleoutside),c',\vecx_{\!c,j}}
    \neq \emptyset$ as for $f\in\Path$ we have
  \begin{align*}
        & \sum_{i\in\constrainttermgroup_{(\scaleinside,\scaleoutside)}} \zeta_{1,\matA}^{|f|_{\symMat{A}}} \zeta_{2,\matB}^{|f|_{\symMat{B}}} \gamma_{c',i} (\vecx_{\!c,j}) \\
    ={} & \sum_{i\in\mathfrak{R}_{(\scaleinside,\scaleoutside)}} (p_{1} + \ldots + p_{k} + q) \gamma_{c',i} (\vecx_{\!c,j-1}) \\
        & {}+ \sum_{i\in\mathfrak{C}_{(\scaleinside,\scaleoutside)}} (p_{1} + \ldots + p_{k} + q) \zeta_{i,\matA}^{|f|_{\symMat{A}} + e_{i}} \zeta_{i,\matB}^{|f|_{\symMat{B}}} \gamma_{c',i} (\vecx_{\!c,j-1})
  \end{align*}
  where the left sum on the right-hand side (over $\mathfrak{R}_{(\scaleinside,\scaleoutside)}$) is positive and does not depend on $f$ whereas the right sum is either $0$ for all $f$, or will take different values depending on $f$ (see~\Cref{lem:braverman_generalisation}).
  This implies that the right-hand side cannot be $0$ for all $f$.
  Hence, $\constrainttermgroup_{d,c,\vecx_{\!c,j}} = \constrainttermgroup_{(\scaleinside,\scaleoutside)} = \constrainttermgroup_{d,c,\vecx_{\!c,j-1}}$ by \Cref{def:dominating_constraint_term_groups} and thus $\mathfrak{R}_{d,c,\vecx_{\!c,j}} = \mathfrak{R}_{d,c,\vecx_{\!c,j-1}}$.

  For the term on the right-hand side of \cref{boosting claim}, for all $(c',j') \leq_{\mathrm{lex}} (c,j)$ we have
  \begin{align*}
           & \sum_{\substack{\tilde{j}\in[j']}} \left| \gamma_{c',I_{c',\tilde{j}}} (\vecx_{\!c,j})
    \right| \\
    ={}    & \sum_{\substack{\tilde{j}\in[j']}} \Big| \gamma_{c',I_{c',\tilde{j}}} ((p_{1} \matA^{e_{1}} + \cdots + p_{k} \matA^{e_{k}})\cdot \vecx_{\!c,j-1} + q \cdot \vecx_{\!c,j-1}) \Big| \\
    ={}    & \sum_{\substack{\tilde{j}\in[j']}} \left| (p_{1} a_{I_{c',\tilde{j}}}^{e_{1}} + \cdots + p_{k} a_{I_{c',\tilde{j}}}^{e_{k}} + q) \cdot \gamma_{c',I_{c',\tilde{j}}} (\vecx_{\!c,j-1}) \right| \tag{as above} \\
    ={}    & \sum_{\substack{\tilde{j}\in[j']}} \Big| (p_{1} \hat{a}_{c'}^{e_{1}} \zeta_{I_{c',\tilde{j}},\matA}^{e_{1}}+ \cdots + p_{k} \hat{a}_{c'}^{e_{k}} \zeta_{I_{c',\tilde{j}},\matA}^{e_{k}}+ q) \cdot \gamma_{c',I_{c',\tilde{j}}} (\vecx_{\!c,j-1}) \Big| \tag{$a_{I_{c',\tilde{j}}} = |a_{I_{c',\tilde{j}}}| \zeta_{{I_{c',\tilde{j}}},\matA} = \hat{a}_{c'} \zeta_{I_{c',\tilde{j}},\matA}$ for $a_{I_{c',\tilde{j}}} \not \in \AA_{>0}$} \\
    \leq{} &
    (p_{1} \hat{a}_{c'}^{e_{1}} + \cdots + p_{k}
    \hat{a}_{c'}^{e_{k}} + q)
    \cdot \sum_{\substack{\tilde{j}\in[j']}} \left|
    \gamma_{c',I_{c',\tilde{j}}} (\vecx_{\!c,j-1})
    \right| \tag{$\ddag$} \label{eq:boosting_proof_ddag}
  \end{align*}
  where we used the triangle inequation in the last step.

  Now we first prove that $\vecx_{\!c,j}$ satisfies the claim \cref{boosting claim} for $(c',j')$ that are strictly smaller than $(c,j)$, i.e., we show that for all $(c',j') <_{\mathrm{lex}} (c,j)$ with $j' \neq 0$ we have
  \begin{equation}
    \label{boostingClaimIS1}
    \frac{j'}{l_{c'}} w_{c'}\left(\vecx_{\!c,j}\right) > \sum_{\tilde{j}\in [j']} \left|
    \gamma_{c',I_{c',\tilde{j}}} \left(\vecx_{\!c,j}\right) \right|
  \end{equation}
  To see this, note that by the induction hypothesis we have
  \[
    \frac{j'}{l_{c'}} w_{c'}\left(\vecx_{\!c,j-1}\right) > \sum_{\tilde{j}\in [j']} \left| \gamma_{c',I_{c',\tilde{j}}} \left(\vecx_{\!c,j-1}\right) \right|
  \]
  implying
  \begin{align*}
        & \frac{j'}{l_{c'}} (p_{1} \hat{a}_{c'}^{e_{1}} + \cdots + p_{k} \hat{a}_{c'}^{e_{k}} + q) w_{c'}\left(\vecx_{\!c,j-1}\right) \\
    >{} & (p_{1} \hat{a}_{c'}^{e_{1}} + \cdots + p_{k} \hat{a}_{c'}^{e_{k}} + q) \sum\limits_{\tilde{j}\in [j']} \left| \gamma_{c',I_{c',\tilde{j}}} \left(\vecx_{\!c,j-1}\right) \right|
  \end{align*}
  as $p_{1} \hat{a}_{c'}^{e_{1}} + \cdots + p_{k} \hat{a}_{c'}^{e_{k}} + q > 0$.
  By \cref{eq:boosting_proof_daggerdagger} this is equivalent to
  \begin{equation*}
    \frac{j'}{l_{c'}} w_{c'}\left(\vecx_{\!c,j}\right)
    > (p_{1} \hat{a}_{c'}^{e_{1}} + \cdots + p_{k} \hat{a}_{c'}^{e_{k}} + q) \sum_{\tilde{j}\in [j']} \left| \gamma_{c',I_{c',\tilde{j}}} \left(\vecx_{\!c,j-1}\right) \right|
  \end{equation*}
  which in turn by \cref{eq:boosting_proof_ddag} implies
  \[
    \frac{j'}{l_{c'}} w_{c'}\left(\vecx_{\!c,j}\right) > \sum_{\tilde{j}\in [j']} \left| \gamma_{c',I_{c',\tilde{j}}} \left(\vecx_{\!c,j}\right) \right| .
  \]

  Finally, we have to show that $\vecx_{\!c,j}$ also satisfies \cref{boosting claim} for $(c',j') = (c,j)$, i.e.,
  \begin{equation}
    \label{boostingClaimIS2}
    \frac{j}{l_{c}} w_{c}\left(\vecx_{\!c,j}\right) > \sum_{\tilde{j}\in [j]} \left|
    \gamma_{c,I_{c,\tilde{j}}} \left(\vecx_{\!c,j}\right) \right|
  \end{equation}
  To this end, note that
  \begin{align*}
           & \left|\gamma_{c,I_{c,j}} \left(\vecx_{\!c,j}\right)\right| \\
    ={}    & \left| (p_{1} \hat{a}_c^{e_{1}}\zeta_{I_{c,j},\matA}^{e_{1}} + \cdots + p_{k} \hat{a}_c^{e_{k}} \zeta_{I_{c,j},\matA}^{e_{k}} + q) \cdot \gamma_{c,I_{c,j}}(\vecx_{\!c,j-1})\right| \tag{as above} \\
    <{}    & q \cdot \frac{w_{c}\left(\vecx_{\!c,j-1}\right)}{l_{c} \left| \gamma_{c,I_{j}} (\vecx_{\!c,j-1}) \right|} \left|\gamma_{c,I_{c,j}} \left(\vecx_{\!c,j-1}\right)\right| \tag{by \cref{eq:boosting1}} \\
    ={}    & q \frac{w_{c}\left(\vecx_{\!c,j-1}\right)}{l_{c}} \\
    \leq{} & \frac{1}{l_{c}} w_{c} \left( \vecx_{\!c,j}\right) \tag{by \labelcref{eq:boosting_proof_daggerdaggerdagger} for $c' = c$}
  \end{align*}

  This proves \cref{boostingClaimIS2}, since
  \begin{align*}
         & \sum_{\tilde{j}\in[j]} \left| \gamma_{c,I_{c,\tilde{j}}} \left(\vecx_{\!c,j}\right) \right| \\
    ={}  & \left| \gamma_{c,I_{c,j}} \left(\vecx_{\!c,j}\right) \right| + \sum_{\tilde{j}\in[j-1]} \left| \gamma_{c,I_{c,\tilde{j}}} \left(\vecx_{\!c,j}\right) \right| \\
    <{}  & \frac{1}{l_{c}} w_{c} \left( \vecx_{\!c,j}\right) + \sum_{\tilde{j}\in[j-1]} \left| \gamma_{c,I_{c,\tilde{j}}} \left(\vecx_{\!c,j}\right) \right| \\
    \leq & \frac{1}{l_{c}} w_{c} \left( \vecx_{\!c,j}\right) + \frac{j-1}{l_{c}} w_{c} \left(\vecx_{\!c,j}\right) \tag{by \cref{boostingClaimIS1} for $j' = j-1$ and $c' = c$} \\
    ={}  & \frac{j}{l_{c}} w_{c} \left(\vecx_{\!c,j}\right)
  \end{align*}
\end{proof}

  \proofsForSection{sect:Deciding PAST}
\label{app:Deciding PAST}

{\renewcommand{\footnote}[1]{}
  \semialgebraicwitnesssetforalgebraicloops*{}
}
\begin{proof}
  We first show $W_{d} = \biguplus_{\mathfrak{c} \in \mathcal{I}^{m}} W_{d,\mathfrak{c}}$ for all $d\in\{\nmax,\pmax\}$.
  The direction ``$\supseteq$'' is trivial so we only show ``$\subseteq$''.
  To that end, let $\vecx \in W_{d}$.
  Then, by \Cref{def:witnesses_for_ent}, for every $c \in [m]$ there is a $(\scaleinside_{c},\scaleoutside_{c})$ with $\constrainttermgroup_{d,c,\vecx}
    = \constrainttermgroup_{(\scaleinside_{c},\scaleoutside_{c}),c,\vecx}
    \neq \emptyset$. The last inequation holds as $\vecx \in W_{d}$. Hence, $\vecx \in W_{d,\mathfrak{c}}$ for $\mathfrak{c} = ((\scaleinside_{1},\scaleoutside_{1}),\ldots,(\scaleinside_{m},\scaleoutside_{m}))$.

  Let $d\in\{\nmax,\pmax\}$, $\mathfrak{c} = ((\scaleinside_{1},\scaleoutside_{1}),\ldots,(\scaleinside_{m},\scaleoutside_{m})) \in \mathcal{I}^{m}$, and let $R$ denote the right-hand side of Equation \cref{eq:witness_set_is_semialgebraic1}.

  For \cref{eq:witness_set_is_semialgebraic1}, to prove ``$\subseteq$'', by \Cref{def:witnesses_for_ent} for all $\vecx \in W_{d,\mathfrak{c}} \subseteq W_{d}$ and all $c\in[m]$ we have
  \begin{equation}
    \begin{split}
      \sum_{i\in\mathfrak{R}_{d,c,\vecx}} \gamma_{c,i} (\vecx) = \sum_{i\in\mathfrak{R}_{(\scaleinside_{c},\scaleoutside_{c}),c,\vecx}} \gamma_{c,i} (\vecx)
      &> \sum_{i\in\mathfrak{C}_{d,c,\vecx}} |\gamma_{c,i} (\vecx)| \\
      &= \sum_{i\in\mathfrak{C}_{(\scaleinside_{c},\scaleoutside_{c}),c,\vecx}} |\gamma_{c,i} (\vecx)|.
    \end{split}
    \label{lhsSemialgebraic}
  \end{equation}
  Thus, \cref{lhsSemialgebraic} implies $\vecx \in \CGTZero_{c,(\scaleinside_{c},\scaleoutside_{c})}$ for all $c\in[m]$.
  For all $c\in[m]$ and all $(\scaleinside,\scaleoutside)$ with $(\scaleinside,\scaleoutside) >_{\mathrm{lex},d} (\scaleinside_{c},\scaleoutside_{c})$ we have $\sum_{i\in\constrainttermgroup_{(\scaleinside,\scaleoutside)}} \zeta_{i,\matA}^{|f|_{\symMat{A}}} \zeta_{i,\matB}^{|f|_{\symMat{B}}} \gamma_{i}(\vecx) = 0$ for all $f \in \Path$ by \Cref{def:constraint_term_groups,def:dominating_constraint_term_groups}.
  Clearly, $\sum_{i\in\constrainttermgroup_{(\scaleinside,\scaleoutside)} \cap \mathfrak{R}} \zeta_{i,\matA}^{|f|_{\symMat{A}}} \zeta_{i,\matB}^{|f|_{\symMat{B}}} \gamma_{i}(\vecx)$ is always constant for all $f$.
  In contrast, by \Cref{lem:braverman_generalisation}\ref{it:braverman_generalisation_itemB}, $\sum_{i\in\constrainttermgroup_{(\scaleinside,\scaleoutside)} \cap \mathfrak{C}} \zeta_{i,\matA}^{|f|_{\symMat{A}}}
    \zeta_{i,\matB}^{|f|_{\symMat{B}}} \gamma_{i}(\vecx)$ is only constant if it is the constant $0$. Thus, both sums $\sum_{i\in\constrainttermgroup_{(\scaleinside,\scaleoutside)}
      \cap \mathfrak{R}} \zeta_{i,\matA}^{|f|_{\symMat{A}}}
    \zeta_{i,\matB}^{|f|_{\symMat{B}}} \gamma_{i}(\vecx)$ and $\sum_{i\in\constrainttermgroup_{(\scaleinside,\scaleoutside)} \cap \mathfrak{C}} \zeta_{i,\matA}^{|f|_{\symMat{A}}}
    \zeta_{i,\matB}^{|f|_{\symMat{B}}} \gamma_{i}(\vecx)$ must be constant 0 and therefore, $\vecx \in \CEqZero_{c,(\scaleinside,\scaleoutside)}$. Hence, $\vecx \in R$.

  To prove ``$\supseteq$'' for \cref{eq:witness_set_is_semialgebraic1}, let $\vecx \in R \subseteq \AA^{n}$.
  Then, for all $c\in[m]$ there must be a tuple $(\scaleinside_{c},\scaleoutside_{c}) \in \mathcal{I}$ such that $\vecx \in \CGTZero_{c,(\scaleinside_{c},\scaleoutside_{c})}$ and $\vecx \in \CEqZero_{c,(\scaleinside,\scaleoutside)}$ for every $(\scaleinside,\scaleoutside) \in \mathcal{I}$ with $(\scaleinside,\scaleoutside) >_{\mathrm{lex},d} (\scaleinside_{c},\scaleoutside_{c})$.
  Hence, for all $(\scaleinside,\scaleoutside) >_{\mathrm{lex},d}
    (\scaleinside_{c},\scaleoutside_{c})$ we have $\sum_{{i\in\constrainttermgroup_{(\scaleinside,\scaleoutside)}}}
    \zeta_{i,\matA}^{|f|_{\symMat{A}}}\zeta_{i,\matB}^{|f|_{\symMat{B}}}\gamma_{c,i}(\vecx) = 0$ for all $f \in \Path$, see \Cref{lem:braverman_generalisation}\ref{it:braverman_generalisation_itemA}. Thus, $\constrainttermgroup_{d,c,\vecx} = \constrainttermgroup_{(\scaleinside_{c},\scaleoutside_{c}),c,\vecx}$ and $\sum_{i \in \mathfrak{R}_{d,c,\vecx}} \gamma_{c,i} (\vecx) > \sum_{i\in\mathfrak{C}_{d,c,\vecx}} |\gamma_{c,i} (\vecx)|$ for all $c\in[m]$. Hence, $\vecx \in W_{d,\mathfrak{c}}$ for $\mathfrak{c} = ((\scaleinside_{1},\scaleoutside_{1}),\ldots,(\scaleinside_{m},\scaleoutside_{m}))$.

  Recall that $\gamma_{c,i} (\vecx) = {(\matC\matS)}_{c,i}{(\matS^{-1}\vecx)}_{i} \in \QQbar$ for all $(c,i,\vecx) \in [m]\times[n]\times\AA^{n}$ as $\matA,\matB,\matC$ only consist of algebraic entries, see \Cref{algebraicMatrices}.

  Hence, the sets $\CEqZero_{c,(\scaleinside,\scaleoutside)},\CGTZero_{c,(\scaleinside,\scaleoutside)}$ for all $(c,(\scaleinside,\scaleoutside))\in[m]\times\mathcal{I}$, as well as $W_{d,\mathfrak{c}}$ for $d\in\{\nmax,\pmax\},\, \mathfrak{c} \in \mathcal{I}^{m}$ and the set $W$ are semialgebraic, as finite unions and intersections of semialgebraic sets are semialgebraic again.
\end{proof}

\wdasfiniteunionofconvexsets*{}
\begin{proof}
  Let $d\in\{\nmax,\pmax\}$, $\mathfrak{c} \in \mathcal{I}^{m}$, and let $\vecz = t\vecx + (1-t) \vecy$ for arbitrary $\vecx,\vecy \in W_{d,\mathfrak{c}}$ and $t\in(0,1)$.
  Fix an arbitrary $c\in[m]$.
  By definition of $W_{d,\mathfrak{c}}$ we have $\constrainttermgroup_{d,c,\vecx} = \constrainttermgroup_{d,c,\vecy} = \constrainttermgroup_{(\scaleinside,\scaleoutside),c,\vecx} = \constrainttermgroup_{(\scaleinside,\scaleoutside),c,\vecy} \neq \emptyset$, $\mathfrak{R}_{d,c,\vecx} = \mathfrak{R}_{d,c,\vecy}$, and $\mathfrak{C}_{d,c,\vecx} = \mathfrak{C}_{d,c,\vecy}$ for some $(\scaleinside,\scaleoutside) \in \mathcal{I}$.
  Similar to the proof of \Cref{lem:boosting}, the linearity of $\gamma_{c,i}$ moreover implies $\mathfrak{R}_{d,c,\vecz} = \mathfrak{R}_{d,c,\vecx} = \mathfrak{R}_{d,c,\vecy}$ and $\mathfrak{C}_{d,c,\vecz} = \mathfrak{C}_{d,c,\vecx} = \mathfrak{C}_{d,c,\vecy}$.
  To see this, fix some $c\in[m]$ and first consider $(\scaleinside',\scaleoutside')$ with $(\scaleinside',\scaleoutside') >_{\mathrm{lex},d} (\scaleinside,\scaleoutside)$.
  By \Cref{def:constraint_term_groups}, we conclude $\constrainttermgroup_{(\scaleinside',\scaleoutside'),c,\vecz} = \emptyset$ as
  \begin{align*}
        & \sum_{i \in \constrainttermgroup_{(\scaleinside',\scaleoutside')}} \zeta_{i,\matA}^{|f|_{\symMat{A}}} \zeta_{i,\matB}^{|f|_{\symMat{B}}} \gamma_{c,i} (\vecz) \\
    ={} & t \sum_{i \in \constrainttermgroup_{(\scaleinside',\scaleoutside')}} \zeta_{i,\matA}^{|f|_{\symMat{A}}} \zeta_{i,\matB}^{|f|_{\symMat{B}}} \gamma_{c,i} (\vecx) + (1-t)\sum_{i \in \constrainttermgroup_{(\scaleinside',\scaleoutside')}} \zeta_{i,\matA}^{|f|_{\symMat{A}}} \zeta_{i,\matB}^{|f|_{\symMat{B}}} \gamma_{c,i} (\vecy)
  \end{align*}
  is $0$ for all $f\in\Path$ since by \Cref{def:constraint_term_groups} both addends are $0$.
  Furthermore, $\constrainttermgroup_{(\scaleinside,\scaleoutside),c,\vecx} = \constrainttermgroup_{(\scaleinside,\scaleoutside),c,\vecy} \neq \emptyset$ implies $\constrainttermgroup_{(\scaleinside,\scaleoutside),c,\vecz}
    \neq \emptyset$ as for $f\in\Path$ we have
  \begin{align*}
        & \sum_{i\in\constrainttermgroup_{(\scaleinside,\scaleoutside)}} \zeta_{1,\matA}^{|f|_{\symMat{A}}} \zeta_{2,\matB}^{|f|_{\symMat{B}}} \gamma_{c,i} (\vecz) \\
    ={} & t \sum_{i\in\mathfrak{R}_{(\scaleinside,\scaleoutside)}} \gamma_{c,i} (\vecx_{c,j-1}) + (1-t) \sum_{i\in\mathfrak{R}_{(\scaleinside,\scaleoutside)}} \gamma_{c,i} (\vecx_{c,j-1}) \\
        & {}+ \sum_{i\in\mathfrak{C}_{(\scaleinside,\scaleoutside)}} \zeta_{i,\matA}^{|f|_{\symMat{A}} + e_{i}} \zeta_{i,\matB}^{|f|_{\symMat{B}}} (t \gamma_{c,i} (\vecx) + (1-t) \gamma_{c,i} (\vecy))
  \end{align*}
  where the two leftmost sums ``$\sum_{i\in\mathfrak{R}_{(\scaleinside,\scaleoutside)}}
    \ldots$'' on the right-hand side are positive and do not depend on $f$ whereas the rightmost sum is either $0$ for all $f$, or will take different values depending on $f$ (see~\Cref{lem:braverman_generalisation}). This implies that the right-hand side cannot be $0$ for all $f$. Hence, $\constrainttermgroup_{d,c,\vecz} = \constrainttermgroup_{(\scaleinside,\scaleoutside)} = \constrainttermgroup_{d,c,\vecx} = \constrainttermgroup_{d,c,\vecy} \neq \emptyset$ by \Cref{def:dominating_constraint_term_groups}.

  Moreover,
  \begin{align*}
    \sum_{i\in\mathfrak{R}_{d,c,\vecz}} \gamma_{c,i} (\vecz)
     & = \sum_{i\in\mathfrak{R}_{d,c,\vecz}} (t \gamma_{c,i} (\vecx) + (1-t) \gamma_{c,i} (\vecy)) \\
     & = t \sum_{i\in\mathfrak{R}_{d,c,\vecx}} \gamma_{c,i} (\vecx) + (1-t) \sum_{i\in\mathfrak{R}_{d,c,\vecy}} \gamma_{c,i} (\vecy) \\
     & > t\sum_{i\in\mathfrak{C}_{d,c,\vecx}} \left|\gamma_{c,i} (\vecx)\right| + (1-t)\sum_{i\in\mathfrak{C}_{d,c,\vecy}} \left|\gamma_{c,i} (\vecy)\right| \tag{$\vecx,\vecy \in W_{d,\mathfrak{c}} \subseteq W_{d}$} \\
     & \geq \sum_{i\in\mathfrak{C}_{d,c,\vecz}} \left|\gamma_{c,i} (\vecz)\right| \tag{Triang.\ Ineq.}
  \end{align*}
  and thus $\vecz \in W_{d}$.
  Combining both statements yields $\vecz \in W_{d,\mathfrak{c}}$ and therefore concludes the proof.
\end{proof}

\computingwitnessesfornontermination*{}
\begin{proof}
  As the loop is non-terminating, by \cref{thm:deciding_past}, we can compute a corresponding witness $\vecx \in W\cap\semiring^{n} \subseteq \ENT \cap \semiring^{n}$ for eventual non-termination.
  Since $\vecx \in W$, we have $\vecx \in W_{d,\mathfrak{c}}$ for some $d\in\{\nmax,\pmax\}$ and $\mathfrak{c} = ((\scaleinside_{1},\scaleoutside_{1}),\ldots,(\scaleinside_{m},\scaleoutside_{m})) \in \mathcal{I}^{m}$.
  Note that $\vecx \in W_{d,\mathfrak{c}}$ implies $\constrainttermgroup_{(\scaleinside_{c},\scaleoutside_{c})} \cap \mathfrak{R} = \mathfrak{R}_{d,c,\vecx}$ and $\constrainttermgroup_{(\scaleinside_{c},\scaleoutside_{c})} \cap \mathfrak{C} = \mathfrak{C}_{d,c,\vecx}$.
  Therefore, for $v(f)$ from \cref{lem:domination_of_eventually_dominating_constraint_term_groups}, there is a constant $\rho$ such that
  \begin{align*}
    v(f) & = \sum_{i \in \constrainttermgroup_{d,c,\vecx}} \zeta_{i,\matA}^{|f|_{\symMat{A}}} \zeta_{i,\matB}^{|f|_{\symMat{B}}} \gamma_{c,i} (\vecx) \\
         & > \sum_{i \in \constrainttermgroup_{(\scaleinside_{c},\scaleoutside_{c})} \cap \mathfrak{R}} \gamma_{c,i} (\vecx) - \sum_{i \in \constrainttermgroup_{(\scaleinside_{c},\scaleoutside_{c})} \cap \mathfrak{C}} |\gamma_{c,i} (\vecx)| > \rho > 0
  \end{align*}
  holds for all $c\in [m]$ and all $f\in\Path$ by definition as $\vecx \in W_{d,\mathfrak{c}}$.
  From this, one computes the constants $\eps_{c}$, $r_{c}$, and $l_{c}$ from the proof of \Cref{lem:soundness_of_witnesses} for every $c \in [m]$ according to the (constructive) proof of \Cref{lem:domination_of_eventually_dominating_constraint_term_groups}.
  From these constants, this time following the proof of \Cref{lem:soundness_of_witnesses}, one obtains constants $\eps$, $r$, and $l$ such that $\Val_{\vecx} (f) > \veczero$ for all $f\in\Path$ with $|f| \geq l$, $\URVar(f) \cdot |f| \geq r$, and $\URVar(f) \in [0,\eps]$.
  Among these, one chooses a suitable $\hat{f}$, see \cref{lem:soundness_of_witnesses}, and computes $\vecy = \matA^{|\hat{f}|_{\symMat{A}}} \matB^{|\hat{f}|_{\symMat{B}}}\vecx$.
  Then, from the proof of \Cref{lem:soundness_of_witnesses} one concludes $\vecy \in \NT \cap \semiring^{n}$, i.e., $\vecy$ is a witness for non-termination.
\end{proof}
}

\end{document}